\newcommand{\blind}{1}
\newcites{response}{References}
\newtheorem{theorem}{Theorem}
\newtheorem{corollary}{Corollary}
\newtheorem{lemma}{Lemma}
\newtheorem{proposition}{Proposition}
\theoremstyle{definition}
\theoremstyle{remark}
\newtheorem*{remark}{Remark}
\newcommand{\argmin}{\mathop{\rm argmin}}
\newcommand{\argmax}{\mathop{\rm argmax}}
\newcommand{\Prob}{\mathbb{P}}
\newcommand{\im}{\mathrm{im}}
\newcommand{\dfdr}{\ensuremath{\mathrm{FDR}_{\mathrm{dir}}}}
\newcommand{\dfdp}{\ensuremath{\mathrm{FDP}_{\mathrm{dir}}}}
\newcommand{\Expect}{\mathbb{E}}
\newcommand{\rank}{\mathop{\sf rank}}
\newcommand{\diag}{\mathop{\text{diag}}}
\newcommand{\D}{\mathcal{D}}
\newcommand{\Nm}{\mathcal{N}}
\newcommand{\var}{\mathrm{var}}
\newcommand{\fnu}{\ensuremath{\eta(\nu)}}
\newcommand{\gnu}{\ensuremath{\vartheta(\nu)}}
\def\E{\mathbb{E}}
\def\R{\mathbb{R}}
\def\A{\mathcal{A}}
\def\S{S}
\def\diag{\mathrm{diag}}
\def\sign{\mathrm{sign}}
\def\vecs{\mathbf{s}}
\def\F{\mathcal{F}}
\def\G{\mathcal{G}}
\newcommand{\mdfdr}{\ensuremath{\mathrm{mFDR}_{\mathrm{dir}}}}
\newcommand{\co}[1]{\ensuremath{C_{#1}}}
\begin{document}

\def\spacingset#1{\renewcommand{\baselinestretch}%
{#1}\small\normalsize} \spacingset{1}


\if1\blind
{
  \title{
  Split Knockoffs for Multiple Comparisons: Controlling the Directional False Discovery Rate
  }
  \author{Yang Cao$^1$, Xinwei Sun$^2$\thanks{\url{sunxinwei@fudan.edu.cn}} and Yuan Yao$^1$\thanks{\url{yuany@ust.hk}}
    ~\\
    $^1$Hong Kong University of Science and Technology \\
    $^2$Fudan University
    }
    \date{}
  \maketitle
} \fi

\if0\blind
{
  \bigskip
  \bigskip
  \bigskip
  \begin{center}
    {\LARGE 
    Split Knockoffs for Multiple Comparisons: Controlling the Directional False Discovery Rate}
\end{center}
  \medskip
} \fi

\bigskip
\begin{abstract}

Multiple comparisons in hypothesis testing often encounter structural constraints in various applications. For instance, in structural Magnetic Resonance Imaging for Alzheimer's Disease, the focus extends beyond examining atrophic brain regions to include comparisons of anatomically adjacent regions. These constraints can be modeled as linear transformations of parameters, where the sign patterns play a crucial role in estimating directional effects. This class of problems, encompassing total variations, wavelet transforms, fused LASSO, trend filtering, and more, presents an open challenge in effectively controlling the directional false discovery rate. In this paper, we propose an extended Split Knockoff method specifically designed to address the control of directional false discovery rate under linear transformations. Our proposed approach relaxes the stringent linear manifold constraint to its neighborhood, employing a variable splitting technique commonly used in optimization. This methodology yields an orthogonal design that benefits both power and directional false discovery rate control. By incorporating a sample splitting scheme, we achieve effective control of the directional false discovery rate, with a notable reduction to zero as the relaxed neighborhood expands. To demonstrate the efficacy of our method, we conduct simulation experiments and apply it to two real-world scenarios: Alzheimer's Disease analysis and human age comparisons.
\end{abstract}

\noindent%
{\it Keywords:} Multiple Comparison Hypothesis Test, Structural Sparsity, Variable Splitting, Alzheimer's Disease
\vfill

\newpage

\tableofcontents

\newpage
\section{Introduction}

Modern hypothesis testing is often interested in studying whether differences exist among multiple pairwise comparisons of parameters (\emph{e.g.}, $\beta_i-\beta_j$), where $\beta_i$ ($i=1,...,p$) measures the effect of the explanatory variable $X_i$ to a response variable $Y$, in the following linear model: 
\begin{equation}
\label{eq: linear regression model}
     Y = \sum_{i=1}^p X_i \beta_i^* + \varepsilon, \ \ \varepsilon \sim \mathcal{N}(0,\sigma^2).
\end{equation}
There are extensive studies \citep{tukey1991philosophy, benjamini2002john} on multiple comparisons with hypothesis $H^{(i,j)}: \beta_i-\beta_j=0$ for \emph{each} $(i,j)$. This can be regarded as a \emph{complete} graph of pairwise comparisons $G=(V,E)$, where the vertex set $V=\{1,\ldots,p\}$ and the edge set $E$ consists of all pairs $\{(i,j): 1\leq i<j \leq p\}$.  
However, in many applications, only a subset of pairwise comparisons is of interest due to structural constraints on parameters. It results in an incomplete comparison graph $G$. 

For instance, in studies of structural Magnetic Resonance Imaging (sMRI) for Alzheimer's Disease (AD) where patients suffer from atrophy on some brain regions, $X_i$ measures the normalized gray matter volume of $i$-th brain region and $\beta_i$ thus models the influence of its atrophy with respect to the severity of the disease, measured by the Alzheimer's Disease Assessment Scale (ADAS) score $Y$. Since anatomically adjacent brain regions are expected with the same degree of degeneration in normal brains, it results in a \emph{geometric smoothing} prior, \emph{i.e.}, typically $\{\beta_i-\beta_j: \text{$i$ and $j$ are adjacent}\}$ are assumed to be zeros. One is only interested in comparisons of anatomically adjacent pair $(i,j)$ with $\beta_i \neq \beta_j$, indicating an abnormal connection of regions with different degrees of degeneration due to disease. Such constraints can be represented by the topology of graph $G=(V,E)$, where $V$ consists of $p$ regions and $E=\{(i,j):i\sim j\}$ collects anatomically adjacent region pairs. Figure~\ref{fig: ad front} illustrates such a graph, where directed black arrows mark abnormal connections, \emph{e.g.} from Lingual Gyrus to severely atrophied region Hippocampus. Details will be given in Section \ref{sec: AD-exp}.

\begin{figure}[!ht]
    \centering
    \includegraphics[width=.6\textwidth]{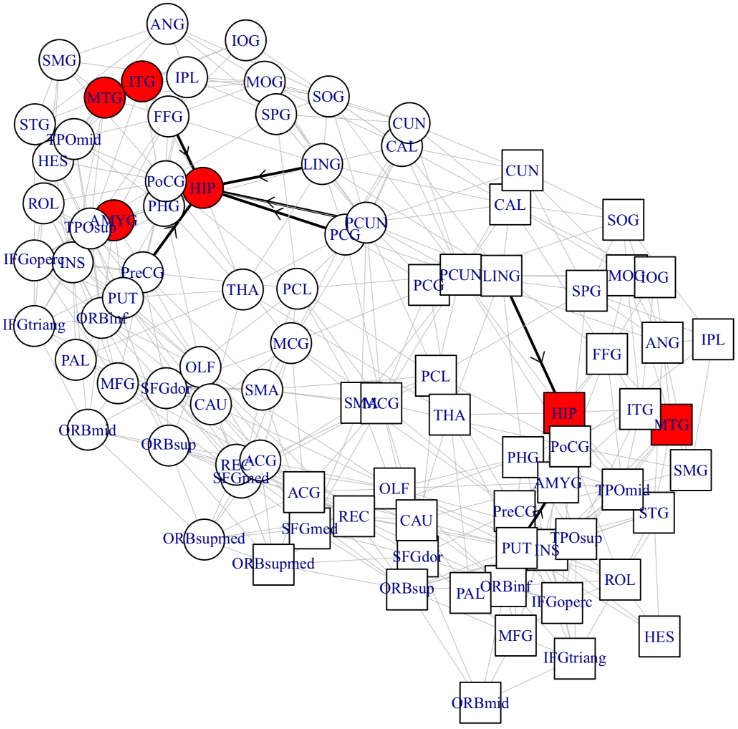}
    \caption{Selected brain regions and connections in Alzheimer's Disease. Each vertex represents a Cerebrum brain region in Automatic Anatomical Labeling (AAL) atlas \citep{tzourio2002automated}, whose abbreviations and full names are provided in Table~\ref{tab:name of region}. Here, vertices with a circle shape represent the left brain regions, while squares represent the right brain regions. A shallow undirected edge 
   represents that the associated two regions are adjacent. Using 
   our methodology (Section \ref{sec: AD-exp}), the selected atrophy regions are marked in red, while the directed bold arrows represent the selected connections pointing from a 
    normal region to significantly atrophy regions.} 
    \label{fig: ad front}
\end{figure}

In this paper, we consider the following general model which includes the examples of multiple comparisons as special cases,
\begin{align}
    \label{eq: structural sparsity model}
     Y = \sum_{i=1}^p X_i \beta^*_i + \varepsilon, \ \ \gamma^* = D\beta^*, \ \ \varepsilon \sim \mathcal{N}(0,\sigma^2),
\end{align}
where $D:\R^p \to \R^m$ is a linear transformation and the sparsity and/or sign patterns of $\gamma^* \in \R^m$ are of interest. 

Various choices of $D$ lead to different structures in applications. For example, taking $D$ to be the graph difference (gradient) operator on a graph meets the geometric multiple comparison problem above, where the graph can be complete or incomplete. Taking $D$ to be 1-d fused lasso matrix assumes sparsity between adjacent parameters ordered in a line, such as the copy numbers of genome orders in comparative gnomic hybridization (CGH) data \citep{tibshirani2005sparsity}; the gene with abnormality (different copy number with its neighbor) can help us understand the human cancer. 
Taking $D$ as the 2-D  grid gradients has been used in total variation edge detection for images \citep{ROF92}. One can also construct $D$ as general basis and frames such as wavelets \citep{donoho1995adapting}. 
In particular, taking $D$ to be the identity matrix leads to the traditional variable selection problem in regression. 

In multiple comparisons, controlling the false discoveries of mistakenly claiming $\beta_i \neq \beta_j$ is calibrated as Type-I error. One recent approach to achieve this error rate control in multiple comparisons is proposed in \cite{cao2021controlling}, namely the Split Knockoff method, as a generalization of the Knockoff method \citep{barber2015controlling} to handle the transformational sparsity \eqref{eq: structural sparsity model}. However, controlling the Type-I error in  multiple comparisons does not always hold significant meaning, since the effects of two associated statistics are generically different in most real applications, as stated in \cite{tukey1991philosophy, gelman2000type}. In case of Alzheimer's Disease, due to the intrinsic variation of patients and measurement noise of device, it leads that $\beta_i - \beta_j$ differs to zero 
for each adjacent pair. In this case, controlling the Type-I error is meaningless, since all rejected hypothesis are true discoveries by default. On the other hand, given a pair with different effects, we are more interested in figuring out the direction of the connection, \emph{i.e.}, which of the connected regions is more severely damaged in AD compared to the other. As proposed by \cite{gelman2000type}, such a more informative goal of identifying the direction of each comparison, can be calibrated by the Type-S error, with the false positive referring to claiming $\beta_i > \beta_j$ if $\beta_i < \beta_j$ or vise-versa. 

In order to enhance the reproducibility of discovering such directional effects from noisy data, our target in this paper is to recover the sign pattern of $\gamma = D\beta \in\R^{m}$ in the general model \eqref{eq: structural sparsity model}. 
To be precise, we aim to control the following \emph{directional False Discovery Rate} ($\dfdr$): 
\begin{align*}
    \dfdr = \E\left[\dfdp\right] = \E\left[\frac{|\{i: i\in \widehat{S}, \widehat{\sign}_i\neq \sign(\gamma^*_i)\}|}{|\widehat{S}|\vee 1}\right],
\end{align*}
where $\widehat{S}$ is the estimated support set and $\widehat{\sign}$ is the estimated sign of the parameter.

In the special case of linear regression where $D$ is an identity matrix, \cite{barber2019knockoff} proposes to extend the knockoff method to control the $\dfdr$. However, in more general settings of $D$ including the multiple comparisons, controlling the $\dfdr$ remains an open problem, as  
the naive construction of knockoffs by ignoring the structural constraint will break the antisymmetry property \citep{barber2015controlling}, e.g. see details in Section \ref{sec: genlasso with knockoff}.

In this paper, we propose a new procedure to control the directional false discovery rate ($\dfdr$) under transformations by extending the Split Knockoff method introduced by \cite{cao2021controlling}. In this procedure, the linear constraint $\gamma = D\beta$ is relaxed to its Euclidean neighborhood. This relaxation allows us to consider the structural constraint as an inherent part of the design rather than a constraint that could potentially disrupt the symmetry between the design matrix and its knockoffs.

By combining this relaxed constraint with a sample splitting scheme, the inverse supermartingale structure presented in \cite{cao2021controlling} is generalized to control the $\dfdr$ in the extended Split Knockoff method. Furthermore, we provide theoretical demonstration that as the relaxed neighborhood enlarges, the $\dfdr$ control of the extended Split Knockoff method decreases to zero.

The paper is organized as following. In Section~\ref{sec: methodology}, the construction of Split Knockoffs is introduced, exploiting both the variable splitting scheme and the sample splitting scheme. 
In Section~\ref{sec: analysis}, an analysis on the $\dfdr$ control of Split Knockoffs is presented. 
In Section~\ref{sec: simu_exp}, simulation experiments are conducted to show that Split Knockoffs achieve the desired $\dfdr$ control, with possible improvement on power due to a better incoherence compared with standard Knockoffs. In Section~\ref{sec: applications}, experiments are conducted in two real world applications: discovering lesion brain regions and abnormal connections in Alzheimer's Disease, as well as making pairwise comparisons on human ages based on  voluntarily annotated data of human face images.

\section{Split Knockoffs}
\label{sec: methodology}

To overcome the hurdle brought by the linear constraint $\gamma=D\beta$ (e.g. see details in Section \ref{sec: genlasso with knockoff}), the Split Knockoff method starts from a relaxation of the linear constraint to its Euclidean neighbourhood, often known as the variable splitting scheme in optimization. 
The variable splitting scheme, together with 
a sample splitting scheme to introduced later, ensures Split Knockoffs the desired theoretical $\dfdr$ control to be presented later in Theorem \ref{theorem: directional fdr} in Section \ref{sec: analysis}. 

The relaxation of the linear constraint $\gamma=D\beta$ to its Euclidean neighbourhood is implemented through the following Split LASSO \citep{cao2021controlling} regularization path,
\begin{align}
    \label{eq: split lasso}
    (\beta(\lambda), \gamma(\lambda)) :=\argmin_{\beta, \gamma}\frac{1}{2n}\|y-X\beta\|_2^2+\frac{1}{2\nu}\|D\beta-\gamma\|_2^2+\lambda\|\gamma\|_1, \mbox{ for $\lambda>0$,}
\end{align}
where $\nu>0$ is a (tuning) parameter that controls the Euclidean gap between $D\beta$ and $\gamma$. In other words,  Equation \eqref{eq: split lasso} allows  
an Euclidean gap (penalized by $\frac{1}{\nu}$) between $D\beta$ and $\gamma$, instead of forcing $D\beta$ to be equal to $\gamma$.

In the following, we will outline the construction of Split Knockoffs targeting to control the $\dfdr$ under transformations. Subsequently, we will delve into the specific details of this construction.

\begin{figure}[!ht]
    \centering
    \includegraphics[width=\textwidth]{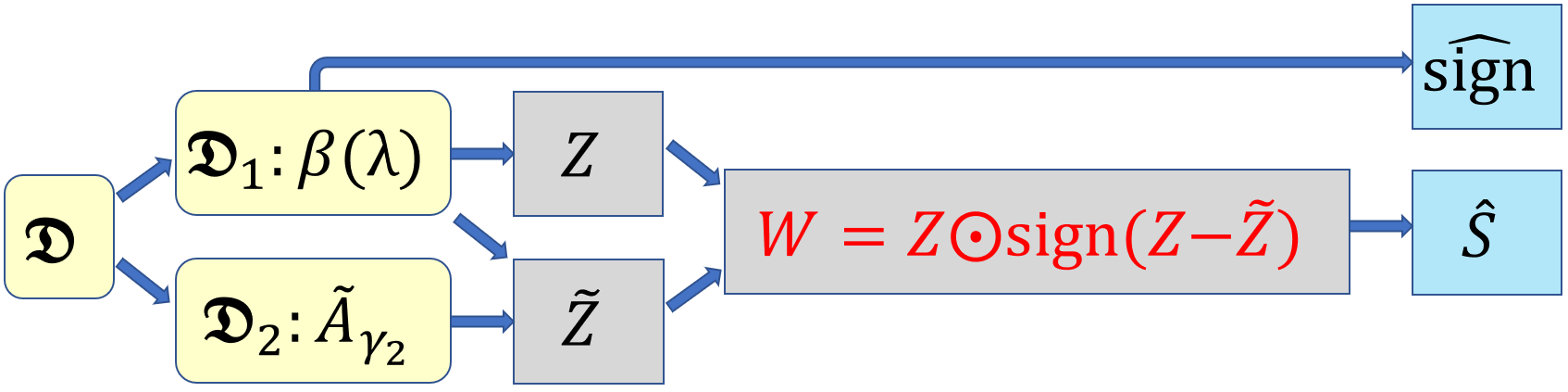}
    \caption{A sketch flowchart on the procedure of Split Knockoffs.} 
    \label{fig: procedure of Split Knockoffs}
\end{figure}

\begin{enumerate}
    \item In Section \ref{sec: sample split}, the dataset $\D=(X, y)$ is split into two parts $\D_1=(X_1, y_1)$  and $\D_2=(X_2, y_2)$ with sample sizes $n_1$ and $n_2$ respectively.
    \item In Section \ref{sec: beta(lambda)}, Split LASSO is performed on $\D_1$ to give an estimation $\beta(\lambda)$ for $\beta$,  which will be used in the subsequent estimates below.
    \item In Section \ref{sec: copy}, the \emph{Split Knockoff copy} $\tilde{A}_{\gamma_2}$ 
    is constructed on $\D_2$. Its associated fake features $\tilde{\gamma}$ serve as the control group, whose comparisons (in significance levels) with the original features $\gamma$ determine the estimated support set.
    \item In Section \ref{sec: z and tz}, the significance level $Z$ and the directional effect estimator $\widehat{\sign}$ of $\gamma$ are determined by $\beta(\lambda)$  through Split LASSO, while the significance level $\tilde{Z}$ of $\tilde{\gamma}$ is determined by $\beta(\lambda)$ and $\D_2$ through Split LASSO.

    \item In Section \ref{sec: w and hats}, the $W$ statistics are defined by $W = Z\odot\sign(Z-\tilde{Z})$, where $\odot$ represents the element-wise product. The estimated support set $\widehat{S}$ is designed to include features with large and positive $W$ statistics, i.e. sufficiently significant features with greater significance levels compared with their fake copies.
\end{enumerate}

The whole procedure is illustrated by the flowchart in Figure \ref{fig: procedure of Split Knockoffs}. The detailed constructions are given in the following subsections respectively.

\subsection{Sample Splitting}
\label{sec: sample split}

The dataset $\D=(X, y)$ is split into $\D_1=(X_1, y_1)$  and $\D_2=(X_2, y_2)$ with sample sizes $n_1$ and $n_2$ satisfying $n = n_1+n_2$. In addition, the construction of the Split Knockoff copy in Section \ref{sec: copy} requires $X_2^TX_2$ to be invertible and $n_2\ge m+p$. When a limited sample size fails the above requirements, variable screening can be applied on $\D_1$ to screen off a subset of $\beta$ and $\gamma$ to reduce the dimensionality, which will be discussed in Section \ref{sec: screen for hd}. 

It is important to note that the sample splitting scheme is a crucial component for controlling the $\dfdr$. Without employing the sample splitting scheme, conducting Split Knockoffs can lead to an inflation in $\dfdr$ control. A thorough discussion on this topic, including additional details, will be provided in Section \ref{sec: inflation}.

\subsection{Estimation for $\beta$ on $\D_1$}
\label{sec: beta(lambda)}

The first subset of data $\D_1$ is used for a preliminary estimation of parameter $\beta$. 
In fact, one obtains a continuous function $\beta(\lambda)$ with respect to $\lambda>0$, by solving the following Split LASSO regularization path restricted on $\D_1=(X_1, y_1)$,
\begin{align}
    \beta(\lambda):=\arg\min_{\beta}\min_\gamma\frac{1}{2n}\|y_1-X_1\beta\|_2^2+\frac{1}{2\nu}\|D\beta-\gamma\|_2^2+\lambda\|\gamma\|_1,\mbox{ for $\lambda>0$.} \label{eq: beta(lambda)}
\end{align}
Later, the function $\beta(\lambda)$ will be used to determine the significance levels $Z$, $\tilde{Z}$ and the directional effect estimator $\widehat{\sign}$ in Section \ref{sec: z and tz}.

\subsection{Split Knockoff Copy on $\D_2$}
\label{sec: copy}

The Split Knockoff copy is constructed based on the second dataset $\D_2$, whence independent to $\D_1$ and $\beta(\lambda)$. One starts from a reformulation of Split LASSO. Restricted on $\D_2 = (X_2, y_2)$ (with associated Gaussian noise $\varepsilon_2\in\R^{n_2}$), Split LASSO \eqref{eq: split lasso} can be viewed as the partially penalized LASSO on $\gamma$ (among $\beta$ and $\gamma$) 
\begin{align*}
    \frac{1}{2}\|\tilde{y}_2-A_{\beta_2}\beta-A_{\gamma_2}\gamma\|_2^2 + \lambda \|\gamma\|_1 = \frac{1}{2n}\|y_2-X_2\beta\|_2^2+\frac{1}{2\nu}\|D\beta-\gamma\|_2^2 + \lambda \|\gamma\|_1,
\end{align*}
with respect to the following reformulated regression model,
\begin{equation} \label{eq: reformulated model on D_2}
  \tilde{y}_2=A_{\beta_2}\beta^*+A_{\gamma_2}\gamma^*+\tilde{\varepsilon}_2,
\end{equation}
where ``2'' in the subscripts is a reminder that the symbols are restricted on $\D_2$, and
\begin{equation}
\tilde{y}_2=
\left(
\begin{array}{c}
\frac{y_2}{\sqrt{n_2}} \\
0_m
\end{array}
\right),
A_{\beta_2}=
\left(
\begin{array}{c}
\frac{X_2}{\sqrt{n_2}} \\
\frac{D}{\sqrt{\nu}} 
\end{array}
\right),
A_{\gamma_2}=
\left(
\begin{array}{c}
0_{n_2\times m} \\
-\frac{I_m}{\sqrt{\nu}} 
\end{array}
\right),
\tilde{\varepsilon}_2=
\left(
\begin{array}{c}
\frac{\varepsilon_2}{\sqrt{n}} \\
0_m 
\end{array}
\right).\label{eq: new design matrix}
\end{equation}

Crucially, model \eqref{eq: reformulated model on D_2} leads to an \emph{orthogonal design} $A_{\gamma_2}$ for $\gamma$, which will further lead to the \emph{orthogonal Split Knockoff copy} to be shown later in this section. Such an orthogonal design for $\gamma$ also results in weaker incoherence conditions to help identify strong nonnulls, at a risk of losing weak ones, which will be discussed in Section \ref{sec:pathconsistency}.

Yet, model \eqref{eq: reformulated model on D_2} introduces heterogeneous noise $\tilde{\varepsilon}_2$, which breaks the crucial exchangeability property in provable $\dfdr$ control \citep{barber2015controlling, barber2019knockoff} for Split Knockoffs (details in Section \ref{sec: fail exchange}). 
However, with the help of the orthogonal design $A_{\gamma_2}$ and the sample splitting scheme, the broken exchangeability in Split Knockoffs will not cause any loss in the $\dfdr$ control, as presented later by Theorem \ref{theorem: directional fdr} in Section \ref{sec: analysis}.

Now it's ready to construct the Split Knockoff copy matrix $\tilde{A}_{\gamma_2}$ with respect to the design matrix $A_{\gamma_2}$ in model \eqref{eq: reformulated model on D_2}, as a matrix satisfying
\begin{align}
    \tilde{A}_{\gamma_2}^T\tilde{A}_{\gamma_2} = A_{\gamma_2}^TA_{\gamma_2},\ 
    A_{\beta_2}^T\tilde{A}_{\gamma_2} = A_{\beta_2}^TA_{\gamma_2},\ A_{\gamma_2}^T\tilde{A}_{\gamma_2} =  A_{\gamma_2}^TA_{\gamma_2}-\diag(\vecs),\label{eq: copy}
\end{align}
similarly as in \cite{cao2021controlling}, where $\vecs\in\R^m$ is a proper non-negative vector. The detailed construction of the Split Knockoff copy matrix when $X_2^TX_2$ is invertible and $n_2\ge m+p$ is provided in Section \ref{sec: construct split knockoff}. 
In addition, Proposition \ref{prop: structure of tagamma} below (proof in Section \ref{sec: structure of tagamma}) characterizes the detailed structure of $\tilde{A}_{\gamma_2}$.

\begin{proposition}
\label{prop: structure of tagamma}
    Let $\tilde{A}_{\gamma_2,1}\in\R^{n_2\times m}$ be the submatrix consisting of the first $n_2$ rows of $\tilde{A}_{\gamma_2}$, and  $\tilde{A}_{\gamma_2,2}\in\R^{m\times m}$ be the remaining submatrix, then
    \begin{enumerate}
    \item $\tilde{A}_{\gamma_2,2}$ is diagonal: $\tilde{A}_{\gamma_2,2}= -\frac{I_m}{\sqrt{\nu}}+\sqrt{\nu}\diag(\vecs)$.
        \item $\tilde{A}_{\gamma_2,1}$ converts $X_2$ to $D$ up to a scaling: $\tilde{A}_{\gamma_2,1}^TX_2=-\sqrt{n}\diag(\vecs)D$.
        \item $\tilde{A}_{\gamma_2,1}$ is orthogonal up to a scaling:
    $
    \tilde{A}_{\gamma_2,1}^T\tilde{A}_{\gamma_2,1}=\diag(\vecs)(2I_m-\diag(\vecs)\nu)$.
    \end{enumerate}
\end{proposition}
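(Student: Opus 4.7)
The proof is essentially a direct block-matrix unpacking of the three defining identities in \eqref{eq: copy}, so there is no real obstacle; the task is to order the derivations so that each part falls out of the previous one.

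The plan is to write $\tilde A_{\gamma_2}$ in the block form $\tilde A_{\gamma_2}=\bigl(\tilde A_{\gamma_2,1}^{\top},\tilde A_{\gamma_2,2}^{\top}\bigr)^{\top}$ suggested by the statement, and then first compute the basic inner products of $A_{\beta_2}$ and $A_{\gamma_2}$ from \eqref{eq: new design matrix}: namely $A_{\gamma_2}^{\top}A_{\gamma_2}=I_m/\nu$ and $A_{\beta_2}^{\top}A_{\gamma_2}=-D^{\top}/\nu$ (the $X_2$ block of $A_{\beta_2}$ annihilates the zero block of $A_{\gamma_2}$). These two identities are what the three conditions in \eqref{eq: copy} must match.

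Next I would prove Part 1. The third identity in \eqref{eq: copy} reduces, using the block shape of $A_{\gamma_2}$, to $-\tilde A_{\gamma_2,2}/\sqrt{\nu}=I_m/\nu-\diag(\vecs)$, which gives $\tilde A_{\gamma_2,2}=-I_m/\sqrt{\nu}+\sqrt{\nu}\diag(\vecs)$, proving Part 1 and in particular its diagonality. With $\tilde A_{\gamma_2,2}$ now explicit, Part 2 is obtained from the second identity in \eqref{eq: copy}: expanding $A_{\beta_2}^{\top}\tilde A_{\gamma_2}=X_2^{\top}\tilde A_{\gamma_2,1}/\sqrt{n_2}+D^{\top}\tilde A_{\gamma_2,2}/\sqrt{\nu}$ and substituting the formula for $\tilde A_{\gamma_2,2}$ makes the $-D^{\top}/\nu$ terms cancel against $A_{\beta_2}^{\top}A_{\gamma_2}$, leaving $X_2^{\top}\tilde A_{\gamma_2,1}=-\sqrt{n_2}\,D^{\top}\diag(\vecs)$, which upon transposition is the claim in Part 2 (up to the $\sqrt{n}$ versus $\sqrt{n_2}$ notational convention used in the paper).

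Finally, for Part 3, I would use the first identity in \eqref{eq: copy}, $\tilde A_{\gamma_2}^{\top}\tilde A_{\gamma_2}=A_{\gamma_2}^{\top}A_{\gamma_2}=I_m/\nu$, and split the left-hand side as $\tilde A_{\gamma_2,1}^{\top}\tilde A_{\gamma_2,1}+\tilde A_{\gamma_2,2}^{\top}\tilde A_{\gamma_2,2}$. Since $\tilde A_{\gamma_2,2}$ is already known and diagonal, squaring it gives $I_m/\nu-2\diag(\vecs)+\nu\diag(\vecs)^{2}$, and subtracting from $I_m/\nu$ isolates $\tilde A_{\gamma_2,1}^{\top}\tilde A_{\gamma_2,1}=2\diag(\vecs)-\nu\diag(\vecs)^{2}=\diag(\vecs)\bigl(2I_m-\nu\diag(\vecs)\bigr)$, which is Part 3. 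The only thing to keep an eye on is consistent bookkeeping of the $1/\sqrt{\nu}$ and $1/\sqrt{n_2}$ scalings between the two row-blocks; once those are tracked, all three parts follow by direct substitution.
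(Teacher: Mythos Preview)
Your proposal is correct and follows essentially the same route as the paper: both use the three identities in \eqref{eq: copy} in the order $A_{\gamma_2}^{\top}\tilde A_{\gamma_2}$, then $A_{\beta_2}^{\top}\tilde A_{\gamma_2}$, then $\tilde A_{\gamma_2}^{\top}\tilde A_{\gamma_2}$, each time exploiting the block structure of $A_{\gamma_2}$ and $A_{\beta_2}$ and substituting the previously obtained formula for $\tilde A_{\gamma_2,2}$. Your observation about $\sqrt{n}$ versus $\sqrt{n_2}$ is also accurate; the paper's own proof in fact produces $\sqrt{n_2}$, consistent with the normalization in \eqref{eq: new design matrix}.
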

By Proposition \ref{prop: structure of tagamma}, the Split Knockoff copy $\tilde{A}_{\gamma_2}$ depends on $D$ and $X_2$ but not on $y_2$. Particularly, $\tilde{A}_{\gamma_2}$ is an orthogonal matrix whose submatrices $\tilde{A}_{\gamma_2,1}$ and $\tilde{A}_{\gamma_2,2}$ are both orthogonal, up to a scaling. The orthogonality directly leads to the independence between $\sign(W_i)$ presented later by Lemma \ref{lemma: est of sign} in Section \ref{sec: analysis}, which is crucial for the provable $\dfdr$ control.

\subsection{Significance Levels and Directional Effect Estimator}
\label{sec: z and tz}

In this section, the significance levels $Z$ and $\tilde{Z}$ of the original feature and its associated split knockoff copy feature will be defined through Split LASSO regularization paths. 
In ideal cases of the Split LASSO path, the non-nulls become nonzero at larger $\lambda$ compared with the nulls. Therefore, the supremum of the regularization parameter $\lambda$ where features become nonzero can be used to represent the significance of the features. 

Specifically, the solution path $\gamma(\lambda)$ and $\tilde{\gamma}(\lambda)$ in Split LASSO are defined formally by $\beta(\lambda)$ and $\D_2$ as
\begin{subequations}
\label{eq: stage 1 and 2}
    \begin{align} 
    \gamma(\lambda)&:=\arg\min_{\gamma} \frac12\|\tilde y_2-A_{\beta_2}\beta(\lambda)-A_{\gamma_2}\gamma\|_2^2+\lambda\|\gamma\|_1,\label{eq:stage1}\\
    \tilde{\gamma}(\lambda) &:=\arg \min_{\tilde{\gamma}} \frac12\|\tilde y_2-A_{\beta_2}\beta(\lambda)-\tilde{A}_{\gamma_2}\tilde{\gamma}\|_2^2+\lambda\|\tilde{\gamma}\|_1, \label{eq:stage2}
    \end{align}
\end{subequations}
for $\lambda>0$. In particular, although $\gamma(\lambda)$ in Equation \eqref{eq:stage1} is ostensibly correlated with both $\beta(\lambda)$ and $\D_2$, it is in fact determined by $\beta(\lambda)$ only, as
\begin{align}
    \gamma(\lambda)&=\arg\min_{\gamma}\frac12\|\tilde y_2-A_{\beta_2}\beta(\lambda)-A_{\gamma_2}\gamma\|_2^2+\lambda\|\gamma\|_1 \nonumber\\
    &= \arg\min_{\gamma}\frac12\left\|\left(
\begin{array}{c}
\frac{y_2}{\sqrt{n_2}} \\
0_m
\end{array}
\right) - 
\left(
\begin{array}{c}
\frac{X_2}{\sqrt{n_2}} \\
\frac{D}{\sqrt{\nu}} 
\end{array}
\right)\beta(\lambda)-
\left(
\begin{array}{c}
0_{n_2\times m} \\
-\frac{I_m}{\sqrt{\nu}} 
\end{array}
\right)\gamma
\right\|_2^2+\lambda\|\gamma\|_1,\nonumber\\
& = \arg\min_{\gamma}\frac{1}{2\nu}\|D\beta(\lambda)-\gamma\|_2^2+\lambda\|\gamma\|_1.\label{eq: simp stage1}
\end{align}
Such a property is exploited in the construction of $W$ statistics to ensure the conditional independence between $|W|$ and $\sign(W)$, to be presented later in Section \ref{sec: w and hats}.

For all $i$, the significance levels $Z_i$ and $\tilde{Z}_i$ are given by 
\begin{align}
    Z_i=  \sup\left\{\lambda:\gamma_i(\lambda)\neq0\right\}, \ \tilde{Z}_i=  \sup\left\{\lambda:\tilde\gamma_i(\lambda)\neq0\right\}.\label{def: z and tilde z}
\end{align}
For all $i$, the sign of $\gamma_i(\lambda)$ upon being nonzero in Equation \eqref{eq:stage1} or \eqref{eq: simp stage1} is recorded as
\begin{align}
    r_i=  \lim_{\lambda\to Z_i-}\sign(\gamma_i(\lambda)). \label{def: r}
\end{align}
Such $r_i$ will be used to estimate the directional effect $\widehat{\sign}_i$ in the next Section \ref{sec: w and hats}.

\subsection{$W$ statistics and Estimated Support Set}
\label{sec: w and hats}

Intuitively, a selected feature should have high significance greater than its fake split knockoff copy. Such an intuition can be characterized by the following $W$ statistics\footnote{This particular $W$ statistics is known as $W^{\mathrm{S}}$ in \cite{cao2021controlling}.},
\begin{align}
    \label{eq: def_w}
    W = Z\odot\sign(Z-\tilde{Z}).
\end{align}
For the $i$-th feature, $W_i>0$ indicates a desired feature of higher significance than its fake copy, $Z_i>\tilde{Z}_i$; while $W_i<0$ suggests its lower significance than the fake copy, i.e. a false discovery. 
This definition of $W$ statistics is a slight modification of that in \cite{barber2015controlling,barber2019knockoff} with the following merits.
\begin{enumerate}
    \item From Equation \eqref{eq: def_w}, magnitude $|W| = Z$ and $\sign(W) = \sign(Z-\tilde{Z})$ will be independent from each other conditional on $\D_1$, since $Z$ is determined by $\beta(\lambda)$ from $\D_1$ via Equation \eqref{eq: simp stage1}, while $\tilde{Z}$ is determined by both $\beta(\lambda)$ and $\D_2$ via Equation \eqref{eq:stage2}. This independence is crucial for the supermartingale inequality \eqref{eq: conditional target} in achieving theoretical $\dfdr$ control, which will be shown in Section \ref{sec: analysis}.
    \item Our definition of $W$ leads to an improved selection power without losing the $\dfdr$ control, compared with the original definition of $W$ statistics in \cite{barber2015controlling,barber2019knockoff}, which will be discussed in Section \ref{sec: inclusion} with Proposition \ref{prop: inclusion}.
\end{enumerate}
From the definition of $W$ statistics, features with large and positive $W$ statistics should be selected. For any preset nominal $\dfdr$ level $q$, a data dependent threshold $T_q$ based on the $W$ statistics \eqref{eq: def_w} is defined similarly to \cite{barber2015controlling, barber2019knockoff} by 
\begin{align*}
  &\mbox{(Split Knockoff)}\ \ \ \ \  T_q=\min\left\{\lambda:\frac{|\{i:W_i\le-\lambda\}|}{1\vee|\{i:W_i\ge \lambda\}|}\le q\right\},\\
  &\mbox{(Split Knockoff+)}\ \ \ T_q=\min\left\{\lambda:\frac{1+|\{i:W_i\le-\lambda\}|}{1\vee|\{i:W_i\ge \lambda\}|}\le q\right\},
\end{align*}
or $T_q=+\infty$ if the respective set is empty.
In both cases, the selector ($\widehat{S}$) and estimated direction effects on selected features ($\widehat{\sign}_{\widehat{S}}$) are given by
\begin{equation}
  \widehat{S}:=\{i:W_i\ge T_q\},\ \ \ \  \widehat{\sign}_{\widehat{S}} := r_{\widehat{S}}.\label{eq: output}
\end{equation}

An analysis for the $\dfdr$ control by Split Knockoffs will be given in Section \ref{sec: analysis}.

\section{$\textbf{FDR}_{\textbf{dir}}$ Control of Split Knockoffs}

\label{sec: analysis}

In this section, we present the theoretical results on the $\dfdr$ control of Split Knockoffs. Specifically, Theorem \ref{theorem: directional fdr} guarantees the  universal $\dfdr$ control of Split Knockoffs with respect to any tuning parameter $\nu>0$. For Split Knockoff+, the exact $\dfdr$ is under control, while Split Knockoff controls the modified directional false discovery rate ($\mdfdr$), firstly defined in \cite{barber2019knockoff}, where $q^{-1}$ is added in the denominator, having little effects when the support set is large. 

\begin{theorem}
    \label{theorem: directional fdr}
    For any linear transformation $D$, any $0<q<1$ and $\nu>0$, there holds:
    \begin{itemize}
        \item[(a)] ($\mdfdr$ of Split Knockoff)
    \begin{align*}
        \E\left[\frac{|\{i\in \widehat{S}: \widehat{\sign}_i \neq \sign(\gamma^*_i)\}|}{|\widehat{S}|+q^{-1}}\right]\le \min(\alpha(\nu), 1) q,
    \end{align*}
    \item[(b)] ($\dfdr$ of Split Knockoff+)
    \begin{align*}
        \E\left[\frac{|\{i\in \widehat{S}: \widehat{\sign}_i \neq \sign(\gamma^*_i)\}|}{|\widehat{S}|\vee 1}\right]\le \min(\alpha(\nu), 1) q,
    \end{align*}
    \end{itemize}
    where $\alpha(\nu)$ given by Equation \eqref{def: alpha(nu)} is a decreasing function satisfying $\lim_{\nu\to\infty}\alpha(\nu) = 0$.
\end{theorem}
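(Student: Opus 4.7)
The plan is to adapt the knockoff proof template to the directional setting via three steps: (i) condition on $\D_1$ to freeze the path $\beta(\lambda)$ together with $Z$ and $r$; (ii) establish a one-sided sign lemma on the ``bad'' index set that plays the role of the nulls; and (iii) plug that lemma into an inverse-supermartingale stopping argument at the threshold $T_q$. Step (i) is short: Equation \eqref{eq: simp stage1} gives the closed form $\gamma_i(\lambda) = \sign((D\beta(\lambda))_i)\big(|(D\beta(\lambda))_i| - \nu\lambda\big)_+$, so that $Z$ and $r$ are $\D_1$-measurable, and after conditioning all remaining randomness enters only through $\tilde Z$ via $y_2$ and the Split Knockoff copy matrix $\tilde A_{\gamma_2}$.

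For step (ii), presumably the content of Lemma \ref{lemma: est of sign} referenced in Section \ref{sec: w and hats}, define the bad set $\mathcal{B} := \{i : r_i \neq \sign(\gamma^*_i)\}$; since $\widehat{\sign}_{\widehat S} = r_{\widehat S}$, every index that contributes to the numerator of $\dfdr$ lies in $\mathcal{B}$. The claim to prove is that, conditional on $\D_1$ and on $|W|$, the signs $\{\sign(W_i)\}_{i\in\mathcal{B}}$ are mutually independent and satisfy $\Prob(\sign(W_i)=+1 \mid \D_1, |W|) \le \alpha(\nu)/(1+\alpha(\nu))$. Independence should come from Proposition \ref{prop: structure of tagamma}(3), whose orthogonality relation $\tilde A_{\gamma_2,1}^\top \tilde A_{\gamma_2,1} = \diag(\vecs)(2I_m - \nu\diag(\vecs))$ decouples the Gaussian perturbations driving the different $\tilde Z_i$. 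The one-sided bound should come from parts (1) and (2): the identities $\tilde A_{\gamma_2,1}^\top X_2 = -\sqrt{n}\,\diag(\vecs)D$ and $\tilde A_{\gamma_2,2} = -I_m/\sqrt\nu + \sqrt\nu\,\diag(\vecs)$ realize $\tilde Z_i$ as a Gaussian shifted by a scalar multiple of $D_i\beta^*$ with variance shrinking in $\nu$; for $i\in\mathcal{B}$ the event $\{W_i > 0\} = \{Z_i > \tilde Z_i\}$ forces a sign mismatch between this shift and $r_i$, whose probability is precisely the Gaussian tail identified with $\alpha(\nu)/(1+\alpha(\nu))$ in Equation \eqref{def: alpha(nu)}.

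Step (iii) plugs this lemma into the inverse-supermartingale framework of \cite{cao2021controlling}: writing $V^+(t) := |\{i\in\mathcal{B}: W_i\ge t\}|$ and $V^-(t) := |\{i\in\mathcal{B}: W_i\le -t\}|$, the biased coin-flip structure on $\mathcal{B}$ turns $V^+(t)/(1+V^-(t))$ into an inverse supermartingale with drift factor $\alpha(\nu)$, and optional stopping at $T_q$ yields $\E[V^+(T_q)/(1+V^-(T_q))] \le \alpha(\nu)$. Multiplying by $q$ via the defining threshold inequality for $T_q$ produces the $\alpha(\nu) q$ bounds in (a) and (b), while the cap $\min(\alpha(\nu),1)$ is automatic because both ratios are trivially at most $1$. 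Finally, $\alpha(\nu)\to 0$ should follow directly from Equation \eqref{def: alpha(nu)} once it is in hand: as $\nu$ grows, the deterministic shift in $\tilde Z_i$ dominates the $O(1/\sqrt\nu)$ noise scale, driving the Gaussian tail to zero. The main obstacle is step (ii): exchangeability in the classical knockoff argument is genuinely broken by the heterogeneous noise $\tilde\varepsilon_2$, so it must be replaced by a direct Gaussian calculation that simultaneously invokes all three parts of Proposition \ref{prop: structure of tagamma} to deliver both cross-coordinate independence and a one-sided probability small enough---uniformly over $D$---to decay to zero with $\nu$.
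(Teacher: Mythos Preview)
Your high-level plan---condition on $\D_1$ so that $|W|=Z$ and $r$ are frozen, prove a sign lemma on the bad set $\mathcal{B}$, then feed it into the optional-stopping argument at $T_q$---is exactly the paper's route. The problems are in the details of step (ii) and in one throwaway claim in step (iii).

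First, the assertion that ``the cap $\min(\alpha(\nu),1)$ is automatic because both ratios are trivially at most $1$'' is wrong: the supermartingale ratio $V^+(T_q)/(1+V^-(T_q))$ is \emph{not} bounded by $1$ in general (take $V^-=0$, $V^+=5$), so you cannot pass from $\alpha(\nu)q$ to $\min(\alpha(\nu),1)q$ this way. In the paper the cap at $1$ comes from a separate, easy half of Lemma~\ref{lemma: est of sign}: one shows $\{W_i\ge 0\}\subseteq\{r_i\zeta_i\le 0\}$ directly from the KKT pair \eqref{eq: kkts}, and then observes that $r_i\zeta_i+\vecs_ir_i\gamma^*_i$ is a mean-zero Gaussian with $\vecs_ir_i\gamma^*_i\le 0$ on $\mathcal{B}$, giving $\Prob[W_i<0\mid\D_1]\ge 1/2$ and hence $\rho\ge 1/2$ in the supermartingale lemma.

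Second, and more seriously, your mechanism for $\alpha(\nu)\to 0$ does not work. The description ``realize $\tilde Z_i$ as a Gaussian shifted by a scalar multiple of $D_i\beta^*$'' is off ($\tilde Z_i$ is an implicit hitting time, not Gaussian; the Gaussian object is $\zeta_i=(\tilde A_{\gamma_2}^T\tilde y_2)_i$), and more importantly the ``shift dominates the noise'' heuristic fails: the mean of $\zeta_i$ is $-\vecs_i\gamma^*_i=O(1/\nu)$ while its standard deviation is $O(1/\sqrt{\nu})$, so the \emph{standardized} shift tends to $0$. Using only $\{W_i\ge 0\}\subseteq\{r_i\zeta_i\le 0\}$ you therefore never beat $1/2$. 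The paper's argument (Lemmas~\ref{lemma: interval} and \ref{lemma: est c(nu)}) sharpens the inclusion to
\[
\{W_i\ge 0\}\subseteq\{-c(\nu)_i\le r_i\zeta_i\le 0\},
\]
an interval whose width $c(\nu)_i$ must be bounded. That bound is a nontrivial $\D_1$-side step: one shows, via an incoherence analysis of the Split LASSO path $\beta(\lambda)$ on $\D_1$, that the path cannot carry the wrong sign at $\lambda\gtrsim 1/\nu$, hence $Z_i=O(1/\nu)$ for $i\in\mathcal{B}$ and $c(\nu)_i=O(1/\nu)$. Then $\Prob[W_i\ge 0\mid\D_1]\lesssim c(\nu)_i/\sqrt{\var(\zeta_i\mid\D_1)}=O(1/\sqrt{\nu})$. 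Your proposal supplies neither the interval refinement nor the $\D_1$-side incoherence bound, and these two ingredients together are exactly what makes the ``hard'' half of Lemma~\ref{lemma: est of sign} go through.
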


\begin{remark}
As presented in Theorem \ref{theorem: directional fdr}, Split Knockoffs achieve the desired $\dfdr$ control for all $\nu>0$. Although the $\dfdr$ of Split Knockoffs can decrease to zero with the increase of $\nu$,  
overshooting $\nu$ to push the $\dfdr$ toward zero may cause a loss in the selection power. In practice, it is recommended to optimize over $\nu$ for the best selection power since the $\dfdr$ of Split Knockoffs is below the nominal level uniformly for all $\nu>0$. The details will be discussed by simulation experiments in Section \ref{sec: simu_exp} and the sign consistency of Split LASSO in Section \ref{sec:pathconsistency}. 
\end{remark}

In the following, a brief guideline on how to achieve Theorem \ref{theorem: directional fdr} is provided, while a complete proof of Theorem \ref{theorem: directional fdr} will be given in Section \ref{sec: proof thm}. First of all, from the standard procedure of Knockoffs in \cite{barber2015controlling, barber2019knockoff}, the following inequality is sufficient for Theorem \ref{theorem: directional fdr} (details in Section \ref{sec: proof thm}):
\begin{align}
    \Expect\left[\frac{\sum_{i}1\{W_i\ge T_q, \widehat{\sign}_i\neq\sign(\gamma^*_i)\}}{1+\sum_{i}1\{W_i\le -T_q, \widehat{\sign}_i\neq\sign(\gamma^*_i)\}}\right]\le \min(\alpha(\nu), 1).\label{eq: target}
\end{align}
On the way to achieve  Equation \eqref{eq: target}, the sample splitting scheme and the variable splitting scheme build up the following two crucial points, respectively.
\begin{enumerate}
    \item \label{benefit 1}\textbf{Benefits of Sample Splitting.} 
    Conditional on $\D_1$, the magnitude of $W$ statistics ($|W|$) and the sign estimator ($\widehat{\sign}$) are \emph{independent} from the sign of $W$ statistics ($\sign(W)$).
    \item \label{benefit 2}\textbf{Benefits of Variable Splitting.} 
    Conditional on $\D_1$, 
    the signs of $W$ statistics ($\sign(W_i)$) are independent from each other. Moreover, for any $\D_1$, there holds
    \begin{align*}
        &\left.\left\{\max_{i:  \widehat{\sign}_i\neq\sign(\gamma^*_i)}\frac{\Prob[W_i>0]}{\Prob[W_i<0]}\right|\D_1\right\}\le\min(\alpha(\nu), 1).
    \end{align*}
\end{enumerate}
With the above benefits, informally there holds
\begin{align}
    &\Expect\left[\frac{\sum_{i}1\{W_i\ge T_q, \widehat{\sign}_i\neq\sign(\gamma^*_i)\}}{1+\sum_{i}1\{W_i\le -T_q, \widehat{\sign}_i\neq\sign(\gamma^*_i)\}}\right] ,\nonumber\\
    =& \Expect\left[\left.\Expect\left[ \frac{\sum_{i: |W_i|\ge T_q, \widehat{\sign}_i\neq\sign(\gamma^*_i)}1\{W_i>0\}}{1+\sum_{i: |W_i|\ge T_q, \widehat{\sign}_i\neq\sign(\gamma^*_i)}1\{W_i<0\}}\right|\D_1\right]\right],\nonumber\\
    \lesssim & \left.\left\{\max_{i: |W_i|\ge T_q,\widehat{\sign}_i\neq\sign(\gamma^*_i)}\frac{\Prob[W_i>0]}{\Prob[W_i<0]}\right|\D_1\right\}\le\left.\left\{\max_{i:  \widehat{\sign}_i\neq\sign(\gamma^*_i)}\frac{\Prob[W_i>0]}{\Prob[W_i<0]}\right|\D_1\right\}\le\min(\alpha(\nu), 1),\label{eq: bound by martingale}
\end{align}
where the last line can be rigorously formalized by a supermartingale inequality similarly as Lemma 1 in \cite{barber2019knockoff} and an extension to that in \cite{cao2021controlling}. 
The respective technical details will be deferred to the proof of Theorem \ref{theorem: directional fdr} in Section \ref{sec: proof thm} to save pages. In the following, we focus on how the sample splitting scheme and the variable splitting scheme in Split Knockoffs enjoy the above benefits respectively.

\subsection{Benefits of Sample Splitting}

The insight on the benefits of sample splitting can be seen from a deeper view into the Split LASSO paths, i.e. the Karush–Kuhn–Tucker (KKT) conditions that the Split LASSO path \eqref{eq: stage 1 and 2} satisfies, which is given by Lemma \ref{lemma: distribution zeta}. 
\begin{lemma}
\label{lemma: distribution zeta}
The KKT conditions that Equation \eqref{eq: stage 1 and 2} should satisfy is
    \begin{subequations}
        \label{eq: kkts}
        \begin{align}
        &\lambda\rho(\lambda) + \frac{\gamma(\lambda)}{\nu}= \frac{D\beta(\lambda)}{\nu},\label{eq: feature sig}\\
        &\lambda\tilde{\rho}(\lambda) + \frac{\tilde{\gamma}(\lambda)}{\nu} = \frac{D\beta(\lambda)}{\nu} +\zeta,\label{eq: knockoff sig}
    \end{align}
\end{subequations}
where $\rho(\lambda) \in \partial \|\gamma(\lambda)\|_1$, $\tilde{\rho}(\lambda) \in \partial \|\tilde{\gamma}(\lambda)\|_1$, and $\zeta:=\tilde{A}_{\gamma_2}^T\tilde{y}_2\in\R^m$ follows the distribution
\begin{align}
        \zeta\sim \mathcal{N}\left(-\diag(\vecs)\gamma^*, \frac{1}{n_2}\diag(\vecs)(2I_m-\diag(\vecs)\nu)\sigma^2\right), \label{eq: zeta dis}
    \end{align}
    where $\mathcal{N}(\mu, \Sigma)$ denotes the multivariate Gaussian distribution.
\end{lemma}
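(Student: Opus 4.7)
The plan is to prove the two KKT identities by computing subdifferentials and then to identify the Gaussian distribution of $\zeta$ by plugging the reformulated regression model \eqref{eq: reformulated model on D_2} into $\tilde{A}_{\gamma_2}^T\tilde{y}_2$. Both pieces boil down to a careful bookkeeping of the matrix products $A_{\gamma_2}^TA_{\gamma_2}$, $A_{\gamma_2}^TA_{\beta_2}$, $A_{\gamma_2}^T\tilde y_2$, and their Split Knockoff counterparts via \eqref{eq: copy}.

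For the KKT conditions, I would start from the stationarity condition for \eqref{eq:stage1}: there exists $\rho(\lambda)\in\partial\|\gamma(\lambda)\|_1$ with
\begin{equation*}
  A_{\gamma_2}^TA_{\gamma_2}\,\gamma(\lambda)+\lambda\rho(\lambda)=A_{\gamma_2}^T\tilde y_2-A_{\gamma_2}^TA_{\beta_2}\beta(\lambda).
\end{equation*}
From the block form in \eqref{eq: new design matrix}, direct computation gives $A_{\gamma_2}^TA_{\gamma_2}=\frac{1}{\nu}I_m$, $A_{\gamma_2}^T\tilde y_2=0$, and $A_{\gamma_2}^TA_{\beta_2}=-\frac{D}{\nu}$, which rearranges into Equation \eqref{eq: feature sig}. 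The analogous stationarity condition for \eqref{eq:stage2} reads
\begin{equation*}
  \tilde A_{\gamma_2}^T\tilde A_{\gamma_2}\,\tilde\gamma(\lambda)+\lambda\tilde\rho(\lambda)=\tilde A_{\gamma_2}^T\tilde y_2-\tilde A_{\gamma_2}^TA_{\beta_2}\beta(\lambda).
\end{equation*}
The defining identities \eqref{eq: copy} imply $\tilde A_{\gamma_2}^T\tilde A_{\gamma_2}=A_{\gamma_2}^TA_{\gamma_2}=\frac{1}{\nu}I_m$ and $\tilde A_{\gamma_2}^TA_{\beta_2}=A_{\gamma_2}^TA_{\beta_2}=-\frac{D}{\nu}$; abbreviating the remaining term as $\zeta:=\tilde A_{\gamma_2}^T\tilde y_2$ and rearranging yields Equation \eqref{eq: knockoff sig}.

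For the distribution of $\zeta$, I would substitute \eqref{eq: reformulated model on D_2} to get $\zeta=\tilde A_{\gamma_2}^TA_{\beta_2}\beta^*+\tilde A_{\gamma_2}^TA_{\gamma_2}\gamma^*+\tilde A_{\gamma_2}^T\tilde\varepsilon_2$. Using the same reductions together with the third identity in \eqref{eq: copy}, which gives $\tilde A_{\gamma_2}^TA_{\gamma_2}=\frac{1}{\nu}I_m-\diag(\vecs)$, the mean becomes $-\frac{D\beta^*-\gamma^*}{\nu}-\diag(\vecs)\gamma^*$. The structural constraint $\gamma^*=D\beta^*$ from model \eqref{eq: structural sparsity model} cancels the first term, leaving $\Expect[\zeta]=-\diag(\vecs)\gamma^*$. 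For the covariance, only the top block of $\tilde\varepsilon_2$ carries noise (with covariance $\frac{\sigma^2}{n_2}I_{n_2}$), so $\var(\zeta)=\frac{\sigma^2}{n_2}\tilde A_{\gamma_2,1}^T\tilde A_{\gamma_2,1}$, and part 3 of Proposition \ref{prop: structure of tagamma} identifies this with $\frac{\sigma^2}{n_2}\diag(\vecs)(2I_m-\diag(\vecs)\nu)$. Gaussianity follows since $\zeta$ is an affine function of the Gaussian vector $\varepsilon_2$.

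There is no genuinely hard step here; the only potential pitfall is keeping the block dimensions of $\tilde y_2$, $A_{\beta_2}$, $A_{\gamma_2}$ and $\tilde A_{\gamma_2}$ straight while invoking the three identities in \eqref{eq: copy}, and in particular noticing that the symmetric structure forces $\tilde A_{\gamma_2}^TA_{\gamma_2}=(A_{\gamma_2}^T\tilde A_{\gamma_2})^T$ to match $A_{\gamma_2}^TA_{\gamma_2}-\diag(\vecs)$. The role of sample splitting is implicit: $\beta(\lambda)$ is a function of $\D_1$ only, so treating it as fixed when taking the $\gamma$-subdifferential on $\D_2$ is legitimate, and the computation of the distribution of $\zeta$ depends only on the noise in $\tilde y_2$.
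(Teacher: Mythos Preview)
Your proposal is correct and follows essentially the same direct-computation route as the paper: derive the KKT conditions from the stationarity equations together with the block forms in \eqref{eq: new design matrix} and the identities \eqref{eq: copy}, then read off the mean and covariance of $\zeta$ via Proposition~\ref{prop: structure of tagamma}. The only cosmetic difference is that the paper computes the mean by writing $\zeta=\tilde A_{\gamma_2,1}^T y_2/\sqrt{n_2}$ and invoking part~2 of Proposition~\ref{prop: structure of tagamma} (i.e., $\tilde A_{\gamma_2,1}^T X_2=-\sqrt{n_2}\,\diag(\vecs)D$) directly, whereas you go through the reformulated model \eqref{eq: reformulated model on D_2} and the copy identities; the two are equivalent since Proposition~\ref{prop: structure of tagamma} is itself derived from \eqref{eq: copy}.
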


\begin{remark}
    Equation \eqref{eq: zeta dis} shows that $\zeta$ consists of independent Gaussian random variables. Such a property results from the orthogonality of split knockoff copy matrix $\tilde{A}_{\gamma_2}$ given in Proposition \ref{prop: structure of tagamma}. 
\end{remark}

Following Equation \eqref{eq: simp stage1}, 
the solution path $\gamma(\lambda)$ in Equation \eqref{eq:stage1} is determined by $\D_1$ through $\beta(\lambda)$. Since $|W|=Z$ \eqref{def: z and tilde z} and $\widehat{\sign}_i=r_i$ \eqref{def: r} are determined by $\gamma(\lambda)$, $\D_1$ thus determines $|W|$ and $\widehat{\sign}$.

On the other hand, the signs of the $W$ statistics ($\sign(W_i) = \sign(Z_i-\tilde{Z_i})$) rely on the difference between Equation \eqref{eq:stage1} and \eqref{eq:stage2}, where the only difference in their KKT conditions \eqref{eq: kkts} lies in the random variable $\zeta=\tilde{A}_{\gamma_2}^T\tilde{y}_2$ determined by $\D_2$. Therefore, conditional on $\D_1$ which determines $\beta(\lambda)$ and $\gamma(\lambda)$ (consequently $|W|$ and $\widehat{\sign}$), $\sign(W_i)$ is determined by $\D_2$ (though $\zeta_i$) for all $i$. 
In this regard, conditional on $\D_1$, $|W|$ and $\widehat{\sign}$ are independent from $\sign(W)$.

Moreover, it can be further shown that the sample splitting scheme is essential for achieving the $\dfdr$ control. Specifically, conducting Split Knockoffs without implementing the sample splitting scheme can lead to an inflation in $\dfdr$ control. We will provide a detailed discussion on this topic in Section \ref{sec: inflation}.

\subsection{Benefits of Variable Splitting}

Furthermore, Lemma \ref{lemma: distribution zeta} shows that conditional on $\D_1$, $\sign(W_i)$ is determined by $\zeta_i$, independent Gaussian random variables. 
Consequently, $\sign(W_i)$ are independent from each other, conditional on $\D_1$ that determines $|W|$ and $\widehat{\sign}$. With further detailed calculation, Lemma \ref{lemma: est of sign} shows that Bernoulli random variables $\sign(W_i)$ are biased toward the negative sign.

\begin{lemma}
    \label{lemma: est of sign}
    Conditional on $\D_1$, $\sign(W_i)$ are independent random variables. Furthermore, 
    for $i\in \{\widehat{\sign}_i \neq \sign(\gamma^*_i)\}$, there holds
    \begin{align*}
        \Prob[W_i<0] \ge \max\left(\frac{1}{2}, f(\nu)\right),
    \end{align*}
    where $f(\nu)$ is an increasing function of $\nu$ defined in Equation \eqref{def: f(nu)} s.t. $\lim_{\nu\to\infty}f(\nu) = 1$.
\end{lemma}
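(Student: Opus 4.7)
The plan is to reduce $\sign(W_i)$ to a function of the single Gaussian $\zeta_i$ conditional on $\D_1$, and then read off both lower bounds from elementary Gaussian calculations. Conditional on $\D_1$, the path $\beta(\lambda)$ from Equation~\eqref{eq: beta(lambda)}, and consequently $\gamma(\lambda)$, $Z_i$, $r_i$, and $|W_i|$, are all deterministic via Equation~\eqref{eq: simp stage1}. The KKT condition~\eqref{eq: knockoff sig} for $\tilde\gamma$ differs from~\eqref{eq: feature sig} only by the additive shift $\zeta$, yielding the componentwise soft-thresholding representation
\begin{equation*}
    \tilde\gamma_i(\lambda)\ne 0 \iff \bigl|[D\beta(\lambda)]_i + \nu\zeta_i\bigr| > \nu\lambda.
\end{equation*}
Hence $\tilde Z_i$, and therefore $\sign(W_i) = \sign(Z_i - \tilde Z_i)$, is a deterministic function of $\zeta_i$ alone given $\D_1$. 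Since Lemma~\ref{lemma: distribution zeta} together with Proposition~\ref{prop: structure of tagamma} makes $\zeta$ a Gaussian vector with diagonal covariance, the $\zeta_i$'s are independent, and the independence of the $\sign(W_i)$'s conditional on $\D_1$ follows.

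For the probability bound, pass to the limit $\lambda\to Z_i^-$ in~\eqref{eq: feature sig} to obtain $[D\beta(Z_i)]_i = r_i\nu Z_i$. Substituting into the soft-thresholding representation at $\lambda = Z_i$, the event $\tilde\gamma_i(Z_i)\ne 0$---which implies $\tilde Z_i\ge Z_i$ and hence $W_i\le 0$---is equivalent to $\{r_i\zeta_i > 0\}\cup\{r_i\zeta_i < -2Z_i\}$. Conditional on $\D_1$, $r_i\zeta_i$ is Gaussian with mean $-s_i r_i\gamma_i^*$ and variance $\tau_i^2 = s_i(2-s_i\nu)\sigma^2/n_2$. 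The hypothesis $\widehat{\sign}_i\ne\sign(\gamma_i^*)$ forces $r_i\gamma_i^*\le 0$, hence a non-negative mean, so Gaussian symmetry immediately gives $\Prob[r_i\zeta_i > 0 \mid \D_1]\ge 1/2$, the first half of the bound. Summing both disjoint tail probabilities through the standard normal CDF and taking $f(\nu)$ in Equation~\eqref{def: f(nu)} to be the infimum of the resulting expression over admissible $(s_i, Z_i, \gamma_i^*)$ delivers the second half.

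The main obstacle is verifying that $f(\nu)$ so defined is monotone increasing in $\nu$ with limit $1$, since the symmetry argument that yields $1/2$ is $\nu$-free and gives no improvement on its own. This demands a joint scaling analysis of the $\vecs$-construction in Section~\ref{sec: construct split knockoff}, showing that as $\nu$ widens the relaxed neighborhood the mean $-s_i r_i\gamma_i^*$ comes to dominate the dispersion $\tau_i$ in the relevant direction, so that $r_i\zeta_i$ concentrates inside the favorable region. Once this scaling is secured, the remaining monotonicity and limit claims reduce to routine Gaussian-CDF manipulations.
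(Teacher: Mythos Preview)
Your treatment of independence and the $\tfrac12$ bound is essentially the paper's: reducing $\sign(W_i)$ to a function of the single coordinate $\zeta_i$, using $[D\beta(Z_i)]_i=r_i\nu Z_i$ from the KKT condition, and noting that $r_i\zeta_i$ has nonnegative mean when $r_i\ne\sign(\gamma_i^*)$. That part is fine.

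The gap is in the $f(\nu)$ half. Your proposed mechanism---that the mean $-s_ir_i\gamma_i^*$ eventually dominates the dispersion $\tau_i$---does not work. With the equi-correlated choice $s_i\asymp 1/\nu$, the mean scales like $1/\nu$ while $\tau_i\asymp 1/\sqrt{\nu}$, so the standardized mean tends to $0$, not $\infty$. Worse, every null index ($\gamma_i^*=0$) belongs to $\{\widehat{\sign}_i\ne\sign(\gamma_i^*)\}$ and has mean exactly $0$ for all $\nu$; for such $i$ your expression $\Prob[r_i\zeta_i>0]+\Prob[r_i\zeta_i<-2Z_i]$ equals $\tfrac12+\Phi(-2Z_i/\tau_i)$, and taking the infimum over unrestricted $Z_i$ gives $\tfrac12$ regardless of $\nu$. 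So $f(\nu)$ defined as an infimum over ``admissible $(s_i,Z_i,\gamma_i^*)$'' without further constraints is identically $\tfrac12$.

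What the paper actually uses is a bound on $Z_i$ itself for wrongly-signed indices, coming from a Split LASSO incoherence analysis on $\D_1$ (Lemma~\ref{lemma: est c(nu)}). On the event $\|X_1^T\varepsilon_1/n_1\|_\infty<\kappa\sigma/\sqrt{n_1}$ and for $\nu>4C_XC_D^2$, any $i$ with $r_i\ne\sign(\gamma_i^*)$ must satisfy $Z_i\le \tfrac{4}{\nu}\kappa C_XC_D\sigma/\sqrt{n_1}$, because the path $\gamma_i(\lambda)$ cannot emerge with the wrong sign above that threshold. Then the unfavorable interval $[-2Z_i,0]$ has length $O(1/\nu)$ while $\tau_i\asymp 1/\sqrt{\nu}$, so its Gaussian mass is $O(1/\sqrt{\nu})\to 0$; this is the content of Equation~\eqref{def: f(nu)}. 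The $\nu\to\infty$ improvement is driven by the shrinking of $Z_i$ on $\D_1$, not by the distribution of $\zeta$ on $\D_2$.
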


Lemma \ref{lemma: est of sign} summarizes the benefits brought by the sample splitting scheme and the variable splitting scheme. In particular, it brings the following two properties:
\begin{enumerate}
    \item independence among $\sign(W_i)$ conditional on $\D_1$ (which determines $|W|$, $\widehat{\sign}$) that enables a supermartingale inequality as in Lemma 1 in \cite{barber2019knockoff}; 
    \item the desired lower bound on $\Prob[W_i<0]$ for $i\in \{\widehat{\sign}_i \neq \sign(\gamma^*_i)\}$ which enables the upper bound in Equation \eqref{eq: target} and \eqref{eq: bound by martingale}.
\end{enumerate}

With the properties above, the following supermartingale inequality (whose detailed proof is provided in Section \ref{sec: proof thm}) can be achieved,
\begin{align}
    \Expect\left.\left[\frac{\sum_{i}1\{W_i\ge T_q, \widehat{\sign}_i\neq\sign(\gamma^*_i)\}}{1+\sum_{i}1\{W_i\le -T_q, \widehat{\sign}_i\neq\sign(\gamma^*_i)\}}\right|\D_1\right]\le \min(\alpha(\nu), 1),\label{eq: conditional target}
\end{align}
where $T_q$ is the stopping time on the supermartingale structure associated with 
$$\left.\frac{\sum_{i}1\{W_i\ge T, \widehat{\sign}_i\neq\sign(\gamma^*_i)\}}{1+\sum_{i}1\{W_i\le -T, \widehat{\sign}_i\neq\sign(\gamma^*_i)\}}\right|\D_1,$$ for $T>0$. 
Taking expectation over $\D_1$ in Equation \eqref{eq: conditional target} leads to Equation \eqref{eq: target}. 
The detailed proof of Theorem \ref{theorem: directional fdr} is given in Section \ref{sec: proof thm}, while the proof of Lemma \ref{lemma: distribution zeta} and \ref{lemma: est of sign} are provided in Section \ref{sec: proof zeta} and \ref{sec: proof lemma} respectively.

\section{Simulation Experiments}
\label{sec: simu_exp}

In this section, simulation experiments  are conducted to validate the effectiveness of Split Knockoffs in both $\dfdr$ control and selection power, where standard Knockoffs \cite{barber2019knockoff} are implemented for comparisons when both are applicable.

\subsection{Models}
In simulation experiments, the rows of the design matrix $X\in \mathbb R^{n\times p}$ are generated  independent and identically distributed (i.i.d.) from $\mathcal{N}(0_p, \Sigma)$ where $\Sigma_{i,j}=0.5^{|i-j|}$ for all $i,j$. We take $n=500$ and $p=100$ in this section. The regression coefficient  $\beta^*\in\mathbb R^p$ is taken as
\begin{equation*}
    \beta_i^*:=\left\{
    \begin{array}{ccl}
        1   &   & i \le 20,\ i \equiv 0, -1 (\mathrm{mod}\ 3),\\
        0   &   & \mathrm{otherwise}.
    \end{array} \right.
\end{equation*}
Then the response vector $y\in\R^n$ are generated from,
$$y = X \beta^* + \varepsilon,$$
where $\varepsilon\in\R^n$ is generated from $\Nm(0_n, I_n)$. Three types of linear transformations $\{D_i\}_{i=1}^3$ are tested in this section, where $\gamma^*$ is generated by $\gamma^* = D_i\beta^*$ for each $D_i$.
\begin{itemize}
    \item $\beta^*$ is sparse itself, so we take $D_1=I_p\in\R^{p\times p}$. In this case, $m = p$. 
    \item $\beta^*$ is an uni-dimensional piece-wise constant function, so we take $D_2$  as the graph difference operator on a line, i.e. $D_2\in \mathbb R^{(p-1)\times p}$, $D_2(i, j)=1_{\{i=j\}}-1_{\{i=j-1\}}$, for $(i, j) \in \{1, 2, \cdots p-1\}\times \{1, 2, \cdots p\}$. In this case, $m=p-1<p$.
    \item Combining the above two points, $\beta^*$ is a sparse piece-wise constant function, so we take $D_3=\left[\begin{array}{c}
        D_1 \\
        D_2
    \end{array}\right]\in \R^{(2p-1)\times p}$. In this case, $m=2p-1>p$.
\end{itemize}

The first two cases, $D_1$, $D_2$, are two special cases where the linear transformation has full row rank, and Knockoffs can be applicable (details in Section \ref{sec: special cases}). In this regard, the Knockoff method are implemented for comparisons in these cases. Meanwhile, for $D_3$ where $\rank(D_3) = p<2p-1=m$, only the Split Knockoff method will be implemented, as Knockoffs are no longer applicable. 

In addition to the $\dfdr$ control, the selection power defined as
\begin{align*}
    \mathrm{Power} = \Expect\left[\frac{|\{i:i\in\widehat{S}, \widehat{\sign}_i = \sign(\gamma^*_i)\}|}{|\{i: \sign(\gamma^*_i)\neq 0\}|}\right],
\end{align*}
is also presented to evaluate the accuracy of the directional effect estimation.

\subsection{Results}
Figure \ref{fig: comparison} presents the performance of Split Knockoffs and Knockoffs (when applicable) in $\dfdr$ control and selection power, where the nominal $\dfdr$ level is taken to be $q= 0.2$. The performance of Split Knockoffs is presented for a sequence of $\log_{10}\nu$ between 0 and 2 with a step size 0.2, while the effects of other parameters aside from $\nu$ are presented in Section \ref{sec: supp simu}. For Split Knockoffs, we randomly divide the full dataset $\D$ into $\D_1$ and $\D_2$ with sample sizes $n_1=200$ and $n_2 = 300$, while Knockoffs are implemented on the full dataset $\D$ with a sample size $n = 500$. The Knockoff copy is taken as the SDP Knockoffs \citep{barber2015controlling}, while the Split Knockoff copy is constructed in the equi-correlated way for simplicity (i.e. by Equation \eqref{eq: equi corre} in Section \ref{sec: construct split knockoff}).  
The feature importance statistics of Knockoffs are constructed through the emergence point of the LASSO path (the way presented in Section 1.2 in \cite{barber2015controlling}).

\begin{figure}[!ht]
\centering
\subfigure[$\dfdr$ in $D_1$]{
\begin{minipage}[t]{0.33\textwidth}
\centering
\includegraphics[width=\textwidth]{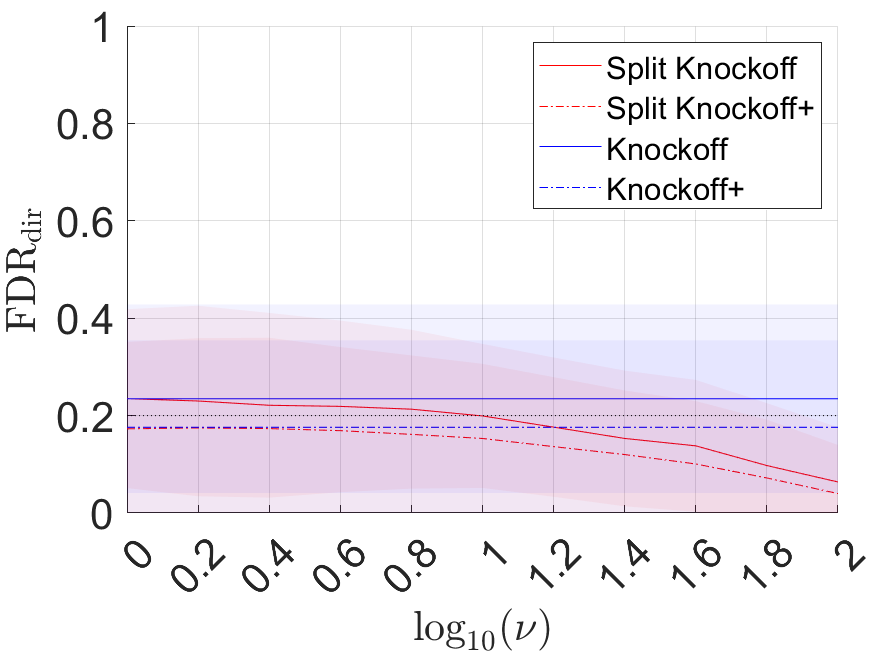}
\label{fig: d1 fdr}
\end{minipage}%
}%
\subfigure[$\dfdr$ in $D_2$]{
\begin{minipage}[t]{0.33\textwidth}
\centering
\includegraphics[width=\textwidth]{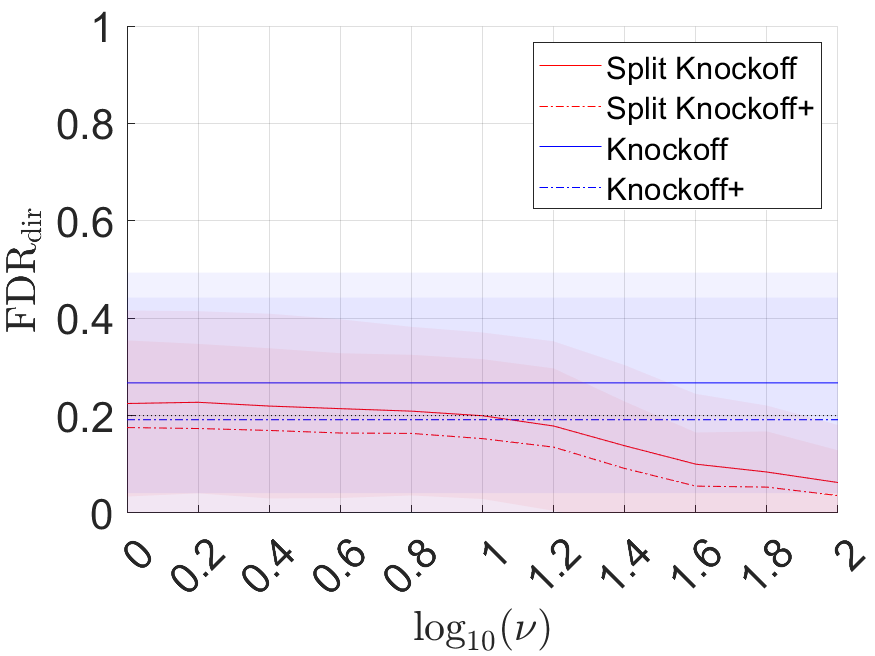}
\label{fig: d2 fdr}
\end{minipage}%
}%
\subfigure[$\dfdr$ in $D_3$]{
\begin{minipage}[t]{0.33\textwidth}
\centering
\includegraphics[width=\textwidth]{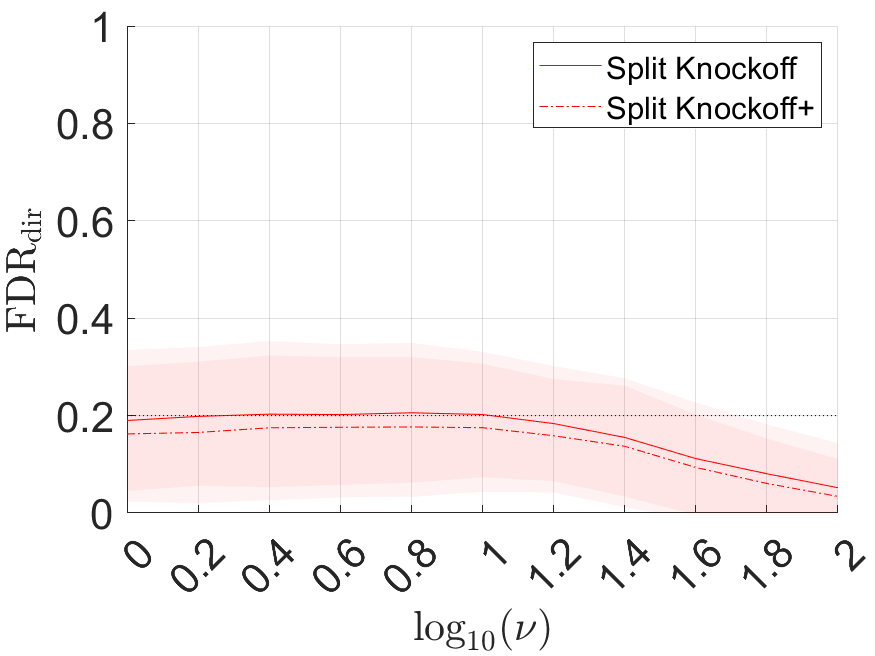}
\label{fig: d3 fdr}
\end{minipage}%
}%

\centering
\subfigure[Power in $D_1$]{
\begin{minipage}[t]{0.33\textwidth}
\centering
\includegraphics[width=\textwidth]{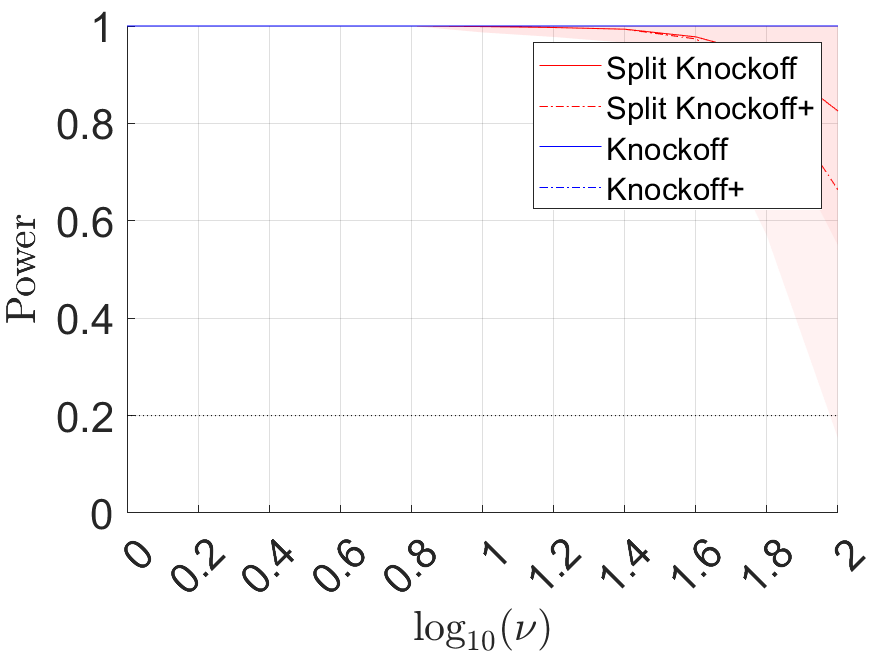}
\label{fig: d1 power}
\end{minipage}%
}%
\subfigure[Power in $D_2$]{
\begin{minipage}[t]{0.33\textwidth}
\centering
\includegraphics[width=\textwidth]{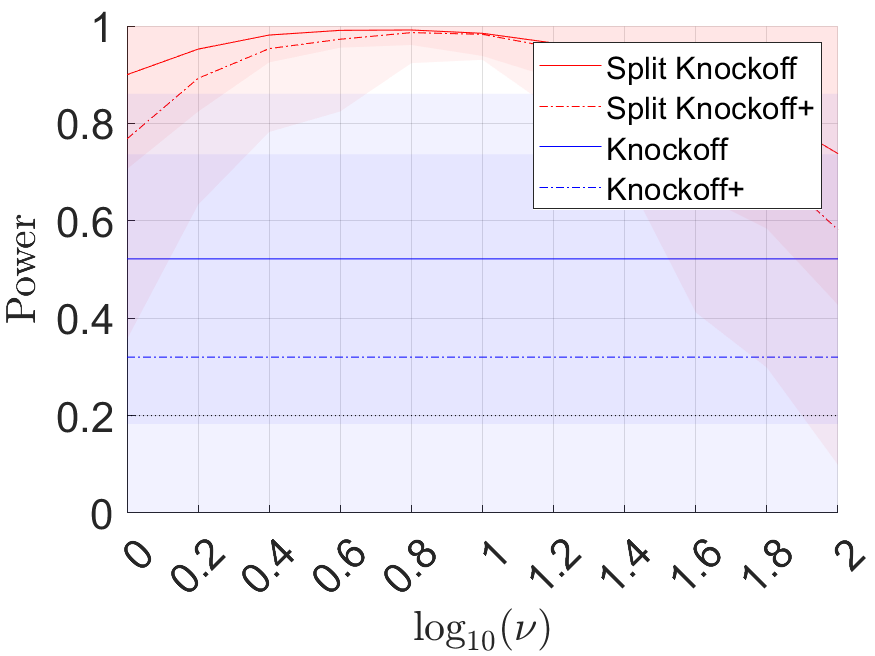}
\label{fig: d2 power}
\end{minipage}%
}%
\subfigure[Power in $D_3$]{
\begin{minipage}[t]{0.33\textwidth}
\centering
\includegraphics[width=\textwidth]{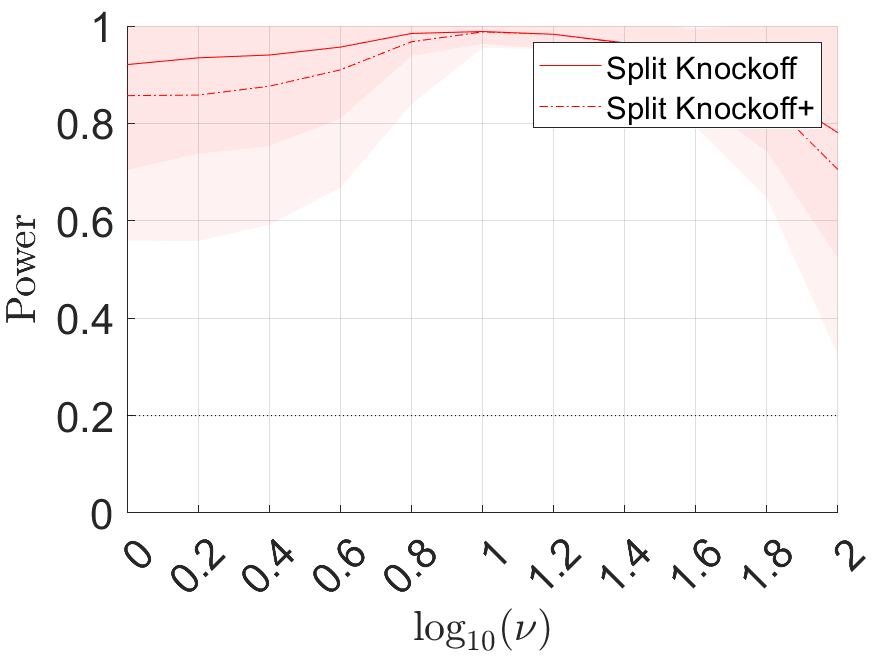}
\label{fig: d3 power}
\end{minipage}%
}%

\caption{Performance of Split Knockoffs and Knockoffs (when applicable) in $\dfdr$ and power ($q=0.2$). The curves in the figures represent the average performance in 200 simulation instances, while the shaded areas represent the 80\% confidence intervals truncated to the range $[0, 1]$.}
\label{fig: comparison}
\end{figure}

\noindent{\bf{$\textbf{FDR}_{\textbf{dir}}$}}: Figure \ref{fig: comparison} shows that Split Knockoffs achieve desired $\dfdr$ control in all cases, where Split Knockoff+ improves $\dfdr$ over Split Knockoff slightly. For $D_1$ and $D_2$ where Knockoffs are applicable, Split Knockoffs shows comparable or possibly better $\dfdr$ control compared with Knockoffs. In particular, when $\nu$ is large, the $\dfdr$ for Split Knockoffs decreases  to zero with the increase of $\nu$ in all cases, as predicted by Theorem \ref{theorem: directional fdr} in Section \ref{sec: analysis}.

\begin{table}[!ht]
    \caption{Performance of Split Knockoffs and Knockoffs (when applicable) in $\dfdr$ and power ($q=0.2$), where the means and standard deviations in 200 simulation instances are provided. For Split Knockoffs, $\nu$ is chosen by cross validation with Split LASSO on $\D_1$.}
    \label{tab: simu_compare}
    \centering
    \begin{tabular}{|c|cccc|}
    \hline
        Performance & Knockoff & Split Knockoff & Knockoff+ & Split Knockoff+\\
        \hline
        \multirow{2}*{FDR in $D_1$} & 0.2346 & 0.2348 &  0.1759 &  0.1725 \\
        ~& $\pm$0.1505 & $\pm$0.1428 & $\pm$0.1389 & $\pm$0.1380 \\
        \multirow{2}*{Power in $D_1$} & 1.0000 & 1.0000 & 1.0000 & 1.0000 \\
        ~& $\pm$0.0000 & $\pm$0.0000 & $\pm$0.0000 & $\pm$0.0000 \\
        \hline
        \multirow{2}*{FDR in $D_2$} & 0.2672 & 0.2216 &  0.1916 & 0.1730  \\
        ~& $\pm$0.1760 & $\pm$0.1442 & $\pm$0.1948 & $\pm$0.1363 \\
        \multirow{2}*{Power in $D_2$} & 0.5218 & 0.9139 &  0.3200  & 0.8071\\
        ~& $\pm$0.2637 & $\pm$0.1416 & $\pm$0.3240 & $\pm$0.2929 \\
        \hline
        \multirow{2}*{FDR in $D_3$} & N/A & 0.1920 & N/A &  0.1631 \\
        ~& N/A & $\pm$0.1125 & N/A & $\pm$0.1080 \\
        \multirow{2}*{Power in $D_3$} & N/A & 0.9228  & N/A &  0.8572 \\
        ~& N/A & $\pm$0.1659 & N/A & $\pm$0.2314 \\
    \hline
    \end{tabular}
\end{table}

\noindent{\bf Power}: For a wide range of $\nu$, Split Knockoffs exhibit desired selection power in all cases, where Split Knockoff+ sacrifices a little bit power for improved $\dfdr$ control compared with Split Knockoff. Compared with Knockoffs, Split Knockoffs achieve comparable selection power when the linear transformation is trivial ($D_1$), and higher selection power when the linear transformation is non-trivial ($D_2$). The improved selection power benefits from both the orthogonal design of $\gamma$ introduced in Equation \eqref{eq: reformulated model on D_2} and our newly defined $W$ statistics \eqref{eq: def_w} which induces an inclusion property on the selection set (Proposition \ref{prop: inclusion} in Section \ref{sec: inclusion}).

For Split Knockoffs, 
increasing $\nu$ can improve the power to select strong nonnulls as the incoherence condition gets improved, at a possible risk of losing weak nonnulls. This phenomenon is discussed through the sign consistency of Split LASSO in Section \ref{sec:pathconsistency}. In practice, it is recommended to optimize $\nu$ on $\D_1$ for better selection power without losing the $\dfdr$ control. In particular, Table \ref{tab: simu_compare} presents the performance of Split Knockoffs when $\nu$ is chosen by cross validation with Split LASSO on $\D_1$, with comparisons against Knockoffs when applicable. In Table \ref{tab: simu_compare}, Split Knockoffs achieves desired $\dfdr$ control and enjoys improved selection power compared with Knockoffs when the transformation is nontrivial, validating the effectiveness on this choice of $\nu$.

\section{Applications}

\label{sec: applications}

In this section, we implement Split Knockoffs on two real world applications. In the first application, we select the lesion brain regions and abnormal connections of brain regions with large activation changes in Alzheimer's Disease with structural Magnetic Resonance Imaging (sMRI) data. In the second application, we make pairwise comparisons on human ages based on voluntarily annotated data on face images.

\subsection{Alzheimer's Disease} 
\label{sec: AD-exp}

In this application, we study lesion brain regions as well as  abnormal connections of brain regions with large activation changes in Alzheimer's Disease (AD). The data is obtained from ADNI (\url{http://adni.loni.ucla.edu}) dataset, acquired by structural Magnetic Resonance Imaging (MRI) scan. In total, the dataset contains $n=752$ samples. 
For each image, the Dartel VBM \citep{ashburner2007fast} is implemented for pre-processing, followed by the toolbox \emph{Statistical Parametric Mapping} (SPM) for segmentation of gray matter (GM), white matter (WM) and cerebral spinal fluid (CSF). Then Automatic Anatomical Labeling (AAL) atlas \citep{tzourio2002automated} is applied to partition the whole brain into $p=90$ Cerebrum brain anatomical regions, with the volume of each region (summation of all GMs in the region) provided.

The design matrix $X \in \mathbb{R}^{n \times p}$ consists of the region volumes obtained by  structural MRI scan, whose element $X_{i,j}$ represents the (column normalized) volume of region $j$ in the subject $i$'s brain respectively. The response vector $y \in \mathbb{R}^{n}$ denotes the Alzheimer's Disease Assessment Scale (ADAS), which was originally designed to assess the severity of cognitive dysfunction \citep{rosenwg1984scale} and was later found to be able to clinically distinguish the diagnosed AD from normal controls \citep{zec1992alzheimer}. The following two types of transformations are implemented in this section:
\begin{enumerate}
    \item[(a)] $D=I_{p}$, for selecting the lesion Cerebrum  brain regions, where $m=p=90$;
    \item[(b)] $D$ is the graph difference operator on the connectivity graph of Cerebrum  brain regions, for selecting the abnormal connections of regions with large activation changes accounting for the disease, where $m=463>p=90$.
\end{enumerate}
For the region selection, Split Knockoff is implemented with respect to a sequence of $\log_{10}\nu$ between -2 and 0 with a step size 0.2, while for the connection selection, Split Knockoff is implemented on $\log_{10}\nu$ between 0 and 2. The dataset is split into two with sample sizes $n_1 = 150$ and $n_2 = 602$ respectively. The Knockoff is implemented for comparisons in the region selection on the full dataset with a sample size $n = 752$, under the same settings as in Section \ref{sec: simu_exp}.

\begin{table}[!ht]
    \caption{Selected regions by Knockoff and Split Knockoff in Alzheimer's Disease ($q = 0.2$). The sign $-1$/$1$ marks the estimated atrophied/enlarged. regions accounting for the disease. 
    For Split Knockoff, the optimal choice of $\nu$ selected by Split LASSO  on $\D_1$ is $\log_{10}\nu = -2$.}
    \centering
    \resizebox{\textwidth}{!}{
    \begin{tabular}{|c|c|cccc|}
    \hline
        \multirow{2}*{Region} & \multirow{2}*{Knockoff}  & \multicolumn{4}{c|}{Split Knockoff with $\log_{10}\nu$}\\
        \cline{3-6}
        ~ & ~ & \textbf{\{-2, -1.8\}}& -1.6 & \{-1.4, -1.2\} & \{-1:\ 0.2:\ 0\}\\
        \hline
        Hippocampus (L) &  -1&   \textbf{-1} & -1 & -1 & -1 \\
        Hippocampus (R) & -1&   \textbf{-1} & -1 & -1 & -1 \\
        Middle temporal gyrus (L) & -1  & \textbf{-1}  & -1 & -1 & -1 \\
        Middle temporal gyrus (R) & -1  & \textbf{-1}  &  &  &  \\
        Inferior temporal gyrus (L) & -1  & \textbf{-1}  &  &  &  \\
        Amygdala (L) &   &   \textbf{-1} & -1 & -1 & \\
        Supramarginal gyrus (R) &  &  &  & -1 & -1 \\
        Inferior frontal gyrus, opercular part (L) &  &  &  &  & -1\\
        
        Inferior parietal gyrus (R) &  &  &  & & -1 \\
        
        \hline
      \end{tabular}
      }
      \label{tab: region}
\end{table}

 \paragraph*{Region Selection:} The target $\dfdr$ is set to be $q = 0.2$, and the region selection results are presented in Table \ref{tab: region}. For Split Knockoff, the minimal cross-validation loss happens in the case $\log_{10}\nu =-2$, where the selected six regions are marked in Figure \ref{fig: ad front}. Both Split Knockoff and Knockoff select two-side Hippocampus, two-side Middle Temporal Gyrus, and Inferior parietal gyrus (L) as degenerated regions, which have been found to suffer from atrophy among AD patients with functional deficits in language, memory processing \citep{vemuri2010role,schuff2009mri,visser2002medial}, and sensory interpretation \citep{radua2010neural, greene2010subregions}. Besides, the Split Knockoff additionally selects the Amygdala (L), which is involved in memory, decision making, and emotional responses \citep{gupta2011amygdala}, and has been found to suffer from atrophy in the progression of AD \citep{vereecken1994neuron}.

\paragraph*{Connection Selection:}

In this experiment, we select abnormal connections of neighbouring regions with large activation changes accounting for the disease. 
We set $D$ as the graph gradient (difference) operator on the graph $G=(V,E)$ where $V$ denotes the vertex set of brain regions and $E$ denotes the (oriented) edge set of neighboring region pairs, such that $D\beta\in\R^m$ consists of $\beta_i - \beta_j$ for each pair of $(i, j)\in E$. In this way, each selected connection involves two regions that undergo different degrees of degeneration, and the corresponding direction points from the less atrophied region to the significantly more atrophied region.

\begin{table}[!ht]
    \caption{Selected connections by Split Knockoff on Alzheimer's Disease ($q = 0.2$). The sign -1 suggests that Region 1 suffers more severe atrophy compared with Region 2. For Split Knockoff, the optimal choice of $\nu$ selected by Split LASSO  on $\D_1$ is $\log_{10}\nu = 0.4$.}
    \centering
    \resizebox{\textwidth}{!}{
    \begin{tabular}{|c|c|cccccccc|}
        \hline
        \multicolumn{2}{|c|}{Connection} & \multicolumn{8}{c|}{Split Knockoff with $\log_{10}\nu$}\\
        \hline
        Region 1 & Region 2  & 0 & 0.2 & \textbf{0.4} & 0.6 & 0.8 & \{1.0, 1.2\} & \{1.4,1.6\} & \{1.8, 2.0\}\\
        \hline
        Hippocampus (L) & Posterior cingulate gyrus (L) &-1 & -1 & \textbf{-1} & -1 &-1 &-1 & -1& -1\\
        Hippocampus (L) & Lingual gyrus (L)  & -1&-1 &\textbf{-1} & -1 & -1 &-1 &-1 & -1\\
        Hippocampus (L) & Precuneus (L)  & -1&-1 &\textbf{-1} & -1 & -1 &-1 &-1 & -1\\
        Hippocampus (L) & Putamen (L) & & & \textbf{-1} & -1 &-1 &-1 &-1 & -1\\
        Hippocampus (L) & Fusiform gyrus (L)  & & &\textbf{-1} &-1 & -1 &-1 & -1& -1\\
        Hippocampus (L) & Heschl's gyrus (L) & & & &-1 & & & -1& -1\\
        Hippocampus (L) & Insula (L) & & & & -1 &-1 & &-1 & \\

        Hippocampus (L) & Thalamus (L)  & & & & -1 &-1 & & & \\
        Hippocampus (L) & Inferior temporal gyrus (L)  & & & & -1 &-1 & & & \\
        Hippocampus (R) & Lingual gyrus (R)  &-1 & -1& \textbf{-1} & -1 &-1 & & & \\
        Hippocampus (R) & Superior temporal gyrus (R)  & & & & -1 &-1 & & & \\
        Hippocampus (R) & Insula (R)  & & & & -1 & & & & \\
        Amygdala (R) & Putamen (R)  & & & \textbf{-1} & -1 &-1 & & & \\
        
        Inferior frontal gyrus, opercular part (L) & Inferior frontal gyrus, triangular part (L)  & & & & -1 & & & & \\
        Middle cingulated gyrus (R) & Medial frontal gyrus (R) & & & & -1& & & & \\
        Insula (L) & Middle frontal gyrus, orbital part (L)  & -1 & & & & & & & \\
        \hline
    \end{tabular}
    }
    \label{tab: connection}
\end{table}

The target $\dfdr$ is set to be $q = 0.2$ and the selected directed connections by Split Knockoff are presented in Table \ref{tab: connection}. For Split Knockoff, the minimal cross-validation loss happens in the case $\log_{10}\nu =0.4$, where the selected connections are marked in Figure \ref{fig: ad front}. As presented in Table \ref{tab: connection}, most of the selected connections are connected to Hippocampus, suggesting that Hippocampus suffers from the most significant degree of atrophy, in comparison with the neighboring regions. This can be supported by existing studies \citep{juottonen1999comparative} that the Hippocampus is one of the earliest and the most degenerated regions for Alzheimer's Disease. It is difficult to say whether Amygdala (R) is more atrophied than Putamen (R), as both regions have been found to suffer from degeneration \citep{de2008strongly, poulin2011amygdala}. The sign  may depend on the severity of AD patients, that according to \cite{poulin2011amygdala}, the degree of atrophy in the Amygdala is severity dependent.

\subsection{Human Age Comparisons}

\label{sec: age}

In this experiment, we apply Split Knockoffs to conduct pairwise comparisons in human ages based on voluntarily annotated data of face images. In particular, this experiments adopts $p = 30$ face images (presented in Figure \ref{fig: human faces}) from the human age dataset FG-Net (\url{http://www.fgnet.rsunit.com/}), whose respective true ages are available for evaluation. The dataset contains $n = 14011$ annotations made by volunteers on the ChinaCrowds platform \citep{xu2021evaluating}. For each annotation, one volunteer is presented with two face images, and the volunteer annotates which one looks older (or it is difficult to distinguish).

\begin{figure}
\centering
\subfigure{
\begin{minipage}{0.09\textwidth}
\centering
\includegraphics[height = 1.8cm]{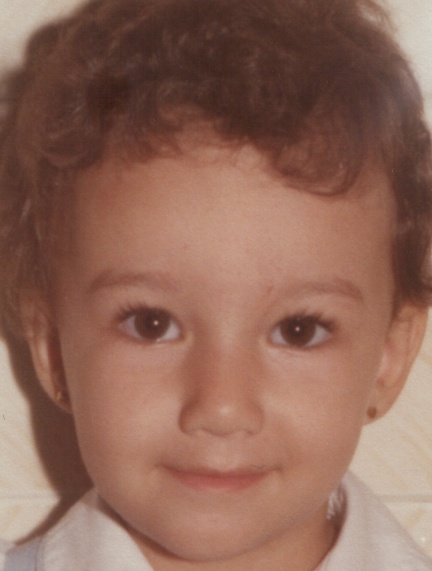}
\end{minipage}%
}%
\subfigure{
\begin{minipage}{0.09\textwidth}
\centering
\includegraphics[height = 1.8cm]{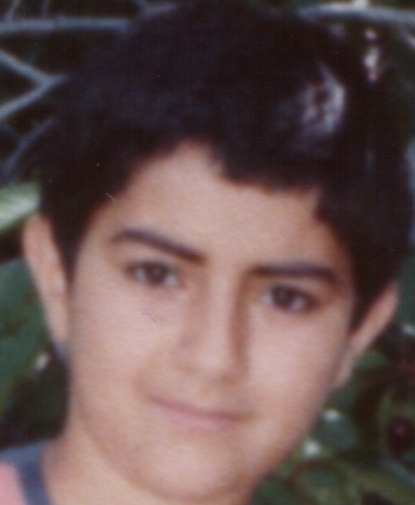}
\end{minipage}%
}%
\subfigure{
\begin{minipage}{0.09\textwidth}
\centering
\includegraphics[height = 1.8cm]{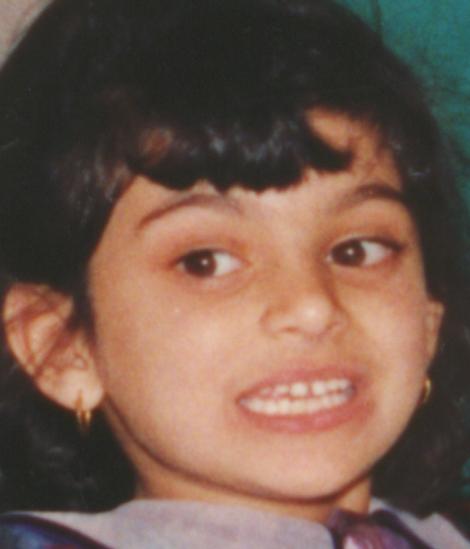}
\end{minipage}%
}%
\subfigure{
\begin{minipage}{0.09\textwidth}
\centering
\includegraphics[height = 1.8cm]{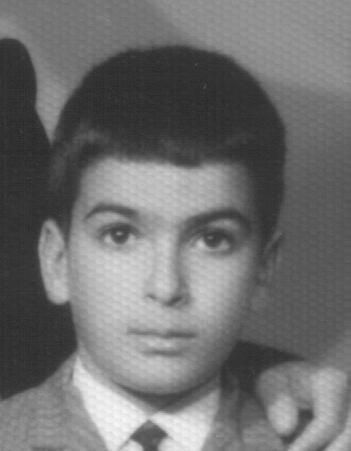}
\end{minipage}%
}%
\subfigure{
\begin{minipage}{0.09\textwidth}
\centering
\includegraphics[height = 1.8cm]{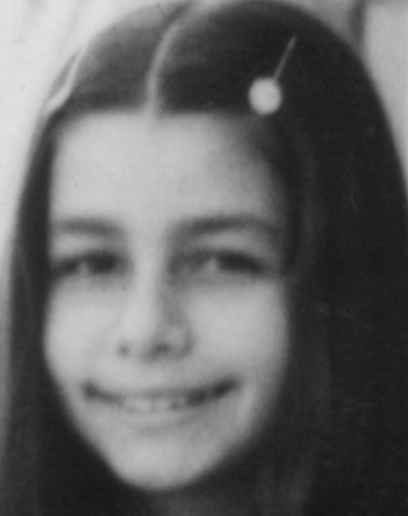}
\end{minipage}%
}%
\subfigure{
\begin{minipage}{0.09\textwidth}
\centering
\includegraphics[height = 1.8cm]{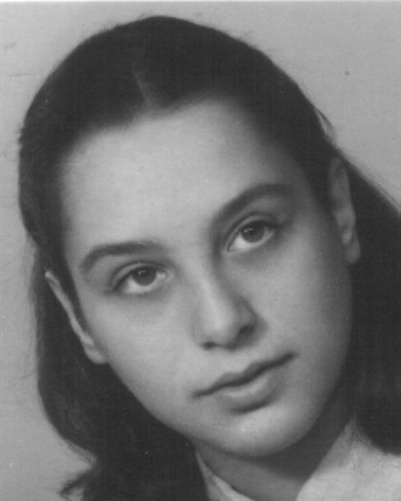}
\end{minipage}%
}%
\subfigure{
\begin{minipage}{0.09\textwidth}
\centering
\includegraphics[height = 1.8cm]{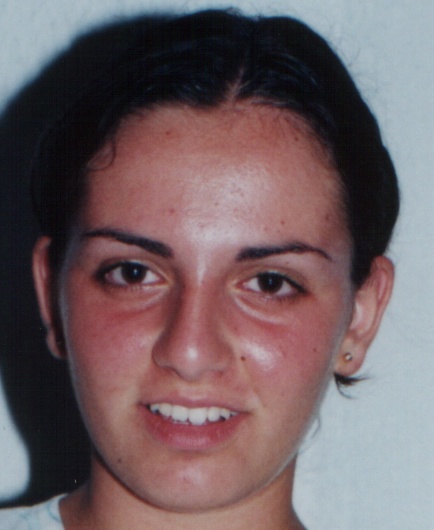}
\end{minipage}%
}%
\subfigure{
\begin{minipage}{0.09\textwidth}
\centering
\includegraphics[height = 1.8cm]{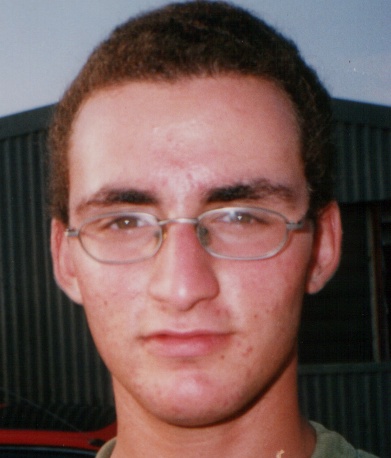}
\end{minipage}%
}%
\subfigure{
\begin{minipage}{0.09\textwidth}
\centering
\includegraphics[height = 1.8cm]{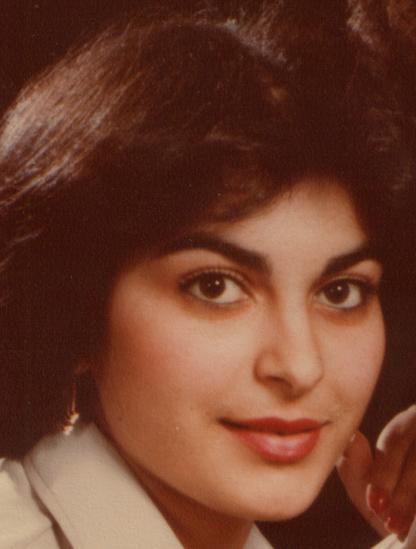}
\end{minipage}%
}%
\subfigure{
\begin{minipage}{0.09\textwidth}
\centering
\includegraphics[height = 1.8cm]{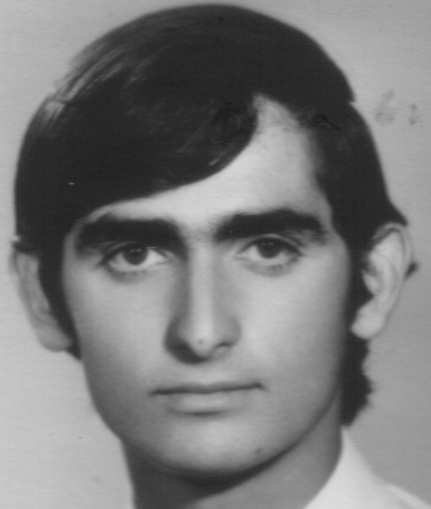}
\end{minipage}%
}%

\subfigure{
\begin{minipage}{0.09\textwidth}
\centering
\includegraphics[height = 1.8cm]{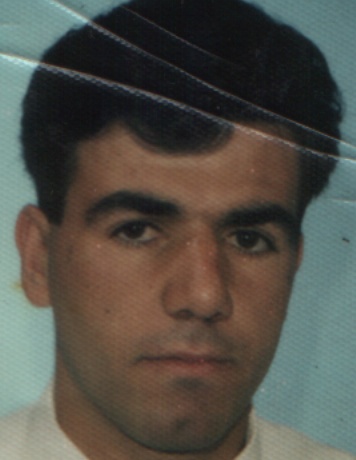}
\end{minipage}%
}%
\subfigure{
\begin{minipage}{0.09\textwidth}
\centering
\includegraphics[height = 1.8cm]{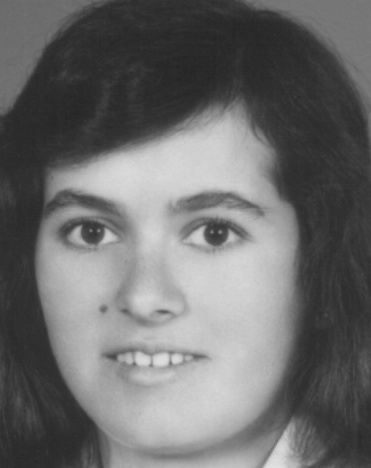}
\end{minipage}%
}%
\subfigure{
\begin{minipage}{0.09\textwidth}
\centering
\includegraphics[height = 1.8cm]{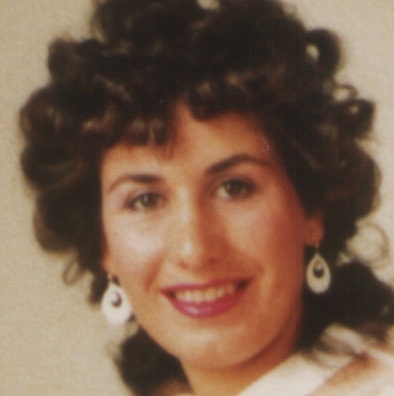}
\end{minipage}%
}%
\subfigure{
\begin{minipage}{0.09\textwidth}
\centering
\includegraphics[height = 1.8cm]{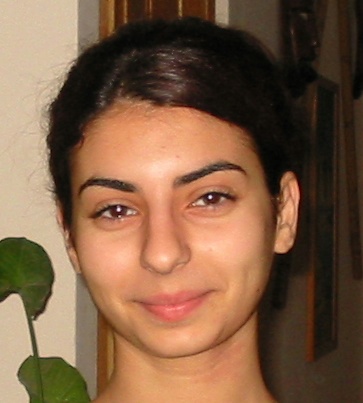}
\end{minipage}%
}%
\subfigure{
\begin{minipage}{0.09\textwidth}
\centering
\includegraphics[height = 1.8cm]{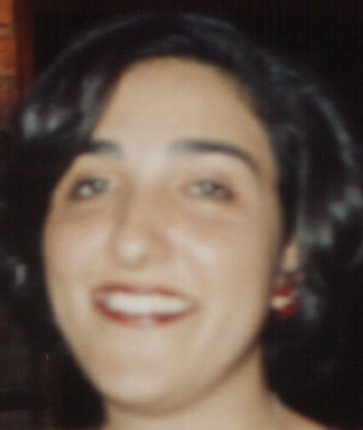}
\end{minipage}%
}%
\subfigure{
\begin{minipage}{0.09\textwidth}
\centering
\includegraphics[height = 1.8cm]{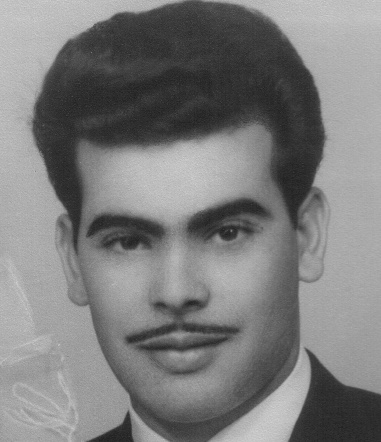}
\end{minipage}%
}%
\subfigure{
\begin{minipage}{0.09\textwidth}
\centering
\includegraphics[height = 1.8cm]{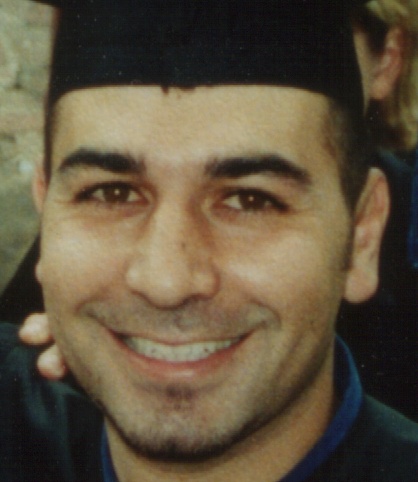}
\end{minipage}%
}%
\subfigure{
\begin{minipage}{0.09\textwidth}
\centering
\includegraphics[height = 1.8cm]{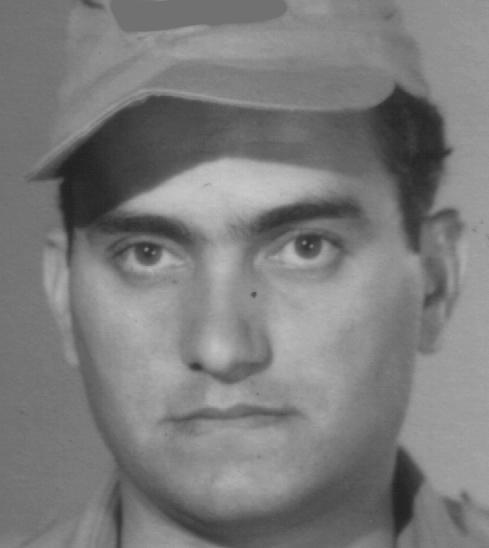}
\end{minipage}%
}%
\subfigure{
\begin{minipage}{0.09\textwidth}
\centering
\includegraphics[height = 1.8cm]{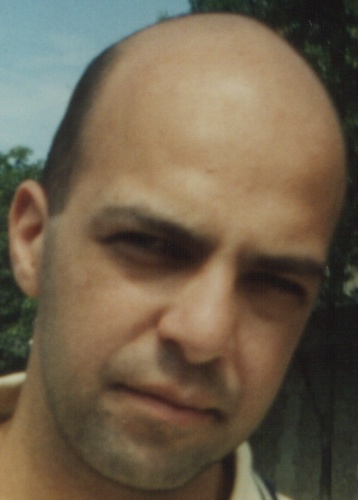}
\end{minipage}%
}%
\subfigure{
\begin{minipage}{0.09\textwidth}
\centering
\includegraphics[height = 1.8cm]{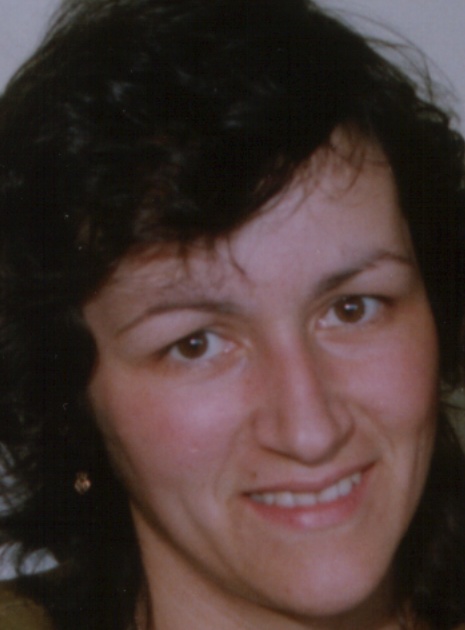}
\end{minipage}%
}%

\subfigure{
\begin{minipage}{0.09\textwidth}
\centering
\includegraphics[height = 1.8cm]{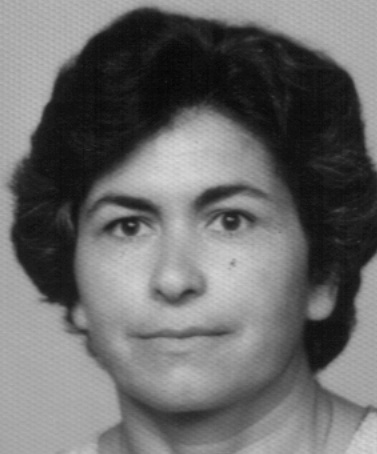}
\end{minipage}%
}%
\subfigure{
\begin{minipage}{0.09\textwidth}
\centering
\includegraphics[height = 1.8cm]{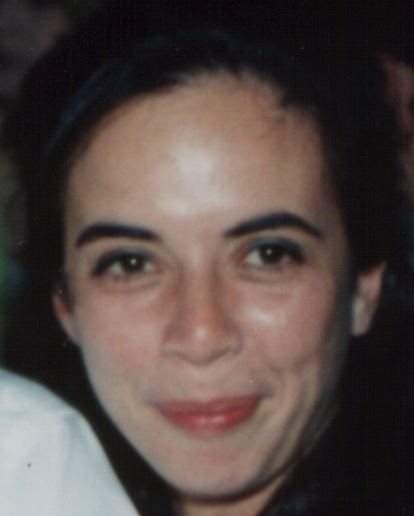}
\end{minipage}%
}%
\subfigure{
\begin{minipage}{0.09\textwidth}
\centering
\includegraphics[height = 1.8cm]{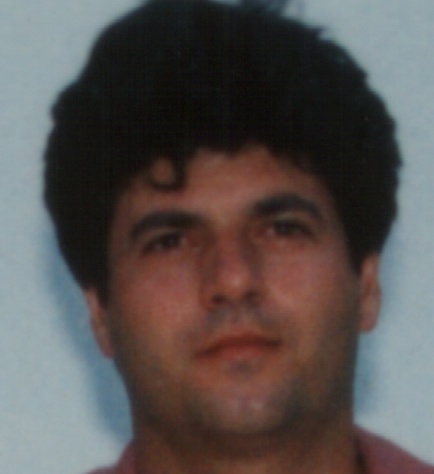}
\end{minipage}%
}%
\subfigure{
\begin{minipage}{0.09\textwidth}
\centering
\includegraphics[height = 1.8cm]{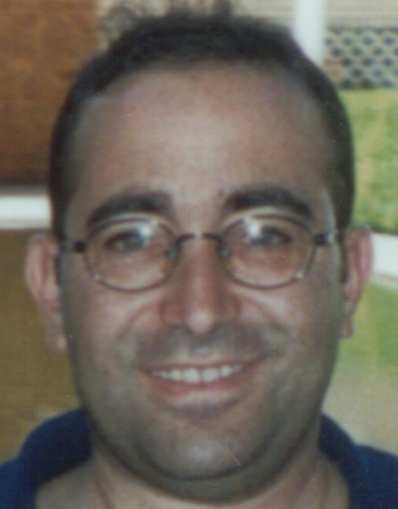}
\end{minipage}%
}%
\subfigure{
\begin{minipage}{0.09\textwidth}
\centering
\includegraphics[height = 1.8cm]{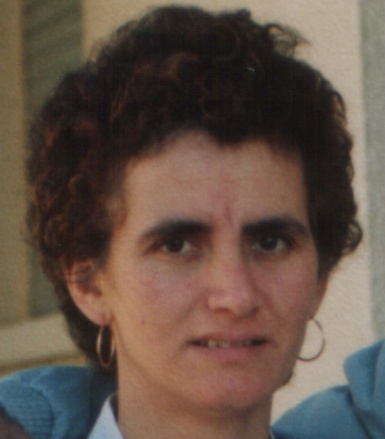}
\end{minipage}%
}%
\subfigure{
\begin{minipage}{0.09\textwidth}
\centering
\includegraphics[height = 1.8cm]{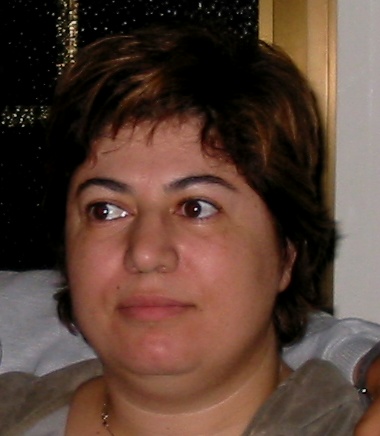}
\end{minipage}%
}%
\subfigure{
\begin{minipage}{0.09\textwidth}
\centering
\includegraphics[height = 1.8cm]{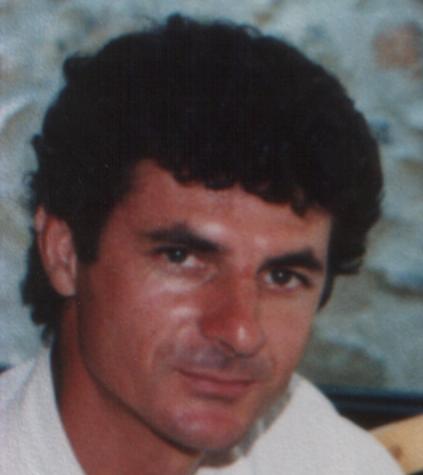}
\end{minipage}%
}%
\subfigure{
\begin{minipage}{0.09\textwidth}
\centering
\includegraphics[height = 1.8cm]{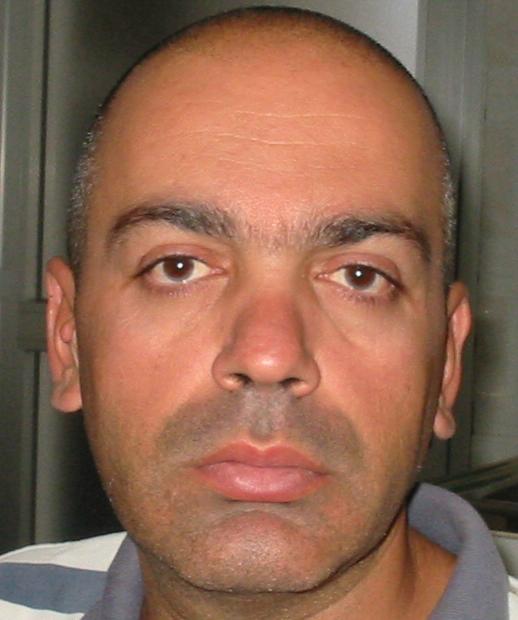}
\end{minipage}%
}%
\subfigure{
\begin{minipage}{0.09\textwidth}
\centering
\includegraphics[height = 1.8cm]{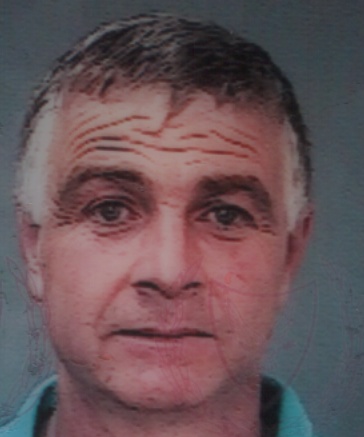}
\end{minipage}%
}%
\subfigure{
\begin{minipage}{0.09\textwidth}
\centering
\includegraphics[height = 1.8cm]{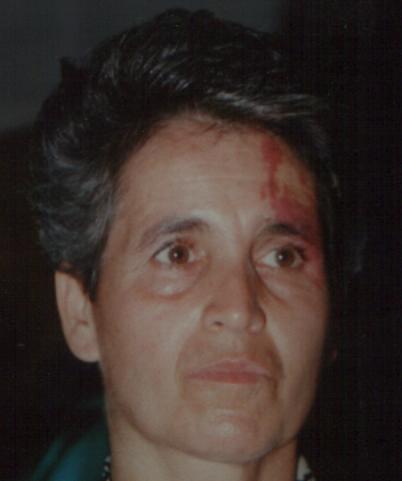}
\end{minipage}%
}%
\caption{30 images of human faces in the non-decreasing order of their respective ages.}
\label{fig: human faces}
\end{figure}

In the experiment, we construct the design matrix $X\in \R^{n\times p}$ and the response vector $y\in \R^n$ in the following way. For each $i\in \{1, 2, \cdots, n\}$, we let $X_{i, i_1}=1$, $X_{i, i_2}=-1$ and $X_{i, j}=0$ for $j\neq i_1, i_2$, if the $i$-th annotation involves images with indexes $i_1<i_2$; we let $y_{i}=\pm 1$ if the annotator thinks image $i_1$ looks older/younger respectively, and $y_{i}=0$ if the annotator feels uncertain to determine.

The linear transformation $D\in \R^{m\times p}$ is taken to be the graph difference operator on a fully connected graph $G = (V, E)$ with $|V|=p=30$ and $|E|=C_p^2 = m = 435$. In this way, we estimate the directional effects of $\gamma_i = [D\beta]_i=\beta_{i_1}-\beta_{i_2}$ for each pair of  $(i_1, i_2)$, i.e. make pairwise comparisons on the respective ages of images in Figure \ref{fig: human faces}.

The target $\dfdr$ is set to be $q = 0.2$. In Figure \ref{fig: age}, we present the performance of Split Knockoffs in $\dfdr$ and selection power for a sequence of $\log_{10}\nu$ between 0 and 4 with a step size 0.2. 
For Split Knockoffs, we random divide the full dataset $\D$ into $\D_1$ and $\D_2$ with sample sizes $n_1=1000$ and $n_2 = 13011$.

\begin{figure}[!ht]
\centering
\includegraphics[width=0.7\textwidth]{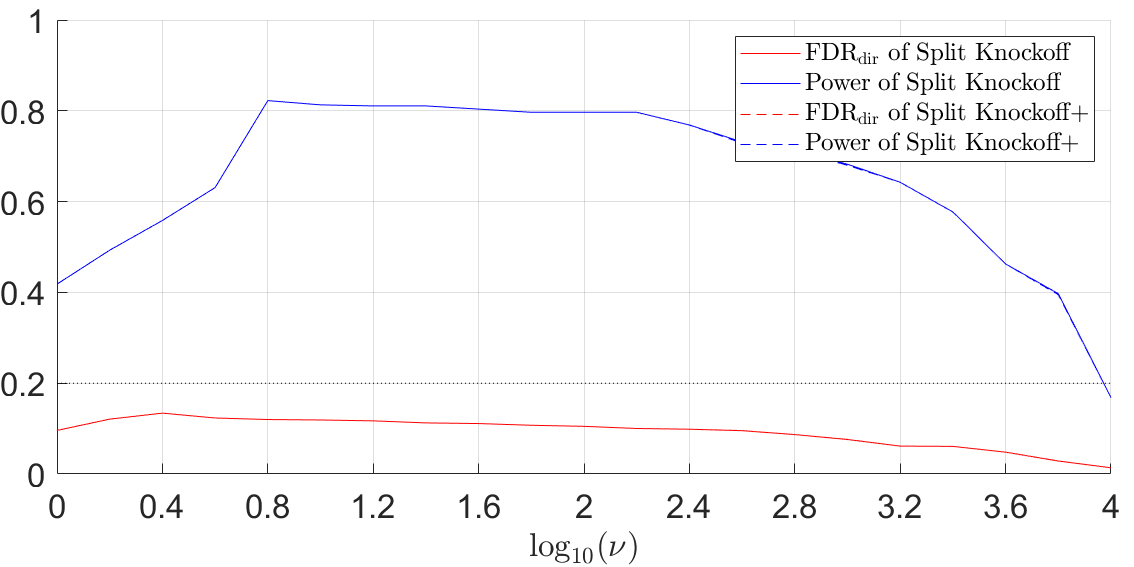}
\caption{Performance of Split Knockoffs in $\dfdr$ and power (q = 0.2) in human age comparisons, where the cross validation optimal choice of $\nu$ on $\D_1$ is $\log_{10}\nu = 2.2$.}
\label{fig: age}
\end{figure}

Figure \ref{fig: age} shows that Split Knockoffs achieve desired $\dfdr$ control for all $\nu$ in this experiment, where the $\dfdr$ of Split Knockoffs goes to zero when $\nu$ is large as predicted by Theorem \ref{theorem: directional fdr}, at the cost of losing the selection power. Meanwhile, the selection power of Split Knockoffs shows a first increase then decrease trend similarly to the simulation experiments in Section \ref{sec: simu_exp}, due to the trade-off in Split LASSO between better incoherence and the risk to lose weak nonnulls as mentioned in Section \ref{sec: simu_exp}. The cross validation optimal choice of $\nu$ selected by Split LASSO on $\D_1$ is $\log_{10}\nu = 2.2$, where relatively high selection power is achieved by Split Knockoffs. Further discussions comparing the pairwise comparisons made by Split Knockoffs with other pairwise comparison methods are provided in Section \ref{sec: supp age}.

\section{Conclusion}

In this paper, Split Knockoff is proposed as a data adaptive selection method to control the directional false discovery rate under linear transformations of regression parameters. Specifically, instead of following the linear manifold constraint, 
we relax the linear constraint to its Euclidean neighbourhood, which leads to an orthogonal design with respect to the transformed parameters. 
Combining the orthogonality with a sample splitting scheme, Split Knockoffs enjoy a desired $\dfdr$ control with a further reduction as the relaxation parameter grows. 
The simulation experiments demonstrate the effective $\dfdr$ control and good power achieved by Split Knockoffs. 
In two real world applications, for the study of Alzheimer's Disease with MRI data, Split Knockoff discovers important lesion brain regions and abnormal region connections in disease progression; for human age comparisons, Split Knockoff successfully recovers a majority of large age differences with the desired $\dfdr$ control.

\bibliography{reference}

\newpage

\begin{appendix}

\bigskip
\begin{center}
{\large\bf SUPPLEMENTARY MATERIAL}
\end{center}

\section{$\mathbf{FDR}_{\mathbf{dir}}$ versus FDR}
\label{sec: dfdr vs. fdr}

The standard false discovery rate (FDR) on $\gamma$, which evaluates the accuracy of $\widehat{S}$ (without evaluating the estimated directional effects), is defined as 
\begin{align*}
    \mathrm{FDR} = \E\left[\frac{|\{i\in S_0: i\in \widehat{S}\}|}{|\widehat{S}|\vee 1}\right].
\end{align*}
where $S_0:=\{i:\gamma^*_i=0\}$ is the null set. The following inequality shows that controlling the $\dfdr$ is stricter than controlling the FDR by showing that $\dfdr\ge \mathrm{FDR}$:
\begin{align*}
    \dfdr & = \E\left[\frac{|\{i\in \widehat{S}, \widehat{\sign}_i\neq \sign(\gamma^*_i)\}|}{|\widehat{S}|\vee 1}\right],\\
    & \ge \E\left[\frac{|\{i\in S_0: i\in \widehat{S}, \widehat{\sign}_i\neq \sign(\gamma^*_i)=0\}|}{|\widehat{S}|\vee 1}\right],\\
    & = \E\left[\frac{|\{i\in S_0: i\in \widehat{S}\}|}{|\widehat{S}|\vee 1}\right]= \mathrm{FDR},
\end{align*}
where the last step is due to the fact that for any selected feature $i\in \widehat{S}$, the estimated directional effect is always nonzero ($\widehat{\sign}_i\neq 0$).

\section{Related Work}
\label{sec: related work}

The control of the false discovery rate, including its directional counterpart, has been a subject of extensive study since the seminal work of \cite{benjamini1995controlling}. Notably, the Knockoff method, introduced by \cite{barber2015controlling}, has achieved theoretical control of the false discovery rate in sparse regression problems (specifically, the special case where $D=I_p$ in equation \eqref{eq: structural sparsity model}). Subsequently, this method has been extended to various settings, encompassing multitask regression models \citep{dai2016knockoff}, Huber's robust regression \citep{xu2016false}, high-dimensional scenarios, and even the control of the directional false discovery rate \citep{barber2019knockoff}.

An important development in this context is the Model-X Knockoff method \citep{candes2016panning}, specifically designed to address random designs, and its robustness against estimation errors of random design distributions was demonstrated by \cite{barber2020robust}. To tackle non-parametric random designs, \cite{romano2019deep} introduced the Deep Knockoff approach. Additionally, \cite{Ren21_jasa, ren2022derandomized} proposed the derandomized Knockoffs technique to generate stable selection sets, while \cite{Ren20_side} suggested leveraging side information to enhance the selection power of Knockoffs.

However, it is important to note that all of the aforementioned approaches are specifically designed for the special case where $D = I_p$ in equation \eqref{eq: structural sparsity model}. To address the more general case presented in equation \eqref{eq: structural sparsity model}, \cite{cao2021controlling} introduced the Split Knockoff method, which allows for effective control of the false discovery rate under general linear transformations. Furthermore, \cite{cao2021controlling} laid the foundation for the Split Knockoff method by proposing a family of diverse variants. In this paper, our focus is on extending a canonical version of Split Knockoffs to specifically target the control of the directional false discovery rate within application scenarios involving multiple comparisons.

To be more specific, \cite{cao2021controlling} extensively investigated various variations of Split Knockoffs and their interrelationships. These variations include the original version proposed by Barber and Cand`{e}s in \cite{barber2015controlling}, a new canonical form for Split Knockoffs, and an enhanced version of Split Knockoffs with truncation. The paper systematically explores their relationships in terms of false discovery rate (FDR) control and power. It demonstrates that all the aforementioned variations of Split Knockoffs achieve the desired FDR control while exhibiting increasing selection power relations among these variations through an inclusion property on their selectors. Among these variations, the canonical version of Split Knockoffs stands out as the most fundamental and straightforward variation.

However, it is important to note that there are critical application scenarios where the transformations involved may not exhibit sparsity. In such cases, controlling the false discovery rate may be less meaningful or reduced in significance. For instance, in scenarios involving multiple comparisons where the majority of object pairs $(\beta_i, \beta_j)$ are inherently distinct, the importance of controlling the false discovery rate for incorrectly asserting $\beta_i \neq \beta_j$ may be diminished. Nevertheless, it remains meaningful and important to control the directional false discovery rate to avoid erroneously claiming $\beta_i > \beta_j$ when $\beta_i < \beta_j$, or vice versa.

To further illustrate this point, let's consider the application of comparing human ages discussed in Section \ref{sec: age}. In this particular scenario, almost every pair of images consists of individuals of different ages. Consequently, investigating the presence of age differences between these image pairs becomes less meaningful as it is expected and obvious. However, it remains significant to control the directional FDR to accurately determine which specific individual in a pair appears older. This allows for reliable conclusions about the relative age appearance within the pair, despite the inherent differences between the individuals.

In this paper, we extend the canonical version of Split Knockoffs, which is the most straightforward variation introduced in \cite{cao2021controlling}, to address the control of directional false discovery rate (FDR) in multiple comparison scenarios. Our work encompasses several notable technical contributions, summarized as follows:
\begin{enumerate}
    \item Theorem \ref{theorem: directional fdr} establishes that the extended Split Knockoffs method ensures directional FDR control.
    \item Building upon Theorem \ref{theorem: directional fdr}, we demonstrate that the directional FDR of the extended Split Knockoffs method diminishes to zero as $\nu$ increases. This finding highlights the phenomenon of diminishing directional FDR, as depicted in Figure \ref{fig: comparison}.
    \item In Section \ref{sec: inflation}, we introduce Theorem \ref{theorem: directional fdr without split}, which emphasizes the necessity of the sample splitting scheme in Split Knockoffs. Without the sample splitting scheme, the Split Knockoff method may lead to an inflation issue in directional FDR control.
\end{enumerate}

In summary, this paper serves as a follow-up to \cite{cao2021controlling}, extending the Split Knockoff method to control the directional false discovery rate in multiple comparison scenarios with a more deliberate analysis.

\section{Knockoffs with Generalized LASSO: Antisymmetry Broken}

\label{sec: genlasso with knockoff}

In this section, we will show that on the problem \eqref{eq: structural sparsity model}, the naive construction of Knockoffs \citep{barber2015controlling,barber2019knockoff}  by ignoring the structural constraint fails the antisymmetry, 
except for a special case when $D$ has full row rank, i.e. $\ker(D^T)=0$.

The canonical statistical method for problem \eqref{eq: structural sparsity model} is the generalized LASSO \citep{tibshirani2011solution}. The generalized LASSO regularization path with respect to the regularization parameter $\lambda>0$ is given by
\begin{align}
    \label{eq: genlasso below}
    (\beta(\lambda), \gamma(\lambda)) :=\frac{1}{2n}\argmin_{\beta, \gamma}\|y-X\beta\|_2^2+\lambda\|\gamma\|_1,\ \mathrm{subject\ to}\ \gamma=D\beta.
\end{align}
Clearly, $X$ is not a proper design matrix for $\gamma$ in constructing Knockoff copies. Therefore, 
Equation \eqref{eq: genlasso below} need to be reformulated for creating the Knockoff copy. 

The naive way of applying Knockoffs to Equation \eqref{eq: genlasso below} is to solve $\beta$ from $\gamma$ through the constraint $\gamma=D\beta$. Suppose that $\beta$ can be solve from $\gamma$, that $\beta = D^\dagger\gamma$ for some $D^\dagger\in \R^{p\times m}$ (e.g. the pseudo inverse of $D$). In this case, 
Equation \eqref{eq: genlasso below} becomes
\begin{align}
    \label{eq: genlasso 2}
    \gamma(\lambda) :=\argmin_{\gamma}\frac{1}{2n}\|y-XD^\dagger \gamma\|_2^2+\lambda\|\gamma\|_1,\ \mathrm{subject\ to}\ \gamma\in \im(D).
\end{align}
Note that $\gamma \in \im(D) \Leftrightarrow \gamma \perp \ker(D^T)$.
Let $A$ be the matrix whose rows span $\ker(D^T)$, and $X_\gamma = XD^\dagger$ be the design matrix for $\gamma$ in Equation \eqref{eq: genlasso 2}. Then Equation \eqref{eq: genlasso 2} becomes the following constrained LASSO problem
\begin{align}
    \label{eq: genlasso 3}
    \gamma(\lambda) :=\argmin_{\gamma}\frac{1}{2n}\|y-X_\gamma \gamma\|_2^2+\lambda\|\gamma\|_1,\ \mathrm{subject\ to}\ A\gamma = 0.
\end{align}
We will show in the following two subsections that:
\begin{enumerate}
    \item[(a)] For the general case that $\ker(D^T)\neq 0$ ($A\neq 0$), the constraint that $A\gamma = 0$ will destroy the antisymmetry property \citep{barber2015controlling} if Knockoffs are constructed naively, ignoring the constraint $A\gamma = 0$ in Equation \eqref{eq: genlasso 3}.
    \item[(b)] For the special case that $\ker(D^T)= 0$, i.e. $D\in \R^{m\times p}$ satisfies $m\le p$ and $D$ has full row rank, the standard knockoffs are indeed applicable.
\end{enumerate}

\subsection{A Counter Example for General Case: $\ker(D^T)\neq 0$}

\label{sec: counter example}

In this section we present a counter example where the antisymmetry property of Knockoffs fails  for the case $\ker(D^T)\neq 0$. 

The Knockoff method \citep{barber2015controlling, barber2019knockoff} constructs the Knockoff copy matrix $\tilde{X}_\gamma\in \R^{n\times p}$ with respect to $X_\gamma$ in the way such that
\begin{align}
    X_\gamma^T X_\gamma = \tilde{X}_\gamma^T\tilde{X}_\gamma,\ \tilde{X}_\gamma^T X_\gamma = X_\gamma^T X_\gamma - \diag(\vecs), \label{eq: knockoff copy}
\end{align}
for some proper non-negative vector $\vecs\in \R^{p}$. After constructing the Knockoff copy matrix, the following optimization problem is commonly used in Knockoffs to determine the feature significance and $W$ statistics in \cite{barber2015controlling, barber2019knockoff},
\begin{align}
    \hat{\beta}(\lambda) = \argmin_{\beta\in \R^{2p}}\frac{1}{2n}\|y - [X_\gamma, \tilde{X}_\gamma]\beta\|_2^2+\lambda\|\beta\|_1,\ \mathrm{s.t.}\ A\beta_1=A\beta_2=0,\ \ \beta = \left[\begin{array}{c}
\beta_1\\
\beta_2
\end{array}
\right],\label{eq:classo knock}
\end{align}
 where the constraint $A\beta_1=A\beta_2=0$ succeeds from Equation \eqref{eq: genlasso 3}. One common way to construct the $W$ statistics is to record the points $\lambda$ in Equation \eqref{eq:classo knock} where the feature $X_{\gamma,i}$ or the copy $\tilde{X}_{\gamma, i}$ first enters the model as $Z_i$ or $\tilde{Z}_i$ for all $i$, i.e.
\begin{align*}
    Z_i = & \sup\{\lambda:\ \hat{\beta_1}_i(\lambda)\neq 0\},\\
    \tilde{Z}_i = & \sup\{\lambda:\ \hat{\beta_2}_i(\lambda)\neq 0\},
\end{align*}
and define $W = \max(Z, \tilde{Z})\odot\sign(Z-\tilde{Z})$.

However, we will show by the following simple example that such a treatment dissatisfies the antisymmetry property \citep{barber2015controlling}. This consequently fails the provable FDR control for Knockoffs. 

\paragraph{} Let $A = [1, 1]$, $X_\gamma=\left[
\begin{array}{c c}
1 & 0\\
0 & 1\\
0 & 0\\
0 & 0
\end{array}
\right]$, $\beta^*=\left[
\begin{array}{c}
1\\
-1
\end{array}
\right]$, and $y = \left[
\begin{array}{c c}
1\\
-1\\
0\\
0
\end{array}
\right]$. Then $\left[
\begin{array}{c c}
0 & 0\\
0 & 0\\
1 & 0\\
0 & 1
\end{array}
\right]:=\tilde{X}_\gamma$ is a valid Knockoff copy matrix satisfying $X^T_\gamma X_\gamma=I=\tilde{X}_\gamma^T \tilde{X}_\gamma$, $\tilde{X}_\gamma^T X_\gamma =0$ by taking $\vecs^T=[1,1]$ in Equation \eqref{eq: knockoff copy}.

\paragraph{} In this example, it will be shown below that without swapping the columns of $X_\gamma$ and $\tilde{X}_\gamma$, the Knockoff statistics are $Z_1=Z_2=\frac{1}{4}$, $\tilde{Z}_1=\tilde{Z}_2=0$, and $W_1=W_2=\frac{1}{4}$. 

To see this point, we check the solution of the constrained LASSO problem \eqref{eq:classo knock}. 
Due to the constraint $A\beta_1=A\beta_2=0$, we re-parameterize $\beta$ as $\beta=\left[
\begin{array}{c c}
a\\
-a\\
b\\
-b
\end{array}
\right]$. Then, the problem \eqref{eq:classo knock} can be written as
\begin{align*}
    & (\hat{a}(\lambda), \hat{b}(\lambda)) = \argmin_{a, b\in \R}   \frac{1}{8}(a-1)^2+\frac{1}{8}(-a+1)^2+\frac{1}{8}b^2+\frac{1}{8}b^2+2\lambda|a|+2\lambda|b|, \\
\Leftrightarrow & (\hat{a}(\lambda), \hat{b}(\lambda)) = \argmin_{a, b\in \R} \frac{1}{4}(a^2+1)+(2\lambda|a|-\frac{1}{2}a)+\frac{1}{4}b^2+2\lambda|b|, \\
    \Leftrightarrow & \hat{a}(\lambda) = \argmin_{a \in \R} \frac{1}{4}(a^2+1)+(2\lambda|a|-\frac{1}{2}a) \mbox{ and }  \hat{b}(\lambda) = \argmin_{b\in \R}\frac{1}{4}b^2+2\lambda|b|.
\end{align*}

Clearly, the optimizer $\hat{a}(\lambda)$ can be nonzero if and only if $2\lambda<\frac{1}{2}$, therefore, $Z_1=Z_2=\frac{1}{4}$. Meanwhile, for any $\lambda>0$, the optimizer $\hat{b}(\lambda)$is always zero, which means $\tilde{Z}_1=\tilde{Z}_2=0$. Then by definition, $W_1=W_2=\frac{1}{4}$.

\paragraph{} However, if the first column of $X_\gamma$ and $\tilde{X}_\gamma$ is swapped, we show in the following that the Knockoff statistics become $Z_1=Z_2=\frac{1}{8}$, $\tilde{Z}_1=\tilde{Z}_2=\frac{1}{8}$, and $W_1=W_2=0$.

To see this point, consider the constrained LASSO problem \eqref{eq:classo knock} after swapping the column. We re-parameterize $\beta$ by $(a, b)$ in the same way as above, then the
problem \eqref{eq:classo knock} after swapping the first column of $X_\gamma$ and $\tilde{X}_\gamma$ becomes
\begin{align*}
    & (\hat{a}(\lambda), \hat{b}(\lambda)) =  \argmin_{a, b\in \R}  \frac{1}{8}(b-1)^2+\frac{1}{8}(-a+1)^2+\frac{1}{8}a^2+\frac{1}{8}b^2+2\lambda|a|+2\lambda|b|, \\
    \Leftrightarrow & (\hat{a}(\lambda), \hat{b}(\lambda)) = \argmin_{a, b\in \R}  \frac{1}{8}(2a^2+1)+(2\lambda|a|-\frac{1}{4}a)+\frac{1}{8}(2b^2+1)+(2\lambda|b|-\frac{1}{4}b), \\
    \Leftrightarrow &  \hat{a}(\lambda) = \argmin_{a \in \R} \frac{1}{8}(2a^2+1)+(2\lambda|a|-\frac{1}{4}a) \mbox{ and } \hat{b}(\lambda) = \argmin_{b\in \R}  \frac{1}{8}(2b^2+1)+(2\lambda|b|-\frac{1}{4}b).
\end{align*}
Clearly, the optimizer $\hat{a}(\lambda)$ and $\hat{b}(\lambda)$ can be nonzero if and only if $2\lambda<\frac{1}{4}$, therefore, $Z_1=Z_2=\tilde{Z}_1=\tilde{Z}_2=\frac{1}{8}$. Then by definition, $W_1=W_2=0$.

\paragraph{} The computation above shows that swapping the first column of $X_\gamma$ and $\tilde{X}_\gamma$ in this example changes the value of the $W$ statistics for the second variable due to the linear constraint. This violates the antisymmetry property, which states that swapping the column will only lead to an opposite sign of $W$ statistics for its respective variable while keeping invariant the $W$ statistics for other variables. 

\subsection{Special Case: $\ker(D^T)=0$}

\label{sec: special cases}

In the special case that $\ker(D^T)=0$ ($\rank(D) = m \le p$), the problem \eqref{eq: structural sparsity model} can be reduced to the standard sparse linear regression problem where the classical Knockoff method is applicable. In this case, we write
\begin{align*}
    \beta^* = D^\dagger\gamma^*+\beta_0,
\end{align*}
for some $\beta_0\in\R^p$ lies in the null space of $D$. Then the problem \eqref{eq: structural sparsity model} can be written as
\begin{align}
\label{eq: tran1}
    y = XD^\dagger\gamma^* + X\beta_0+\varepsilon.
\end{align}
Let $D_0\in \R^{p\times (p-m)}$ be the matrix whose columns span the null space of $D$, and take $U$ to be the orthogonal complement of $X D_0$, multiple $U$ on both sides of Equation \eqref{eq: tran1}, there holds
\begin{align*}
    Uy = UXD^\dagger\gamma^*+U\varepsilon.
\end{align*}
Viewing $Uy$ as the response vector, $UXD^\dagger$ as the design matrix for $\gamma$, and $U\varepsilon$ as the i.i.d. Gaussian noise, standard Knockoffs are applicable, at the cost of a worse incoherence condition brought by $UXD^\dagger$ when $D$ is not the identity matrix. This may cause a loss in the selection power. 

\section{Construction of Split Knockoff Copy Matrix}

\label{sec: construct split knockoff}

In this section, we will show the details on the construction of the Split Knockoff copy. For shorthand notations, define $\Sigma_{\beta,\beta}:=A_{\beta_2}^TA_{\beta_2}$, $\Sigma_{\beta,\gamma}=\Sigma_{\gamma,\beta}^T:=A_{\beta_2}^TA_{\gamma_2}$, and $\Sigma_{\gamma,\gamma}:=A_{\gamma_2}^TA_{\gamma_2}$. The necessary and sufficient condition for the existence of $\tilde{A}_{\gamma_2}$ satisfying Equation \eqref{eq: copy} is
\begin{align*}
    G: = 
    \begin{bmatrix}
        \Sigma_{\beta,\beta} & \Sigma_{\beta,\gamma} & \Sigma_{\beta,\gamma}\\
        \Sigma_{\beta,\gamma} & \Sigma_{\gamma,\gamma} & \Sigma_{\gamma,\gamma}-\diag(\vecs)\\
        \Sigma_{\beta,\gamma} & \Sigma_{\gamma,\gamma}-\diag(\vecs) & \Sigma_{\gamma,\gamma}
    \end{bmatrix}
    \succeq 0.
\end{align*}
This holds if and only if the Schur complement of $\Sigma_{\beta,\beta}$ is positive semi-definite, i.e.
\begin{align}
    \begin{bmatrix}
        C_\nu & C_\nu-\diag(\vecs)\\
        C_\nu-\diag(\vecs) & C_\nu
    \end{bmatrix}
    \succeq 0, \mbox{ where }C_\nu :=\Sigma_{\gamma,\gamma}-\Sigma_{\gamma,\beta}\Sigma_{\beta,\beta}^{-1}\Sigma_{\beta,\gamma},\nonumber
\end{align}
which holds if and only if $C_\nu$ and its Schur complement are positive semi-definite, i.e.
\begin{subequations}
\begin{align}
    &C_\nu  \succeq 0,\nonumber\\
    &C_\nu-(C_\nu-\diag(\vecs))C_\nu^{-1}(C_\nu-\diag(\vecs))=2\diag(\vecs)-\diag(\vecs)C_\nu^{-1}\diag(\vecs)
    \succeq 0.\nonumber
\end{align}
\end{subequations}
Therefore the non-negative vector $\vecs\in\R^m$ should satisfy
\begin{align}
    \label{eq: vecs}
    \diag(\vecs)  \succeq 0,\ 2C_\nu-\diag(\vecs) \succeq 0.
\end{align}

There are various choices for $\vecs=(\vecs_i)_{i=1}^m$ satisfying \eqref{eq: vecs} for the construction of Split Knockoff copy. Below we give two typical examples. 

\begin{itemize}
\item[(a)] (Equi-correlation) Take 
\begin{align}
    \vecs_i = 2\lambda_{\mathrm{min}}(C_\nu)\land \frac{1}{\nu}, \label{eq: equi corre}
\end{align}
for all $i\in \{1, 2, \cdots, m\}$. This is the default choice in this paper.
    \item[(b)] (SDP for discrepancy maximization) One can maximize the discrepancy between Knockoffs and its corresponding features by solving the following SDP
\begin{eqnarray*} 
\mbox{maximize} & & \sum_i \vecs_i , \\
\mbox{subject to} & & 0\leq \vecs_i \leq \frac{1}{\nu} \mbox{ and } \diag(s) \preceq  2C_\nu. 
\end{eqnarray*}

\end{itemize}

For each suitable vector $\vecs$, the construction of the Split Knockoff copy is given by
\begin{align}
    \tilde{A}_{\gamma_2}=A_{\gamma_2}(I_{m}-C_\nu^{-1}\diag(\vecs))+A_{\beta_2} \Sigma_{\beta,\beta}^{-1}\Sigma_{\beta,\gamma}C_{\nu}^{-1}\diag(\vecs)+\tilde{U}K,
\end{align}
where $\tilde{U}\in \R^{(n_2+m)\times m}$ is the orthogonal complement of $[A_{\beta_2}, A_{\gamma_2}]\in\R^{(n_2+m)\times(m+p)}$ (which requires $n_2+m\ge m+m+p$, i.e. $n_2\ge m+p$) and $K\in \R^{m\times m}$ satisfies $K^TK=2\diag(\vecs)-\diag(\vecs)C_\nu^{-1}\diag(\vecs)$.

\section{Failure of Exchangeability in Split Knockoffs}

\label{sec: fail exchange}

In this section, we show that the exchangeability property no longer holds for Split Knockoffs. In particular, we show that the pairwise exchangeability on the response fails for Split Knockoffs through the following proposition.

\begin{proposition}[Failure of Pairwise Exchangeability on the response]
    \label{prop: exchange}
    For any $i$,
    \begin{align*}
        [A_{\gamma_2}, \tilde{A}_{\gamma_2}]_{\mathrm{swap}\{i\}}^T\tilde y_2\overset{d}{\neq} &  [A_{\gamma_2}, \tilde{A}_{\gamma_2}]^T\tilde y_2,
    \end{align*}
where $[A_{\gamma_2}, \tilde{A}_{\gamma_2}]_{\mathrm{swap}\{i\}}$ denotes a swap of the $i$-th column of $A_{\gamma_2}$ and $\tilde{A}_{\gamma_2}$ in $[A_{\gamma_2}, \tilde{A}_{\gamma_2}]$.
\end{proposition}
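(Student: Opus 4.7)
The plan is to compute the two sides of the claimed distributional inequality explicitly and exhibit a single coordinate on which their marginal laws already differ. The key observation is the block structure inherited from Equation \eqref{eq: new design matrix}: since $\tilde{y}_2 = (y_2^T/\sqrt{n_2},\, 0_m^T)^T$ has a deterministic zero block in its last $m$ rows, and $A_{\gamma_2} = (0_{n_2\times m}^T,\, -I_m/\sqrt{\nu})^T$ is supported entirely on those last $m$ rows, a direct block multiplication yields $A_{\gamma_2}^T \tilde{y}_2 = 0_m$ \emph{deterministically}. Consequently each of the first $m$ coordinates of the unswapped vector $[A_{\gamma_2}, \tilde{A}_{\gamma_2}]^T \tilde{y}_2$ is a Dirac mass at $0$.

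Next I would analyze the companion block $\tilde{A}_{\gamma_2}^T \tilde{y}_2$, which by definition equals the random vector $\zeta$ appearing in Lemma \ref{lemma: distribution zeta}. That lemma gives $\zeta \sim \mathcal{N}\bigl(-\mathrm{diag}(\vecs)\gamma^*,\, \tfrac{1}{n_2}\mathrm{diag}(\vecs)(2I_m-\mathrm{diag}(\vecs)\nu)\sigma^2\bigr)$, so each $\zeta_i$ is Gaussian with variance $\tfrac{\sigma^2}{n_2}\vecs_i(2-\vecs_i\nu)$, strictly positive whenever $\vecs_i \in (0,2/\nu)$, which holds under the construction \eqref{eq: vecs} (e.g. the equi-correlated choice \eqref{eq: equi corre}). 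Thus the last $m$ coordinates of $[A_{\gamma_2}, \tilde{A}_{\gamma_2}]^T \tilde{y}_2$ form a nondegenerate Gaussian vector.

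The final step is to track what the swap of the $i$-th column does. By definition of $[A_{\gamma_2}, \tilde{A}_{\gamma_2}]_{\mathrm{swap}\{i\}}$, the $i$-th coordinate of the resulting product becomes $\tilde{A}_{\gamma_2,i}^T \tilde{y}_2 = \zeta_i$, while the $(m+i)$-th coordinate becomes $A_{\gamma_2,i}^T \tilde{y}_2 = 0$; all remaining coordinates are unchanged. Projecting onto the $i$-th coordinate already gives a contradiction: without the swap the marginal law is $\delta_0$, while after the swap it is a Gaussian with strictly positive variance. Since these one-dimensional marginals disagree, the full joint laws disagree as well, proving the proposition.

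The argument involves no real obstacle; the failure of pairwise exchangeability is essentially hard-coded into the reformulation \eqref{eq: reformulated model on D_2}, because $\tilde{\varepsilon}_2$ is heteroscedastic with deterministic zeros in its last $m$ entries while the construction \eqref{eq: copy} of $\tilde{A}_{\gamma_2}$ is driven by a Gram-matrix condition that is blind to the covariance of $\tilde{\varepsilon}_2$. The only care needed is the invocation of Lemma \ref{lemma: distribution zeta} and the verification that $\vecs_i>0$, which is automatic under any nondegenerate choice of the Split Knockoff copy.
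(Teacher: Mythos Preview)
Your proposal is correct and follows essentially the same approach as the paper: compute $A_{\gamma_2}^T\tilde y_2=0_m$ from the block structure of \eqref{eq: new design matrix}, invoke Lemma~\ref{lemma: distribution zeta} for the nondegenerate Gaussian law of $\zeta=\tilde A_{\gamma_2}^T\tilde y_2$, and observe that the swap moves a zero-variance coordinate against a positive-variance one. The paper states this by writing out the full joint Gaussian law of $[A_{\gamma_2},\tilde A_{\gamma_2}]^T\tilde y_2$ and noting that swapping the $i$-th and $(m+i)$-th blocks changes it, whereas you make the comparison slightly more explicit by projecting onto the $i$-th marginal; the content is the same.
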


\begin{proof}
    Following the definition of $A_{\gamma_2}$ and $\tilde y_2$ in Equation \eqref{eq: new design matrix}, $A_{\gamma_2}^T\tilde y_2 = 0_m$. By Lemma \ref{lemma: distribution zeta} on the distribution of $\zeta = \tilde{A}_{\gamma_2}^T\tilde y_2$, there further holds
    \begin{align*}
    [A_{\gamma_2}, \tilde{A}_{\gamma_2}]^T\tilde y_2\sim\mathcal{N}\left(
    \begin{bmatrix}
        0_m\\
        -\diag(\vecs)\gamma^*
    \end{bmatrix},
    \begin{bmatrix}
        0_m & 0_m\\
        0_m & \frac{1}{n_2}\diag(\vecs)(2I_m-\diag(\vecs)\nu)\sigma^2
    \end{bmatrix}
    \right),
    \end{align*}
    where swapping any $i$ in the first and second block leads to different distributions.
\end{proof}

The failure of exchangeability is a great hurdle for the beautiful supermartingale structure \citep{barber2015controlling, barber2019knockoff} which is crucial for the provable $\dfdr$ control in Knockoff-based methods. Fortunately, with the orthogonal design deducted from the variable splitting scheme and a sample splitting scheme presented in Section \ref{sec: methodology}, we overcome such a hurdle and achieve desired $\dfdr$ control in Theorem \ref{theorem: directional fdr}.

\section{Split Knockoffs in High Dimensional Settings}

\label{sec: screen for hd}

In this section, we introduce the way to perform Split Knockoffs in high dimensional settings where $n_2< m+p$ or even $n_2<p$. In short, we first screen for subsets $\hat{S}_\beta$, $\hat{S}_\gamma$ of the features in $\beta$, $\gamma$ sequentially on $\D_1$, and then perform Split Knockoffs on the selected subset of features.

To screen for a subset $\hat{S}_\beta$ of features in $\beta$, we conduct LASSO on $\D_1 = (X_1, y_1)$ with respect to suitable $\lambda_\beta>0$ (e.g. selected by cross validation),
\begin{align*}
    \beta(\lambda_\beta) = \argmin_{\beta}\frac{1}{2n_1}\|y_1-X_1\beta\|_2^2+\lambda_\beta\|\beta\|_1,
\end{align*}
and let $\hat{S}_\beta:=\{i: \beta(\lambda_\beta)_i \neq 0\}$. Then we proceed to screen for a subset  $\hat{S}_\gamma$ of features in $\gamma$. Let $X_{1, \hat{S}_\beta}$, $D_{\hat{S}_\beta}$ be the submatrix of $X_1$, $D$ respectively containing the columns indicated by $\hat{S}_\beta$, we conduct Split LASSO on $\D_1$ with respect to suitable $\lambda_\gamma>0$ (e.g. selected by cross validation),
\begin{align*}
    \gamma(\lambda_\gamma) = \argmin_{\gamma}\min_\beta\frac{1}{2n_1}\|y_1-X_{1, \hat{S}_\beta}\beta\|_2^2+\frac{1}{2\nu}\|D_{\hat{S}_\beta}\beta - \gamma\|_2^2+\lambda_\gamma\|\gamma\|_1,
\end{align*}
and let $\hat{S}_\gamma:=\{i: \gamma(\lambda_\gamma)_i \neq 0\}$.

Let $X_{2, \hat{S}_\beta}$ be the submatrix of $X_2$ containing the columns indicated by $\hat{S}_\beta$, and $D_{\hat{S}_\beta, \hat{S}_\gamma}$ be the submatrix of $D$ containing the columns indicated by $\hat{S}_\beta$ and rows indicated by $\hat{S}_\gamma$. When the conditions $X_{2, \hat{S}_\beta}^TX_{2, \hat{S}_\beta}$ is invertible and $n_2\ge |\hat{S}_\beta|+ |\hat{S}_\gamma|$ are satisfied, we can now conduct Split Knockoffs with respect to $\D_1' = (X_{1, \hat{S}_\beta}, y_1)$, $\D_2' = (X_{1, \hat{S}_\beta}, y_2)$ and the linear transformation $D_{\hat{S}_\beta, \hat{S}_\gamma}$ as in Section \ref{sec: methodology}. Proposition \ref{prop: hd} shows that the above procedure will not lose the $\dfdr$ control if $\{i:\beta^*_i\neq 0\}\subseteq\hat{S}_\beta$, which is known as the sure screening event \citep{fan2008sure} in literature.

\begin{proposition}
    \label{prop: hd}
    Let $\Gamma$ be the event that $\{i:\beta^*_i\neq 0\}\subseteq\hat{S}_\beta$, then for Split Knockoffs with the above feature screening procedure, there holds
    \begin{itemize}
        \item[(a)] ($\mdfdr$ of Split Knockoff)
    \begin{align*}
        \E\left[\left.\frac{|\{i\in \widehat{S}: \widehat{\sign}_i \neq \sign(\gamma^*_i)\}|}{|\widehat{S}|+q^{-1}}\right|\Gamma\right]\le \min(\alpha(\nu), 1) q,
    \end{align*}
    \item[(b)] ($\dfdr$ of Split Knockoff+)
    \begin{align*}
        \E\left[\left.\frac{|\{i\in \widehat{S}: \widehat{\sign}_i \neq \sign(\gamma^*_i)\}|}{|\widehat{S}|\vee 1}\right|\Gamma\right]\le \min(\alpha(\nu), 1) q,
    \end{align*}
    \end{itemize}
    where $\alpha(\nu)$ is defined in Theorem \ref{theorem: directional fdr}.
\end{proposition}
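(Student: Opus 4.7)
The plan is to reduce Proposition \ref{prop: hd} to a direct application of Theorem \ref{theorem: directional fdr} on a well-specified, lower-dimensional problem, obtained by conditioning on $\D_1$ inside the sure screening event $\Gamma$. First I would observe that $\hat{S}_\beta$ and $\hat{S}_\gamma$ are measurable with respect to $\D_1$ alone, while the sample splitting scheme keeps $\D_2$ independent of $\D_1$. Hence conditioning on $\D_1$ fixes the screening sets deterministically and leaves the distribution of $\D_2$ intact. The event $\Gamma$ is likewise measurable with respect to $\D_1$.

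Next, on $\Gamma$ one has $\beta^*_i = 0$ for all $i \notin \hat{S}_\beta$, so that restricted to $\D_2$ the model reads
\begin{align*}
    y_2 = X_{2,\hat{S}_\beta}\,\beta^*_{\hat{S}_\beta} + \varepsilon_2, \qquad \gamma^*_{\hat{S}_\gamma} = D_{\hat{S}_\gamma,\hat{S}_\beta}\,\beta^*_{\hat{S}_\beta},
\end{align*}
and, crucially, $\gamma^*_i$ in the reduced problem equals $\gamma^*_i$ of the original problem for every $i \in \hat{S}_\gamma$, since $(D\beta^*)_i$ depends only on columns of $D$ indexed by $\hat{S}_\beta$ once $\beta^*$ is supported there. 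The reduced instance therefore falls within model \eqref{eq: structural sparsity model} with design $X_{2,\hat{S}_\beta}$ and transformation $D_{\hat{S}_\gamma,\hat{S}_\beta}$, and the invertibility/sample-size prerequisites needed to construct the Split Knockoff copy in Section \ref{sec: copy} are exactly those imposed in the proposition. Applying Theorem \ref{theorem: directional fdr} to this reduced instance then yields the bound $\min(\alpha(\nu),1)q$ conditional on $\D_1$ inside $\Gamma$.

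Since $\widehat{S} \subseteq \hat{S}_\gamma$ by construction of the screened Split Knockoffs, every selection — and in particular every false directional discovery — is accounted for inside $\hat{S}_\gamma$, so the $\dfdr$ (respectively $\mdfdr$) of the reduced problem coincides with the left-hand side of the proposition. Taking expectation over $\D_1$ conditional on $\Gamma$ preserves the bound and delivers both claims. The main obstacle I anticipate is confirming that the constants $\alpha(\nu)$ and $f(\nu)$ from Lemmas \ref{lemma: distribution zeta} and \ref{lemma: est of sign} — which drive the conditional supermartingale inequality behind Theorem \ref{theorem: directional fdr} — are unaffected by the restriction. This should follow because those constants depend only on $\sigma^2$ and $\nu$, which are untouched by screening; nevertheless one must verify that the orthogonality in Proposition \ref{prop: structure of tagamma} and the conditional independence of $\sign(W_i)$ given $\D_1$ still hold verbatim for the restricted design $(X_{2,\hat{S}_\beta},\,D_{\hat{S}_\gamma,\hat{S}_\beta})$, so that the entire argument of Section \ref{sec: analysis} transfers with no loss.
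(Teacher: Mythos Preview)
Your proposal is correct and mirrors the paper's own proof: on the sure screening event $\Gamma$, the model collapses to a lower-dimensional instance of \eqref{eq: structural sparsity model} with design $X_{\hat{S}_\beta}$ and transformation $D_{\hat{S}_\beta,\hat{S}_\gamma}$, to which Theorem~\ref{theorem: directional fdr} applies verbatim. The paper's argument is actually terser than yours and does not explicitly address the concern you raise about the constants in $\alpha(\nu)$; your treatment is, if anything, more careful on this point.
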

\begin{proof}
In the case that the event $\Gamma$ occurs, there holds
\begin{align}
    y = X_{\hat{S}_\beta}\beta^*_{\hat{S}_\beta} + \varepsilon, \gamma^*_{\hat{S}_\gamma} = D_{\hat{S}_\beta, \hat{S}_\gamma}\beta^*_{\hat{S}_\beta},\label{eq: restricted model}
\end{align}
where $X_{\hat{S}_\beta}$ is the submatrix of $X$ containing the columns indicated by $\hat{S}_\beta$, and $\beta^*_{\hat{S}_\beta}$, $\gamma^*_{\hat{S}_\gamma}$ are subvectors of $\beta^*$, $\gamma^*$ containing the elements indicated by $\hat{S}_\beta$, $\hat{S}_\gamma$ respectively. Therefore, Equation \eqref{eq: restricted model} is a reduced model of Equation \eqref{eq: structural sparsity model} restricted on $\hat{S}_\beta$, $\hat{S}_\gamma$, and the same procedure as in Theorem \ref{theorem: directional fdr} can be applied to achieve Proposition \ref{prop: hd}.
\end{proof}

Proposition \ref{prop: hd} suggests that for the $\dfdr$ control, we need to be conservative when screening off features in $\beta$, while it is fine to be aggressive when screening off features in $\gamma$ to make $n_2\ge |\hat{S}_\beta|+ |\hat{S}_\gamma|$. We conduct simulation experiments to validate the effectiveness of our procedure in high dimensional settings. In particular, we succeed all the settings in Section \ref{sec: simu_exp} except for taking $p = 1000>n =500$.

\begin{figure}[!ht]
\centering
\subfigure[Performance in $D_1$]{
\begin{minipage}[t]{0.33\textwidth}
\centering
\includegraphics[width=\textwidth]{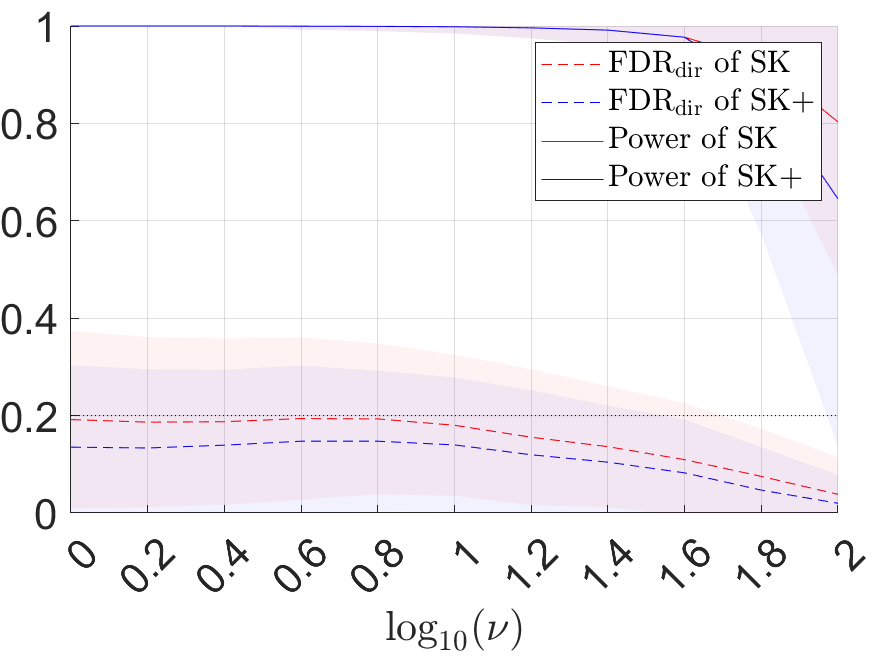}
\end{minipage}%
}%
\subfigure[Performance in $D_2$]{
\begin{minipage}[t]{0.33\textwidth}
\centering
\includegraphics[width=\textwidth]{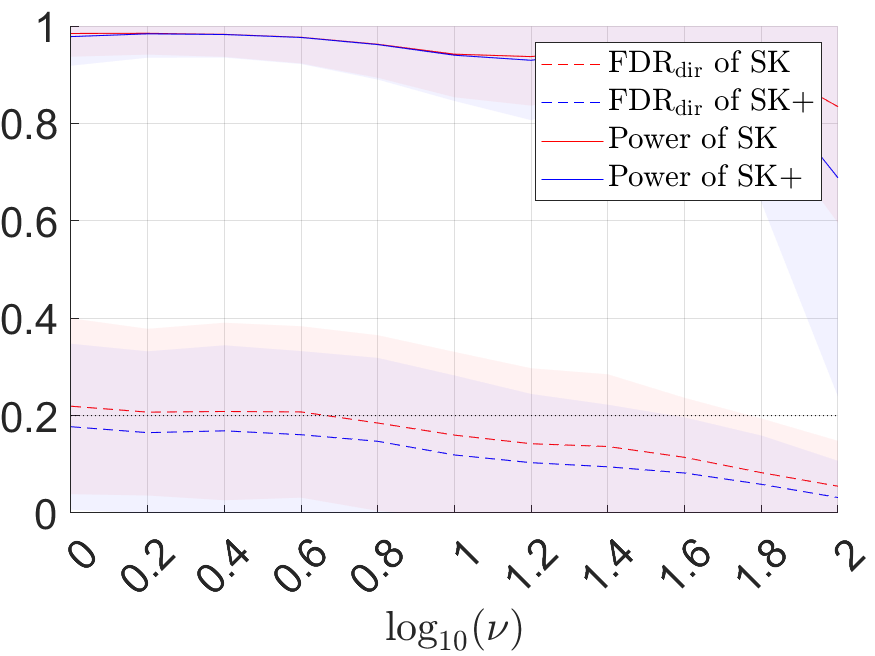}
\end{minipage}%
}%
\subfigure[Performance in $D_3$]{
\begin{minipage}[t]{0.33\textwidth}
\centering
\includegraphics[width=\textwidth]{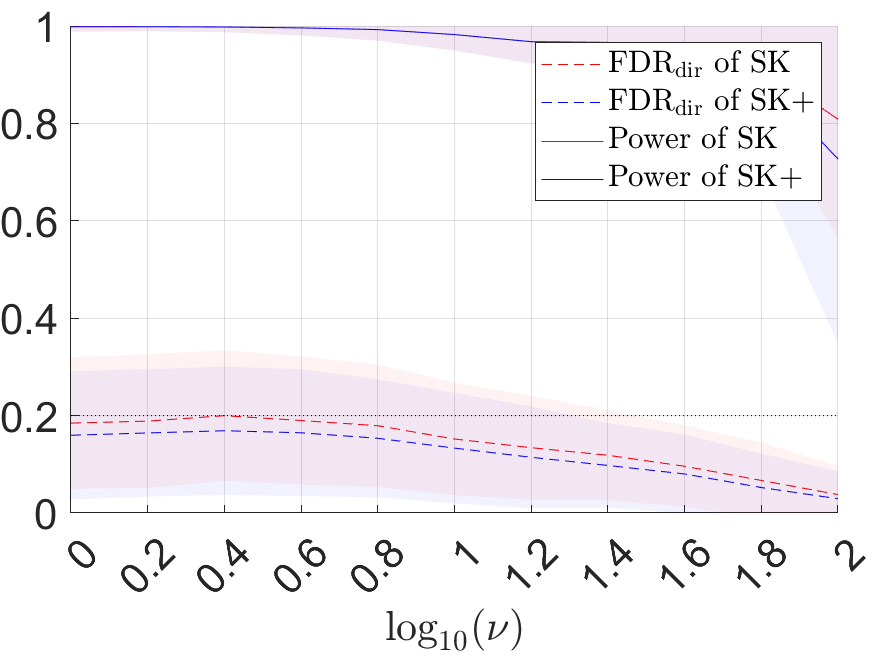}
\end{minipage}%
}%

\caption{Performance of Split Knockoffs in high dimensional settings: Power, $\dfdr$ for $q=0.2$. The curves in the figures represent the average performance in 200 simulation instances, while the shaded areas represent the 80\% confidence intervals truncated to the range $[0, 1]$. We use ``SK(+)'' to represent ``Split Knockoff(+)'' for shorthand notations.}
\label{fig: hd simulation}
\end{figure}

Figure \ref{fig: hd simulation} shows that our proposed procedure in high dimensional settings achieve desired performance in both $\dfdr$ control and selection power in the simulation experiments. Moreover, the $\dfdr$ of Split Knockoffs still goes to zero when $\nu$ is large, while good selection power is achieved for a wide range of $\nu$.

\section{Power Improvement of the Modified $W$ Statistics}

\label{sec: inclusion}

In this section, we show that our definition of $W$ statistics \eqref{eq: def_w}, a slightly modified version compared with that in \cite{barber2015controlling, barber2019knockoff}, improves the selection power of Split Knockoffs through an inclusion property between the selection set given by the $W$ statistics defined in \eqref{eq: def_w} and that given by its original definition in \cite{barber2015controlling, barber2019knockoff}, denoted as $W' := \max(Z, \tilde{Z})\odot\sign(Z-\tilde{Z})$ in the following.

For $W'$, \cite{barber2015controlling, barber2019knockoff} define the data dependent threshold $T_q'$ based on a pre-set nominal $\dfdr$ level $q$ as 
\begin{align*}
  &\mbox{(Split Knockoff)}\ \ \ \ \  T_q'=\min\left\{\lambda:\frac{|\{i:W_i'\le-\lambda\}|}{1\vee|\{i:W_i'\ge \lambda\}|}\le q\right\},\\
  &\mbox{(Split Knockoff+)}\ \ \ T_q'=\min\left\{\lambda:\frac{1+|\{i:W_i'\le-\lambda\}|}{1\vee|\{i:W_i'\ge \lambda\}|}\le q\right\},
\end{align*}
Then the respective selection set is defined as $\widehat{S}' = \{i: W_i'\ge T_q'\}$. In Proposition \ref{prop: inclusion}, we directly show that $\widehat{S}'\subseteq\widehat{S}$, which suggests our definition of $W$ statistics \eqref{eq: def_w} improves the selection power.

\begin{proposition}
\label{prop: inclusion}
For any value of $Z$ and $\tilde{Z}$, there always holds
\begin{equation*}
    \widehat{S}'\subseteq\widehat{S}.
\end{equation*}
\end{proposition}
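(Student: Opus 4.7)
My plan is to prove the inclusion by comparing the two $W$ statistics pointwise, then propagating the comparison to the data-dependent thresholds, and finally to the selection sets.

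The first step is a simple case analysis on the sign of $Z_i-\tilde Z_i$. Both $W_i$ and $W_i'$ share the common sign factor $\sign(Z_i-\tilde Z_i)$, while their magnitudes are $|W_i|=Z_i$ and $|W_i'|=\max(Z_i,\tilde Z_i)$. Hence when $Z_i\ge\tilde Z_i$ we have $W_i=W_i'=Z_i\ge 0$, and when $Z_i<\tilde Z_i$ both $W_i$ and $W_i'$ are non-positive with $W_i=-Z_i>-\tilde Z_i=W_i'$. In every case $W_i\ge W_i'$, with equality whenever $W_i\ge 0$.

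The second step is to turn this into a comparison of the tail counts that appear in the threshold rules. For any $\lambda>0$, the condition $W_i\ge\lambda$ forces $W_i\ge 0$, which by the equality case above gives $W_i=W_i'$; so the positive-tail sets coincide, $\{i: W_i\ge\lambda\}=\{i: W_i'\ge\lambda\}$. For the negative tail, $W_i\le -\lambda$ together with $W_i\ge W_i'$ immediately yields $W_i'\le -\lambda$, whence $\{i: W_i\le-\lambda\}\subseteq\{i: W_i'\le-\lambda\}$. Combining the two, for every $\lambda>0$ the ratio used to define $T_q$ (either in the Split Knockoff or the Split Knockoff+ version) is no larger when evaluated with $W$ than with $W'$.

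The third step then gives $T_q\le T_q'$: indeed any $\lambda$ meeting the $\le q$ condition for $W'$ also meets it for $W$, so the infimum defining $T_q$ is attained no later than that defining $T_q'$ (and the degenerate case $T_q'=+\infty$ makes $\widehat S'$ empty and the inclusion vacuous). To finish, take any $i\in\widehat S'$; then $W_i'\ge T_q'>0$, so by the equality case $W_i=W_i'\ge T_q'\ge T_q$, which places $i$ in $\widehat S$.

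I do not expect a real obstacle in this argument, since everything reduces to deterministic bookkeeping on $(Z_i,\tilde Z_i)$. The only subtle point worth being explicit about is the equality $\{i: W_i\ge\lambda\}=\{i: W_i'\ge\lambda\}$: without it, shrinking only the negative-tail set would not be enough to guarantee that the threshold ratio is monotone in the way we need, and the desired ordering $T_q\le T_q'$ could fail.
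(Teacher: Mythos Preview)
Your proof is correct and takes essentially the same approach as the paper: establish $W_i'\le W_i$ by case analysis on $\sign(Z_i-\tilde Z_i)$, compare the tail counts to obtain $T_q\le T_q'$, and conclude the selector inclusion. The only cosmetic difference is that you observe the positive-tail sets are in fact equal for $\lambda>0$, whereas the paper uses only the inclusion $\{i:W_i'\ge\lambda\}\subseteq\{i:W_i\ge\lambda\}$; either version suffices for the argument.
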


\begin{proof}
    Firstly, by the definition of $W$ and $W'$, for all $i$, there holds
\begin{itemize}
    \item if $Z_i>\tilde{Z}_i$, $W'_i = Z_i = W_i$;
    \item if $Z_i<\tilde{Z}_i$, $W'_i = -\tilde{Z}_i<-Z_i = W_i$.
\end{itemize}
Therefore, there holds $W'_i\le W_i$ for all $i$. With this property, for all $\lambda>0$, there holds
\begin{align*}
    \{i: W_i\le -\lambda\}\subseteq \{i: W'_i\le -\lambda\},\ \{i: W'_i\ge \lambda\}\subseteq \{i: W_i\ge \lambda\},
\end{align*}
which further suggest that
\begin{align*}
    &\frac{|\{i:W_i\le-\lambda\}|}{1\vee|\{i:W_i\ge \lambda\}|}\le \frac{|\{i:W'_i\le-\lambda\}|}{1\vee|\{i:W'_i\ge \lambda\}|};  \frac{1+|\{i:W_i\le-\lambda\}|}{1\vee|\{i:W_i\ge \lambda\}|}\le \frac{1+|\{i:W'_i\le-\lambda\}|}{1\vee|\{i:W'_i\ge \lambda\}|}.
\end{align*} 
Thus by the definition of the thresholds $T_q'$ and $T_q$, there holds $T_q\le T_q'$. Combining this fact with the property that $W'_i\le W_i$ for all $i$, there further holds
    \begin{align*}
        \widehat{S}' = \{i: W'_i\ge T_q'\}\subseteq \{i: W_i\ge T_q'\}\subseteq \{i: W_i\ge T_q\} = \widehat{S}.
    \end{align*}
    This ends the proof.
\end{proof}
Proposition \ref{prop: inclusion} shows that the selection set given by $W$ is always a super set of that given by $W'$. Therefore, our currently choice of $W$ statistics \eqref{eq: def_w} always enjoys better selection power compared with that in \cite{barber2015controlling, barber2019knockoff}.

\section{Necessity of Sample Splitting}

\label{sec: inflation}

In this section, we show the necessity of the sample splitting scheme for Split Knockoffs. As mentioned in Section \ref{sec: analysis}, the sample splitting scheme brings conditional independence between the sign and magnitude of the $W$ statistics, an important property for Split Knockoffs to achieve provable $\dfdr$ control. In this section, 
we show that dropping the sample splitting scheme will lead to an inflation in the $\dfdr$ control of Split Knockoffs both theoretically and experimentally. 

The procedure of Split Knockoffs without the sample splitting scheme is presented in the following. The main difference is that the construction will be built on the full dataset $\D = (X, y)$.

\begin{mdframed}[roundcorner=10pt,frametitle=Stage 1: Compute the intercept term ($\beta(\lambda)$) on $\D$.]
Compute the Split LASSO regularization path for $\beta$,
\begin{align*}
    \beta(\lambda):=\arg\min_{\beta}\min_\gamma \frac12\|\tilde y-A_{\beta}\beta-A_{\gamma}\gamma\|_2^2+\lambda\|\gamma\|_1, \ \ \ \lambda>0.
\end{align*}
where
\begin{equation*}
\tilde{y}=
\left(
\begin{array}{c}
\frac{y}{\sqrt{n}} \\
0_m
\end{array}
\right),
A_{\beta}=
\left(
\begin{array}{c}
\frac{X}{\sqrt{n}} \\
\frac{D}{\sqrt{\nu}} 
\end{array}
\right),
A_{\gamma}=
\left(
\begin{array}{c}
0_{n\times m} \\
-\frac{I_m}{\sqrt{\nu}} 
\end{array}
\right).
\end{equation*}
\end{mdframed}

\begin{mdframed}[roundcorner=10pt,frametitle=Stage 2: Compute the significance level $Z$ on $\D$.]

\begin{enumerate}
\item Compute the Split LASSO regularization path for $\gamma$,
\begin{align*}
    \gamma(\lambda):=\arg\min_{\gamma} \frac12\|\tilde y-A_{\beta}\beta(\lambda)-A_{\gamma}\gamma\|_2^2+\lambda\|\gamma\|_1, \ \ \ \lambda>0.
\end{align*}
\item For all $i$, define $Z_i$ and record the sign of $\gamma_i(\lambda)$ upon being nonzero as:
\begin{align*}
    Z_i=  \sup\left\{\lambda:\gamma_i(\lambda)\neq0\right\},\ r_i=  \lim_{\lambda\to Z_i-}\sign{\gamma_i(\lambda)}.
\end{align*}
\end{enumerate}
\end{mdframed}

\begin{mdframed}[roundcorner=10pt,frametitle=Stage 3: Compute the significance level $\tilde{Z}$ on $\D$.]

\begin{enumerate}
\item Compute the Split LASSO regularization path for $\tilde{\gamma}$, 
\begin{align*}
    \tilde{\gamma}(\lambda) :=\arg \min_{\tilde{\gamma}} \frac12\|\tilde y-A_{\beta}\beta(\lambda)-\tilde{A}_{\gamma}\tilde{\gamma}\|_2^2+\lambda\|\tilde{\gamma}\|_1, \ \ \ \lambda>0,
\end{align*}
where for some proper nonnegative vector $\vecs\in\R^m$, $\tilde{A}_{\gamma}$ satisfies
\begin{align*}
    \tilde{A}_{\gamma}^T\tilde{A}_{\gamma} = A_{\gamma}^TA_{\gamma},\ 
    A_{\beta}^T\tilde{A}_{\gamma} = A_{\beta}^TA_{\gamma},\ A_{\gamma}^T\tilde{A}_{\gamma} =  A_{\gamma}^TA_{\gamma}-\diag(\vecs).
\end{align*}
$\tilde{A}_{\gamma}$ can be constructed in the same way as $\tilde{A}_{\gamma_2}$ in Section \ref{sec: construct split knockoff}.
\item For all $i$, define $\tilde{Z}_i$ as:
\begin{align*}
    \tilde{Z}_i=  \sup\left\{\lambda:\tilde\gamma_i(\lambda)\neq0\right\}.
\end{align*}
\end{enumerate}
\end{mdframed}

The $W$ statistics, the data dependent threshold $T_q$ with respect to the preset nominal $\dfdr$ level $q$, the selector $\widehat{S}$ and the sign estimator $\widehat{\sign}$ are defined in the same way as in Section \ref{sec: methodology}. Theorem \ref{theorem: directional fdr without split} presents the $\dfdr$ control of the above Split Knockoff procedure without sample splitting.

\begin{theorem}
    \label{theorem: directional fdr without split}
    For any linear transformation $D$ and any $0<q<1$, there exists a decreasing function $g(\nu)$ with respect to $\nu$, such that the following holds:
    \begin{itemize}
        \item[(a)] ($\mdfdr$ of Split Knockoff without sample splitting)
    \begin{align*}
        \E\left[\frac{|\{i\in \widehat{S}: \widehat{\sign}_i \neq \sign(\gamma^*_i)\}|}{|\widehat{S}|+q^{-1}}\right]\le g(\nu) q,
    \end{align*}
    \item[(b)] ($\dfdr$ of Split Knockoff+ without sample splitting)
    \begin{align*}
        \E\left[\frac{|\{i\in \widehat{S}: \widehat{\sign}_i \neq \sign(\gamma^*_i)\}|}{|\widehat{S}|\vee 1}\right]\le g(\nu) q,
    \end{align*}
    \end{itemize}
    where $g(\nu)$ goes to zero when $\nu$ goes to infinity. The detailed form of $g(\nu)$ is given by Equation \eqref{def: g(nu)} in the proof of Theorem \ref{theorem: directional fdr without split} in Section \ref{sec: proof thm late}.
\end{theorem}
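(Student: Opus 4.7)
My plan is to follow the same supermartingale scheme as in Theorem~\ref{theorem: directional fdr}, paying the cost of a worse constant that reflects the loss of conditional independence once sample splitting is removed. The argument begins with the same reduction used there: both (a) and (b) follow via the Barber--Cand\`es thresholding bookkeeping from the key inequality
\begin{equation*}
\Expect\left[\frac{\sum_{i}1\{W_i\ge T_q,\,\widehat{\sign}_i\neq\sign(\gamma^*_i)\}}{1+\sum_{i}1\{W_i\le -T_q,\,\widehat{\sign}_i\neq\sign(\gamma^*_i)\}}\right]\le g(\nu).
\end{equation*}
Lemma~\ref{lemma: distribution zeta} and the characterizations of $Z_i$, $\tilde Z_i$, $W_i$ in Section~\ref{sec: methodology} come from the KKT of stages 2 and 3 and make no use of a dataset partition, so they carry over verbatim (with $n_2$ replaced by $n$); in particular $\zeta\sim\mathcal{N}(-\diag(\vecs)\gamma^*,\frac{\sigma^2}{n}\diag(\vecs)(2I_m-\diag(\vecs)\nu))$, and $\sign(W_i)=+1$ is equivalent to $\zeta_i$ lying in a one-sided interval around $0$ of length $2|[D\beta(Z_i)]_i|/\nu$.

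The substitute for ``conditioning on $\D_1$'' is conditioning on $X^{T}y$, since Split LASSO reads $\tilde y$ only through $A_\beta^T\tilde y = X^Ty/n$; thus $\beta(\lambda)$, $\gamma(\lambda)$, $|W|$, and $\widehat{\sign}$ are all $\sigma(X^{T}y)$-measurable. Decomposing $\tilde{A}_{\gamma,1}=X(X^TX)^{-1}X^T\tilde{A}_{\gamma,1}+Q$ with $Q:=(I-X(X^TX)^{-1}X^T)\tilde{A}_{\gamma,1}$, and using the identity $\tilde{A}_{\gamma,1}^T X=-\sqrt{n}\diag(\vecs)D$ from Proposition~\ref{prop: structure of tagamma}(2), one obtains
\begin{equation*}
\zeta\mid X^{T}y\ \sim\ \mathcal{N}\left(-\diag(\vecs)\,D\hat{\beta}^{\mathrm{LS}},\ \frac{\sigma^{2}}{n}\,Q^{T}Q\right),
\end{equation*}
with $\hat{\beta}^{\mathrm{LS}}:=(X^{T}X)^{-1}X^{T}y$ and $Q^TQ=\diag(\vecs)(2I_m-\diag(\vecs)\nu)-n\diag(\vecs)D(X^{T}X)^{-1}D^T\diag(\vecs)$. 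Compared with the sample-split case, the conditional mean has drifted from $-\diag(\vecs)\gamma^*$ to $-\diag(\vecs)\,D\hat{\beta}^{\mathrm{LS}}$ and $Q^TQ$ is only approximately diagonal, so neither the clean bias nor the exact coordinate-independence of Lemma~\ref{lemma: est of sign} is available as stated.

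The main obstacle, and where $g(\nu)$ inflates relative to $\alpha(\nu)$, is producing a uniform bound on $\Prob[W_i>0 \mid X^{T}y,\,\widehat{\sign}_i\neq\sign(\gamma^*_i)]$. My plan is to integrate the conditional Gaussian density above over the data-dependent one-sided interval identified by the KKT comparison and exploit two facts: (i) the interval has length $2|[D\beta(Z_i)]_i|/\nu$, which is of order $1/\nu$ times the local entry-point magnitude; and (ii) the conditional standard deviation of $\zeta_i$ is of order $\sigma\sqrt{\vecs_i/n}=O(\sigma/\sqrt{n\nu})$ because $\vecs_i\le 1/\nu$. These together yield a pointwise bound on the conditional probability whose worst case over $X^Ty$, once combined with a high-probability control on $\|D\hat{\beta}^{\mathrm{LS}}\|_\infty$, decays in $\nu$. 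The remaining challenge is to pass from pointwise to aggregate without exact conditional independence; here I would replace Lemma~\ref{lemma: est of sign}'s independence with a diagonal-dominance estimate on $Q^{T}Q$, whose off-diagonal entries are $O(\nu^{-2})$ against diagonal entries of order $\nu^{-1}$, allowing a perturbed version of the Barber--Cand\`es supermartingale to run with a $\nu$-dependent multiplicative correction. Taking expectation over $X^{T}y$ then assembles the scalar bound $g(\nu)$, whose explicit form \eqref{def: g(nu)} is the product of the pointwise Gaussian bound and the diagonal-dominance correction, and which decreases to $0$ as $\nu\to\infty$; parts (a) and (b) then follow exactly as in the proof of Theorem~\ref{theorem: directional fdr}.
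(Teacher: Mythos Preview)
Your skeleton matches the paper's proof closely: the same reduction to the key ratio inequality; conditioning on the sufficient statistic for $\beta(\lambda)$ (the paper conditions on $\xi=X^T\varepsilon/n$, which is equivalent to your $X^Ty$); the same conditional Gaussian for $\zeta$ with nearly diagonal covariance (the paper's $\Sigma=\frac{\sigma^2}{n}\diag(\vecs)(2I_m-\nu\diag(\vecs)-\diag(\vecs)R)$ with $R=D(X^TX/n)^{-1}D^T$ is exactly your $\frac{\sigma^2}{n}Q^TQ$); and the same plan of running a perturbed supermartingale to absorb the residual dependence. Your diagonal-dominance observation ($O(\nu^{-2})$ off-diagonal against $O(\nu^{-1})$ diagonal) is precisely the mechanism the paper exploits.

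Two places where your sketch understates the work. First, the paper does not integrate the conditional Gaussian directly but restricts to high-probability boxes---an event $\Lambda(\nu)=\{|\zeta_i-\E[\zeta_i\mid\xi]|<\theta\sqrt{\var[\zeta_i\mid\xi]}\}$ and an event $\|\xi\|_\infty<\kappa\sigma/\sqrt{n}$---and pays their complements as additive terms in $g(\nu)$; without these restrictions the interval length bound $c(\nu)_i\le\frac{8}{\nu}\kappa C_XC_D\sigma/\sqrt{n}$ (your ``$2|[D\beta(Z_i)]_i|/\nu$'' is not quite the right quantity; it is $\inf_{\lambda\ge Z_i}(r_i[D\beta(\lambda)]_i/\nu+\lambda)$) is unavailable. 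Second, and more substantially, the ``perturbed supermartingale'' step is the bulk of the paper's proof and requires more than diagonal dominance alone: the paper first proves a multiplicative Gaussian comparison lemma between $\zeta\mid\xi$ and an independent surrogate (yielding weak-dependence bounds of the form $e^{\pm 48\theta^2 C_2 m(|S|+1)/\nu}$ on joint versus product probabilities of the $B_i=1\{W_i<0\}$), then uses an explicit invertible linear map to transform the heterogeneous weakly-dependent Bernoullis $\{B_i\}$ into homogeneous weakly-dependent Bernoullis $\{Q_i\}$ with common parameter $\rho$, and only then runs an optional stopping argument with an explicit multiplicative correction $\tau_\nu=8\ln(2m)(e^{48\theta^2C_2 m(m+1)/\nu}-1)e^{2\eta(\nu)m/(\sqrt{\nu}-\eta(\nu))}$. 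Your sketch does not indicate how the supermartingale inequality survives once the $\sign(W_i)$ are merely approximately independent; this homogenization-plus-comparison machinery is the missing piece.
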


The main difference between Theorem \ref{theorem: directional fdr without split} and Theorem \ref{theorem: directional fdr} is that Theorem \ref{theorem: directional fdr without split} no longer guarantees the $\dfdr$ control for all $\nu>0$. Instead, it states that the $\dfdr$ could be under control when $\nu$ is sufficiently large. In other words, there could be an inflation in the $\dfdr$ control when $\nu$ is small. In particular, we apply Split Knockoffs without sample splitting on the simulation examples in Section \ref{sec: simu_exp} for illustration.

\begin{figure}[!ht]
\centering
\subfigure[Performance in $D_1$]{
\begin{minipage}[t]{0.33\textwidth}
\centering
\includegraphics[width=\textwidth]{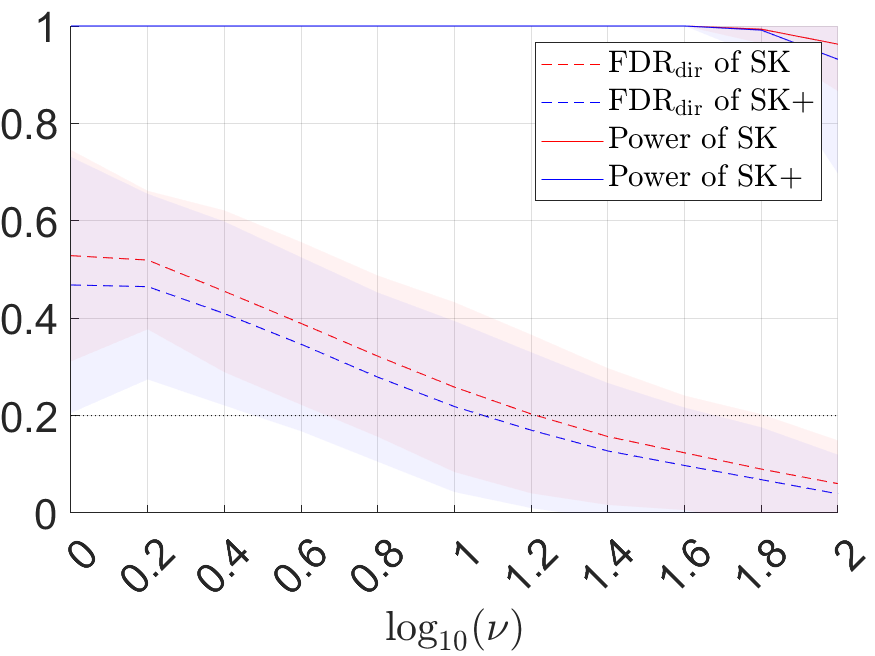}
\end{minipage}%
}%
\subfigure[Performance in $D_2$]{
\begin{minipage}[t]{0.33\textwidth}
\centering
\includegraphics[width=\textwidth]{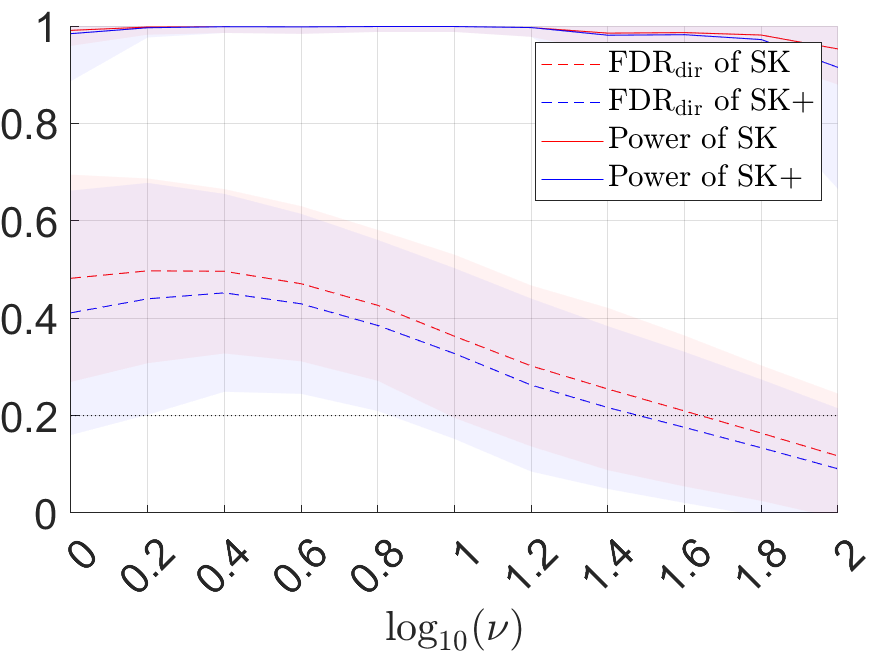}
\end{minipage}%
}%
\subfigure[Performance in $D_3$]{
\begin{minipage}[t]{0.33\textwidth}
\centering
\includegraphics[width=\textwidth]{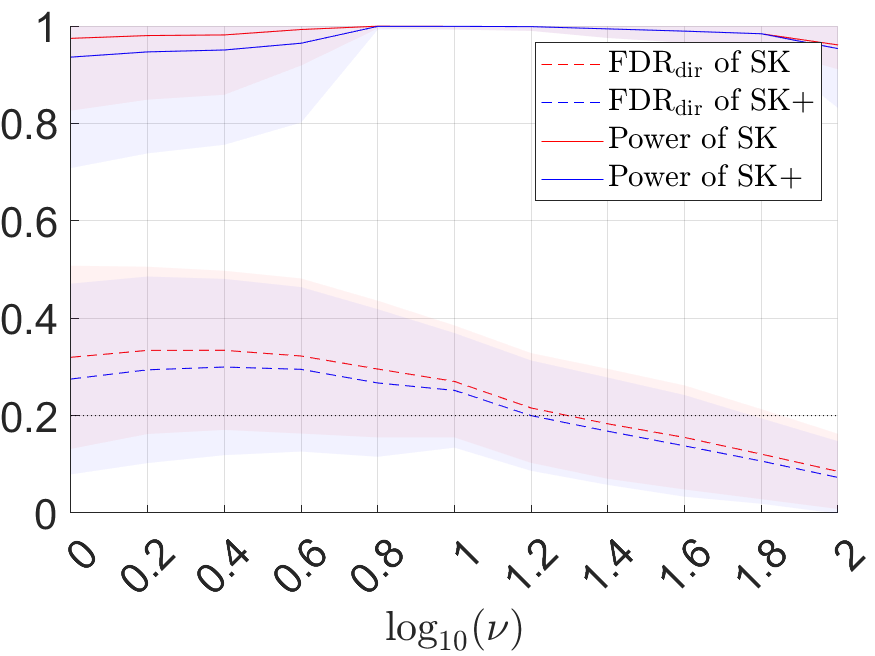}
\end{minipage}%
}%

\caption{Performance of Split Knockoffs without sample splitting: Power, $\dfdr$ for $q=0.2$. The curves in the figures represent the average performance in 200 simulation instances, while the shaded areas represent the 80\% confidence intervals truncated to the range $[0, 1]$. We use``SK(+)'' to represent ``Split Knockoff(+)'' for shorthand notations.}
\label{fig: comparison without split}
\end{figure}

In particular, Figure \ref{fig: comparison without split} presents the performance of Split Knockoffs without sample splitting. As one can observe in Figure \ref{fig: comparison without split}, on the one hand, Split Knockoffs without sample splitting exhibits higher selection power compared with Split Knockoffs due to an enlarged sample size. On the other hand, although Split Knockoffs without sample splitting can achieve desired $\dfdr$ control when $\nu$ is large, there could be an inflation in the $\dfdr$ when $\nu$ is small. This is because Theorem \ref{theorem: directional fdr without split} no longer ensures that the $\dfdr$  is upper bounded by $q$, different from Theorem \ref{theorem: directional fdr}. This validates the necessity of the sample splitting scheme experimentally.

\section{Sign Consistency of Split LASSO} 
\label{sec:pathconsistency}

In this section, we present the sign consistency of Split LASSO to discuss the effects of $\nu$ on the selection power of Split Knockoffs. 
Specially, we present that Split LASSO enjoys weaker incoherence conditions with the increase of $\nu$ which benefits discovering strong nonnull features, at a possible cost of losing weak ones when $\nu$ is huge and the gap between $\gamma$ and $D\beta$ is barely penalized. 

For shorthand notations, define $H_\nu := I_m - \frac{D[\Sigma_X+L_D]^{-1}D^T}{\nu}$, where $\Sigma_X = \frac{X^TX}{n}$ and $L_D = \frac{D^TD}{\nu}$. Further denote $H_\nu^{11}$, $H_\nu^{00}$ to be the gram matrix for the nonnull set $S_1:=\{i: \gamma^*_i\neq0\}$, the null set $S_0$ respectively, and $H_\nu^{10}$, $H_\nu^{01}$ to be the correlation matrices between $\S_1$ and $\S_0$. First of all, we assume the following restricted-strongly-convex condition to ensure the identifiability of the problem.   
\paragraph*{Restricted-Strongly-Convex Condition} There exists some $C_\mathrm{min}>0$, such that the smallest eigenvalue of $H_\nu^{11}$ is bounded below by $C_\mathrm{min}$, i.e.
\begin{align}\label{min_eigen}
    \lambda_\mathrm{min} (H_\nu^{11})>C_\mathrm{min}.
\end{align}

\subsection{$\nu$-Incoherence Condition}

In this section, we present the following $\nu$-incoherence condition which is crucial for the sign consistency of Split LASSO, which becomes weaker (easier to be satisfied) with the increase of $\nu$.

\paragraph*{$\nu$-Incoherence Condition} There exists a parameter $\chi_\nu\in (0, 1]$, such that
\begin{align}
    \|H_\nu^{01} [H_\nu^{11}]^{-1}\|_\infty \le 1-\chi_\nu.\label{incoherence}
\end{align}

As $\nu \to \infty$, $H_\nu = I_m - \frac{1}{\nu}D[\Sigma_X+L_D]^{-1}D^T \to I_m$, therefore $H_\nu^{01}\to 0_{|\S_0|\times|\S_1|}$, while $H_\nu^{11}\to I_{|S_1|}$. Therefore, with the increase of $\nu$, the left hand side of \eqref{incoherence} drops to zero, and the $\nu$-incoherence condition is satisfied with $\chi_\nu\to 1$.  In this regard, increasing $\nu$ helps meet the $\nu$-incoherence condition above.

\subsection{Path Consistency and Power}

Now we are ready to state Theorem \ref{thm:pathconsistency} on the sign consistency of Split LASSO under the restricted strongly convex condition and $\nu$-incoherence conditions. 
\begin{theorem} \label{thm:pathconsistency}
    Assume that the design matrix $X$ and $D$ satisfy the restricted-strongly-convex condition \eqref{min_eigen} and $\nu$-incoherence condition \eqref{incoherence}. Let the columns of $X$ be normalized as $\max_{i\in [1:p]}\frac{\|x_i\|_2}{\sqrt{n}}\le 1$. There exists $c>0$, such that for the sequence of $\{\lambda_n\}$ satisfying 
    \begin{align} \label{eq:earlypath}
        \lambda_n > \frac{c}{\chi_\nu}\sqrt{\frac{\sigma^2\ln m}{n}},
    \end{align}
     the following properties hold with probability larger than $1-4e^{-c n\lambda_n^2}$.
    \begin{enumerate}
        \item (No-false-positive) The solution $(\hat\beta, \hat\gamma)\in \R^p \times\R^m$ of Split LASSO on $\lambda_n$ does not have false positives with respect to $\gamma$. 
        \item (Sign-consistency) In addition, $\hat\gamma$ recovers the sign of $\gamma^*$, if there further holds 
        \begin{align}
            \min_{i\in \S_1}{\gamma^*_i}> \lambda_n \nu \left[\frac{\sigma}{2C_\mathrm{min}}+\| [H_\nu^{11}]^{-1}\|_\infty\right].\label{eq: snr requirement}
        \end{align}
    \end{enumerate}
\end{theorem}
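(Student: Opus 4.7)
The plan is to use the Primal--Dual Witness (PDW) construction adapted to Split LASSO. First I would write out both KKT conditions: stationarity in $\beta$ gives $[\Sigma_X+L_D]\hat\beta = X^T y/n + D^T\hat\gamma/\nu$, while stationarity in $\gamma$ gives $\lambda\hat\rho + \hat\gamma/\nu = D\hat\beta/\nu$ with $\hat\rho\in\partial\|\hat\gamma\|_1$. Eliminating $\hat\beta$, and then substituting $y = X\beta^*+\varepsilon$ together with $\gamma^*=D\beta^*$, the system collapses to a single equation in $(\hat\gamma,\hat\rho)$,
\begin{equation*}
    H_\nu(\hat\gamma-\gamma^*) \;=\; -\nu\lambda\hat\rho + w, \qquad w := \tfrac{1}{n}D[\Sigma_X+L_D]^{-1}X^T\varepsilon.
\end{equation*}
I would then form the PDW candidate by setting $\hat\gamma_{\S_0}=0$ and $\hat\rho_{\S_1}=\sign(\gamma^*_{\S_1})$, and solving the two block equations to get the closed forms
\begin{equation*}
    \hat\gamma_{\S_1}-\gamma^*_{\S_1} = [H_\nu^{11}]^{-1}\bigl(w_{\S_1}-\nu\lambda\sign(\gamma^*_{\S_1})\bigr), \quad \nu\lambda\hat\rho_{\S_0} = w_{\S_0} - H_\nu^{01}[H_\nu^{11}]^{-1}\bigl(w_{\S_1}-\nu\lambda\sign(\gamma^*_{\S_1})\bigr).
\end{equation*}
Uniqueness of the Split LASSO minimizer on the active set, needed to identify this PDW pair with the true solution, follows from the restricted-strongly-convex condition \eqref{min_eigen}.

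For the no-false-positive claim, I would verify strict dual feasibility $\|\hat\rho_{\S_0}\|_\infty<1$. Splitting the expression for $\nu\lambda\hat\rho_{\S_0}$ into a deterministic part and a Gaussian part, the $\nu$-incoherence condition \eqref{incoherence} directly bounds the deterministic contribution by $(1-\chi_\nu)\nu\lambda$. It remains to bound $\|w_{\S_0}-H_\nu^{01}[H_\nu^{11}]^{-1}w_{\S_1}\|_\infty<\chi_\nu \nu\lambda$ on a high-probability event. Since $w$ is centered Gaussian with covariance $\frac{\sigma^2\nu}{n}(I-H_\nu)H_\nu$ (a direct computation using $D[\Sigma_X+L_D]^{-1}D^T=\nu(I-H_\nu)$), each coordinate is sub-Gaussian with variance $\lesssim\sigma^2\nu/n$. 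A standard maximal inequality plus a union bound over $m$ coordinates, combined with the stated lower bound $\lambda_n>\frac{c}{\chi_\nu}\sqrt{\sigma^2\ln m/n}$ and a well-chosen constant $c$, then yields the required inequality on an event of probability at least $1-4e^{-cn\lambda_n^2}$ (two tail bounds, for $w_{\S_0}$ and for $H_\nu^{01}[H_\nu^{11}]^{-1}w_{\S_1}$, merged).

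For sign consistency, I would bound $\|\hat\gamma_{\S_1}-\gamma^*_{\S_1}\|_\infty$ from the closed-form expression: the bias piece $\nu\lambda[H_\nu^{11}]^{-1}\sign(\gamma^*_{\S_1})$ contributes $\nu\lambda\,\|[H_\nu^{11}]^{-1}\|_\infty$ in $\infty$-norm, and the noise piece $[H_\nu^{11}]^{-1}w_{\S_1}$ is controlled by $\|w_{\S_1}\|/\lambda_{\min}(H_\nu^{11})\le\|w_{\S_1}\|/C_\mathrm{min}$, which the same Gaussian tail bound renders at most $\nu\lambda\sigma/(2C_\mathrm{min})$ on the high-probability event. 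Adding these gives $\|\hat\gamma_{\S_1}-\gamma^*_{\S_1}\|_\infty \le \nu\lambda_n\bigl[\|[H_\nu^{11}]^{-1}\|_\infty+\sigma/(2C_\mathrm{min})\bigr]$, so the signal-strength hypothesis \eqref{eq: snr requirement} immediately implies $\sign(\hat\gamma_i)=\sign(\gamma^*_i)$ for every $i\in\S_1$.

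The main obstacle will be the Gaussian concentration step for $w$: one must track the $\nu$ dependence carefully to obtain a variance proxy that is compatible with the stated $\lambda_n\gtrsim\sqrt{\sigma^2\ln m/n}/\chi_\nu$ threshold and with the target bound $\nu\lambda\sigma/2$ on $\|w\|_\infty$. The identity $\operatorname{Cov}(w)=\frac{\sigma^2\nu}{n}(I-H_\nu)H_\nu$, together with the column normalization of $X$ and the eigenvalue bound $0\preceq H_\nu\preceq I$, is what makes this calibration possible; once this is in place, the remainder of the argument is linear algebra and a union bound.
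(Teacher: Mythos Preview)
Your proposal follows essentially the same Primal--Dual Witness route as the paper: eliminate $\hat\beta$ to obtain $H_\nu(\hat\gamma-\gamma^*)=-\nu\lambda\hat\rho+w$, block-decompose over $\S_1$ and $\S_0$, verify strict dual feasibility via the $\nu$-incoherence bound plus Gaussian concentration on $w$, and bound $\|\hat\gamma_{\S_1}-\gamma^*_{\S_1}\|_\infty$ by a bias-plus-noise split. Your covariance identity $\operatorname{Cov}(w)=\frac{\sigma^2\nu}{n}(I-H_\nu)H_\nu$ is in fact cleaner than what the paper writes, which simply bounds the per-coordinate variance of $\omega/\nu$ by $c_1\sigma^2/n$ for an unspecified constant $c_1$.

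There is one discrepancy worth flagging. You prescribe $\hat\rho_{\S_1}=\sign(\gamma^*_{\S_1})$ as part of the PDW candidate and then solve for $\hat\gamma_{\S_1}$. The paper instead sets $\hat\gamma_{\S_0}=0$, solves the restricted subproblem on $\S_1$ for $(\hat\beta,\hat\gamma_{\S_1})$, and takes $\hat\rho_{\S_1}\in\partial\|\hat\gamma_{\S_1}\|_1$. The difference matters for decoupling Part~1 from Part~2: in the paper's version, $\|\hat\rho_{\S_1}\|_\infty\le 1$ holds automatically, so the bound $\|\hat\rho_{\S_0}\|_\infty\le(1-\chi_\nu)+\frac{2}{\lambda_n}\|\omega/\nu\|_\infty$ and hence the no-false-positive conclusion require no appeal to the signal-strength condition~\eqref{eq: snr requirement}. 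In your version, the witness is a valid KKT point only if the constructed $\hat\gamma_{\S_1}$ actually has the prescribed signs, which already needs~\eqref{eq: snr requirement}; as written, your proof of Part~1 would therefore implicitly assume the hypothesis of Part~2. The fix is immediate---solve the restricted problem rather than fixing $\hat\rho_{\S_1}$---and the rest of your argument carries through unchanged.
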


From Theorem \ref{thm:pathconsistency}, the influence of $\nu$ on the selection power of Split LASSO and Split Knockoffs can be understood as follows: (a) for the early stage of the Split LASSO path characterized by \eqref{eq:earlypath}, there is no false positive and only nonnull features are selected; (b) all the strong nonnull features whose magnitudes are larger than $O(\nu \sigma \chi_\nu^{-1}\sqrt{\ln m/n} )$ could be selected on the path with sign consistency. Hence a sufficiently large $\nu$ will ensure the incoherence condition for sign consistency such that strong nonnull features will be selected earlier on the Split LASSO path than the nulls, at the cost of possibly losing weak nonnull features below $O(\nu \sigma \chi_\nu^{-1}\sqrt{\ln m/n} )$. Therefore, good selection power relies on a proper choice of $\nu$ for the trade-off.  

The proof of this theorem will be provided in Section \ref{sec:proof_path_consistency}.

\section{Experimental Supplementary Material}

In this section, we provide supplementary material for simulation experiments and two real world applications on the Alzheimer's Disease and human age comparisons.

\subsection{Supplementary Material for Simulation Experiments}

\label{sec: supp simu}

In this section, we discuss the effects of various parameters/procedures in Split Knockoffs by simulation experiments. In particular, we study the effects of the signal noise ratio (SNR) and the sample splitting fraction $\frac{n_1}{n_1+n_2}$ on the performance of Split Knockoffs. Moreover, we discuss the effects of random sample splits, which leads to random selection sets $\widehat{S}$, while the random selection sets include the nonnull features with much higher frequencies compared with the null features.

\subsubsection{Effects of the Signal Noise Ratio}

\label{sec: snr}

In this section, we present the performance of Split Knockoffs with respect to different signal noise ratios, compared with Knockoffs when applicable. In this section, all the simulation settings are succeeded from Section \ref{sec: simu_exp}, except that we take $\beta^*\in \R^p$ as
\begin{equation*}
    \beta_i^*:=\left\{
    \begin{array}{ccl}
        A   &   & i \le 20,\ i \equiv 0, -1 (\mathrm{mod}\ 3),\\
        0   &   & \mathrm{otherwise},
    \end{array} \right.
\end{equation*}
where we test $\log_{10}(A)$ in the range of -0.5 to 0.5 with a step size 0.2. For Split Knockoffs, the $\nu$ is chosen by cross validation with Split LASSO on $\D_1$.

\begin{figure}[!ht]
\centering
\subfigure[$\dfdr$ in $D_1$]{
\begin{minipage}[t]{0.33\textwidth}
\centering
\includegraphics[width=\textwidth]{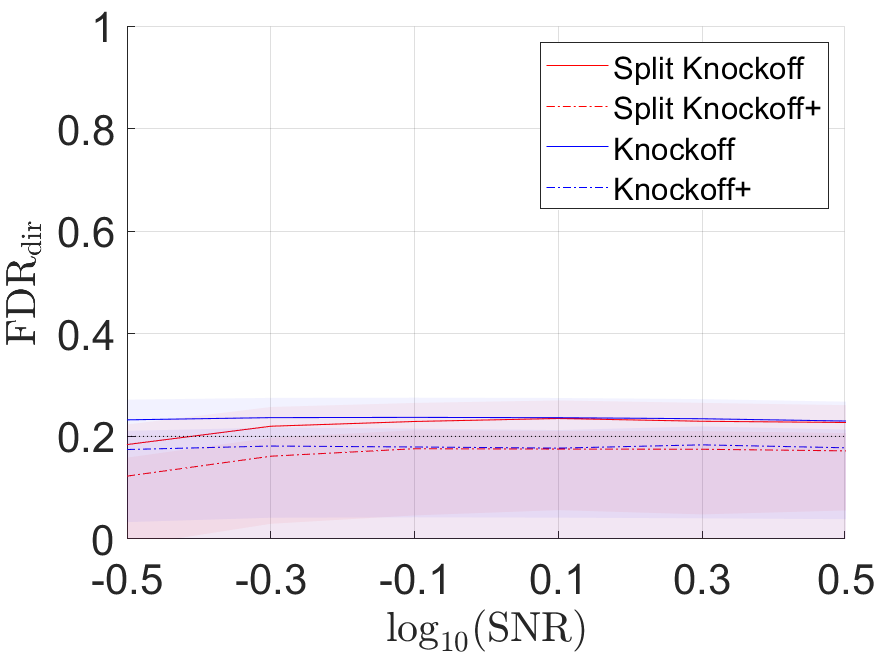}
\end{minipage}%
}%
\subfigure[$\dfdr$ in $D_2$]{
\begin{minipage}[t]{0.33\textwidth}
\centering
\includegraphics[width=\textwidth]{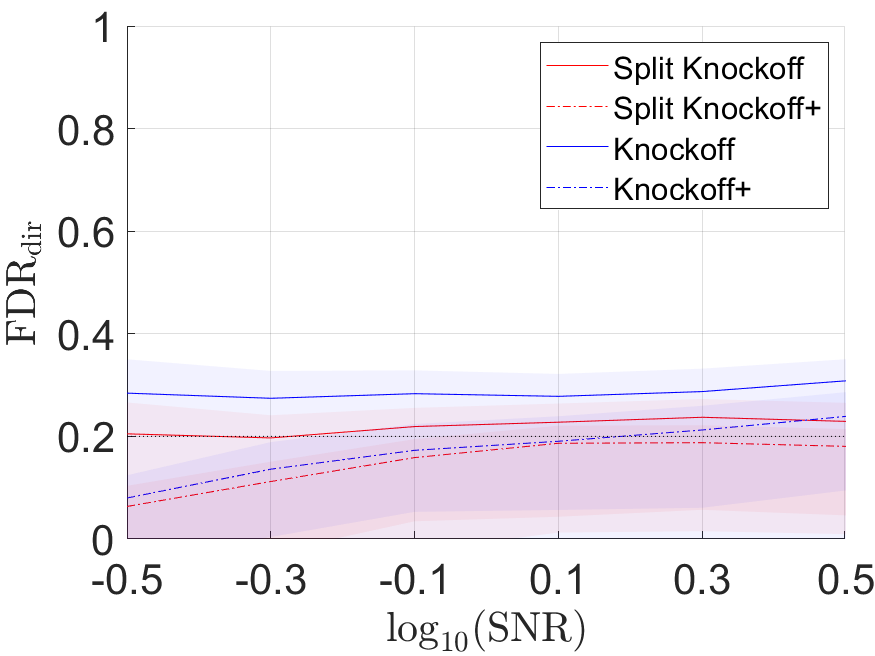}
\end{minipage}%
}%
\subfigure[$\dfdr$ in $D_3$]{
\begin{minipage}[t]{0.33\textwidth}
\centering
\includegraphics[width=\textwidth]{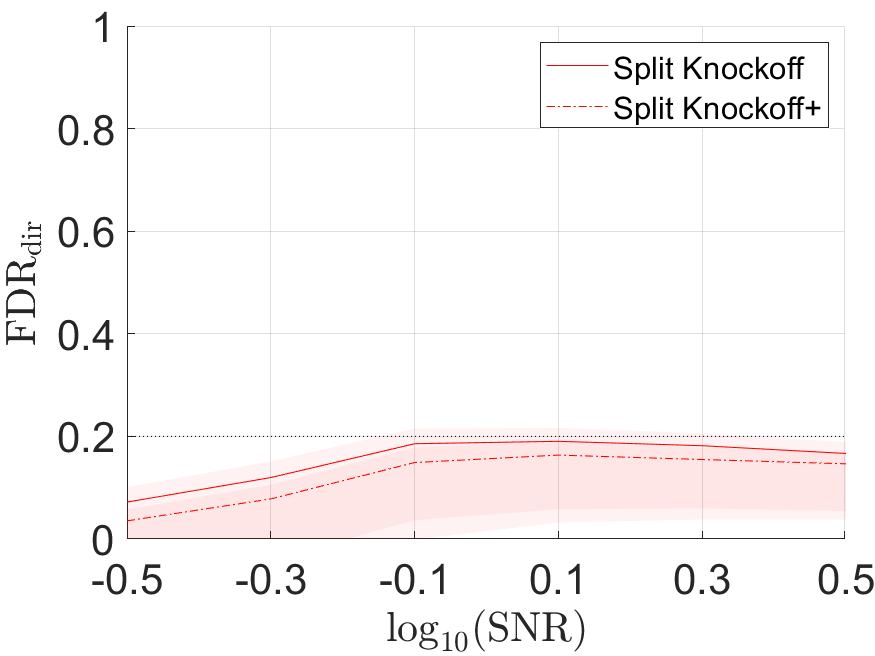}
\end{minipage}%
}%

\centering
\subfigure[Power in $D_1$]{
\begin{minipage}[t]{0.33\textwidth}
\centering
\includegraphics[width=\textwidth]{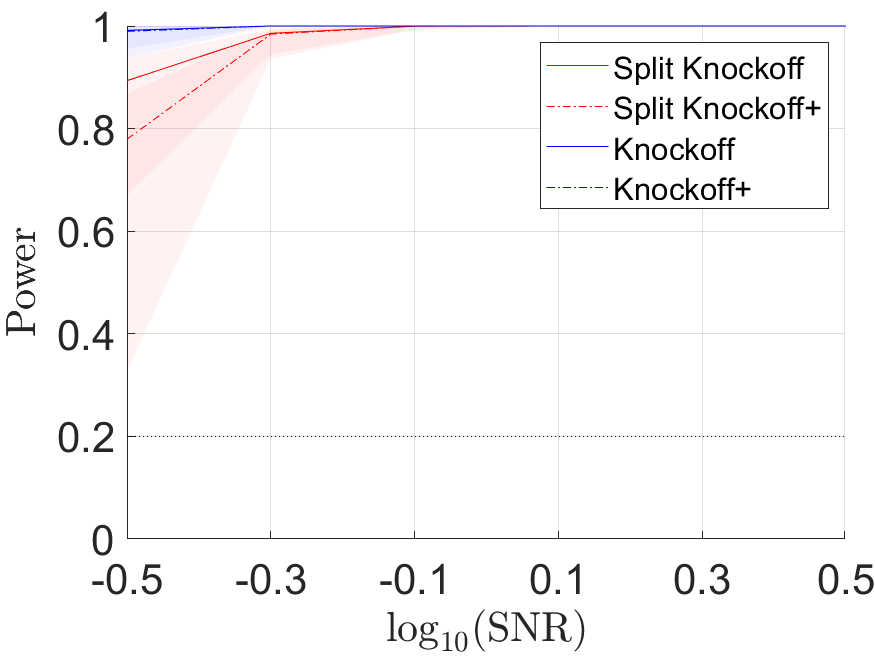}
\end{minipage}%
}%
\subfigure[Power in $D_2$]{
\begin{minipage}[t]{0.33\textwidth}
\centering
\includegraphics[width=\textwidth]{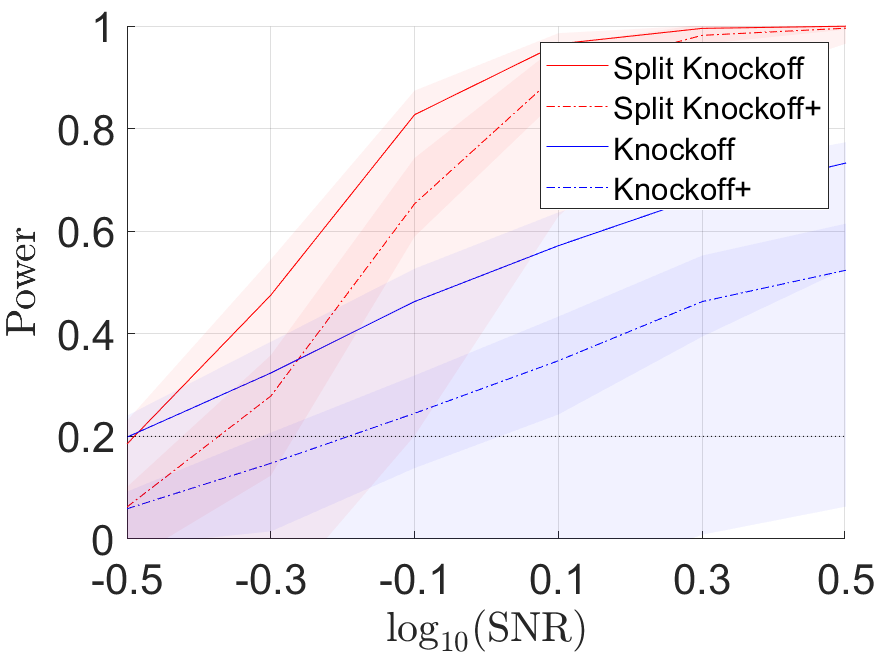}
\end{minipage}%
}%
\subfigure[Power in $D_3$]{
\begin{minipage}[t]{0.33\textwidth}
\centering
\includegraphics[width=\textwidth]{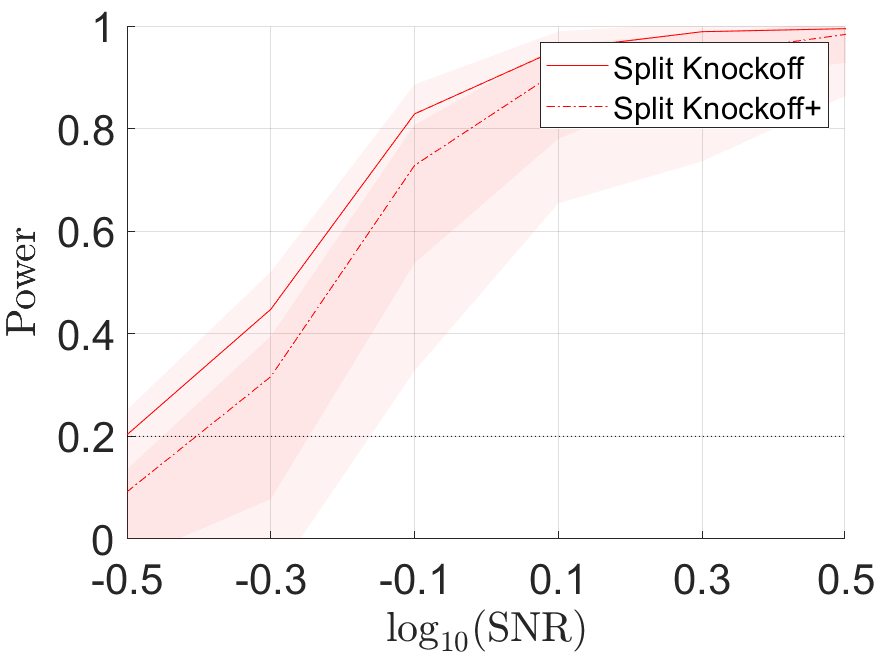}
\end{minipage}%
}%

\caption{Performance of Split Knockoffs and Knockoffs on various signal noise ratios: Power, $\dfdr$ on $q=0.2$. The curves in the figures represent the average performance in 200 simulation instances, while the shaded areas represent the 80\% confidence intervals truncated to the range $[0, 1]$.}
\label{fig: snr}
\end{figure}

As presented in Figure \ref{fig: snr}, Split Knockoffs achieves desired $\dfdr$ control for all signal noise ratios. For the selection power, when the linear transformation is trivial (e.g. $D_1$), Split Knockoffs drop more the selection power compared with Knockoffs when the signal noise ratio is low, as the minimal signal strength requirement in Theorem \ref{thm:pathconsistency} for Split LASSO may no longer be satisfied, while the improvement in selection power by the weaker $\nu$-incoherence conditions \eqref{incoherence} is not significant. 

However, when the linear transformation is non-trivial (e.g. $D_2$), Split Knockoffs exhibit higher selection power compared with Knockoffs when the signal noise ratio is reasonably large such that the selection power is not completely lost. In these cases, the improved $\nu$-incoherence conditions \eqref{incoherence} brought by the orthogonal design \eqref{eq: reformulated model on D_2} in variable splitting take advantage over the  minimal signal strength requirement in Theorem \ref{thm:pathconsistency}, and results in higher selection power for Split Knockoffs.

\subsubsection{Effects of the Sample Splitting Fraction}

In this section, we study the effects of the sample splitting fraction $\frac{n_1}{n_1+n_2}$ to the performance of Split Knockoffs. In this section, all the simulation settings are succeeded from Section \ref{sec: simu_exp}, except that the sample splitting fraction $\frac{n_1}{n_1+n_2}$ is tested between 0.1 to 0.8 with a step size 0.1.

\begin{figure}[!ht]
\centering
\subfigure[$\dfdr$ in $D_1$]{
\begin{minipage}[t]{0.33\textwidth}
\centering
\includegraphics[width=\textwidth]{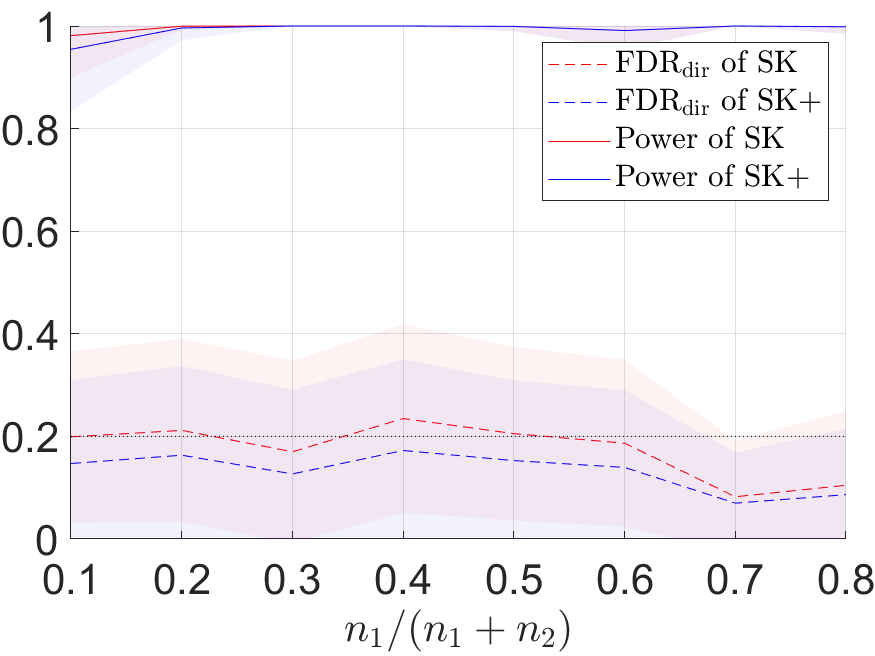}
\end{minipage}%
}%
\subfigure[$\dfdr$ in $D_2$]{
\begin{minipage}[t]{0.33\textwidth}
\centering
\includegraphics[width=\textwidth]{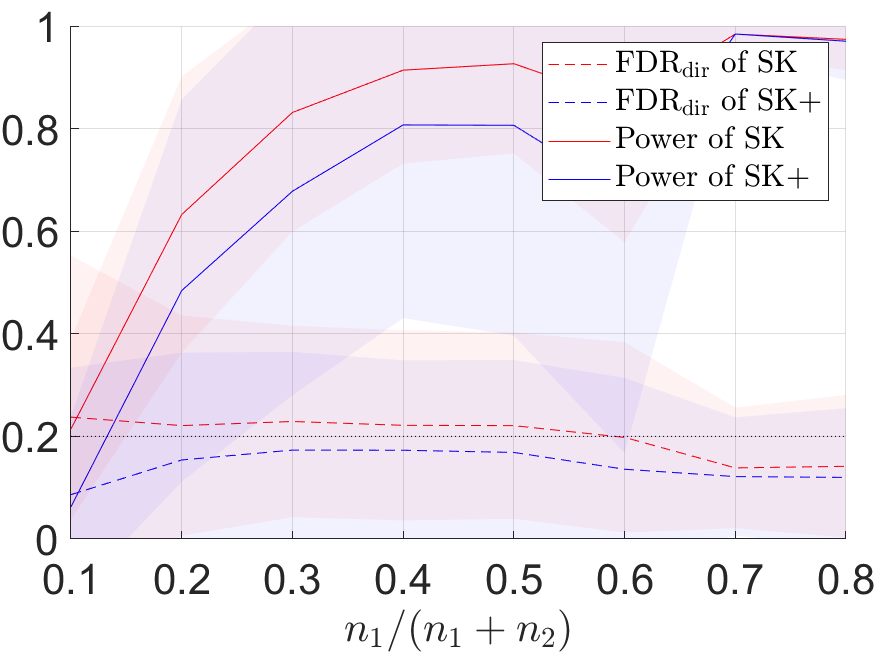}
\end{minipage}%
}%
\subfigure[$\dfdr$ in $D_3$]{
\begin{minipage}[t]{0.33\textwidth}
\centering
\includegraphics[width=\textwidth]{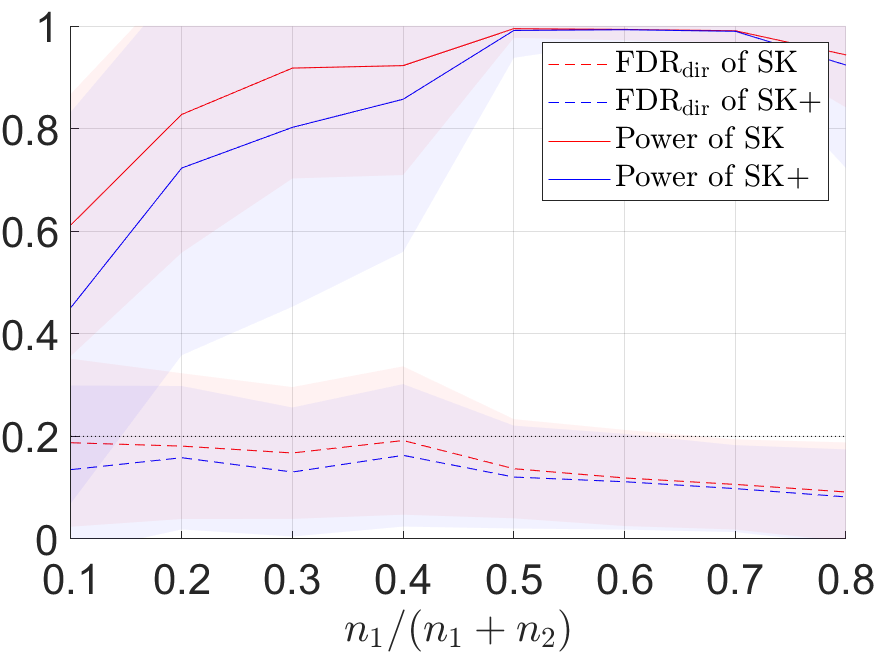}
\end{minipage}%
}%

\caption{Performance of Split Knockoffs on various sample split fractions: Power, $\dfdr$ on $q=0.2$. The curves in the figures represent the average performance in 200 simulation instances, while the shaded areas represent the 80\% confidence intervals truncated to the range $[0, 1]$.}
\label{fig: fractions}
\end{figure}

As presented in Figure \ref{fig: fractions}, in all cases, the Split Knockoffs achieve desired $\dfdr$ control. Meanwhile, for the cases where the linear transformation is nontrivial ($D_2$ or $D_3$), the selection power of Split Knockoffs exhibits an increasing trend when $\frac{n_1}{n_1+n_2}$ is low, and decreases a little bit when $\frac{n_1}{n_1+n_2}$ is very high. On the other hand, for the case that the linear transformation is trivial ($D_1$), where the recovery of \eqref{eq: structural sparsity model} becomes easier, the selection power of Split Knockoffs does not vary a lot when $\frac{n_1}{n_1+n_2}$ changes.

Therefore, in the cases where the recovery of \eqref{eq: structural sparsity model} is hard ($D$ is non-trivial, the sample size is limited, the signal noise ratio is low, etc.), it is favorable to take reasonably large $\frac{n_1}{n_1+n_2}$ to improve the selection power. Meanwhile, when the recovery of \eqref{eq: structural sparsity model} is easy, the sample splitting fraction $\frac{n_1}{n_1+n_2}$ does not make a big difference in the selection power.

\subsubsection{Effects of Random Sample Splits}

The procedure of Split Knockoffs involves a sample splitting scheme to ensure the $\dfdr$ control as discussed in Section \ref{sec: inflation}. Meanwhile, the random sample splits consequently lead to random selection sets $\widehat{S}$. In this section, we show by simulation experiments that the random selection sets include the nonnull features with much higher frequencies compared with the null features.
\begin{figure}[!ht]
\centering
\subfigure[Selection frequencies in $D_1$]{
\begin{minipage}[t]{0.3\textwidth}
\centering
\includegraphics[width=\textwidth]{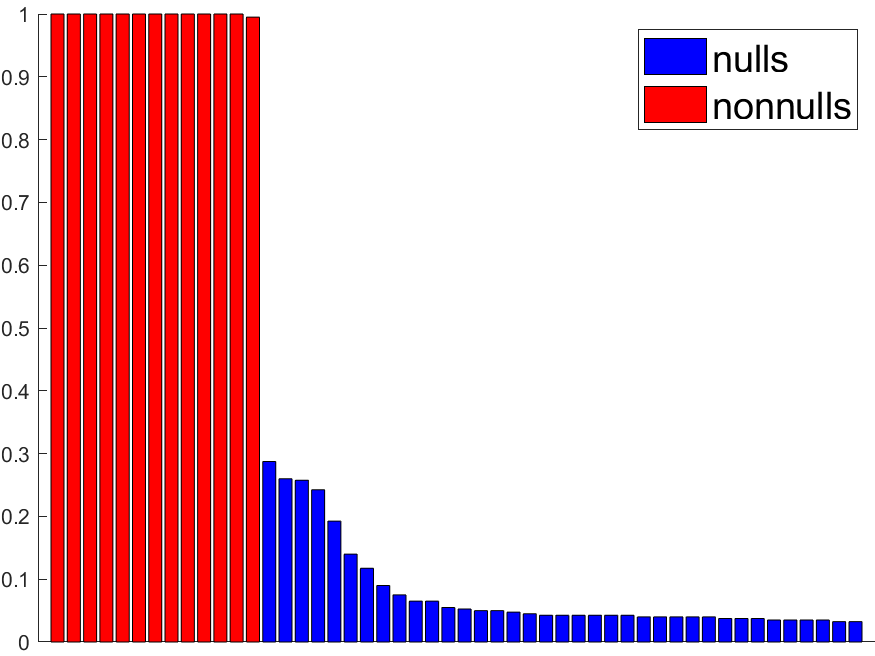}
\end{minipage}%
}%
\subfigure[Selection frequencies in $D_2$]{
\begin{minipage}[t]{0.3\textwidth}
\centering
\includegraphics[width=\textwidth]{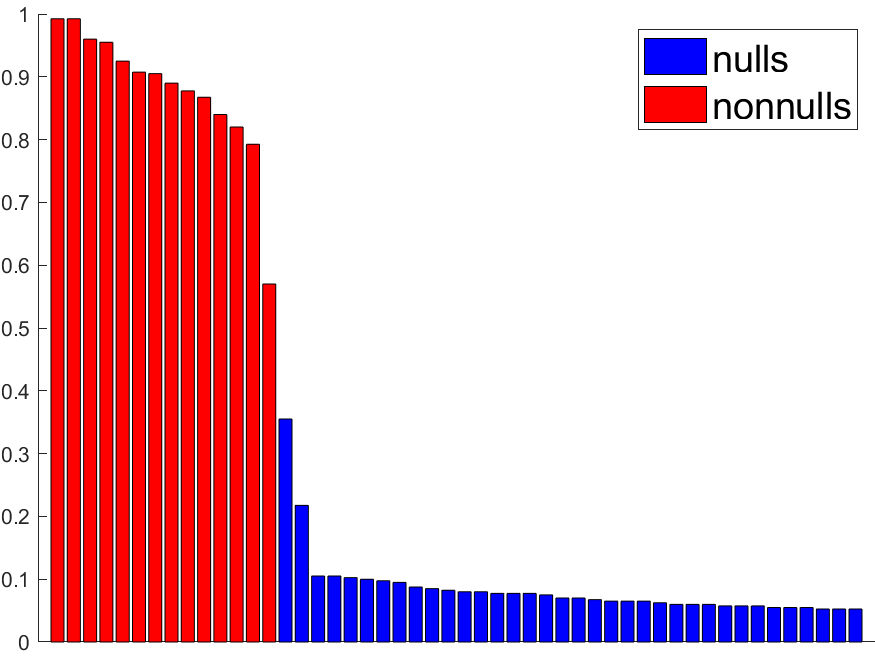}
\end{minipage}%
}%
\subfigure[Selection frequencies in $D_3$]{
\begin{minipage}[t]{0.3\textwidth}
\centering
\includegraphics[width=\textwidth]{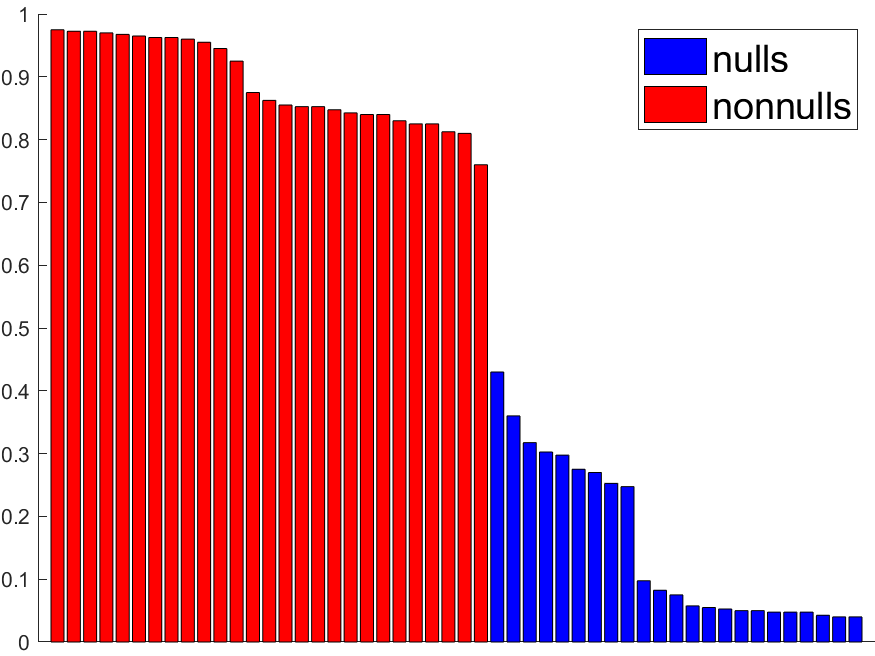}
\end{minipage}%
}%

\caption{Top 50 most frequently selected features of Split Knockoff  in simulation experiments among 20 different random sample splits ($q = 0.2$). Each bar in the figure represents the selection frequency of a particular feature.}
\label{fig: robust}
\end{figure}

In this section, all the simulation settings are succeeded from Section \ref{sec: simu_exp}. Moreover, Split Knockoffs are performed with respect to 20 different random sample splits, whose respective selection sets $\widehat{S}$ are all recorded. In Figure \ref{fig: robust}, we present the top 50 most frequently selected features by Split Knockoff among these random sample splits.

As presented in Figure \ref{fig: robust}, most of the nonnull features (in red) are selected with much higher  frequencies compared with the null features (in blue). This suggests that Split Knockoffs generate selection sets which robustly include the nonnull features.

\subsection{Supplementary Material for Alzheimer's Disease}

Figure \ref{fig: ad front} illustrates the lesion regions and high contrast connections selected by Split Knockoff. In the graph, each vertex represents a Cerebrum brain region in Automatic Anatomical Labeling (AAL) atlas \citep{tzourio2002automated}, with the abbreviation of each region marked in the vertex. In this section, we provide the comparison table between the full region names and their abbreviations in Table \ref{tab:name of region}. 

\begin{center}
    \begin{longtable}{cc}
    \caption{Names and Abbreviations for Cerebrum Brain Anatomical Regions}
    \label{tab:name of region}\\
    \toprule
    Region Name & Abbreviation\\
\midrule
Precental gyrus & PreCG\\
 Superior frontal gyrus,  dorsolateral  & SFGdor\\
 Superior frontal gyrus,  orbital part  & ORBsup\\
Middle frontal gyrus & MFG\\
 Middle frontal gyrus,  orbital part  & ORBmid\\
 Inferior frontal gyrus,  opercular part  & IFGoperc\\
 Inferior frontal gyrus,  triangular part  & IFGtriang\\
 Inferior frontal gyrus,  orbital part  & ORBinf\\
Rolandic operculum & ROL\\
Supplementary motor area & SMA\\
Olfactory cortex & OLF\\
 Superior frontal gyrus,  medial  & SFGmed\\
 Superior frontal gyrus,  medial orbital  & ORBsupmed\\
Gyrus rectus & REC\\
Insula & INS\\
Anterior cingulate and paracingulate gyri & ACG\\
Median cingulate and paracingulate gyri & MCG\\
Posterior cingulate gyrus & PCG\\
Hippocampus & HIP\\
Parahippocampal gyrus & PHG\\
Amygdala & AMYG\\
Calcarine fissure and surrounding cortex & CAL\\
Cuneus & CUN\\
Lingual gyrus & LING\\
Superior occipital gyrus & SOG\\
Middle occipital gyrus & MOG\\
Inferior occipital gyrus & IOG\\
Fusiform gyrus & FFG\\
Postcentral gyrus & PoCG\\
Superior parietal gyrus & SPG\\
 Inferior parietal,  but supramarginal and angular gyri  & IPL\\
Supramarginal gyrus & SMG\\
Angular gyrus & ANG\\
Precuneus & PCUN\\
Paracentral lobule & PCL\\
Caudate nucleus & CAU\\
Lenticular nucleus putamen & PUT\\
 Lenticular nucleus,  pallidum  & PAL\\
Thalamus & THA\\
Heschl gyrus & HES\\
Superior temporal gyrus & STG\\
Temporal pole: superior temporal gyrus & TPOsup\\
Middle temporal gyrus & MTG\\
Temporal pole: middle temporal gyrus & TPOmid\\
Inferior temporal gyrus & ITG\\
\bottomrule
    \end{longtable}
\end{center}

\subsection{Supplementary Material for Human Age Comparisons}

\label{sec: supp age}

In this section, we compare the performance of Split Knockoffs against classical pairwise comparison models, the Bradley-Terry model and the Thurstone-Mosteller model. In particular, we first show that in the human age comparisons problem, the linear model \eqref{eq: structural sparsity model}, where Split Knockoffs rely on, is consistent with the Bradley-Terry model and the Thurstone-Mosteller model. Then we further show that the pairwise comparisons made by Split Knockoffs are consistent with the Bradley-Terry model and the Thurstone-Mosteller model in the human age comparisons problem.

We first briefly introduce how to apply the linear model, the Bradley-Terry model and the Thurstone-Mosteller model in the human age comparisons problem.

For the linear model \eqref{eq: structural sparsity model}, let $X\in\R^{n\times p}$, $y\in\R^n$, $D\in\R^{m\times p}$ be defined in the same way as in Section \ref{sec: age}. Then the comparison for each face image  pair $(i, j)$ are made based on the signs of $\hat\beta_i-\hat\beta_j$, where $\hat\beta_i$ represents the estimated regression coefficient (namely the ``score'' in Figure \ref{fig: scores}) of the face image $i$. In this section, consider the simplest way to solve $\hat\beta\in\R^p$, i.e. solve $\hat\beta$ by minimizing the $\ell_2$ regression loss,
\begin{align*}
    \hat\beta := \argmin_{t\in\R^p} \|y-Xt\|_2^2.
\end{align*}

The Bradley-Terry model models the probability that the face image $i$ looks older than the face image $j$ for each pair of $(i, j)$ by
\begin{align*}
    \Prob[\mbox{$i$ looks older than $j$}] = \frac{e^{\beta_i}}{e^{\beta_i}+e^{\beta_j}},
\end{align*}
where $\beta_i$ is the score of the face image $i$, and $\beta\in\R^p$ is solved by maximizing the log-likelihood. In fact, maximizing the log-likelihood in the Bradley-Terry  model is equivalent with minimizing the logistic regression loss with respect to $X\in\R^{n\times p},\ y\in\R^n$ defined in Section \ref{sec: age}, since
\begin{align}
    \beta & = \argmax_{t\in\R^p} \sum_{k} \left\{z_{(i, j)}^k\ln\frac{e^{t_i}}{e^{t_i}+e^{t_j}}+(1-z_{(i, j)}^k)\ln\frac{e^{t_j}}{e^{t_i}+e^{t_j}}\right\},\nonumber\\
    &= \argmin_{t\in\R^p} \sum_k \left\{-y_k\ln\frac{1}{1+e^{-X_k\cdot t}}-(1-y_k)\ln\frac{e^{-X_k\cdot t}}{1+e^{-X_k\cdot t}}\right\},\label{eq: equivalence}
\end{align}
where $z_{(i, j)}^k = 1\{\mbox{the volunteer thinks $i$ looks older than $j$ in the $k$-th annotation}\}$, and $X_k\in\R^p$ represents the $k$-th row of $X$.

The Thurstone-Mosteller model models the probability that the face image $i$ looks older than the face image $j$ for each pair of $(i, j)$ by
\begin{align*}
    \Prob[\mbox{$i$ looks older than $j$}] = \Prob[A_i>A_j],
\end{align*}
where random variables $A_i$ for all $i$ are drafted independently from the Gaussian distribution $\mathcal{N}(\mu_i, \sigma^2)$, and $\mu_i$ stands for the score of the face image $i$. The scores $\{\mu_i\}_{i=1}^p$ are solved by maximizing the log-likelihood function.

\begin{figure}[!ht]
\centering
\includegraphics[width=0.5\textwidth]{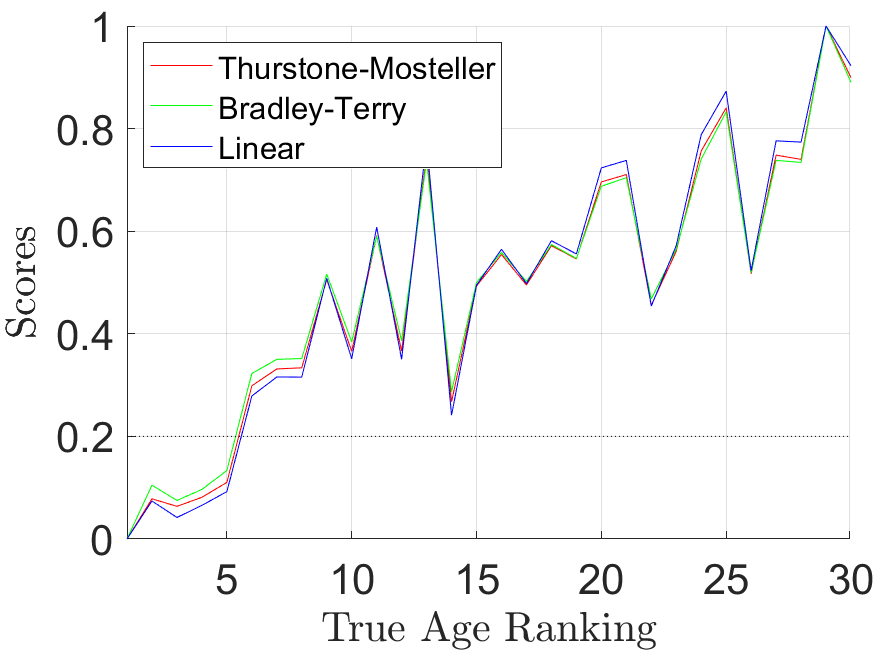}
\caption{Scores of face images given by the basic linear model, the Bradley-Terry model and the Thurstone-Mosteller model in the ascending order of true ages.}
\label{fig: scores}
\end{figure}

In Figure \ref{fig: scores}, we compare the scores of face images given by various models in the ascending order of true ages, where all scores are normalized to the interval $[0, 1]$. As presented in Figure \ref{fig: scores}, the scores given by various models, including the linear model, are largely consistent with each other. This provides evidence Split Knockoffs are applicable for this particular problem.

Moreover, Figure \ref{fig: heat map} compares the pairwise comparisons made by Split Knockoff (with the cross validation optimal choice of $\log_{10}\nu = 2.2$) against the ascending orders of face images given by the true ages as well as scores of the Bradley-Terry model and the Thurstone-Mosteller model.

\begin{figure}[!ht]
\centering
\subfigure[True ages]{
\begin{minipage}[t]{0.3\textwidth}
\centering
\includegraphics[width=\textwidth]{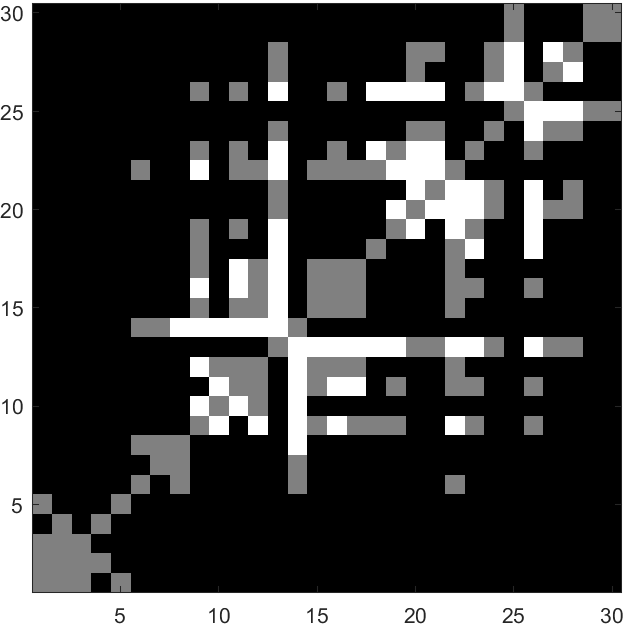}
\end{minipage}%
}%
\subfigure[Bradley-Terry]{
\begin{minipage}[t]{0.3\textwidth}
\centering
\includegraphics[width=\textwidth]{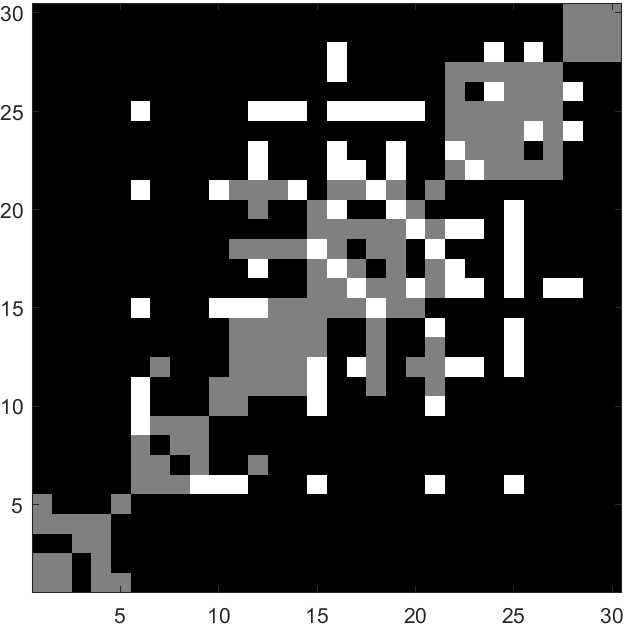}
\end{minipage}%
}%
\subfigure[Thurstone-Mosteller]{
\begin{minipage}[t]{0.3\textwidth}
\centering
\includegraphics[width=\textwidth]{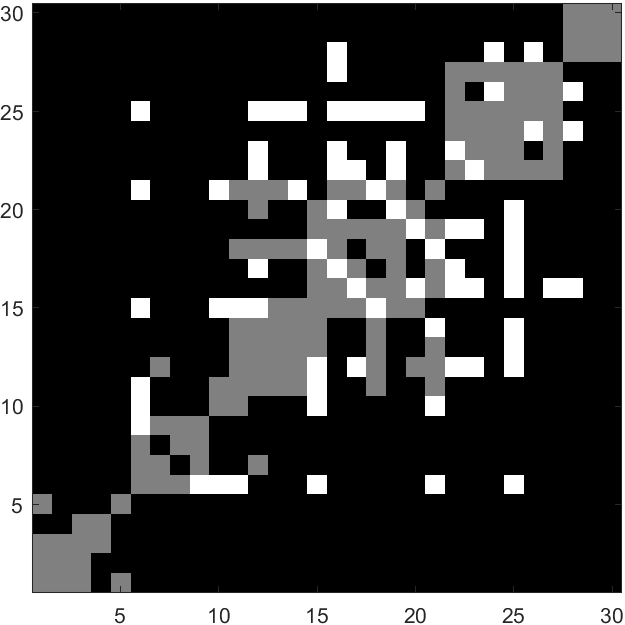}
\end{minipage}%
}%

\caption{Heat maps comparing the pairwise comparisons made by Split Knockoff ($\log_{10}\nu = 2.2$) against the ascending orders of face images given by the true ages as well as scores of the Bradley-Terry model and the Thurstone-Mosteller model. Each square in Figure \ref{fig: heat map} represents a comparison of two images labeled by the coordinates of x-axis and the y-axis. The squares in black, white, or grey represent the comparisons where Split Knockoff correctly estimates, wrongly estimates, or fail to estimate the age differences respectively.}
\label{fig: heat map}
\end{figure}

As presented in Figure \ref{fig: heat map}, for each order of face images, the majority of the false discoveries (white) and non-discoveries (grey) concentrates along the diagonal of the squares, while the top-left corner and bottom-right corner of the squares mainly consist of true discoveries (black). This shows that Split Knockoff works well in discovering strong signals where the age difference is large, at a cost of possible loss of weak signals where the age difference is small. Moreover, it demonstrate that Split Knockoff with linear model is largely consistent with other models such as the Bradley-Terry and the Thurstone-Mosteller models.

\section{Proof of Proposition \ref{prop: structure of tagamma}}

\label{sec: structure of tagamma}

\begin{proof}
    From $\tilde{A}^T_{\gamma_2}A_{\gamma_2} = A^T_{\gamma_2}A_{\gamma_2} - \diag(\vecs)$ in  Equation \eqref{eq: copy}, there holds
    \begin{align*}
        -\frac{\tilde{A}^T_{\gamma_2,2}}{\sqrt{\nu}}= \tilde{A}^T_{\gamma_2}A_{\gamma_2} & = A^T_{\gamma_2}A_{\gamma_2} - \diag(\vecs) = \frac{I_m}{\nu}-\diag(\vecs).
    \end{align*}
    Therefore $\tilde{A}_{\gamma_2,2}= -\frac{I_m}{\sqrt{\nu}}+\sqrt{\nu}\diag(\vecs)$. 
    
    Moreover, by $ \tilde{A}^T_{\gamma_2}A_{\beta_2}  = A^T_{\gamma_2}A_{\beta_2}$ in Equation \eqref{eq: copy}, there holds
    \begin{align*}
        \tilde{A}^T_{\gamma_2,1}\frac{X_2}{\sqrt{n_2}}+\tilde{A}^T_{\gamma_2,2}\frac{D}{\sqrt{\nu}}= \tilde{A}^T_{\gamma_2}A_{\beta_2} & = A^T_{\gamma_2}A_{\beta_2} = -\frac{D}{\nu}.
    \end{align*}
    Plug $\tilde{A}_{\gamma_2,2}= -\frac{I_m}{\sqrt{\nu}}+\sqrt{\nu}\diag(\vecs)$ into the above equation, there holds $\tilde{A}^T_{\gamma_2,1}{X_2} = -{\sqrt{n_2}}\diag(\vecs)D$. 
    
    Lastly, by $\tilde{A}_{\gamma_2}^T\tilde{A}_{\gamma_2} =  A_{\gamma_2}^TA_{\gamma_2}$ in Equation \eqref{eq: copy}, there holds
    \begin{align*}
    \tilde{A}^T_{\gamma_2,1}\tilde{A}_{\gamma_2,1}+\tilde{A}^T_{\gamma_2,2}\tilde{A}_{\gamma_2,2}&=\tilde{A}_{\gamma_2}^T\tilde{A}_{\gamma_2} =  A_{\gamma_2}^TA_{\gamma_2} = \frac{I_m}{\nu}.\nonumber
\end{align*}
Plug $\tilde{A}_{\gamma_2,2}= -\frac{I_m}{\sqrt{\nu}}+\sqrt{\nu}\diag(\vecs)$ into the above equation, there holds $
    \tilde{A}_{\gamma_2,1}^T\tilde{A}_{\gamma_2,1}=\diag(\vecs)(2I_m-\diag(\vecs)\nu)$.
\end{proof}

\section{Proof of Theorem \ref{theorem: directional fdr}}
\label{sec: proof thm}

\begin{proof}

In the proof, we first show that Theorem \ref{theorem: directional fdr} can be deducted from Equation \eqref{eq: target}, following the standard procedure of Knockoffs in \cite{barber2015controlling, barber2019knockoff}. Then we show how Lemma \ref{lemma: est of sign} will lead to Equation \eqref{eq: target} using a supermartingale inequality as in \cite{barber2019knockoff}.

Now, we will present the first point for $\mdfdr$ of Split Knockoff and $\dfdr$ of Split Knockoff+ separately.

1)$\mdfdr$ of Split Knockoff: The $\mdfdr$ can be written as:
\begin{align}
    &\ \ \ \ \Expect\left[\frac{\sum_{i}1\{W_i\ge T_q, \widehat{\sign}_i\neq\sign(\gamma_i^*)\}}{\sum_{i}1\{W_i\ge T_q\}+q^{-1}}\right]\nonumber\\
    &= \Expect\left[\frac{1+\sum_{i}1\{W_i\le -T_q\}}{\sum_{i}1\{W_i\ge T_q\}+q^{-1}}\cdot \frac{\sum_{i}1\{W_i\ge T_q, \widehat{\sign}_i\neq\sign(\gamma^*_i)\}}{1+\sum_{i}1\{W_i\le -T_q\}}\right]\label{eq: fdr decompose}.
\end{align}
From the definition of the Split Knockoff threshold, there holds
\begin{equation}
    \frac{\sum_{i}1\{W_i\le -T_q\}}{1\vee\sum_{i}1\{W_i\ge T_q\}}\le q<1,\nonumber
\end{equation}
which implies
\begin{equation}
    \sum_{i}1\{W_i\le -T_q\}\le q\sum_{i}1\{W_i\ge T_q\}.\nonumber
\end{equation}
Consequently, there holds
\begin{align}
    \frac{1+\sum_{i}1\{W_i\le -T_q\}}{\sum_{i}1\{W_i\ge T_q\}+q^{-1}}\le& \frac{1+q[\sum_{i}1\{W_i\ge T_q\}]}{\sum_{i}1\{W_i\ge T_q\}+q^{-1}}=q.\nonumber
\end{align}
Combined with Equation \eqref{eq: fdr decompose}, there holds
\begin{align}
    &\ \ \ \ \Expect\left[\frac{\sum_{i}1\{W_i\ge T_q, \widehat{\sign}_i\neq\sign(\gamma_i^*)\}}{\sum_{i}1\{W_i\ge T_q\}+q^{-1}}\right] \\
    &\le q\Expect\left[\frac{\sum_{i}1\{W_i\ge T_q, \widehat{\sign}_i\neq\sign(\gamma^*_i)\}}{1+\sum_{i}1\{W_i\le -T_q\}}\right],\nonumber\\
    &\le q\Expect\left[\frac{\sum_{i}1\{W_i\ge T_q, \widehat{\sign}_i\neq\sign(\gamma^*_i)\}}{1+\sum_{i}1\{W_i\le -T_q, \widehat{\sign}_i\neq\sign(\gamma^*_i)\}}\right].\label{eq: standard}
\end{align}

2)$\dfdr$ of Split Knockoff+: The $\dfdr$ satisfies
\begin{align}
    &\ \ \ \ \Expect\left[\frac{\sum_{i}1\{W_i\ge T_q, \widehat{\sign}_i\neq\sign(\gamma_i^*)\}}{1\vee\sum_{i}1\{W_i\ge T_q\}}\right]\nonumber\\
    &= \Expect\left[\frac{1+\sum_{i}1\{W_i\le -T_q\}}{1\vee\sum_{i}1\{W_i\ge T_q\}}\frac{\sum_{i}1\{W_i\ge T_q, \widehat{\sign}_i\neq\sign(\gamma^*_i)\}}{1+\sum_{i}1\{W_i\le -T_q\}}\right], \nonumber\\
    &\le q\Expect\left[\frac{\sum_{i}1\{W_i\ge T_q, \widehat{\sign}_i\neq\sign(\gamma^*_i)\}}{1+\sum_{i}1\{W_i\le -T_q, \widehat{\sign}_i\neq\sign(\gamma^*_i)\}}\right]. \label{eq: standard +}
\end{align}

Equation \eqref{eq: standard} and \eqref{eq: standard +} together show that Theorem \ref{theorem: directional fdr} can be deducted from Equation \eqref{eq: target}. 

Now we proceed to prove Equation \eqref{eq: target} using Lemma \ref{lemma: est of sign}. In particular, we will show that for any $\D_1$, Equation \eqref{eq: conditional target} holds. 
Equation \eqref{eq: conditional target} is sufficient for Equation \eqref{eq: target} in the sense that taking expectation over $\D_1$ on Equation \eqref{eq: conditional target} gives us the desired result.

For shorthand notations, conditional on $\D_1$, we rearrange the subscripts of $\{W_i\}_{i=1}^n$ as $\{W_{(i)}\}_{i=1}^n$, such that $\{(i)\}_{i=1}^{m^*} = \{\widehat{\sign}_i \neq \sign(\gamma^*_i)\}$, and $|W_{(1)}|\ge |W_{(2)}|\ge\cdots\ge |W_{(m^*)}|$, where $m^*:=|\{\widehat{\sign}_i \neq \sign(\gamma^*_i)\}|$. In other words, we rearrange the subscripts of $W$, such that the features whose signs are wrongly estimated appear first and are in a decreasing order of their absolute values. Further define $B_{(i)} = 1\{W_{(i)}<0\}$, then from Lemma \ref{lemma: est of sign} on the independence of $\sign(W_i)$ conditional on $\D_1$, $B_{(i)}$ are independent from each other conditional on $\D_1$. Moreover, there holds
\begin{align}
    \left.\frac{\sum_{i}1\{W_i\ge T_q, \widehat{\sign}_i\neq\sign(\gamma^*_i)\}}{1+\sum_{i}1\{W_i\le -T_q, \widehat{\sign}_i\neq\sign(\gamma^*_i)\}}\right|\D_1 & =  \left.\frac{1+|\{i:\widehat{\sign}_i\neq\sign(\gamma^*_i), |W_i|\ge T_q\}|}{1+\sum_{i:\widehat{\sign}_i\neq\sign(\gamma^*_i), |W_i|\ge T_q}1\{W_i<0\}}-1\right|\D_1,\nonumber\\
    & = \left.\frac{1+J}{1+B_{(1)}+B_{(2)}+\cdots+B_{(J)}}-1\right|\D_1,\label{eq: tran bound}
\end{align}
for $0\le J\le m^*$ satisfying
\begin{align*}
    |W_{(1)}|\ge|W_{(2)}|\ge\cdots\ge|W_{(J)}|\ge T_q>|W_{(J+1)}|\ge\cdots\ge|W_{(m^*)}|.
\end{align*}
In other words, $J = \argmax_{k\le m^*}\{|W_{(k)}|\ge T_q\}$ is the maximum subscript of $k$ satisfying $|W_{(k)}|\ge T_q$. With the above change of notations, Lemma 1 in \cite{barber2019knockoff} can be applied directly to Equation \eqref{eq: tran bound}.

\begin{lemma}[Lemma 1 in \cite{barber2019knockoff}]
    \label{lemma: barber} 
    Let $\{B_{(i)}\}_{i=1}^{m^*}$ be independent Bernoulli random variables with $\Prob[B_{(i)}=1]=\rho_i$ for each $i$. Let $\rho>0$ satisfy $\rho\le \min_{i}\{\rho_i\}$. Let $J$ be a stopping time in inverse time on the filtration $\{\mathcal{F}_j\}_{j=1}^{m^*}$  defined as
    \begin{align*}
        \mathcal{F}_j = \sigma\left(\left\{\sum_{i=1}^jB_{(j)}, B_{(j+1)}, \cdots, B_{(m^*)}\right\}\right).
    \end{align*}
    Then
    \begin{align*}
        \Expect\left[\frac{1+J}{1+B_{(1)}+B_{(2)}+\cdots+B_{(J)}}\right]\le \rho^{-1}.
    \end{align*}
\end{lemma}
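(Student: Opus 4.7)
The plan is a backward supermartingale argument on the decreasing filtration $\{\mathcal{F}_j\}_{j=1}^{m^*}$, combined with a uniform bound on the terminal value at $j=m^*$ obtained via an integral representation. Write $S_j:=B_{(1)}+\cdots+B_{(j)}$ and $M_j:=(1+j)/(1+S_j)$. Since $\mathcal{F}_{j-1}\supseteq\mathcal{F}_j$ (decreasing $j$ reveals the individual $B_{(j)}$ on top of $S_j$), the relevant supermartingale condition is $\Expect[M_{j-1}\mid\mathcal{F}_j]\le M_j$. Re-indexing by $k=m^*-j+1$ turns $(M_j)$ into a forward supermartingale on the increasing filtration $\mathcal{G}_k:=\mathcal{F}_{m^*-k+1}$ with the forward stopping time $K:=m^*-J+1$, so the optional stopping theorem will yield $\Expect[M_J]\le\Expect[M_{m^*}]$, reducing the lemma to (a) verifying the supermartingale inequality and (b) controlling the terminal expectation by $\rho^{-1}$.

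Step (b) is the clean part. Using $1/(1+k)=\int_0^1 t^k\,dt$ and independence of the Bernoullis,
\begin{align*}
\Expect[M_j]
&=(1+j)\int_0^1\Expect[t^{S_j}]\,dt
=(1+j)\int_0^1\prod_{i=1}^j\bigl(1-\rho_i(1-t)\bigr)\,dt\\
&\le(1+j)\int_0^1\bigl(1-\rho(1-t)\bigr)^j\,dt
=\frac{1-(1-\rho)^{j+1}}{\rho}
\le\frac{1}{\rho},
\end{align*}
where the second line uses $\rho_i\ge\rho$ termwise and the integral is evaluated by the substitution $u=1-\rho(1-t)$. This is valid for every fixed $j$, and in particular bounds the terminal value $\Expect[M_{m^*}]$ in the optional-stopping argument.

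Step (a) is the main obstacle. Expanding the conditional expectation with $p_j:=\Prob[B_{(j)}=1\mid\mathcal{F}_j]$, a direct calculation shows that $\Expect[M_{j-1}\mid\mathcal{F}_j]\le M_j$ is equivalent, on $\{S_j\ge 1\}$, to the pointwise inequality $p_j\le S_j/j$. In the homogeneous case $\rho_i\equiv\rho$ this is an equality by exchangeability of $B_{(1)},\ldots,B_{(j)}$ conditional on their sum. In the general heterogeneous case, however, $\Prob[B_{(j)}=1\mid S_j]$ depends on all $\{\rho_i\}_{i\le j}$, and the pointwise inequality can fail outright (a quick check with $j=2$, $\rho_1\ll\rho_2$ exhibits $p_j>S_j/j$ on $\{S_j=1\}$). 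To fix this, my plan is to either (i) couple $B_{(i)}=\mathbf{1}\{U_i\le\rho_i\}$ with i.i.d.\ uniform $U_i$, compare the process to its homogeneous counterpart driven by $\mathbf{1}\{U_i\le\rho\}$, and verify that $J$ is also a stopping time in the coupled reverse filtration so that stochastic domination transfers the bound; or (ii) modify $M_j$ by an explicit heterogeneity-weighted factor (a candidate being $\prod_{i\le j}$ of a ratio in $\rho_i$ and $\rho$ coming from the conditional Bayes computation of $p_j$) that restores the supermartingale structure algebraically. I expect essentially all of the technical work to live in one of these two routes; the optional stopping and the terminal bound from Step (b) are then routine.
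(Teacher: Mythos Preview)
Your overall plan is right, and it matches the argument that underlies this lemma (the paper itself only cites it from \cite{barber2019knockoff}, but the same machinery is spelled out in full in the proof of Lemma~\ref{lemma: main_lemma}). Step~(b) is correct as written; the integral trick and the termwise bound $\rho_i\ge\rho$ give exactly $\Expect[M_{m^*}]\le\rho^{-1}$. You also put your finger on the real issue in Step~(a): with heterogeneous $\rho_i$ the pointwise inequality $\Prob[B_{(j)}=1\mid S_j]\le S_j/j$ can fail, so $M_j$ is not a reverse supermartingale directly, and some reduction to the homogeneous case is needed.

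Route~(i) is indeed how the proof goes, but the specific coupling you propose (driving $B_{(i)}$ and its homogeneous counterpart by a common uniform $U_i$) is not the one used and would make the filtration step awkward, since then the homogeneous variable and $B_{(i)}$ are deterministically linked and it is unclear that the homogeneous process remains a supermartingale once you enlarge the filtration to carry $J$. The coupling actually used is cleaner: take i.i.d.\ $Q_i\sim\mathrm{Bernoulli}(\rho)$ and an \emph{independent} random set $\A$ with $\Prob[i\in\A]=(1-\rho_i)/(1-\rho)$, and set $C_i=Q_i\,1\{i\in\A\}+1\{i\notin\A\}$. Then $(C_i)$ has the same law as $(B_{(i)})$, so $J$ can be read as a stopping time in the $C$-filtration, which is contained in the enlarged filtration generated by $\A$ together with the partial sums of $\tilde Q_i:=Q_i\,1\{i\in\A\}$. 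The key algebraic step is
\[
\frac{1+J}{1+\sum_{i\le J}C_i}
=\frac{1+|\{i\le J:i\in\A\}|+|\{i\le J:i\notin\A\}|}{1+\sum_{i\le J,\,i\in\A}Q_i+|\{i\le J:i\notin\A\}|}
\le\frac{1+|\{i\le J:i\in\A\}|}{1+\sum_{i\le J,\,i\in\A}Q_i},
\]
using $(a+c)/(b+c)\le a/b$ for $a\ge b>0$, $c\ge 0$. Conditioning on $\A$, the $Q_i$ with $i\in\A$ are still i.i.d.\ $\mathrm{Bernoulli}(\rho)$, the exchangeability identity $\Prob[Q_{j_\A}=1\mid S,\A]=S/|\{i\le j:i\in\A\}|$ holds exactly, the homogeneous supermartingale argument goes through, and your Step~(b) bound (in the i.i.d.\ form) finishes. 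So your route~(i) is the right idea; just swap the uniform coupling for the independent-random-set coupling and the stopping-time concern you flagged dissolves.
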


\begin{remark}
    The requirement that $\{B_{(i)}\}_{i=1}^{m^*}$ are independent random variables in applying this lemma to Equation \eqref{eq: tran bound} is ensured by Lemma \ref{lemma: est of sign} that $\sign(W_i)$ are independent from each other conditional on $\D_1$ as mentioned above.
\end{remark}

Applying Lemma \ref{lemma: barber} and Lemma \ref{lemma: est of sign} to Equation \eqref{eq: tran bound}, there holds
\begin{align*}
    \Expect\left[\left.\frac{\sum_{i}1\{W_i\ge T_q, \widehat{\sign}_i\neq\sign(\gamma^*_i)\}}{1+\sum_{i}1\{W_i\le -T_q, \widehat{\sign}_i\neq\sign(\gamma^*_i)\}}\right|\D_1\right] & = \Expect\left[\left.\frac{1+J}{1+B_{(1)}+B_{(2)}+\cdots+B_{(J)}}-1\right|\D_1\right],\\
    &\le \min\left(2, f(\nu)^{-1}\right) - 1 = \min\left(1, f(\nu)^{-1}-1\right).
\end{align*}
Take 
\begin{align}
    \alpha(\nu) = f(\nu)^{-1}-1,\label{def: alpha(nu)}
\end{align}
where $f(\nu)$ first appears in Lemma \ref{lemma: est of sign} and we achieve the desired result.
\end{proof}

\section{Proof of Lemma \ref{lemma: distribution zeta}}
\label{sec: proof zeta}

\begin{proof}
With direct calculation, it can be shown that the KKT conditions that Equation \eqref{eq:stage1} and \eqref{eq:stage2} should satisfy is
\begin{subequations}
        \begin{align*}
        &\lambda\rho(\lambda) + \frac{\gamma(\lambda)}{\nu}= \frac{D\beta(\lambda)}{\nu},\\
        &\lambda\tilde{\rho}(\lambda) + \frac{\tilde{\gamma}(\lambda)}{\nu} = \frac{D\beta(\lambda)}{\nu} +\tilde{A}_{\gamma_2}^T\tilde{y}_2,
    \end{align*}
\end{subequations}
where $\rho(\lambda) \in \partial \|\gamma(\lambda)\|_1$, $\tilde{\rho}(\lambda) \in \partial \|\tilde{\gamma}(\lambda)\|_1$. In the following, we will focus on the distribution of $\zeta:=\tilde{A}_{\gamma_2}^T\tilde{y}_2\in\R^m$ and prove Equation \eqref{eq: zeta dis}.
 
By Proposition \ref{prop: structure of tagamma}, there holds
     $\tilde{A}^T_{\gamma_2,1}\frac{X_2}{\sqrt{n_2}} = \diag(\vecs)D$. Therefore, there further holds
    \begin{align*}
    \zeta=\tilde{A}_{\gamma_2}^T\tilde y_2=
        -\diag(\vecs)\gamma^*+\frac{\tilde{A}^T_{\gamma_2, 1}}{\sqrt{n_2}}\varepsilon_2.
    \end{align*}
    Therefore, the mean of $\zeta$ is $ -\diag(\vecs)\gamma^*$, and the covariance of $\zeta$ is $\frac{\sigma^2}{n_2}\tilde{A}^T_{\gamma_2,1}\tilde{A}_{\gamma_2,1}$. By Proposition \ref{prop: structure of tagamma}, there holds $\tilde{A}^T_{\gamma_2,1}\tilde{A}_{\gamma_2,1}=\diag(\vecs)(2I_m-\diag(\vecs)\nu)$. 
\end{proof}

\section{Proof of Lemma \ref{lemma: est of sign}}
\label{sec: proof lemma}

In this section, we will decouple Lemma \ref{lemma: est of sign} into the following two lemmas and prove the lemmas separately.

\begin{lemma}
    \label{lemma: est of sign easy}
    Conditional on $\D_1$, $\sign(W_i)$ are independent random variables. Furthermore, 
    for $i\in \{\widehat{\sign}_i \neq \sign(\gamma^*_i)\}$, there holds
    \begin{align*}
        \Prob[W_i<0] \ge \frac{1}{2}.
    \end{align*}
\end{lemma}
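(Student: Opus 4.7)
My plan is to prove both assertions by exploiting Lemma \ref{lemma: distribution zeta}: the KKT identities \eqref{eq: feature sig}--\eqref{eq: knockoff sig} and the fact that $\zeta$ has a diagonal Gaussian covariance. For the independence claim, I observe that, conditional on $\D_1$, the whole path $\beta(\lambda)$ is deterministic, so \eqref{eq: feature sig} determines $\gamma(\lambda)$ entirely from $\D_1$; in particular $Z_i$ and $r_i=\widehat{\sign}_i$ are $\D_1$-measurable. From \eqref{eq: knockoff sig}, each coordinate $\tilde{\gamma}_i(\lambda)$ depends only on $(D\beta(\lambda))_i$ and $\zeta_i$, so $\tilde{Z}_i$ is a deterministic function of $(\D_1,\zeta_i)$. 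Since Lemma \ref{lemma: distribution zeta} shows the $\zeta_i$'s to be independent Gaussians, the variables $\tilde{Z}_i$, and hence $\sign(W_i)=\sign(Z_i-\tilde Z_i)$, are mutually independent conditional on $\D_1$.

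For the lower bound, I would first translate the KKT conditions into a pointwise characterization: defining $u_\lambda:=(D\beta(\lambda))_i/\nu$, the subdifferential analysis gives $\gamma_i(\lambda)\neq 0 \Leftrightarrow |u_\lambda|>\lambda$ and $\tilde\gamma_i(\lambda)\neq 0 \Leftrightarrow |u_\lambda+\zeta_i|>\lambda$. By continuity of $\lambda\mapsto u_\lambda$ (inherited from continuity of the Split LASSO path), one gets $|u_{Z_i}|=Z_i$ and $u_{Z_i}=r_i Z_i$. Consequently the event $\{|r_i Z_i + \zeta_i|>Z_i\}$ is sufficient for $\tilde Z_i\ge Z_i$, i.e.\ for $W_i<0$. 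Using Lemma \ref{lemma: distribution zeta}, $\zeta_i\mid \D_1 \sim \mathcal{N}(-\vecs_i \gamma_i^*,\tau_i^2)$ for some $\tau_i^2>0$, so when $r_i=+1$ the sufficient event unfolds into $\{\zeta_i>0\}\cup\{\zeta_i<-2Z_i\}$, and when $r_i=-1$ it becomes $\{\zeta_i<0\}\cup\{\zeta_i>2Z_i\}$. The mismatch hypothesis $r_i\neq\sign(\gamma_i^*)$ forces the unbounded halfline ($\{\zeta_i>0\}$ or $\{\zeta_i<0\}$) to sit on the same side as the Gaussian mean $-\vecs_i\gamma_i^*$, so it carries probability $\ge 1/2$; the case $\gamma_i^*=0$ is handled by the symmetry of the centred Gaussian. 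Taking an outer expectation in $\D_1$ delivers the claimed bound.

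The main obstacle in this plan is the sufficient-condition step: justifying $\{|u_{Z_i}+\zeta_i|>Z_i\}\Rightarrow \{\tilde Z_i\ge Z_i\}$ cleanly requires the continuity of $u_\lambda$ in $\lambda$ and a careful treatment of the boundary case $|u_{Z_i}|=Z_i$, together with the observation that $|u_\lambda+\zeta_i|-\lambda$ is eventually negative for large $\lambda$ so that the supremum defining $\tilde Z_i$ is attained. Once this is in place, both the Gaussian tail computation and the conditional independence argument are routine consequences of Lemma \ref{lemma: distribution zeta}.
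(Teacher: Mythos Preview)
Your proposal is correct and follows essentially the same route as the paper. Both arguments establish, conditional on $\D_1$, the implication $\{r_i\zeta_i>0\}\subseteq\{W_i<0\}$ from the coordinate-wise KKT identities \eqref{eq: feature sig}--\eqref{eq: knockoff sig}, and then observe that under the sign-mismatch hypothesis the Gaussian mean $-\vecs_i\gamma_i^*$ lies on the favourable side, giving $\Prob[r_i\zeta_i>0]\ge 1/2$. The only cosmetic difference is that the paper proves the inclusion by contradiction with the subgradient bound $|\tilde\rho_i|\le 1$, whereas you invoke the equivalent soft-thresholding characterization $\tilde\gamma_i(\lambda)\neq 0\Leftrightarrow |u_\lambda+\zeta_i|>\lambda$ directly; your additional half-line $\{r_i\zeta_i<-2Z_i\}$ is not needed here (it is exactly what the paper uses later in Lemma~\ref{lemma: interval}). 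One small remark: the bound already holds conditionally on $\D_1$, so the final ``outer expectation in $\D_1$'' is superfluous.
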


\begin{lemma}
    \label{lemma: est of sign hard}
    Conditional on $\D_1$,
    for $i\in \{\widehat{\sign}_i \neq \sign(\gamma^*_i)\}$, there holds
    \begin{align*}
        \Prob[W_i<0] \ge  f(\nu),
    \end{align*}
    where $f(\nu)$ is an increasing function of $\nu$ defined in Equation \eqref{def: f(nu)} s.t. $\lim_{\nu\to\infty}f(\nu) = 1$.
\end{lemma}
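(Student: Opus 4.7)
The plan is to condition on $\D_1$ throughout and isolate the only remaining randomness in $W_i$. From Equation \eqref{eq: simp stage1}, $\beta(\lambda)$, $Z_i$, and $r_i=\widehat{\sign}_i$ are deterministic functions of $\D_1$, while by Lemma \ref{lemma: distribution zeta} the only randomness still affecting $W_i$ is the single Gaussian coordinate $\zeta_i\sim\mathcal{N}(-s_i\gamma^*_i,\sigma_\zeta^2)$ with $\sigma_\zeta^2=\frac{\sigma^2}{n_2}s_i(2-s_i\nu)$. The task thus reduces to lower-bounding the one-dimensional Gaussian probability $\Prob[\tilde Z_i>Z_i\mid\D_1]$ for every $i$ in the wrong-sign set.

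Writing $f_i(\lambda):=[D\beta(\lambda)]_i$, the KKT identities \eqref{eq: feature sig} and \eqref{eq: knockoff sig} make $\gamma_i(\lambda)$ and $\tilde\gamma_i(\lambda)$ univariate soft-thresholdings with threshold $\lambda\nu$, so $Z_i=\sup\{\lambda:|f_i(\lambda)|>\lambda\nu\}$ and $\tilde Z_i=\sup\{\lambda:|f_i(\lambda)+\nu\zeta_i|>\lambda\nu\}$, and by continuity of the Split LASSO path $f_i(Z_i)=r_iZ_i\nu$. A direct continuity argument then yields the sufficient event
\begin{align*}
\{\,|f_i(Z_i)+\nu\zeta_i|>Z_i\nu\,\}\;\subseteq\;\{\tilde Z_i>Z_i\},
\end{align*}
which after substituting $f_i(Z_i)=r_iZ_i\nu$ becomes $\{\zeta_i\notin I_i\}$, where $I_i$ is a closed interval of length $2Z_i$ with one endpoint at $0$ on the side opposite to $r_i$. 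Crucially, in the wrong-sign set the mean $-s_i\gamma^*_i$ of $\zeta_i$ lies on the same side of $0$ as $r_i$, hence opposite to $I_i$.

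To bound $\Prob[\zeta_i\in I_i\mid\D_1]$ I would invoke the trivial density estimate $\Prob[\zeta_i\in I_i\mid\D_1]\le 2Z_i/(\sigma_\zeta\sqrt{2\pi})$. Since $|f_i(Z_i)|=Z_i\nu$ and $\beta(\lambda)$ approaches $\hat\beta_{\mathrm{OLS}}$ as $\nu\to\infty$ (the splitting penalty becomes negligible), one has $Z_i\lesssim 1/\nu$; meanwhile for the canonical choice $s_i\asymp 1/\nu$ of Equation \eqref{eq: equi corre}, $\sigma_\zeta\asymp\nu^{-1/2}$. Hence $\Prob[\zeta_i\in I_i\mid\D_1]=O(\nu^{-1/2})$, and Equation \eqref{def: f(nu)} would simply record the resulting $1-O(\nu^{-1/2})$ as $f(\nu)$, which is manifestly increasing in $\nu$ with $\lim_{\nu\to\infty}f(\nu)=1$.

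The main obstacle is upgrading the $1/2$ lower bound of Lemma \ref{lemma: est of sign easy}, which exploits only that the Gaussian mean of $\zeta_i$ is on the correct side of $0$, to a sharper $1-o(1)$ statement. The improvement hinges on a scale mismatch: $Z_i$ shrinks like $1/\nu$ while $\sigma_\zeta$ only like $\nu^{-1/2}$, so $I_i$ becomes asymptotically negligible on the $\sigma_\zeta$-scale even in the $\gamma^*_i=0$ sub-case where the sign alignment carries no help. Carefully justifying the sharp bound $Z_i=O(1/\nu)$, rather than the crude $O(\nu^{-1/2})$ that a direct objective comparison would give, is the main technical work; the degenerate case $Z_i=0$ falls outside the wrong-sign set since $r_i\neq 0$ forces $Z_i>0$.
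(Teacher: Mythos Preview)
Your three-step outline—(i) the inclusion $\{W_i\ge 0\}\subseteq\{\zeta_i\in I_i\}$ with $|I_i|=2Z_i$, (ii) the bound $Z_i=O(1/\nu)$, (iii) the Gaussian density estimate $\Prob[\zeta_i\in I_i]\le 2Z_i/(\sigma_\zeta\sqrt{2\pi})=O(\nu^{-1/2})$—matches the paper's strategy exactly. The paper packages (i) as Lemma~\ref{lemma: interval}, with a slightly tighter interval of length $c(\nu)_i=\inf_{\lambda\ge Z_i}\bigl(r_i[D\beta(\lambda)]_i/\nu+\lambda\bigr)\le 2Z_i$, and (ii)--(iii) as Lemma~\ref{lemma: est c(nu)} together with a final density calculation.

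The only substantive difference is in how step (ii) is carried out. You propose the elementary route: from $Z_i\nu=|[D\beta(Z_i)]_i|$ and uniform boundedness of $\beta(\lambda)$ (indeed $\beta(\lambda)\to\hat\beta_{\mathrm{OLS}}$ as $\nu\to\infty$) one gets $Z_i\le M_i/\nu$ for a signal-scale constant $M_i$ depending on $\D_1$, valid for every $i$. The paper instead uses an incoherence argument tailored to wrong-sign indices: it first proves a path-consistency statement, namely that for $\nu>4C_XC_D^2$ and $\lambda>\tfrac{4}{\nu}\kappa C_XC_D\sigma/\sqrt{n_1}$ (on the event $\|X_1^T\varepsilon_1/n_1\|_\infty<\kappa\sigma/\sqrt{n_1}$), any nonzero $\gamma_i(\lambda)$ already carries the correct sign. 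This forces $Z_i\le \tfrac{4}{\nu}\kappa C_XC_D\sigma/\sqrt{n_1}$ for wrong-sign $i$, a noise-scale bound. Your argument is shorter and does not exploit the wrong-sign condition; the paper's yields an $f(\nu)$ (Equation~\eqref{def: f(nu)}) that depends on $\D_1$ only through the design constants $C_X,C_D$ and the indicator of a Gaussian tail event on $\xi$, rather than through the random signal-dependent quantity $\|D\hat\beta_{\mathrm{OLS}}\|_\infty$. That cleaner dependence is what the paper's subsequent expectation over $\D_1$ in Theorem~\ref{theorem: directional fdr} leans on, so if you pursue your simpler bound you should check that the resulting $\alpha(\nu)$ can still be made deterministic at that stage.
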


The above two lemmas together clearly gives Lemma \ref{lemma: est of sign}.

\subsection{Proof of Lemma \ref{lemma: est of sign easy}}

\begin{proof}

    We will first show that conditional on $\D_1$, $\sign(W_i)$ are independent random variables. By Lemma \ref{lemma: distribution zeta}, $\zeta$ consists of independent random variables. Since $\beta(\lambda)$ is determined by $\D_1$, conditional on $\D_1$, $\sign(W_i) = \sign(Z_i-\tilde{Z}_i)$ is determined by $\zeta_i$ for all $i$, as shown in Equation \eqref{eq: kkts}. Therefore, conditional on $\D_1$, $\sign(W_i)$ are independent random variables.

    We will then show that $\Prob[W_i<0] \ge \frac{1}{2}$ for $i\in \{\widehat{\sign}_i \neq \sign(\gamma^*_i)\}$ conditional on $\D_1$. The following statements will all be conditional on $\D_1$, and the term ``conditional on $\D_1$'' will be omitted for simplicity. Our main focus below is showing that $\{W_i\ge 0\}\subseteq\{r_i\zeta_i\le 0\}$ for $i\in\{\widehat{\sign}_i \neq \sign(\gamma^*_i)\}$ where $r = \widehat{\sign}$ is defined in Equation \eqref{def: r}.

    We first present why the above statement is sufficient for Lemma \ref{lemma: est of sign easy}. With the above statement, there holds for $i\in\{\widehat{\sign}_i \neq \sign(\gamma^*_i)\}$ that
    \begin{align*}
        \Prob[W_i\ge 0]\le \Prob[r_i\zeta_i\le 0] = \Prob[r_i\zeta_i+\vecs_ir_i\gamma_i^*\le\vecs_ir_i\gamma_i^*],
    \end{align*}
    where by Lemma \ref{lemma: distribution zeta}, $r_i\zeta_i+\vecs_ir_i\gamma_i^*$ follows the normal distribution with mean zero. Moreover, for $i\in\{\widehat{\sign}_i \neq \sign(\gamma^*_i)\}$, there holds $\vecs_ir_i \gamma^*_i\le 0$. Therefore $\Prob[W_i\ge 0]\le \frac{1}{2}$ and $\Prob[W_i<0]\ge \frac{1}{2}$.

    Now we start to prove the statement $\{W_i\ge 0\}\subseteq\{r_i\zeta_i\le 0\}$  by contradiction. Suppose there holds $W_i\ge 0$ and $r_i\zeta_i> 0$, then by definition of $W$ in Equation \eqref{eq: def_w}, $Z_i\ge \tilde{Z}_i$. Therefore, by definitions of $Z_i$, $\tilde{Z}_i$ and continuity of $\beta(\lambda)$, there holds $\gamma_i(Z_i) = \tilde{\gamma}_i(Z_i) = 0$ and $r_i\rho_i(Z_i) = 1$. From $r_i\times$\eqref{eq: feature sig} $-$ $r_i\times$\eqref{eq: knockoff sig}, there further holds
    \begin{align*}
        Z_i r_i\tilde{\rho}_i(Z_i) = Z_i r_i\rho_i(Z_i) + r_i\zeta_i > Z_i r_i\rho_i(Z_i) = Z_i,
    \end{align*}
    which leads to contradiction, as $|r_i\tilde{\rho}_i(Z_i)|$ should not be larger than 1.
\end{proof}

\subsection{Proof of Lemma \ref{lemma: est of sign hard}}

In this section, we will prove Lemma \ref{lemma: est of sign hard}. As a reminder, in Section \ref{sec: analysis}, we show that $\sign(W)_i$ is determined by $\zeta_i$ conditional on $\D_1$. In this section, we will first present how $\zeta_i$ determines $\sign(W_i)$ conditional on $\D_1$ precisely, that $\sign(W_i)$ is determined by whether $\zeta_i$ falls in a particular interval. Then we give an estimation on the boundary of such a interval to prove Lemma \ref{lemma: est of sign hard}.

To be specific, we first present the following lemma on how $\zeta_i$ determines $\sign(W_i)$ conditional on $\D_1$.

\begin{lemma}
    \label{lemma: interval}
    Conditional on $\D_1$, there holds
    \begin{align*}
        \{W_i\ge 0\}\subseteq \{-c(\nu)_i\le r_i \zeta_i\le 0\}, \{W_i\le 0\}\subseteq \{r_i \zeta_i\ge 0\}\cup\{r_i \zeta_i\le -c(\nu)_i\}, 
    \end{align*}
    where $c(\nu)_i = \inf_{\lambda\ge Z_i}(r_i[D\beta(\lambda)]_i\nu^{-1}+\lambda)$.
\end{lemma}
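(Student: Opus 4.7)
My approach is to extract information from the KKT conditions \eqref{eq: kkts} evaluated at several carefully chosen values of $\lambda$. The pivotal identity is that at $\lambda = Z_i$, continuity forces $\gamma_i(Z_i) = 0$ while the subgradient $\rho_i(Z_i)$ must equal $r_i$, so Equation \eqref{eq: feature sig} gives $r_i[D\beta(Z_i)]_i/\nu = Z_i$. More generally, for every $\lambda \ge Z_i$ we have $\gamma_i(\lambda) = 0$, hence $\lambda \rho_i(\lambda) = [D\beta(\lambda)]_i/\nu$ with $|\rho_i(\lambda)|\le 1$; this forces $r_i[D\beta(\lambda)]_i/\nu \le \lambda$, which already implies $c(\nu)_i \ge 0$ and will serve as the workhorse bound throughout.

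For the first inclusion, I would restrict to $\{W_i \ge 0\}$, i.e.\ $Z_i \ge \tilde Z_i$, so that for every $\lambda \ge Z_i$ both $\gamma_i(\lambda)$ and $\tilde\gamma_i(\lambda)$ vanish. Subtracting Equation \eqref{eq: feature sig} from Equation \eqref{eq: knockoff sig} and multiplying by $r_i$ yields the clean identity $r_i\zeta_i = \lambda\, r_i\tilde\rho_i(\lambda) - r_i[D\beta(\lambda)]_i/\nu$, valid for every such $\lambda$. Bounding $r_i\tilde\rho_i(\lambda) \in [-1,1]$ and specializing to $\lambda = Z_i$ gives the upper bound $r_i\zeta_i \le 0$ (reproducing the qualitative argument already used in the proof of Lemma \ref{lemma: est of sign easy}); taking the supremum of the matching lower bound $-\lambda - r_i[D\beta(\lambda)]_i/\nu$ over $\lambda \ge Z_i$ produces the quantitative lower bound $r_i\zeta_i \ge -c(\nu)_i$.

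For the second inclusion, the natural pivot is $\lambda = \tilde Z_i \ge Z_i$. At that point $\tilde\gamma_i(\tilde Z_i) = 0$ and its subgradient must equal some sign $\tilde r_i \in \{-1,+1\}$ determined by the direction in which $\tilde\gamma_i$ enters the path; Equation \eqref{eq: knockoff sig} then gives $\tilde Z_i \tilde r_i = [D\beta(\tilde Z_i)]_i/\nu + \zeta_i$. I would split on the two possible values of $r_i\tilde r_i$: if $r_i\tilde r_i = 1$, multiplying the identity by $r_i$ and invoking the workhorse bound $r_i[D\beta(\tilde Z_i)]_i/\nu \le \tilde Z_i$ places the event in $\{r_i\zeta_i \ge 0\}$; if $r_i\tilde r_i = -1$, the same multiplication produces $r_i\zeta_i = -(\tilde Z_i + r_i[D\beta(\tilde Z_i)]_i/\nu)$, which lies in $\{r_i\zeta_i \le -c(\nu)_i\}$ because $\tilde Z_i$ is an admissible point in the infimum defining $c(\nu)_i$.

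The main obstacle I anticipate is precisely the bookkeeping in this last step: the event $W_i \le 0$ yields only that $\tilde Z_i \ge Z_i$ and gives no a priori control on the relative sign $r_i\tilde r_i$, so the exposition must simultaneously handle both cases and match each to the correct component of the union $\{r_i\zeta_i \ge 0\}\cup\{r_i\zeta_i \le -c(\nu)_i\}$. The degenerate boundary cases $Z_i = \tilde Z_i$ (where $W_i = 0$ belongs to both $\{W_i \ge 0\}$ and $\{W_i \le 0\}$) and $Z_i = 0$ (where $c(\nu)_i$ may collapse to zero) should be verified for consistency, but in each situation the very same KKT identities yield the correct inclusions with no extra work.
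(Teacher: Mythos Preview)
Your proposal is correct and follows essentially the same route as the paper's proof: both arguments rest on reading the KKT identities \eqref{eq: kkts} at $\lambda=Z_i$, at $\lambda=\tilde Z_i$, and at the minimizer defining $c(\nu)_i$, and for the second inclusion both perform exactly the same two-case split on whether $\tilde r_i=r_i$ or $\tilde r_i=-r_i$. The only cosmetic difference is packaging: you argue the first inclusion directly via the sandwich $r_i\zeta_i=\lambda\,r_i\tilde\rho_i(\lambda)-r_i[D\beta(\lambda)]_i/\nu$ for $\lambda\ge Z_i$ and optimize over $\lambda$, whereas the paper phrases the same bounds as three disjointness claims proved by contradiction (showing $\tilde\gamma_i$ is forced nonzero at $Z_i$ or at the argmin, hence $\tilde Z_i>Z_i$).
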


Then we give the following partial upper bound on $c(\nu)$. 
\begin{lemma}
    \label{lemma: est c(nu)}
    Conditional on $\D_1$ such that $\xi:= \frac{X_1^T\varepsilon_1}{n_1}$ satisfies 
    $\|\xi\|_\infty<\kappa\sigma\frac{1}{\sqrt{n_1}}$ for a manually controlled parameter $\kappa>0$ to be determined later. Further suppose that $\nu>4C_XC_D^2$,
    where 
    \begin{align*}
        C_X:=\left\|\left(\frac{X_1^T X_1}{n_1}\right)^{-1}\right\|_\infty,\ C_D := \max\{\|D\|_\infty, \|D^T\|_\infty\}.
    \end{align*}
   Then for any $i\in \{\widehat{\sign}_i \neq \sign(\gamma^*_i)\}$, there holds
   \begin{align*}
        c(\nu)_i\le \frac{8}{\nu}\kappa C_XC_D\sigma\frac{1}{\sqrt{n_1}}.
   \end{align*}
\end{lemma}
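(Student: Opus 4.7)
The plan is to show $c(\nu)_i\le 2Z_i$ and then control $Z_i$ via a perturbation analysis of the Split LASSO stationarity condition; both steps rely on the hypothesis $\nu>4C_XC_D^2$.

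First I would observe that the inner minimization in $\gamma$ makes $\gamma(\lambda)$ the componentwise soft thresholding of $D\beta(\lambda)$, so $\gamma_i(\lambda)=\sign([D\beta(\lambda)]_i)\max(|[D\beta(\lambda)]_i|-\lambda\nu,0)$. Combined with the continuity of $\beta(\cdot)$ and the definition of $Z_i$ as the supremum of $\{\lambda:\gamma_i(\lambda)\ne 0\}$, this forces $|[D\beta(Z_i)]_i|=Z_i\nu$ and $\sign([D\beta(Z_i)]_i)=r_i$. Evaluating the infimand at $\lambda=Z_i$ then gives $r_i[D\beta(Z_i)]_i/\nu+Z_i=2Z_i$, so $c(\nu)_i\le 2Z_i$. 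Moreover, for any $i$ with $\widehat{\sign}_i\ne\sign(\gamma_i^*)$, either $\gamma_i^*=0$ or $r_i$ and $\sign(\gamma_i^*)$ differ, and in either case $|[D\beta(Z_i)]_i-\gamma_i^*|=Z_i\nu+|\gamma_i^*|\ge Z_i\nu$. Hence it suffices to upper-bound $U(Z_i):=\|D(\beta(Z_i)-\beta^*)\|_\infty$.

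Writing $\Sigma:=X_1^TX_1/n_1$ and $A_\nu:=\Sigma+D^TD/\nu$, I would subtract the $\beta$-stationarity KKT identity applied to $\beta^*$ from the one applied to $\beta(\lambda)$ to obtain
\begin{equation*}
\beta(\lambda)-\beta^*=A_\nu^{-1}\xi+A_\nu^{-1}D^T(\gamma(\lambda)-\gamma^*)/\nu.
\end{equation*}
Premultiplying by $D$, passing to $\ell_\infty$ norms, and using the soft-thresholding slack $\|\gamma(\lambda)-\gamma^*\|_\infty\le \lambda\nu+\|D\beta(\lambda)-\gamma^*\|_\infty$ (each coordinate of ST is moved by at most $\lambda\nu$) yields a self-referential inequality in $U(\lambda)$. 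To control $\|A_\nu^{-1}\|_\infty$ I would expand $A_\nu^{-1}=(I+\Sigma^{-1}D^TD/\nu)^{-1}\Sigma^{-1}$ as a Neumann series, whose convergence is guaranteed by $\|\Sigma^{-1}D^TD/\nu\|_\infty\le C_XC_D^2/\nu<1/4$ under the assumed $\nu>4C_XC_D^2$. This gives $\|A_\nu^{-1}\|_\infty\le (4/3)C_X$, hence $\|DA_\nu^{-1}\|_\infty\le(4/3)C_DC_X$ and $\|DA_\nu^{-1}D^T\|_\infty/\nu\le 1/3$, collapsing the self-referential inequality into $U(\lambda)\le 2C_DC_X\kappa\sigma/\sqrt{n_1}+2C_D^2C_X\lambda$.

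Setting $\lambda=Z_i$ and combining with the lower bound $Z_i\nu\le U(Z_i)$ gives $Z_i(\nu-2C_D^2C_X)\le 2C_DC_X\kappa\sigma/\sqrt{n_1}$; the hypothesis $\nu>4C_XC_D^2$ then guarantees $\nu-2C_D^2C_X>\nu/2$ and so $Z_i\le 4C_DC_X\kappa\sigma/(\nu\sqrt{n_1})$, whence $c(\nu)_i\le 2Z_i\le 8C_DC_X\kappa\sigma/(\nu\sqrt{n_1})$. The main obstacle I anticipate is the circularity of the perturbation identity, where $\gamma(\lambda)-\gamma^*$ both drives the bound on $D(\beta(\lambda)-\beta^*)$ and feeds back into it through the KKT relation; the resolution is the combination of the $1/\nu$ damping in the coupling term $DA_\nu^{-1}D^T/\nu$ (provided by $\nu>4C_XC_D^2$) with the additive $\lambda\nu$ slack from soft thresholding, which together force a contraction factor bounded strictly away from $1$ and make the substitution $\lambda=Z_i$ close the estimate cleanly. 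Everything else is $\ell_\infty$ operator-norm bookkeeping.
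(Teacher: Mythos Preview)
Your proof is correct, and while it shares the overall skeleton with the paper's argument (bound $Z_i$ first, then deduce $c(\nu)_i\le 2Z_i$), it executes both halves by a genuinely different and more direct route.

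For the step $c(\nu)_i\le 2Z_i$: you simply evaluate the infimand at $\lambda=Z_i$, using the soft-thresholding characterization of $\gamma(\lambda)$ to get $r_i[D\beta(Z_i)]_i/\nu=Z_i$. The paper instead reaches the same conclusion through an argument in terms of $\zeta_i$ and the sign of $W_i$ at $\lambda=Z_i$, which is more circuitous (and somewhat opaque, since $c(\nu)_i$ is defined purely on $\D_1$).

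For the bound on $Z_i$: the paper proves a pointwise sign-consistency statement (\,``for $\lambda$ above the threshold, any nonzero $\gamma_i(\lambda)$ must have the correct sign'') by a primal-dual-witness style argument: it splits off the $i$-th coordinate of $H_\nu=I_m-\nu^{-1}DA_\nu^{-1}D^T$, solves for $\gamma_{-i}(\lambda)-\gamma^*_{-i}$, substitutes into the $i$-th KKT row, and derives a contradiction from the Schur-complement positivity. Your argument bypasses this coordinate-wise machinery entirely: you write down the KKT identity $\beta(\lambda)-\beta^*=A_\nu^{-1}\xi+\nu^{-1}A_\nu^{-1}D^T(\gamma(\lambda)-\gamma^*)$, feed in the soft-thresholding slack $\|\gamma(\lambda)-\gamma^*\|_\infty\le\lambda\nu+U(\lambda)$, and close the loop as a contraction since $\nu^{-1}\|DA_\nu^{-1}D^T\|_\infty<1/3$. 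Both proofs rest on the identical Neumann-series estimate $\|A_\nu^{-1}\|_\infty\le(4/3)C_X$ (the paper's (\ref{eq: est3})).

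What each buys: the paper's route connects to classical irrepresentable/PDW theory and yields the slightly stronger intermediate statement that \emph{all} wrong-sign entries are ruled out above the threshold. Your route is shorter, avoids the block decomposition and Schur complement, and makes the self-bounding structure explicit. Both land on the same constant $8\kappa C_XC_D\sigma/(\nu\sqrt{n_1})$.
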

\begin{remark}
    The decreasing upper bound with the increase of $\nu$ in this lemma can be understood as a result of the improving $\nu$-incoherence conditions \eqref{incoherence} of Split LASSO.
\end{remark}

With Lemma \ref{lemma: est c(nu)}, we can finally prove Lemma \ref{lemma: est of sign hard}, with $f(\nu)$ defined as
\begin{align}
f(\nu) = \left\{
    \begin{array}{ccl}
        \max\left(0, 1-\frac{1}{\sqrt{\nu}}\frac{\sqrt{n_2}}{\sqrt{n_1}}\frac{8\kappa C_XC_D}{\sqrt{2\pi \vecs_i\nu(2-\vecs_i\nu)}}\right)   &   & \mbox{if }\|\xi\|_\infty<\kappa\sigma\frac{1}{\sqrt{n_1}},\ \ \nu>4C_XC_D^2,\\
        0   &   & \mbox{otherwise},\label{def: f(nu)}
    \end{array} \right.
\end{align}
where the respective notations are defined in Lemma \ref{lemma: est c(nu)}. In Equation \eqref{def: f(nu)}, take $\kappa\to\infty$ and $\frac{\kappa}{\sqrt{\nu}}\to 0$ (which means $\nu\to\infty$), there holds $f(\nu)\to 1$.

\subsubsection{Proof of Lemma \ref{lemma: interval}}
\begin{proof}
    We first show that the infimum in the definition of $c(\nu)_i$ can be achieved for some $\lambda\ge Z_i$. This is due to the fact that $r_i[D\beta(\lambda)]_i\nu^{-1}+\lambda$ is continuous with respect to $\lambda$, and the fact that the infimum is not reached when $\lambda\to \infty$, since $r_i [D\beta(\lambda)]_i\nu^{-1}+\lambda\to\infty$ for any $\nu$ as $\lambda\to \infty$. Then by the extreme value theorem, the infimum can be reached for some $\lambda\ge Z_i$. Therefore, it is proper to write $c(\nu)_i = \min_{\lambda\ge Z_i}(r_i[D\beta(\lambda)]_i\nu^{-1}+\lambda)$.
    
    Then we will prove the lemma by showing that $\{r_i\zeta_i>0\}\cap  \{W_i\ge 0\} = \emptyset$, $\{r_i\zeta_i<-c(\nu)_i\}\cap  \{W_i\ge 0\} = \emptyset$ and $\{-c(\nu)_i<r_i\zeta_i<0\}\cap  \{W_i\le 0\} = \emptyset$.
    
    1) $\{r_i\zeta_i>0\}\cap  \{W_i\ge 0\} = \emptyset$: Suppose $r_i\zeta_i>0$. Then for $\lambda=Z_i$, by definition, there holds $r_i\rho_i(\lambda)=1$ and $r_i\gamma(\lambda)_i= 0$. From \eqref{eq: feature sig} - \eqref{eq: knockoff sig} and the fact that $|r_i\tilde{\rho}_i(\lambda)|\le 1$, there further holds
    \begin{align}
        0 \le \lambda r_i\rho_i(\lambda)-\lambda r_i\tilde{\rho}_i(\lambda) = & r_i\frac{\tilde{\gamma}_i(\lambda)}{\nu} -  r_i\frac{\gamma_i(\lambda)}{\nu} - r_i\zeta_i,\nonumber\\
        = & r_i\frac{\tilde{\gamma}_i(\lambda)}{\nu} - r_i\zeta_i.\nonumber
    \end{align}
    Therefore $r_i\tilde{\gamma}_i(\lambda)\ge r_i\zeta_i\nu>0$. By the continuity of $\tilde{\gamma}_i(\lambda)$, there further holds $\tilde{Z}_i>Z_i$ and therefore $W_i<0$.
    
    2) $\{r_i\zeta_i<-c(\nu)_i\}\cap  \{W_i\ge 0\} = \emptyset$: Suppose $r_i\zeta_i<-c(\nu)_i$. Take $t=\argmin_{\lambda\ge Z_i}(r_i[D\beta(\lambda)]_i\nu^{-1}+\lambda)$, then from Equation \eqref{eq: knockoff sig} there holds
    \begin{align*}
        r_it\tilde{\rho}_i(t)+r_i\frac{\tilde{\gamma}_i(t)}{\nu} = & r_i[D\beta(t)]_i\nu^{-1}+r_i\zeta_i ,\\
        = &  -t + c(\nu)_i + r_i\zeta_i< -t.
    \end{align*}
    Therefore $r_i\tilde{\gamma}_i(t) = \nu(c(\nu)_i + r_i\zeta_i)<0$. By the continuity of $\tilde{\gamma}_i(\lambda)$, $\tilde{Z}_i>t\ge Z_i$, which means $W_i<0$.

    3) $\{-c(\nu)_i<r_i\zeta_i<0\}\cap  \{W_i\le 0\} = \emptyset$: Suppose $-c(\nu)_i<r_i\zeta_i<0$ and $W\le 0$, then further holds $\tilde{Z}_i\ge Z_i$. Take  $t = \tilde{Z}_i \ge Z_i$. If $r_i \neq \sign(\tilde{\rho}_i(t))$, then $r_i\tilde\rho_i(t)=-1$ and $r_i\tilde\gamma(t)_i= 0$. 
    From  Equation \eqref{eq: knockoff sig}, there further holds
    \begin{align}
        -t= r_it\tilde{\rho}_i(t)+r_i\frac{\tilde{\gamma}_i(t)}{\nu} = & r_i[D\beta(t)]_i\nu^{-1}+r_i\zeta_i ,\\
        > & r_i[D\beta(t)]_i\nu^{-1} - \min_{\lambda\ge Z_i}(r_i[D\beta(\lambda)]_i\nu^{-1}+\lambda)\ge -t,
    \end{align}
    which is a contradiction. Therefore  $r_i = \sign(\tilde{\rho}_i(t))$ and consequently $r_i\tilde\rho_i(t)=1$, $r_i\tilde\gamma(t)_i= 0$. From \eqref{eq: feature sig} - \eqref{eq: knockoff sig}, there holds 
    \begin{align}
        r_it\rho_i(t)+ r_i\frac{\gamma_i(t)}{\nu}= & r_it \tilde{\rho}_i(t) +r_i \frac{\tilde\gamma_i(t)}{\nu}  - r_i\zeta_i> t.\nonumber
    \end{align}
    Therefore $r_i\gamma_i(t) \ge  - r_i\zeta_i>0$. By the continuity of $\gamma_i(t)$, there further holds $\tilde{Z}_i<Z_i$, which is a contradiction. This ends the proof.
\end{proof}

\subsubsection{Proof of Lemma \ref{lemma: est c(nu)}}

\begin{proof}
    We prove Lemma \ref{lemma: est c(nu)} using the following property induced from the incoherence conditions of Split LASSO, that for any $\xi$ satisfying $\|\xi\|_\infty<\kappa\sigma\frac{1}{\sqrt{n_1}}$ and $\nu>4C_XC_D^2$, for any $\lambda> \frac{4}{\nu}\kappa C_XC_D\sigma\frac{1}{\sqrt{n_1}}$, there holds if $\sign(\gamma_i(\lambda))\neq0$, then $\sign(\gamma_i(\lambda)) = \sign(\gamma^*_i)$.

    We first show that such a result is sufficient for Lemma \ref{lemma: est c(nu)}. Suppose such a result holds, then for conditional on $\D_1$ where $\xi$ satisfies $\|\xi\|_\infty<\kappa\sigma\frac{1}{\sqrt{n_1}}$ and $\nu>4C_XC_D^2$, there holds for $i\in \{i: \widehat\sign_i \neq \sign(\gamma^*_i)\}$ that $Z_i\le \frac{4}{\nu}\kappa C_XC_D\sigma\frac{1}{\sqrt{n_1}}$. Therefore, by \eqref{eq: feature sig} - \eqref{eq: knockoff sig}, for $r_i\zeta_i<-\frac{8}{\nu}\kappa C_XC_D\sigma\frac{1}{\sqrt{n_1}}$, there holds
    \begin{align*}
        r_iZ_i\tilde{\rho}_i(Z_i)+ r_i\frac{\tilde{\gamma}_i(Z_i)}{\nu}&=r_iZ_i\rho_i(Z_i)+r_i\zeta_i<-Z_i.
    \end{align*}
    Therefore, $\tilde{\gamma}_i(Z_i)= r_i\zeta_i+2Z_i<0$. By the continuity of $\tilde{\gamma}_i(\lambda)$, $\tilde{Z}_i>Z_i$, which means $W_i<0$. Therefore $c(\nu)_i\le \frac{8}{\nu}\kappa C_XC_D\sigma\frac{1}{\sqrt{n_1}}$.

    Now we proceed to show the property induced from the incoherence conditions of Split LASSO, that for any $\xi$ satisfying $\|\xi\|_\infty<\kappa\sigma\frac{1}{\sqrt{n_1}}$ and $\nu>4C_XC_D^2$, for any $\lambda> \frac{4}{\nu}\kappa C_XC_D\sigma\frac{1}{\sqrt{n_1}}$, there holds if $\sign(\gamma_i(\lambda))\neq0$, then $\sign(\gamma_i(\lambda)) = \sign(\gamma^*_i)$. The KKT condition that $\beta(\lambda)$ from Equation \eqref{eq: beta(lambda)} should satisfy is: 
    \begin{align}
        0 &= - \left(\frac{X_1^T X_1}{n_1} + \frac{D^T D}{\nu}\right) \beta(\lambda) + \frac{D^T}{\nu} \gamma(\lambda) + \left\{\frac{X_1^T X_1}{n_1} \beta^* + \frac{X_1^T}{n_1} \varepsilon_1 \right\}. \label{eq: beta kkt}
    \end{align}
    By $D\left(\frac{X_1^T X_1}{n_1}+\frac{D^T D}{\nu}\right)^{-1}\times$\eqref{eq: beta kkt}$+\nu\times$\eqref{eq: feature sig}, and the fact that $\gamma^* = D\beta^*$, there holds
    \begin{align*}
        \lambda\nu\rho(\lambda)=&-\gamma(\lambda)+D\left(\frac{X_1^T X_1}{n_1}+\frac{D^T D}{\nu}\right)^{-1}D^T\frac{\gamma(\lambda)}{\nu}+\cdots\\
        &\cdots+D\left(\frac{X_1^T X_1}{n_1}+\frac{D^T D}{\nu}\right)^{-1}\left(\frac{X_1^T X_1}{n_1}+\frac{D^T D}{\nu}-\frac{D^T D}{\nu}\right)\beta^*+D\left(\frac{X_1^T X_1}{n_1}+\frac{D^T D}{\nu}\right)^{-1}\xi,\\
        &=-H_{\nu}(\gamma(\lambda)-\gamma^*)+\omega,
    \end{align*}
    where 
    \begin{align*}
        H_\nu&=I_m-\frac{1}{\nu}D\left(\frac{X_1^T X_1}{n_1}+\frac{D^T D}{\nu}\right)^{-1}D^T,\\
        \omega&=D\left(\frac{X_1^T X_1}{n_1}+\frac{D^T D}{\nu}\right)^{-1}\xi.
    \end{align*}
    Let $H_{\nu}^{i, i}$ be the $(i, i)-$th element of $H_\nu$, $H_{\nu}^{i, -i}$ be the $i-$th row of $H_\nu$ without the $i-$th element, $H_{\nu}^{-i, i}$ be the $i-$th column of $H_\nu$ without the $i-$th element, and $H_{\nu}^{-i, -i}$ be the submatrix of $H_\nu$ obtained by deleting the $i-$th row and $i-$th column. There holds
    \begin{align}
    	\lambda \nu
    	\begin{bmatrix}
        	\rho_{-i}(\lambda)\\
        	\rho_{i}(\lambda)
    	\end{bmatrix}
    	=-
    	\begin{bmatrix}
    	    H_\nu^{-i, -i} & H_\nu^{-i, i}\\
    	    H_\nu^{i, -i} & H_\nu^{i, i}
    	\end{bmatrix}
    	\begin{bmatrix}
    	    \gamma_{-i}(\lambda) - \gamma^*_{-i}\\
    	    \gamma_i(\lambda) - \gamma^*_{i}
    	\end{bmatrix}
    	+
    	\begin{bmatrix}
    	    \omega_{-i}\\
    	    \omega_{i}
  	  	\end{bmatrix},\nonumber
\end{align}
which means 
\begin{subequations}
    \begin{align}
        \lambda\nu \rho_{-i}(\lambda) = & - H_\nu^{-i, -i}(\gamma_{-i}(\lambda) - \gamma^*_{-i})-H_\nu^{-i, i}(\gamma_i(\lambda) - \gamma^*_{i}) + \omega_{-i}\label{eq1: irr_b},\\
        \lambda\nu \rho_{i}(\lambda) = & -  H_\nu^{i, -i}(\gamma_{-i}(\lambda) - \gamma^*_{-i}) -H_\nu^{i, i}(\gamma_i(\lambda) - \gamma^*_{i})  + \omega_{i}\label{eq2: irr_b}.
    \end{align}
\end{subequations}
Solve $\gamma_{-i}(\lambda) - \gamma^*_{-i}$ from Equation \eqref{eq1: irr_b}, and plug the solution into Equation \eqref{eq2: irr_b}, there holds
\begin{align}
	\lambda\nu\rho_{i}(\lambda)&=H_\nu^{-i, i}[H_\nu^{-i, -i}]^{-1}\lambda\nu\rho_{-i}(\lambda)-H_\nu^{-i, i}[H_\nu^{-i, -i}]^{-1}\omega_{-i}+\omega_i+\cdots\nonumber\\
	&\ \ \ \ \cdots-(H_{i, i}-H_\nu^{-i, i}[H_\nu^{-i, -i}]^{-1}H_\nu^{i, -i})(\gamma_i(\lambda) - \gamma^*_{i}).\label{eq: key for estimation}
\end{align}
Note that $(H_{i, i}-H_\nu^{-i, i}[H_\nu^{-i, -i}]^{-1}H_\nu^{i, -i})$ is the Schur complement of $H_\nu^{-i, -i}$ in the positive semi-definite matrix $H_{\nu}$, therefore
$(H_{i, i}-H_\nu^{-i, i}[H_\nu^{-i, -i}]^{-1}H_\nu^{i, -i})\ge0$.

Now we start to prove the result by contradiction. If there exists $i$ and $\lambda> \frac{4}{\nu}\kappa C_XC_D\sigma\frac{1}{\sqrt{n_1}}$, such that $\sign(\gamma_i(\lambda))\neq 0$ and $\sign(\gamma_i(\lambda)) \neq \sign(\gamma^*_i)$, then there holds $\sign(\gamma_i(\lambda) - \gamma^*_{i}) = \sign(\rho_{i}(\lambda))$. Combining with Equation \eqref{eq: key for estimation}, there further holds
\begin{align}
	1&<\left|\rho_{i}(\lambda)+\frac{1}{\lambda\nu}(H_{i, i}-H_\nu^{-i, i}[H_\nu^{-i, -i}]^{-1}H_\nu^{i, -i})(\gamma_i(\lambda) - \gamma^*_{i})\right|,\nonumber\\
	&=\left|H_\nu^{-i, i}[H_\nu^{-i, -i}]^{-1}\rho_{-i}(\lambda)-\frac{1}{\lambda\nu}(H_\nu^{-i, i}[H_\nu^{-i, -i}]^{-1}\omega_{-i}+\omega_i)\right|.\label{ineq: tag1}
\end{align}
Now we start to estimate $\|H_\nu^{-i, i}[H_\nu^{-i, -i}]^{-1}\|_\infty$. For $\nu>4C_XC_D^2$
\begin{align}
    \left\|\left(\frac{X_1^T X_1}{n_1}+\frac{D^T D}{\nu}\right)^{-1}\right\|_\infty & \le \left\|\left(\frac{X_1^T X_1}{n_1}\right)^{-1}\right\|_\infty\left\|\left[I_m+\left(\frac{X_1^T X_1}{n_1}\right)^{-1}\frac{D^T D}{\nu}\right]^{-1}\right\|_\infty,\nonumber\\
    &\le C_X\sum_{i=0}^\infty\left\|\left[\left(\frac{X_1^T X_1}{n_1}\right)^{-1}\frac{D^T D}{\nu}\right]\right\|^k_\infty,\nonumber\\
    &\le C_X\sum_{i=0}^\infty\left[\frac{C_XC_D^2}{\nu}\right]^k\le \frac{4}{3}C_X.\label{eq: est3}
\end{align}
Let $R(\nu) := D\left(\frac{X_1^T X_1}{n_1}+\frac{D^T D}{\nu}\right)^{-1}D^T$. Then there holds
\begin{align}
	\|H_\nu^{-i, i}\|_\infty=\frac{1}{\nu}\|R(\nu)^{-i, i}\|_\infty\le  \frac{1}{\nu}\|R(\nu)\|_\infty\le \frac{4}{3\nu}C_XC_D^2.\label{eq: est1}
\end{align}
Moreover, for $\nu>4C_XC_D^2$, there holds
\begin{align}
    \left\|\left[H_\nu^{-i, -i}\right]^{-1}\right\|_\infty & = \left\|\left[I_{m-1}-\nu^{-1}R(\nu)^{-i, -i}\right]^{-1}\right\|_\infty,\nonumber\\
    & = \left\|\sum_{k=0}^\infty\nu^{-k}\left[R(\nu)^{-i, -i}\right]^k\right\|_\infty,\nonumber\\
    &\le \sum_{k=0}^\infty\nu^{-k}\left\|\left[R(\nu)^{-i, -i}\right]\right\|_\infty^k\le \frac{1}{1-\frac{4C_X C_D^2}{3\nu}}\le \frac{3}{2}.\label{eq: est2}
\end{align}
Combining Equation \eqref{eq: est1} and \eqref{eq: est2} together, for $\nu>4C_XC_D^2$, there holds
\begin{align}
	\|H_\nu^{-i, i}[H_\nu^{-i, -i}]^{-1}\|_\infty\le \frac{2}{\nu}C_XC_D^2<\frac{1}{2}\label{eq: est4}.
\end{align}
Therefore, there holds
\begin{align}
	|H_\nu^{-i, i}[H_\nu^{-i, -i}]^{-1}\rho_{-i}(\lambda)|\le \|H_\nu^{-i, i}[H_\nu^{-i, -i}]^{-1}\|_\infty\|\rho_{-i}(\lambda)\|_\infty<\frac{1}{2}.\label{ineq: tag2}
\end{align}
Moreover, by Equation \eqref{eq: est3}, there holds
\begin{align*}
	\|\omega\|_\infty &= \left\|D\left(\frac{X_1^T X_1}{n_1}+\frac{D^T D}{\nu}\right)^{-1}\xi\right\|_\infty,\\
	&\le \frac{4}{3}C_XC_D\left\|\xi\right\|_\infty\le \frac{4}{3}\kappa C_XC_D\sigma\frac{1}{\sqrt{n_1}}.
\end{align*}
Combining with Equation \eqref{eq: est4}, there holds
\begin{align}
	\|H_\nu^{-i, i}[H_\nu^{-i, -i}]^{-1}\omega_{-i}+\omega_i\|_\infty
	&<2\kappa C_XC_D\sigma\frac{1}{\sqrt{n_1}}.\label{ineq: tag3}
\end{align}
Combining Equation \eqref{ineq: tag1}, \eqref{ineq: tag2} and \eqref{ineq: tag3} together, there holds
\begin{align*}
	\frac{1}{2}<\frac{1}{\lambda\nu} 2\kappa C_X^2C_D\sigma\frac{1}{\sqrt{n_1}}.
\end{align*}
This contradicts $\lambda> \frac{4}{\nu}\kappa C_XC_D\sigma\frac{1}{\sqrt{n_1}}$ and ends the proof.
\end{proof}

\subsubsection{Proof of Lemma \ref{lemma: est of sign hard}}

\begin{proof}
    First of all, clearly there holds
    $
        \Prob[W_i<0]\ge 0
    $
     under any situation.
    
    From Lemma \ref{lemma: interval} and \ref{lemma: est c(nu)}, conditional on $\D_1$ such that 
    $\|\xi\|_\infty<\kappa\sigma\frac{1}{\sqrt{n_1}}$, for $\nu>4C_XC_D^2$, there holds $$\{W_i\ge 0\}\subseteq \left\{- \frac{8}{\nu}\kappa C_XC_D\sigma\frac{1}{\sqrt{n_1}}\le r_i\zeta_i\le 0\right\}.$$
    Note that $-a<r_i \zeta_i<0\iff -a+\vecs_ir_i\gamma^*_i<r_i( \zeta_i+\vecs_i\gamma^*_i)<\vecs_ir_i\gamma^*_i$ for $a>0$, where
    \begin{align*}
        \zeta+\vecs\gamma^*\sim \mathcal{N}\left(0_m, \frac{1}{n_2}\diag(\vecs)(2I_m-\diag(\vecs)\nu)\sigma^2\right),
    \end{align*}
    by Lemma \ref{lemma: distribution zeta}. For $i\in\{\widehat{\sign}_i\neq\gamma^*_i\} = \{r_i\neq\gamma^*_i\}$, there further holds:
    \begin{enumerate}
        \item if $r_i=1$, then $-a<r_i \zeta_i<0\iff -a+\vecs_i\gamma^*_i<\zeta_i+\vecs_i\gamma^*_i<\vecs_i\gamma^*_i$ where $\vecs_i\gamma^*_i\le 0$; in this case, the fact that $\zeta_i+\vecs_i\gamma^*_i$ is mean zero further suggests $\Prob[-a+\vecs_i\gamma^*_i<\zeta_i+\vecs_i\gamma^*_i<\vecs_i\gamma^*_i]\le \Prob[-a<\zeta_i+\vecs_i\gamma^*_i<0]$;
        \item if $r_i=-1$, then $-a<r_i \zeta_i<0\iff a+\vecs_i\gamma^*_i>\zeta_i+\vecs_i\gamma^*_i>\vecs_i\gamma^*_i$ where $\vecs_i\gamma^*_i\ge 0$; in this case, the fact that $\zeta_i+\vecs_i\gamma^*_i$ is mean zero further suggests $\Prob[a+\vecs_i\gamma^*_i>\zeta_i+\vecs_i\gamma^*_i>\vecs_i\gamma^*_i]\le \Prob[a>\zeta_i+\vecs_i\gamma^*_i>0]$.
    \end{enumerate}
    Therefore, there holds
    \begin{align*}
        \Prob[W_i\ge 0]&\le \Prob\left[- \frac{8}{\nu}\kappa C_XC_D\sigma\frac{1}{\sqrt{n_1}}<\zeta_i<0\right],\\
        &= \int_{-\frac{8}{\nu}\kappa C_XC_D\sigma\frac{1}{\sqrt{n_1}}}^{0}\frac{\sqrt{n_2}}{\sqrt{2\pi \vecs_i(2-\vecs_i\nu)\sigma^2}}e^{-\frac{n_2x^2}{\vecs_i(2-\vecs_i\nu)\sigma^2}}dx,\\
        &\le\frac{1}{\sqrt{\nu}}\frac{\sqrt{n_2}}{\sqrt{n_1}}\frac{8\kappa C_XC_D}{\sqrt{2\pi \vecs_i\nu(2-\vecs_i\nu)}},
    \end{align*}
    and therefore
    \begin{align*}
        \Prob[W_i< 0] = 1- \Prob[W_i\ge 0]\ge\max\left(0, 1-\frac{1}{\sqrt{\nu}}\frac{\sqrt{n_2}}{\sqrt{n_1}}\frac{8\kappa C_XC_D}{\sqrt{2\pi \vecs_i\nu(2-\vecs_i\nu)}}\right),
    \end{align*}
    which ends the proof.
\end{proof}

\section{Proof of Theorem \ref{theorem: directional fdr without split}}
\label{sec: proof thm late}


First of all, similar to the proof of Theorem \ref{theorem: directional fdr} in Section \ref{sec: proof thm}, the problem of bounding the $\dfdr$ in Theorem \ref{theorem: directional fdr without split} can be deducted from the following inequality
\begin{align}
    \Expect\left[\frac{\sum_{i}1\{W_i\ge T_q, \widehat{\sign}_i\neq\sign(\gamma^*_i)\}}{1+\sum_{i}1\{W_i\le -T_q, \widehat{\sign}_i\neq\sign(\gamma^*_i)\}}\right]\le g(\nu). \label{eq: target late}
\end{align}

For shorthand notations, we abuse the notations a little and define $\xi = \frac{X^T\varepsilon}{n}$ and $\zeta = \tilde{A}_\gamma^T\tilde{y}$ on the full dataset $\D$. This is different from their definitions in the proof of Theorem \ref{theorem: directional fdr}, where $\xi$ is defined on $\D_1$ (in Lemma \ref{lemma: est of sign hard}) and $\zeta$ (in Equation \eqref{eq: kkts}) is defined on $\D_2$. This immediately leads to the hurdle that $\xi$ and $\zeta$ in their current form are \emph{not independent} from each other. 


Fortunately, we can still reconstruct the estimation on $\Prob[W_i\ge 0]$ in a similar way with that in Lemma \ref{lemma: est of sign}. However, it is worth to mention that $\sign(W_i)$ are no longer independent from each other conditional on $\D_1$ or equivalently conditional on $\xi$ (the sufficient statistics for $\beta(\lambda)$). The proof of Lemma \ref{lemma: estimation of sign} is given in Section \ref{sec: estimation of sign}.

\begin{lemma}
    \label{lemma: estimation of sign}
    Let $\Lambda(\nu)$ be the set of $m$-dimensional vectors given $\xi$, defined by
    \begin{align*}
        \Lambda(\nu):=\left\{u\in \R^m: \forall i,\ \left|u_i-\E[\zeta_i|\xi]\right|<\theta \sqrt{\var[\zeta_i|\xi]}\right\}.
    \end{align*} 
    where $\theta>0$ is a manually chosen parameter to be determined later.
    Given a $\xi$ satisfying $\|\xi\|_\infty<\kappa\sigma\frac{1}{\sqrt{n}}$ for a manually chosen parameter $\kappa$ to be determined later. Suppose $\nu>\max\left\{4C_XC_D^2, \frac{3}{2}\co1\right\}$, where we abuse the notations slightly and define
    \begin{align*}
        C_X:=\left\|\left(\frac{X^T X}{n}\right)^{-1}\right\|_\infty, C_D := \max\{\|D\|_\infty, \|D^T\|_\infty\}, \co1 = \left\|D\left[\frac{X^TX}{n}\right]^{-1}D^T\right\|_2.
    \end{align*}
    Then for any $i\in\{i: \widehat{\sign}_i\neq \sign(\gamma^*_i)\}$, if $m(2-2\Phi(\theta))<1$, there holds
    \begin{align*}
        \Prob\left.\left(W_i\ge 0\right|\xi, \zeta\in\Lambda(\nu)\right)&<\frac{8\kappa C_XC_D}{\sqrt{\pi}\sqrt{\nu}}\frac{1}{1-m(2-2\Phi(\theta))},
    \end{align*}
    where $\Phi(x)$ is the cumulative distribution function of standard normal distribution.
\end{lemma}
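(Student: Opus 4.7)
The plan is to mirror the two-step pattern of Lemma \ref{lemma: est of sign hard}---first localise $\{W_i\ge 0\}$ to a short interval in $r_i\zeta_i$, then bound the Gaussian probability of that interval---but with the marginal distribution of $\zeta$ replaced by the conditional law $\zeta\mid\xi$. The truncation to $\Lambda(\nu)$ is the device needed to handle the fact that in the no-splitting setting $\xi$ and $\zeta$ are correlated, so the final bound will appear as a ratio whose denominator must be controlled. To begin, I would check that Lemmas \ref{lemma: interval} and \ref{lemma: est c(nu)} transfer verbatim to the full dataset: their derivations rely only on the KKT identities \eqref{eq: kkts} and the adapted form of \eqref{eq: beta kkt}, both of which remain valid on $\D$, so $\{W_i\ge 0\}\subseteq\{-c(\nu)_i\le r_i\zeta_i\le 0\}$ with $c(\nu)_i$ a deterministic function of $\xi$, and, on the event $\|\xi\|_\infty<\kappa\sigma/\sqrt{n}$ with $\nu>4C_XC_D^2$, the pointwise bound $c(\nu)_i\le 8\kappa C_XC_D\sigma/(\nu\sqrt{n})$ holds.

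Since $\xi=X^T\varepsilon/n$ and $\zeta=\tilde A_\gamma^T\tilde y$ are both affine in $\varepsilon$, the pair is jointly Gaussian. Using Lemma \ref{lemma: distribution zeta} together with the identity $\tilde A_{\gamma,1}^T X=-\sqrt{n}\diag(\vecs)D$ from Proposition \ref{prop: structure of tagamma}, I would compute $\mathrm{Cov}(\zeta,\xi)=-(\sigma^2/n)\diag(\vecs)D(X^TX/n)^{-1}$ and then apply standard Gaussian conditioning to obtain
\begin{align*}
\var[\zeta_i\mid\xi]=\frac{\sigma^2\vecs_i}{n}\bigl(2-\vecs_i\nu-\vecs_i [D(X^TX/n)^{-1}D^T]_{ii}\bigr).
\end{align*}
With the equicorrelated choice $\vecs_i=1/\nu$, the assumption $\nu>\tfrac32\co1$ forces this to be of order $\sigma^2/(n\nu)$ uniformly in $\xi$. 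Assembly proceeds via
\begin{align*}
\Prob(W_i\ge 0\mid\xi,\zeta\in\Lambda(\nu))=\frac{\Prob\bigl(W_i\ge 0,\,\zeta\in\Lambda(\nu)\mid\xi\bigr)}{\Prob(\zeta\in\Lambda(\nu)\mid\xi)}.
\end{align*}
The numerator is bounded by $\Prob(r_i\zeta_i\in[-c(\nu)_i,0]\mid\xi)\le c(\nu)_i/\sqrt{2\pi\var[\zeta_i\mid\xi]}$ via the univariate Gaussian density bound; substituting the two previous estimates matches the advertised $8\kappa C_XC_D/\sqrt{\pi\nu}$ up to constants. For the denominator, a union bound over the $m$ component events defining $\Lambda(\nu)$---each a Gaussian tail of probability $2-2\Phi(\theta)$ under the conditional law---gives $\Prob(\zeta\in\Lambda(\nu)\mid\xi)\ge 1-m(2-2\Phi(\theta))$, which is positive by hypothesis.

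The principal obstacle is the lower bound on $\var[\zeta_i\mid\xi]$. In the splitting proof this variance was simply $\sigma^2\vecs_i(2-\vecs_i\nu)/n$ and was independent of $\xi$; here conditioning removes the part of $\zeta$ predictable from the shared noise $\varepsilon$, producing the extra subtraction $\sigma^2\vecs_i^2[D(X^TX/n)^{-1}D^T]_{ii}/n$. Ensuring that the remainder is still of the right order $\sigma^2/(n\nu)$ is exactly what forces the new hypothesis $\nu>\tfrac32\co1$, and tracking this constant through the density bound in the assembly step is the main arithmetic burden. A secondary point is that $\Lambda(\nu)$ itself depends on $\xi$, so the numerator and denominator must be interpreted strictly under the conditional law $\zeta\mid\xi$, with the union bound applied to that conditional law rather than to the marginal of $\zeta$.
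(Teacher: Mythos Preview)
Your proposal is correct and matches the paper's proof essentially step for step: transfer Lemmas \ref{lemma: interval}--\ref{lemma: est c(nu)} to the full dataset, compute $\var[\zeta_i\mid\xi]$ via Gaussian conditioning, lower-bound it using the hypothesis $\nu>\tfrac32\co1$, and assemble via the ratio with a union bound on $\Prob(\zeta\notin\Lambda(\nu)\mid\xi)$. The one small slip is the claim $\vecs_i=1/\nu$: the equicorrelated choice is $\vecs_i=2\lambda_{\min}(C_\nu)\wedge 1/\nu$, and the paper instead shows $C_\nu\succeq\tfrac{1}{3\nu}I_m$ under $\nu>\tfrac32\co1$, giving $\vecs_i\in[\tfrac{2}{3\nu},\tfrac{1}{\nu}]$ and hence $\var[\zeta_i\mid\xi]\ge\sigma^2/(2n\nu)$; this is the refinement you need to recover the exact constant $8\kappa C_XC_D/\sqrt{\pi\nu}$.
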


Due to the dependence between $\zeta$ and $\xi$, $\sign(W_i)$ are no longer independent from each other conditional on $\xi$, the sufficient statistics for $|W|$ and $\widehat{\sign}$. Fortunately, $\zeta$ and $\xi$ are only weakly dependent on each other, which enables an \emph{almost supermartingale structure}. The details are presented in the following.

For shorthand notations, we rearrange the subscripts of the $\{W_i\}_{i=1}^{m}$ conditional on $\zeta$, such that $|W_{(1)}|\ge|W_{(2)}|\ge\cdots\ge|W_{(m^*)}|$, and $\{(1), (2), \cdots, (m^*)\}=\{\widehat{\sign}_i\neq\sign(\gamma^*_i)\}$, where $m^* := |\{r_i\neq\sign(\gamma^*_i)\}|\le m$. Let $0\le J\le m^*$ be an integer satisfying
\begin{align*}
    |W_{(1)}|\ge|W_{(2)}|\ge\cdots\ge|W_{(J)}|\ge T_q>|W_{(J+1)}|\ge\cdots\ge|W_{(m^*)}|.
\end{align*}
In other words, $J = \argmax_{k\le m^*}\{|W_{(k)}|\ge T_q\}$ is the maximum subscript of $k$ satisfying $|W_{(k)}|\ge T_q$. Let $\{B_{(i)}\}_{i=1}^{m^*}$ be Bernoulli random variables defined by $B_{(i)}=1\{W_{(i)}<0\}$ for all $i$, then 
there holds
\begin{align}
    \frac{\sum_{i}1\{W_i\ge T_q, \widehat{\sign}_i\neq\sign(\gamma^*_i)\}}{1+\sum_{i}1\{W_i\le -T_q, \widehat{\sign}_i\neq\sign(\gamma^*_i)\}} & =  \frac{1+\sum_{i:\widehat{\sign}_i\neq\sign(\gamma^*_i), |W_i|\ge T_q}1\{W_i>0\}}{1+\sum_{i:\widehat{\sign}_i\neq\sign(\gamma^*_i), |W_i|\ge T_q}1\{W_i<0\}}-1,\nonumber\\
    & = \frac{1+J}{1+B_{(1)}+B_{(2)}+\cdots+B_{(J)}}-1,\label{eq: tag without split}
\end{align}
which can be bounded by an \emph{almost supermartingale inequality} after considerable efforts on characterizing the weak dependency between $\zeta$ and $\xi$. In fact, it can be shown that the dependency between $\zeta$ and $\xi$ will become weaker with the increase of $\nu$, which leads to the following lemma.



\begin{lemma}
    \label{lemma: main_lemma} 
    Let $\tau_\nu$ be measurement on the correlation of $\{B_{(i)}\}_{i=1}^{m^*}$ defined by
    \begin{align*}
        \tau_\nu = 8\ln(2m)\left(e^{48\theta ^2\co2 \frac{m(m+1)}{\nu}}-1\right)e^{\frac{2\fnu m}{\sqrt{\nu}-\fnu}}.
    \end{align*}
    For $\xi$ satisfying $\|\xi\|_\infty<\kappa\sigma\frac{1}{\sqrt{n}}$, if there further holds $\tau_\nu<1$ and $\nu>\gnu$, then
    \begin{align*}
        \E\left[\left.\frac{1+J}{1+B_{(1)}+B_{(2)}+\cdots+B_{(J)}}\right|\xi, \zeta\in\Lambda(\nu)\right]& \le \frac{\sqrt{\nu}}{\sqrt{\nu}-\fnu}\left(1+\tau_\nu\right),
    \end{align*}
    where $\Lambda(\nu)$ is defined in Lemma \ref{lemma: estimation of sign}.
    
    Definition of symbols:
    \begin{enumerate}
        \item $\kappa$, $\theta $ are two manually chosen parameters to be determined later in Lemma \ref{lemma: estimation of sign}. $\co1$ is a constant defined in Lemma \ref{lemma: estimation of sign}.
        \item $\co2 = \|R-\diag(R)\|_2$, where $R:=D\left[\frac{X^T X}{n}\right]^{-1} D^T$, and $\diag(R)$ represents the diagonal matrix setting the off-diagonal elements of $R$ to zero.
        \item $\fnu = \frac{8\kappa C_XC_D}{\sqrt{\pi}}\frac{1}{1-m(2-2\Phi(\theta))}e^{36\theta ^2\co2 \frac{m}{\nu}}$.
        \item $\gnu = \max\left\{4C_XC_D^2, \frac{3}{2}\co1, 12\co2, \fnu^2\right\}$.
    \end{enumerate}
\end{lemma}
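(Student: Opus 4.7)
The strategy is to extend the supermartingale inequality of Lemma \ref{lemma: barber} from independent Bernoulli variables to the present setting, where $\{B_{(i)}\}_{i=1}^{m^*}$ are only weakly dependent. In the non-split procedure, $\zeta$ and $\xi$ are both determined by the full dataset $\D$, so the signs $\sign(W_i)$ conditional on $\xi$ no longer factorize. The parameter $\tau_\nu$ is precisely engineered to absorb this dependency error, and it decays to zero as $\nu\to\infty$ since $\co2/\nu\to 0$.

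The first step is to quantify the weak dependency. Conditional on $\xi$, the vector $\zeta$ is Gaussian with covariance matrix whose off-diagonal part is controlled by $\co2=\|R-\diag(R)\|_2$, where $R=D[X^TX/n]^{-1}D^T$. Restricting to the event $\zeta\in\Lambda(\nu)$ truncates each coordinate into a $\theta$-standard-deviation box around its conditional mean, which is what makes Gaussian moment-generating function estimates available. Using a Gaussian correlation-style bound (comparing a joint MGF against the product of marginals and controlling the cross terms by the off-diagonal covariance on the truncated box), one establishes an ``almost product'' property: for any $S\subseteq\{1,\ldots,m^*\}$ and any sign assignment $\{b_i\}_{i\in S}$,
\begin{align*}
\Prob\!\left(\bigcap_{i\in S}\{B_{(i)}=b_i\}\,\Big|\,\xi,\zeta\in\Lambda(\nu)\right)\le (1+\tau_\nu)\prod_{i\in S}\Prob\!\left(B_{(i)}=b_i\,\big|\,\xi,\zeta\in\Lambda(\nu)\right).
\end{align*}
The exponential factor $e^{48\theta^2\co2 m(m+1)/\nu}$ inside $\tau_\nu$ arises from bounding the quadratic form associated with $R-\diag(R)$ on the $\theta$-box, while the factor $e^{2\fnu m/(\sqrt{\nu}-\fnu)}$ comes from a telescoping compounding when the above bound is iterated over the $m$ coordinates.

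With approximate independence in hand, mimic the inverse-time filtration argument of Lemma \ref{lemma: barber}. Define $\mathcal{F}_j=\sigma\bigl(\sum_{i=1}^j B_{(i)},\,B_{(j+1)},\ldots,B_{(m^*)}\bigr)$ and apply Lemma \ref{lemma: estimation of sign} together with the assumption $\nu>\gnu$ to obtain $\Prob(B_{(i)}=1\mid \xi,\zeta\in\Lambda(\nu))\ge 1-\fnu/\sqrt{\nu}$ for every $i\in\{\widehat{\sign}_i\neq\sign(\gamma^*_i)\}$. Propagating this lower bound through the inverse-time recursion as in \cite{barber2019knockoff}, but using the ``almost product'' inequality in place of exact independence, shows that $(1+j)/(1+B_{(1)}+\cdots+B_{(j)})$ behaves as an almost supermartingale whose one-step multiplicative slack is at most $(1+\tau_\nu)\sqrt{\nu}/(\sqrt{\nu}-\fnu)$ factor-wise; optional stopping at $J$ then yields the stated bound.

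The main obstacle is the second step: producing a \emph{dimension-aware} quantification of the deviation from independence that still collapses as $\nu\to\infty$. Gaussian correlation bounds are available coordinate-pair by coordinate-pair, but here they must be compounded over $m^*\le m$ potentially correlated directions while keeping the conditioning event $\Lambda(\nu)$ consistent across the iteration. The requirements $\nu>\max\{12\co2,\fnu^2\}$ and $\tau_\nu<1$ encoded in $\gnu$ are precisely what is needed to prevent the compounded error from blowing up, and the restriction to $\Lambda(\nu)$ is what makes the truncated Gaussian MGFs tractable enough to produce the explicit form of $\tau_\nu$ stated in the lemma.
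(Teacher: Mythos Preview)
Your overall architecture---quantify the weak Gaussian dependency on the box $\Lambda(\nu)$, then run an ``almost supermartingale'' argument---matches the paper's strategy. However, there is a substantive gap in your second step, and your attribution of the factor $e^{2\fnu m/(\sqrt{\nu}-\fnu)}$ is incorrect.

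The paper does \emph{not} run the inverse-time recursion directly on the heterogeneous variables $B_{(i)}$. The obstacle is that the supermartingale computation requires a tractable expression for $\Prob\bigl(B_{(j)}=1\,\big|\,\sum_{i\le j}B_{(i)}=k,\,B_{(j+1)},\ldots\bigr)$. When the $B_{(i)}$ have different success probabilities $\rho_i$, the numerator and denominator are sums over $\binom{j}{k}$ configurations with distinct weights, and an ``almost product'' bound on each configuration does not collapse to a clean $k/j$ ratio. What makes the combinatorics work in the paper is an explicit \emph{reduction to homogeneous Bernoullis}: they construct (Lemmas~\ref{lemma: QC eq} and~\ref{lemma: key}) auxiliary variables $Q_i\sim\mathrm{Bernoulli}(\rho)$ with $\rho=\min_i\rho_i$, coupled to $B_{(i)}$ via a random set $\A$ through $C_i=Q_i\cdot 1\{i\in\A\}+1\{i\notin\A\}$, so that $\{C_i\}\overset{d}{=}\{B_{(i)}\}$. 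One then bounds $\frac{1+J}{1+\sum C_i}\le\frac{1+|\{i\le J:i\in\A\}|}{1+\sum_{i\le J,\,i\in\A}Q_i}$ and runs the supermartingale on the homogeneous $Q_i$'s, where the conditional-probability ratio becomes the exact binomial $\frac{\rho\,C_{\tilde m-1}^{\tilde k-1}\rho^{\tilde k-1}(1-\rho)^{\tilde m-\tilde k}}{C_{\tilde m}^{\tilde k}\rho^{\tilde k}(1-\rho)^{\tilde m-\tilde k}}=\tilde k/\tilde m$ up to the weak-dependency factor.

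This reduction is also the true origin of the factor you misidentified. The Gaussian comparison yields only $e^{\pm 48\theta^2\co2 m(|S|+1)/\nu}$ on the $B$'s (Lemma~\ref{lemma: weak dependent}). Transferring this to the $Q$'s via the inverse of the coupling matrix amplifies the error by $\bigl[\tfrac{2-\rho}{\rho}\bigr]^{m}\le e^{(2-2\rho)m/\rho}$ (Lemma~\ref{lemma: key}), and it is the bound $\tfrac{1-\rho}{\rho}<\tfrac{\fnu}{\sqrt{\nu}-\fnu}$ from Proposition~\ref{prop: conditional upper bound} that converts this into $e^{2\fnu m/(\sqrt{\nu}-\fnu)}$. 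The $8\ln(2m)$ prefactor then appears from the harmonic-type product $\prod_{k=1}^m\bigl[1+\tfrac{1}{k+1}\cdot\tfrac{2\epsilon}{1-\epsilon}\bigr]$ in the optional-stopping step, not from iterating the Gaussian bound over coordinates.
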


\begin{remark}
    This lemma gives a conditional upper bound on the right hand side of Equation \eqref{eq: tag without split} based on the measurement $\tau_\nu$ on the correlation of $\{B_{(i)}\}_{i=1}^{m^*}$ which is only dependent on $m$, $\nu$ and some constants. Specially, when $R$ is a diagonal matrix, e.g. $X$ and $D$ are orthogonal, there holds $\co2 = 0$ and $\tau_\nu=0$. 
    The worst case, however, requires $\nu\sim O\left(m^2\ln m\ln^2 p\right)$ to make $\tau_\nu<1$, while one can expect intermediate requirements on $\nu$ in intermediate cases. The proof of Lemma \ref{lemma: main_lemma} is given in Section \ref{sec: main_lemma}.
\end{remark}
Now we can proceed to prove Theorem \ref{theorem: directional fdr without split}. 
By Lemma \ref{lemma: main_lemma}, for $\nu>\gnu$, there holds
\begin{align*}
    \Expect\left.\left[\frac{\sum_{i}1\{W_i\ge T_q, r_i\neq\sign(\gamma^*_i)\}}{1+\sum_{i}1\{W_i\le -T_q, r_i\neq\sign(\gamma^*_i)\}}\right|\xi, \zeta\in \Lambda(\nu)\right]& \le \frac{\sqrt{\nu}}{\sqrt{\nu}-\fnu}\left(1+\tau_\nu\right)-1.
\end{align*}
By the definition of $\Lambda(\nu)$ in Lemma \ref{lemma: estimation of sign}, there holds
\begin{align*}
    \Prob[\zeta\in \Lambda(\nu)^C|\xi]\le & \sum_{i=1}^m \Prob\left[\left|\zeta_i-\E[\zeta_i|\xi]\right|<\theta \var[\zeta_i|\xi]\right]
    \le m\left(2-2\Phi(\theta)\right).
\end{align*}
Therefore, for $\|\xi\|_\infty<\kappa\sigma\frac{1}{\sqrt{n}}$, there further holds for the $\mdfdr$ that
\begin{align}
      &\ \ \ \ \E\left [\left.\frac{\sum_{i}1\{W_i\ge T_q, \widehat{\sign}_i\neq\sign(\gamma_i^*)\}}{\sum_{i}1\{W_i\ge T_q\}+q^{-1}}\right|\xi\right]\nonumber\\
      & \le \E\left [\left.\frac{\sum_{i}1\{W_i\ge T_q, \widehat{\sign}_i\neq\sign(\gamma_i^*)\}}{\sum_{i}1\{W_i\ge T_q\}+q^{-1}} \right| \xi, \zeta\in\lambda(\nu)\right]\Prob[\zeta\in \Lambda(\nu)|\xi]+\ldots\nonumber\\
      &\ \ \ \ \ldots + \E\left [\left.\frac{\sum_{i}1\{W_i\ge T_q, \widehat{\sign}_i\neq\sign(\gamma_i^*)\}}{\sum_{i}1\{W_i\ge T_q\}+q^{-1}} \right| \xi, \zeta\in \Lambda(\nu)^c\right]\Prob [\zeta\in \Lambda(\nu)^c|\xi],\nonumber\\
      & \le \E\left [\left.\frac{\sum_{i}1\{W_i\ge T_q, \widehat{\sign}_i\neq\sign(\gamma_i^*)\}}{\sum_{i}1\{W_i\ge T_q\}+q^{-1}} \right| \xi, \zeta\in \Lambda(\nu)\right]+\Prob [\zeta\in \Lambda(\nu)^c|\xi],\nonumber\\
      & \le q\left[\frac{\sqrt{\nu}}{\sqrt{\nu}-\fnu}\left(1+\tau_\nu\right)-1\right]+m\left(2-2\Phi(\theta)\right).\label{eq: con bound}
\end{align}
By definition, $\xi\sim\mathcal{N}\left(0, \frac{X^T X}{n^2}\sigma^2\right)$. Therefore, there exists $u\sim\mathcal{N}(0, I_p)$, such that $\zeta = \left[\frac{X^TX}{n}\right]^{\frac{1}{2}}\frac{\sigma}{\sqrt{n}}u$. Let $\co3 = \left\|\left[\frac{X^TX}{n}\right]^{\frac{1}{2}}\right\|_\infty$, then
\begin{align*}
    \Prob\left[\|\xi\|_\infty>\kappa\sigma\frac{1}{\sqrt{n}}\right]\le &\Prob\left[\|u\|_\infty>\frac{\kappa}{\co3}\right]
    \le p \left(2-2\Phi\left(\frac{\kappa}{\co3}\right)\right).
\end{align*}
Applying the same procedure as in Equation \eqref{eq: con bound}, there holds for the $\mdfdr$,
\begin{align}
    &\ \ \ \ \E\left [\frac{\sum_{i}1\{W_i\ge T_q, \widehat{\sign}_i\neq\sign(\gamma_i^*)\}}{\sum_{i}1\{W_i\ge T_q\}+q^{-1}}\right]\nonumber\\
    &\le \E\left [\left.\frac{\sum_{i}1\{W_i\ge T_q, \widehat{\sign}_i\neq\sign(\gamma_i^*)\}}{\sum_{i}1\{W_i\ge T_q\}+q^{-1}}\right|\xi, \|\xi\|_\infty<\kappa\sigma\frac{1}{\sqrt{n}}\right] + \Prob\left[\|\xi\|_\infty>\kappa\sigma\frac{1}{\sqrt{n}}\right],\nonumber\\
     & \le q\left[\frac{\sqrt{\nu}}{\sqrt{\nu}-\fnu}\left(1+\tau_\nu\right)-1\right]+m\left(2-2\Phi(\theta)\right)+p \left(2-2\Phi\left(\frac{\kappa}{\co3}\right)\right),\nonumber
\end{align}
where as a reminder $\co3 = \left\|\left[\frac{X^TX}{n}\right]^{\frac{1}{2}}\right\|_\infty$, and other constants are defined in Lemma \ref{lemma: main_lemma}. The same procedure can be applied to $\dfdr$ and the same result can be reached. Define $g(\nu)$ by
\begin{align}
    g(\nu)=:\left\{
    \begin{array}{cll}
         & \left[\frac{\sqrt{\nu}}{\sqrt{\nu}-\fnu}\left(1+\tau_\nu\right)-1\right]+q^{-1}m\left(2-2\Phi(\theta)\right)+q^{-1}p \left(2-2\Phi\left(\frac{\kappa}{\co3}\right)\right),&  \mbox{ if } \nu>\gnu, \tau_\nu<1;\\
         & q^{-1}, & \mbox{ otherwise.}
    \end{array}
    \right.
    \label{def: g(nu)}
\end{align}
and Theorem \ref{theorem: directional fdr without split} can be achieved.
\qed

\begin{remark}
    In Equation \eqref{def: g(nu)}, take $\theta\to\infty$, $\kappa\to\infty$, $\frac{\theta}{\sqrt{\nu}}\to 0$, and $\frac{\kappa}{\sqrt{\nu}}\to 0$ (which means $\nu\to \infty$), there holds $\tau_\nu\to 0$, $\frac{\fnu}{\sqrt{\nu}}\to 0$ and $g(\nu)\to 0$. Meanwhile, it is sufficient to let $\theta\sim O(\ln(m))$, $\kappa\sim O(\ln(p))$, and $\nu\sim O(m^2\ln m\ln^2 p)$ to make $g(\nu)\le 1$.
\end{remark}

\subsection{Proof of Lemma \ref{lemma: estimation of sign}}

\label{sec: estimation of sign}

In this section, we will prove Lemma \ref{lemma: estimation of sign} for Split Knockoffs without sample splitting. The proof will make use of the following lemma, which is a copy of Lemma \ref{lemma: est c(nu)} on the whole dataset $\D$ instead of $\D_1$.

\begin{lemma}
    \label{lemma: est c(nu) without split}
    Conditional on $\xi = \frac{X^T\varepsilon}{n}$ satisfying
    $\|\xi\|_\infty<\kappa\sigma\frac{1}{\sqrt{n}}$. 
   For any $i\in \{\widehat{\sign}_i \neq \sign(\gamma^*_i)\}$, if $\nu>4C_XC_D^2$, there holds
   \begin{align*}
        c(\nu)_i\le \frac{8}{\nu}\kappa C_XC_D\sigma\frac{1}{\sqrt{n}},
   \end{align*}
   where for all $i$, $c(\nu)_i$ satisfies
   \begin{align*}
        \{W_i\ge 0\}\subseteq \{-c(\nu)_i\le r_i \zeta_i\le 0\}, \{W_i\le 0\}\subseteq \{r_i \zeta_i\ge 0\}\cup\{r_i \zeta_i\le -c(\nu)_i\}.
    \end{align*}
\end{lemma}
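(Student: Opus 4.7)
The plan is to obtain Lemma \ref{lemma: est c(nu) without split} as a near-verbatim transcription of the proof of Lemma \ref{lemma: est c(nu)} (together with the supporting Lemma \ref{lemma: interval}), with the sole substitution that the role of the subsample $\D_1 = (X_1, y_1)$ is played by the full dataset $\D = (X, y)$. Concretely, $n_1$ is replaced by $n$, the covariance proxy $X_1^T X_1 / n_1$ is replaced by $X^T X / n$, the noise proxy $X_1^T \varepsilon_1 / n_1$ is replaced by $X^T \varepsilon / n$, and the Split LASSO path $\beta(\lambda)$ now refers to the path defined on the full dataset in Section \ref{sec: inflation} rather than the one in \eqref{eq: beta(lambda)}. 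All the steps are deterministic given $\xi$, so no new probabilistic ingredient is required; the assumption $\|\xi\|_\infty < \kappa\sigma/\sqrt{n}$ is the analog of the conditioning event used in Lemma \ref{lemma: est c(nu)}.

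First I would verify that the characterization of $c(\nu)_i$ claimed in the statement (through the two inclusions) still holds, that is, that Lemma \ref{lemma: interval} carries over to the no-splitting procedure. This is immediate because the proof of Lemma \ref{lemma: interval} uses only the KKT conditions \eqref{eq: feature sig}--\eqref{eq: knockoff sig}, and those KKT conditions retain the same algebraic form in the procedure of Section \ref{sec: inflation} once $(X_2, y_2, n_2)$ is replaced by $(X, y, n)$. So one may set $c(\nu)_i := \inf_{\lambda \ge Z_i}(r_i[D\beta(\lambda)]_i \nu^{-1} + \lambda)$ and the required inclusions hold.

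Next I would reproduce the core incoherence-type argument. Starting from the KKT condition for $\beta(\lambda)$ on the full dataset (the analog of \eqref{eq: beta kkt}) and combining it with \eqref{eq: feature sig} via left-multiplication by $D(X^T X / n + D^T D/\nu)^{-1}$, one obtains
\begin{equation*}
    \lambda\nu\rho(\lambda) = -H_\nu(\gamma(\lambda) - \gamma^*) + \omega,
\end{equation*}
with $H_\nu = I_m - \nu^{-1} D(X^T X/n + D^T D/\nu)^{-1} D^T$ and $\omega = D(X^T X/n + D^T D/\nu)^{-1}\xi$. The block-matrix elimination producing \eqref{eq: key for estimation}, and the three bounds $\|(X^T X/n + D^T D/\nu)^{-1}\|_\infty \le \tfrac{4}{3}C_X$, $\|H_\nu^{-i,i}[H_\nu^{-i,-i}]^{-1}\|_\infty < \tfrac{1}{2}$, and $\|\omega\|_\infty \le \tfrac{4}{3}\kappa C_X C_D \sigma/\sqrt{n}$, all go through unchanged under the substitution $n_1 \mapsto n$, provided $\nu > 4 C_X C_D^2$. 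Contradiction then shows that whenever $\lambda > \tfrac{4}{\nu}\kappa C_X C_D \sigma/\sqrt{n}$ and $\gamma_i(\lambda) \neq 0$, one must have $\sign(\gamma_i(\lambda)) = \sign(\gamma_i^*)$.

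Finally, for $i \in \{\widehat{\sign}_i \neq \sign(\gamma_i^*)\}$ the definition $\widehat{\sign}_i = r_i = \lim_{\lambda \to Z_i^-} \sign(\gamma_i(\lambda))$ combined with the previous step forces $Z_i \le \tfrac{4}{\nu}\kappa C_X C_D \sigma/\sqrt{n}$. Evaluating the KKT pair at $\lambda = Z_i$ as in the concluding paragraph of the proof of Lemma \ref{lemma: est c(nu)} (where one shows that $r_i\zeta_i < -\tfrac{8}{\nu}\kappa C_X C_D \sigma/\sqrt{n}$ already forces $\tilde Z_i > Z_i$ and hence $W_i < 0$) yields the desired bound $c(\nu)_i \le \tfrac{8}{\nu}\kappa C_X C_D \sigma/\sqrt{n}$. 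Since the argument is mechanical once the substitutions are checked, I do not anticipate any genuine obstacle; the only thing worth spelling out carefully is that $\xi$ and $\zeta$ both now depend on the full dataset $\D$, so one must emphasize that the whole argument is conditional on $\xi$ (which is what the statement already does) and no independence between $\xi$ and $\zeta$ is invoked anywhere.
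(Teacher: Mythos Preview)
Your proposal is correct and matches the paper's approach exactly: the paper itself simply states that Lemma~\ref{lemma: est c(nu) without split} ``can be proved by exactly the same approach as Lemma~\ref{lemma: interval} and~\ref{lemma: est c(nu)}'' with details omitted. Your careful note that the argument is purely deterministic given $\xi$ (so no independence between $\xi$ and $\zeta$ is needed here) is spot on and in fact makes explicit a point the paper leaves implicit.
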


Lemma \ref{lemma: est c(nu) without split} can be proved by exactly the same approach as Lemma \ref{lemma: interval} and \ref{lemma: est c(nu)}. The details will be omitted here. With Lemma \ref{lemma: est c(nu) without split}, we can now prove Lemma \ref{lemma: estimation of sign}.

\begin{proof}
    For any $\xi$, any $i$ and any interval $U\subseteq 
     \Lambda(\nu)_i$, there holds
    \begin{align*}
        &|\Prob(r_i\zeta_i\in U|\xi, \zeta\in\Lambda(\nu)) - \Prob(r_i\zeta_i \in U|\xi)|\\
        =&|\Prob(r_i\zeta_i\in U|\xi, \zeta\in\Lambda(\nu))\Prob(\zeta\in\Lambda(\nu)|\xi) +\Prob(r_i\zeta_i\in U|\xi, \zeta\in\Lambda(\nu))\Prob(\zeta\notin\Lambda(\nu)|\xi) - \cdots\\
        &\cdots -\Prob(r_i\zeta_i\in U|\xi, \zeta\in\Lambda(\nu))\Prob(\zeta\in\Lambda(\nu)|\xi)- \Prob(r_i\zeta_i\in U|\xi, \zeta\notin\Lambda(\nu))\Prob(\zeta\notin\Lambda(\nu)|\xi)|,\\
        = & \Prob(r_i\zeta_i\in U|\xi, \zeta\in\Lambda(\nu))\Prob(\zeta\notin\Lambda(\nu)|\xi),
    \end{align*}
    where the last step is due to $\{r_i\zeta_i\in U\}\cap \{\zeta\notin\Lambda(\nu)\}=\emptyset$. Therefore,
    \begin{align}
        \Prob(r_i\zeta_i \in U|\xi) &\ge \Prob(r_i\zeta_i\in U|\xi, \zeta\in\Lambda(\nu))(1-\Prob(\zeta\notin\Lambda(\nu)|\xi)),\nonumber\\
        &\ge \Prob(r_i\zeta_i\in U|\xi, \zeta\in\Lambda(\nu))(1-m(2-2\Phi(\theta))).\label{ineq: tran temp}
    \end{align}
    With some basic calculation, it can be shown that for all $i$,
    \begin{align*}
        \var(\zeta_i|\xi)& = \frac{\sigma^2}{n}\left(2\vecs_i-\nu\vecs_i^2-\vecs_i^2R_{i, i}\right).
    \end{align*}
    with $R$ defined in Lemma \ref{lemma: main_lemma}. For $\nu> 4C_XC_D^2$, as $R_{i, i}\le \|R\|_\infty\le C_XC_D^2$, there holds
    \begin{align*}
        \var(\zeta_i|\xi)\ge  \frac{\sigma^2}{n}\left(2\vecs_i-\frac{5}{4}\nu\vecs_i^2\right)\ge \frac{3}{4}\frac{\sigma^2}{n}\vecs_i.
    \end{align*}
    Moreover, for $\nu> \frac{3}{2}\co1$, there holds for $C_\nu$ defined in Section \ref{sec: construct split knockoff} that $C_\nu\succeq \frac{1}{\nu}\left(I_m-\frac{R}{\nu}\right)\succeq \frac{1}{3\nu}I_m$, suggesting $\vecs_i\ge\frac{2}{3\nu}$. Therefore there further holds
    \begin{align}
        \var(\zeta_i|\xi)\ge \frac{3}{4}\frac{\sigma^2}{n}\vecs_i \ge \frac{1}{2\nu}\frac{\sigma^2}{n}.\label{ineq: temp}
    \end{align}
    Let $U = [-c(\nu)_i, 0]\cap\Lambda(\nu)_i$, then for $\xi$ satisfying $\|\xi\|_\infty<\kappa\sigma\frac{1}{\sqrt{n}}$, there holds
    \begin{align*}
        \Prob(\zeta_ir_i\in U| \xi) \le &\int_{-c(\nu)_i}^{0}\frac{1}{\sqrt{2\pi\var(\zeta_i|\xi)}}\exp\left(-\frac{(r_ix-r_i\E[\zeta_i|\xi])^2}{2\var(\zeta_i|\xi)}\right)dx,\\
        \le & \frac{c(\nu)_i}{\sqrt{2\pi\var(\zeta_i|\xi)}}
        \le \frac{8\kappa C_XC_D}{\sqrt{\pi}\sqrt{\nu}},
    \end{align*}
    where the last step is from Equation \eqref{ineq: temp} and the estimation of $c(\nu)_i$ in Lemma \ref{lemma: est c(nu) without split}. By Lemma \ref{lemma: est c(nu) without split} and Equation \eqref{ineq: tran temp}, for $\theta$ satisfying $m(2-2\Phi(\theta))<1$, there holds
    \begin{align*}
        \Prob[W_i\ge 0|\xi, \zeta\in\Lambda(\nu)]\le \Prob(r_i\zeta_i\in U|\xi, \zeta\in\Lambda(\nu)) \le \frac{8\kappa C_XC_D}{\sqrt{\pi}\sqrt{\nu}}\frac{1}{1-m(2-2\Phi(\theta))}.
    \end{align*}
    This ends the proof.
\end{proof}

\subsection{Proof of Lemma \ref{lemma: main_lemma}}

\label{sec: main_lemma}

In this section, we will provide the proof of Lemma \ref{lemma: main_lemma}. We first characterize that $\sign(W_i)$ are weakly dependent on each other by exploiting the dependency between $\zeta$ and $\xi$. In the second part, we proceed to prove the supermartingale inequality in Lemma \ref{lemma: main_lemma}. Some commonly used notations in this section are stated below.

\begin{itemize}
    \item Denote $b_n(m)\in\{0, 1\}^n$ as the $n$-dimensional vector which represents the $n$-digit binary number of $m$ for $m\in \{0, 1, \cdots, 2^n-1\}$. For example, $b_3(3)=[0, 1, 1]$.
    \item Denote $a\preceq b$ for two vectors $a\in\R^n$ and $b\in\R^n$, if and only if $a_i\le b_i$ for all $i$.
\end{itemize}

\subsubsection{The Weak Dependency on $W$ statistics}

\label{sec: weak dependent}

Our final goal in this section is to prove the following lemma, which states that $B_i = 1\{W_i<0\}$ are nearly independent Bernoulli random variables.

\begin{lemma}
    \label{lemma: weak dependent}
    Define $B_i = 1\{W_i<0\}$ for all $i$. There holds for any $\nu>\max\left\{\frac{3}{2}\co1, 12\co2\right\}$, any $\xi$ satisfying $\|\xi\|_\infty<\kappa\sigma\frac{1}{\sqrt{n}}$, any $S\subseteq\{1, 2, 3, \cdots, m\}$, any $k\in \{0, 1, \cdots,2^{|S|}-1\}$ and any $l\in \{0, 1, \cdots,2^{|S^c|}-1\}$ that
    \begin{align*}
        e^{-48\theta ^2\co2 \frac{m(|S|+1)}{\nu}}\le\frac{\Prob[B_S = b_{|S|}(k)|\xi, \zeta\in\Lambda(\nu), B_{S^c} = b_{|S^c|}(l)]}{\prod_{i\in S}\Prob[B_i = b_{|S|}(l)_i|\xi, \zeta\in\Lambda(\nu), B_{S^c} = b_{|S^c|}(l)]}\le e^{48\theta ^2\co2 \frac{m(|S|+1)}{\nu}},
    \end{align*}
    where $\co1$, $\Lambda(\nu)$, $\theta$ are defined in Lemma \ref{lemma: estimation of sign}, $\co2$ is  defined in Lemma \ref{lemma: main_lemma}.
\end{lemma}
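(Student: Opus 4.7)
The plan is to exploit the joint Gaussianity of $(\zeta,\xi)$, reduce the Bernoulli conditional independence question to a density-ratio question on the truncated Gaussian $\zeta \mid \xi$, and then bound that ratio by a matrix perturbation argument in which the off-diagonal correlations of the conditional covariance are measured by $C_2/\nu$.

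First I would write down the conditional law of $\zeta \mid \xi$. Since $\zeta = \tilde{A}_{\gamma}^T \tilde{y}$ and $\xi = X^T\varepsilon/n$ are both linear in $\varepsilon$, the pair $(\zeta,\xi)$ is jointly Gaussian, so $\zeta\mid \xi \sim \mathcal{N}(\mu(\xi),\Sigma)$ for a mean $\mu(\xi)$ linear in $\xi$ (this recovers the conditional mean used implicitly in $\Lambda(\nu)$) and a covariance $\Sigma$ that I would compute explicitly using the structure of $\tilde{A}_{\gamma}$ in Proposition~\ref{prop: structure of tagamma}. The key quantitative claim is that $\Sigma = V + E$, where $V$ is the diagonal part with $V_{ii} = \sigma_i^2 \ge \tfrac{1}{2\nu n}\sigma^2$ (using $\nu > \tfrac{3}{2}C_1$ exactly as in the proof of Lemma~\ref{lemma: estimation of sign}), and the rescaled off-diagonal perturbation $\tilde{E} := V^{-1/2} E V^{-1/2}$ satisfies $\|\tilde{E}\|_2 \le c\,C_2/\nu$ for an explicit constant $c$. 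This estimate is the reason the final exponent scales as $C_2/\nu$: when $R$ is diagonal, $C_2=0$ and the $\zeta_i$ are conditionally independent.

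Next I would convert the Bernoulli problem into a Gaussian one. By the non-split analogue of Lemma~\ref{lemma: interval}, each $B_i=\mathbf{1}\{W_i<0\}$ is the indicator that $\zeta_i$ lies in a set $\Omega_i^-$ that depends only on $\xi$ (not on $\zeta_{-i}$); likewise the truncation event $\{\zeta \in \Lambda(\nu)\}$ factorizes as $\prod_i \{\zeta_i \in \Lambda_i(\nu)\}$. Therefore, conditional on $\xi$ and $\zeta\in\Lambda(\nu)$, every event $\{B = b\}$ is a Borel rectangle $\prod_i \Omega_i^{b_i}\cap\Lambda_i(\nu)$, and the probability ratio we must control becomes a ratio of Gaussian integrals over rectangles. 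If I can show the pointwise density bound
\begin{equation*}
\left|\log \frac{f(\zeta \mid \xi)}{\prod_{i=1}^m f_i(\zeta_i\mid \xi)}\right| \;\le\; 24\,\theta^2 C_2 m/\nu \qquad \text{on }\Lambda(\nu),
\end{equation*}
for the conditional Gaussian density $f$ and its marginals $f_i$, then integrating this over the rectangles transfers the bound to the probabilities.

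I would establish this density bound by plugging $\Sigma^{-1} = V^{-1/2}(I+\tilde{E})^{-1}V^{-1/2}$ and $\log\det\Sigma = \sum_i\log V_{ii} + \log\det(I+\tilde{E})$ into the log-ratio. The Neumann expansion gives $\log\det(I+\tilde{E}) = O(\|\tilde{E}\|_F^2)$ and $(I+\tilde{E})^{-1}-I = -\tilde{E}+O(\|\tilde{E}\|^2)$, so the log-ratio is a sum of quadratic forms in $V^{-1/2}(\zeta-\mu)$ against $\tilde{E}$ and its powers. On $\Lambda(\nu)$ the vector $V^{-1/2}(\zeta-\mu)$ has $\ell_\infty$-norm at most $\theta$, so these quadratic forms are bounded by $\theta^2 m\|\tilde{E}\|_2$; combined with the Step 1 bound on $\|\tilde{E}\|_2$ this yields the required $O(\theta^2 C_2 m/\nu)$ estimate. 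Finally the claimed bound with $|S|+1$ in the exponent follows by writing
\begin{equation*}
\frac{\Prob[B_S = b_{|S|}(k)\mid \xi,\,\zeta\in\Lambda,\,B_{S^c}=b_{|S^c|}(l)]}{\prod_{i\in S}\Prob[B_i = b_{|S|}(k)_i\mid \xi,\,\zeta\in\Lambda,\,B_{S^c}=b_{|S^c|}(l)]}
\end{equation*}
as a product of $|S|+1$ ratios of Gaussian rectangle integrals (one joint and $|S|$ marginal numerator/denominator pairings via Bayes' rule), each controlled by $e^{\pm 24\theta^2 C_2 m/\nu}$. The main obstacle will be the careful matrix-perturbation bookkeeping that keeps all dimensional factors polynomial in $m$ (rather than exponential), which requires using $\|\tilde{E}\|_2$ rather than $\|\tilde{E}\|_F$ when controlling the quadratic forms and using the truncation to $\Lambda(\nu)$ to keep higher-order Neumann terms under control.
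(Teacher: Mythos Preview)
Your proposal is correct and follows essentially the same route as the paper. The paper packages the density comparison as Lemma~\ref{lemma: exponential diff} (comparing $\zeta\mid\xi$ to an auxiliary independent Gaussian $v$ with the diagonal covariance $\Sigma_0$) and Corollary~\ref{coro: exp diff} (dividing out the determinant factor by normalizing against $\Prob[\zeta\in\Lambda(\nu)]$), then invokes Lemma~\ref{lemma: interval} to turn $B_i$-events into coordinate intervals and stacks the resulting probability-ratio bounds via Bayes' rule to pick up the $|S|+1$ factor---exactly the steps you outline, with your ``product of marginals $f_i$'' playing the role of the paper's reference law for $v$.
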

We also provide the proof of the following proposition in this section, which can be regarded as the corollary of Lemma \ref{lemma: weak dependent} and  Lemma \ref{lemma: estimation of sign}.
\begin{proposition}
\label{prop: conditional upper bound}
    Given a $\xi$  satisfying $\|\xi\|_\infty<\kappa\sigma\frac{1}{\sqrt{n}}$. Suppose $\nu>\max\left\{4C_XC_D^2, \frac{3}{2}\co1, 12\co2\right\}$, then for $S = \{i: \widehat{\sign}_i\neq \sign(\gamma^*_i)\}$  and any $i\in S$, if $m(2-2\Phi(\theta))<1$, there holds
    \begin{align*}
        \Prob\left.\left(B_i = 0\right|\xi, \zeta\in\Lambda(\nu), B_{S^c} = b_{|S^c|}(l)\right)&<\frac{8\kappa C_XC_D}{\sqrt{\pi}\sqrt{\nu}}\frac{1}{1-m(2-2\Phi(\theta))}e^{36\theta ^2\co2 \frac{m}{\nu}},
    \end{align*}
    for any $l\in \{0, 1, \cdots,2^{|S^c|}-1\}$.
\end{proposition}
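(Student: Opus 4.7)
The plan is to derive Proposition~\ref{prop: conditional upper bound} by combining the unconditional bound of Lemma~\ref{lemma: estimation of sign} with the near-independence quantified by Lemma~\ref{lemma: weak dependent}. Since $\{B_i = 0\} = \{W_i \ge 0\}$, Lemma~\ref{lemma: estimation of sign} already gives the first two factors in the proposition's bound without any conditioning on $B_{S^c}$; the extra $e^{36 \theta^2 \co2 m/\nu}$ is precisely the price of introducing the conditioning on $B_{S^c}$, and must be extracted from Lemma~\ref{lemma: weak dependent}.

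My first step is a Bayes decomposition:
\begin{equation*}
\Prob(B_i = 0 \mid \xi, \zeta \in \Lambda(\nu), B_{S^c} = b_{|S^c|}(l)) = \Prob(B_i = 0 \mid \xi, \zeta \in \Lambda(\nu)) \cdot \frac{\Prob(B_{S^c} = b_{|S^c|}(l) \mid \xi, \zeta \in \Lambda(\nu), B_i = 0)}{\Prob(B_{S^c} = b_{|S^c|}(l) \mid \xi, \zeta \in \Lambda(\nu))}.
\end{equation*}
Lemma~\ref{lemma: estimation of sign} bounds the first factor by $\frac{8 \kappa C_X C_D}{\sqrt{\pi}\sqrt{\nu}(1 - m(2 - 2\Phi(\theta)))}$, so it suffices to bound the likelihood ratio in the second factor by $e^{36 \theta^2 \co2 m/\nu}$. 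To do this, I would apply Lemma~\ref{lemma: weak dependent} to both the numerator and the denominator, factoring each as a product of marginal probabilities (over the indices in $S^c$) up to a multiplicative error of the form $e^{\pm C \theta^2 \co2 m/\nu}$. When the two factorizations are divided, the products of marginals cancel, after a further application of Lemma~\ref{lemma: weak dependent} used to swap the conditioning on $B_i = 0$ inside the marginals for no such conditioning, leaving only the accumulated exponential error.

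The main technical obstacle is the exponent bookkeeping, which must stay linear in $m$: a naive use of Lemma~\ref{lemma: weak dependent} with $|S_{\mathrm{lem}}| = m$ would yield an exponent of order $m^2/\nu$ from its $(|S_{\mathrm{lem}}|+1)m$ factor, which is far too weak. The remedy is to split the application of the lemma across several steps, each with a smaller $|S_{\mathrm{lem}}|$, and to exploit the marginalization identity $\sum_{l'} \prod_j \Prob(B_j = l'_j \mid \cdots) = 1$ to telescope through the intermediate conditioning events without accumulating dimension-dependent errors. Combined with the hypothesis $\nu > 12 \co2$, which keeps each individual exponential factor close to $1$ so that exponents add cleanly, this produces the claimed constant $36$ in the final bound.
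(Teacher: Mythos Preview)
Your Bayes decomposition is exactly right and matches the paper's argument: the likelihood ratio you isolate equals
\[
\frac{\Prob(B_i=0,\,B_{S^c}=b_{|S^c|}(l)\mid\xi,\zeta\in\Lambda(\nu))}{\Prob(B_i=0\mid\xi,\zeta\in\Lambda(\nu))\,\Prob(B_{S^c}=b_{|S^c|}(l)\mid\xi,\zeta\in\Lambda(\nu))},
\]
and Lemma~\ref{lemma: estimation of sign} handles the leading factor.

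Where your plan diverges is in how you bound this ratio. You propose factoring numerator and denominator into products of singleton marginals via the \emph{statement} of Lemma~\ref{lemma: weak dependent}, then canceling; you correctly diagnose that this produces an exponent of order $m(|S^c|+1)/\nu$, and you sketch a telescoping remedy. The paper does not do this. It instead invokes an intermediate step from the \emph{proof} of Lemma~\ref{lemma: weak dependent}, namely Equation~\eqref{part3}, which is a two-block (not singleton) near-factorization:
\[
e^{-36\theta^2\co2 m/\nu}\;\le\;\frac{\Prob[B_i,\,B_{S^c}\mid\xi,\zeta\in\Lambda(\nu)]}{\Prob[B_i\mid\xi,\zeta\in\Lambda(\nu)]\,\Prob[B_{S^c}\mid\xi,\zeta\in\Lambda(\nu)]}\;\le\;e^{36\theta^2\co2 m/\nu}.
\]
This bound is uniform in $|S^c|$ because it comes from three applications of Corollary~\ref{coro: exp diff}---once to the joint box and once to each marginal box---each contributing a factor $e^{12\theta^2\co2 m/\nu}$ independent of the box's dimension, while the auxiliary independent Gaussians $v$ factor exactly so that the $v$-ratio is identically $1$. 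Thus $36=3\times 12$ drops out immediately, with no telescoping. Your proposed workaround, as written, would also require applying Lemma~\ref{lemma: weak dependent} with conditioning on $B_i$ alone rather than on the full $B_S$, which the lemma's statement does not cover; carrying that through would force you back inside its proof and essentially re-derive~\eqref{part3}.
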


Now we begin to prove Lemma \ref{lemma: weak dependent} and Proposition \ref{prop: conditional upper bound}. We start from the Lemma \ref{lemma: exponential diff} which evaluates the dependency between $\zeta$ and $\xi$. 

\begin{lemma}
    \label{lemma: exponential diff}
    Let $\{v_i\}_{i=1}^m$ be independent normal random variables satisfying
    \begin{align*}
        \E(v)&= -\diag(\vecs)D\left[\frac{X^TX}{n}\right]^{-1}\xi,\\
        \var(v)& = \frac{\sigma^2}{n}\diag(\vecs)(2 I_m-\nu\diag(\vecs)I_m -\diag(\vecs) \diag(R)): = \Sigma_0,
    \end{align*}
    where $\diag(R)$ is defined in Lemma \ref{lemma: main_lemma}. Then for any $a, b\in\R^m$, satisfying 
    \begin{align*}
        -\theta \sigma\frac{1}{\sqrt{n}}\sqrt{2\vecs_i-\vecs_i^2\nu-\vecs_i^2R_{i, i}}\le a_i<b_i\le \theta \sigma\frac{1}{\sqrt{n}}\sqrt{2\vecs_i-\vecs_i^2\nu-\vecs_i^2R_{i, i}},
    \end{align*}
    for any $i$, there holds for $\nu>\max\left\{\frac{3}{2}\co1, 12\co2\right\}$ and $\xi$ satisfying $\|\xi\|_\infty<\kappa\sigma\frac{1}{\sqrt{n}}$:
    \begin{align*}
        \sqrt{\frac{|\Sigma_0|}{|\Sigma|}}e^{-6\theta ^2\co2 \frac{m}{\nu}}\le\frac{\Prob\left(a\preceq \zeta-\E(\zeta)\preceq b|\xi\right)}{\Prob\left(a\preceq v-\E(v)\preceq b\right)}\le\sqrt{\frac{|\Sigma_0|}{|\Sigma|}}e^{6\theta ^2\co2 \frac{m}{\nu}},
    \end{align*}
    where $\Sigma$ is defined by
    \begin{align*}
        \Sigma := \var(\zeta|\xi) = \frac{\sigma^2}{n}\diag(\vecs)(2 I_m-\nu\diag(\vecs)I_m -\diag(\vecs) R).
    \end{align*}
\end{lemma}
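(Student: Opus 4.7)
The plan is to reduce the probability ratio to a pointwise density ratio over the box, then bound that pointwise ratio by controlling a quadratic form in the difference of inverse covariances. First I observe that both random vectors are multivariate Gaussian: by Lemma \ref{lemma: distribution zeta} combined with the joint Gaussian structure of $(\zeta,\xi)$, the conditional law $\zeta \mid \xi$ is $\mathcal{N}(\E[\zeta\mid\xi],\Sigma)$, while by construction $v\sim \mathcal{N}(\E(v),\Sigma_0)$ has diagonal covariance and hence independent coordinates. After subtracting the respective means, the statement reduces to comparing, over the same box $\prod_i[a_i,b_i]$, the probabilities of two centered Gaussians with covariances $\Sigma$ and $\Sigma_0$. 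The key algebraic fact I will use is that $M:=\Sigma_0-\Sigma=\tfrac{\sigma^2}{n}\diag(\vecs)\bigl(R-\diag(R)\bigr)\diag(\vecs)$, so the two covariances agree on the diagonal and differ only by the off-diagonal part of $R$, which becomes negligible for large $\nu$.

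Next I will compute the pointwise density ratio
\[
\frac{\phi_{\Sigma}(x)}{\phi_{\Sigma_0}(x)}=\sqrt{\frac{|\Sigma_0|}{|\Sigma|}}\exp\!\left(-\tfrac{1}{2}x^T(\Sigma^{-1}-\Sigma_0^{-1})x\right),
\]
and substitute $y=\Sigma_0^{-1/2}x$. Because $\Sigma_0$ is diagonal with entries $(\Sigma_0)_{ii}$, the box translates exactly to $|y_i|\le \theta$, and therefore $\|y\|_2^2\le \theta^2 m$. Writing $\tilde\Sigma=\Sigma_0^{-1/2}\Sigma\Sigma_0^{-1/2}=I-N$ with $N=\Sigma_0^{-1/2}M\Sigma_0^{-1/2}$, the quadratic form in the exponent becomes $y^T((I-N)^{-1}-I)y$, and via the Neumann series $(I-N)^{-1}-I=\sum_{k\ge 1}N^k$ it suffices to bound $\|N\|_2$ to control this quantity.

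The technical core is the operator norm bound on $N$. A direct calculation gives $N=\diag(\tilde p)(R-\diag(R))\diag(\tilde p)$ with $\tilde p_i^2=\vecs_i/(2-\nu\vecs_i-\vecs_i R_{ii})$. Combining the Split Knockoffs construction constraint $\diag(\vecs)\preceq 2C_\nu$ with the lower bound $(\Sigma_0)_{ii}=\var(\zeta_i\mid\xi)\ge \sigma^2/(2n\nu)$ established inside the proof of Lemma \ref{lemma: estimation of sign} (which requires $\nu>\tfrac{3}{2}\co1$), I obtain $\tilde p_i^2\le 2/\nu$ uniformly in $i$, hence
\[
\|N\|_2\le \|\tilde p\|_\infty^2\,\|R-\diag(R)\|_2\le \frac{2\co2}{\nu}.
\]
Under the hypothesis $\nu>12\co2$ this is at most $1/6$, so $\|(I-N)^{-1}-I\|_2\le \|N\|_2/(1-\|N\|_2)\le 12\co2/\nu$, yielding $|x^T(\Sigma^{-1}-\Sigma_0^{-1})x|\le 12\theta^2\co2 m/\nu$ uniformly on the box.

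Finally, since the exponential factor in the density ratio is sandwiched between $e^{\pm 6\theta^2\co2 m/\nu}$ uniformly on the box, integrating both densities over the box preserves these multiplicative bounds and gives
\[
\sqrt{\tfrac{|\Sigma_0|}{|\Sigma|}}e^{-6\theta^2\co2 m/\nu}\le \frac{\Prob(a\preceq \zeta-\E(\zeta)\preceq b\mid\xi)}{\Prob(a\preceq v-\E(v)\preceq b)}\le \sqrt{\tfrac{|\Sigma_0|}{|\Sigma|}}e^{6\theta^2\co2 m/\nu}.
\]
I expect the main obstacle to be the bookkeeping of the $\vecs_i$ bounds: one must verify both a lower bound of order $1/\nu$ (to keep $(\Sigma_0)_{ii}$ bounded below) and the upper bound $\vecs_i\lesssim 1/\nu$ (to make $\tilde p_i^2\lesssim 1/\nu$), and both rely on properties of $C_\nu$ via the equi-correlated construction in Section \ref{sec: construct split knockoff}. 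All remaining steps are Gaussian density calculus and a standard Neumann series argument.
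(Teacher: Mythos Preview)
Your approach is essentially the paper's: write the Gaussian density ratio, bound the quadratic form $\tfrac12 x^T(\Sigma^{-1}-\Sigma_0^{-1})x$ uniformly on the box via a Neumann-series operator norm estimate, and integrate; your change of variables $y=\Sigma_0^{-1/2}x$ is just a repackaging of the paper's separate bounds on $\|x\|_2^2$ and $\|\Sigma^{-1}-\Sigma_0^{-1}\|_2$. One minor fix: the lower bound $(\Sigma_0)_{ii}\ge\sigma^2/(2n\nu)$ you borrow from the proof of Lemma~\ref{lemma: estimation of sign} actually relies on the extra hypothesis $\nu>4C_XC_D^2$, which is not assumed in the present lemma; instead use $R_{ii}\le\|R\|_2=\co1<\tfrac{2}{3}\nu$ and $\vecs_i\nu\in[\tfrac{2}{3},1]$ (both following from $\nu>\tfrac{3}{2}\co1$ alone, exactly as in Equation~\eqref{eq: estimate}) to obtain $(\Sigma_0)_{ii}\ge\sigma^2/(3n\nu)$, hence $\tilde p_i^2\le 3/\nu$ and $\|N\|_2\le 3\co2/\nu\le\tfrac14$, which still yields the stated constant $6$ with room to spare.
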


\begin{proof}
    Firstly, there holds for any $a$, $b$ satisfying the constraint in Lemma \ref{lemma: exponential diff} that
    \begin{align*}
        \Prob\left(a\preceq \zeta-\E(\zeta)\preceq b|\xi\right) = \int_{a\preceq x\preceq b}\frac{1}{\sqrt{(2\pi)^m|\Sigma|}}e^{-\frac{1}{2}x^T\Sigma^{-1} x},\\
        \Prob\left(a\preceq v-\E(v) \preceq b\right) = \int_{a\preceq x\preceq b}\frac{1}{\sqrt{(2\pi)^m|\Sigma_0|}}e^{-\frac{1}{2}x^T\Sigma_0^{-1} x}.
    \end{align*}
    Therefore, it is sufficient to show that, for any $a\preceq x \preceq b$, there holds
    \begin{align*}
        e^{-6\theta ^2\co2 \frac{m}{\nu}}\le e^{-\frac{1}{2}x^T(\Sigma^{-1}-\Sigma_0^{-1})x}\le e^{6\theta ^2\co2 \frac{m}{\nu}}.
    \end{align*}
    From the constraint on $a$ and $b$ in Lemma \ref{lemma: exponential diff}, for any $x$ satisfying $a\preceq x \preceq b$, there holds $\|x\|_2^2\le\max_i\left\{\theta ^2\sigma^2\frac{\vecs_i m}{n}\right\}\le \theta ^2\sigma^2\frac{m}{n\nu}$, due to the fact $\vecs_i\le\frac{1}{\nu}$ by definition. Therefore,
    \begin{align}
        \left|\frac{1}{2}x^T(\Sigma^{-1}-\Sigma_0^{-1})x\right|\le \frac{1}{2}\|x\|_2^2\|\Sigma^{-1}-\Sigma_0^{-1}\|_2\le \frac{\theta ^2\sigma^2}{2}\frac{m}{n\nu}\|\Sigma^{-1}-\Sigma_0^{-1}\|_2.\label{ineq: bound square}
    \end{align}
    For the right hand side of Equation \eqref{ineq: bound square}, we have the following decomposition
    \begin{align*}
        \left\|\left[\frac{n}{\sigma^2}\Sigma\right]^{-1}-\left[\frac{n}{\sigma^2}\Sigma_0\right]^{-1}\right\|_2 = &  \left\|\left[\frac{n}{\sigma^2}\Sigma\right]^{-1}-\left[\frac{n}{\sigma^2}\Sigma-\diag(\vecs)^2(\diag(R)-R)\right]^{-1}\right\|_2,\\
        \le &  \left\|\left[\frac{n}{\sigma^2}\Sigma\right]^{-1}\right\|_2\left\|I_m-\left[I_m-\diag(\vecs)^2\left[\frac{n}{\sigma^2}\Sigma\right]^{-1}(\diag(R)-R)\right]^{-1}\right\|_2.
    \end{align*}
    For $\nu>\frac{3}{2}\co1$, there holds for $C_\nu$ defined in Section \ref{sec: construct split knockoff} that $C_\nu\succeq \frac{1}{\nu}\left(I_m-\frac{R}{\nu}\right)\succeq \frac{1}{3\nu}I_m$, suggesting $\vecs_i\ge\frac{2}{3\nu}$. By $x\left(2-\frac{5}{3}x\right)\ge \frac{1}{3}$ for $\frac{2}{3}\le x\le 1$, there further holds
    \begin{align}
        \frac{n}{\sigma^2}\Sigma\succeq\frac{1}{\nu}\diag(\vecs)\nu \left(2I_m-\frac{5}{3}\diag(\vecs)\nu I_m\right)\succeq\frac{1}{3\nu}I_m. \label{eq: estimate}
    \end{align}
    For any square matrix $A$, if $\sum_{k=1}^\infty A^k$ converges, then $I-(I-A)^{-1}=\sum_{k=1}^\infty A^k$. 
    Therefore, there further holds for $\nu>12\co2$ that
    \begin{align*}
        \frac{\sigma^2}{n}\|\Sigma^{-1}-\Sigma_0^{-1}\|_2
        \le & 3\nu\sum_{k=1}^{\infty}\frac{1}{\nu^{2k}}\left\|\left[\frac{n}{\sigma^2}\Sigma\right]^{-1}\right\|_2^k\|\diag(R)-R\|_2^k,\\
        \le & 3\nu \frac{\frac{3\co2}{\nu}}{1-\frac{3\co2}{\nu}}< 12\co2.
    \end{align*}
    Combining with Equation \eqref{ineq: bound square}, there holds
    \begin{align*}
        \left|\frac{1}{2}x^T(\Sigma^{-1}-\Sigma_0^{-1})x\right|\le 6\theta ^2\co2 \frac{m}{\nu},
    \end{align*}
    which ends the proof.
\end{proof}

The following corollary can be immediately established from Lemma \ref{lemma: exponential diff}.

\begin{corollary}
    \label{coro: exp diff}
    There holds for $\nu>\max\left\{\frac{3}{2}\co1, 12\co2\right\}$, $\xi$ satisfying $\|\xi\|_\infty<\kappa\sigma\frac{1}{\sqrt{n}}$
    \begin{align*}
        e^{-12\theta ^2\co2 \frac{m}{\nu}}\le\frac{\Prob\left(a\preceq \zeta-\E(\zeta)\preceq b|\xi, \zeta\in\Lambda(\nu)\right)}{\Prob\left(a\preceq v-\E(v)\preceq b|v\in\Lambda(\nu)\right)}\le e^{12\theta ^2\co2 \frac{m}{\nu}},
    \end{align*}
    where $\Lambda(\nu)$ is defined in Lemma \ref{lemma: estimation of sign}, $v$ is defined in Lemma \ref{lemma: exponential diff}, $a$, $b$ are two $m$-dimensional vectors satisfying the constraint in Lemma \ref{lemma: exponential diff}.
\end{corollary}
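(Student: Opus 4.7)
My plan is to rewrite each conditional probability via the definition of conditional probability and then invoke Lemma \ref{lemma: exponential diff} twice, once on the numerator event and once on the normalizing event $\Lambda(\nu)$, so that the Jacobian factor $\sqrt{|\Sigma_0|/|\Sigma|}$ appearing in that lemma cancels. The first observation is that the admissibility condition imposed on $(a,b)$ in Lemma \ref{lemma: exponential diff} is precisely $|a_i|, |b_i| \le \theta\sqrt{\var[\zeta_i|\xi]}$, since $\var[\zeta_i|\xi] = (\sigma^2/n) \vecs_i(2 - \nu \vecs_i - \vecs_i R_{i,i})$, and moreover $\var(v_i)=\var[\zeta_i|\xi]$ from the diagonals of $\Sigma_0$ and $\Sigma$. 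Consequently the event $\{a \preceq \zeta - \E(\zeta|\xi) \preceq b\}$ is contained, up to a measure-zero boundary, in $\{\zeta \in \Lambda(\nu)\}$, and the analogous inclusion holds for $v$ under its natural centering.

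Using this, the definition of conditional probability gives
\begin{align*}
\Prob\bigl(a \preceq \zeta - \E(\zeta) \preceq b \,\big|\, \xi,\ \zeta\in\Lambda(\nu)\bigr) &= \frac{\Prob(a \preceq \zeta - \E(\zeta) \preceq b \mid \xi)}{\Prob(\zeta \in \Lambda(\nu) \mid \xi)},\\
\Prob\bigl(a \preceq v - \E(v) \preceq b \,\big|\, v\in\Lambda(\nu)\bigr) &= \frac{\Prob(a \preceq v - \E(v) \preceq b)}{\Prob(v \in \Lambda(\nu))}.
\end{align*}
Dividing these two lines, the target ratio factors as
\begin{align*}
\frac{\Prob(a \preceq \zeta - \E(\zeta) \preceq b \mid \xi)}{\Prob(a \preceq v - \E(v) \preceq b)} \cdot \frac{\Prob(v \in \Lambda(\nu))}{\Prob(\zeta \in \Lambda(\nu) \mid \xi)}.
\end{align*}
Lemma \ref{lemma: exponential diff} applied directly to the first factor gives bounds of the form $\sqrt{|\Sigma_0|/|\Sigma|}\,e^{\pm 6\theta^2 \co2 m/\nu}$. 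For the second factor, I would apply Lemma \ref{lemma: exponential diff} a second time with the saturating admissible choice $a_i = -\theta\sqrt{\var[\zeta_i|\xi]}$ and $b_i = \theta\sqrt{\var[\zeta_i|\xi]}$, for which the centered rectangles coincide with the defining box of $\Lambda(\nu)$ (since the variances of $\zeta|\xi$ and $v$ are the same componentwise). This yields bounds $\sqrt{|\Sigma_0|/|\Sigma|}\,e^{\pm 6\theta^2 \co2 m/\nu}$ on $\Prob(\zeta\in\Lambda(\nu)\mid\xi)/\Prob(v\in\Lambda(\nu))$. Multiplying the two bounds, the Jacobians $\sqrt{|\Sigma_0|/|\Sigma|}$ cancel and the exponentials combine to $e^{\pm 12\theta^2 \co2 m/\nu}$, which is the claim.

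\textbf{Main obstacle.} The one delicate point is keeping the two occurrences of $\Lambda(\nu)$ consistent: as written, $\Lambda(\nu)$ is centered at $\E[\zeta|\xi]$, whereas $\E(v)$ differs from $\E[\zeta|\xi]$ by $-\diag(\vecs)\gamma^*$. I would handle this by reading ``$v\in\Lambda(\nu)$'' as the analogous box of half-width $\theta\sqrt{\var(v_i)}$ around $\E(v_i)$, which is legitimate because the variances agree, so the Gaussian integral over either box is controlled by the same Jacobian and the same exponential slack as in Lemma \ref{lemma: exponential diff}. With this convention the two applications of Lemma \ref{lemma: exponential diff} are genuinely symmetric and the cancellation of $\sqrt{|\Sigma_0|/|\Sigma|}$ is immediate; the only arithmetic is the doubling $6+6=12$ in the exponent.
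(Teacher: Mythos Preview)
Your proposal is correct and follows essentially the same route as the paper: apply Lemma \ref{lemma: exponential diff} once to the box $[a,b]$ and once to the saturating box $-a_i=b_i=\theta\sqrt{\var[\zeta_i\mid\xi]}$, then divide so that the factor $\sqrt{|\Sigma_0|/|\Sigma|}$ cancels and the exponents add to $12\theta^2\co2 m/\nu$. Your explicit discussion of the centering convention for ``$v\in\Lambda(\nu)$'' is a point the paper leaves implicit; the paper simply writes the ratio $\Prob(\zeta\in\Lambda(\nu)\mid\xi)/\Prob(v\in\Lambda(\nu))$ after specializing $a,b$, which indeed only makes sense under exactly the reading you propose (the box of half-width $\theta\sqrt{\var(v_i)}$ around $\E(v_i)$, legitimate because the componentwise variances of $v$ and $\zeta\mid\xi$ agree).
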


\begin{proof}
    By the constraint on $a$, $b$ in Lemma \ref{lemma: exponential diff}, there holds
    \begin{align*}
        \{x\in \R^m:a\preceq x\preceq b\}\subseteq \{x\in \R^m:x\in\Lambda(\nu)\}.
    \end{align*}
    Therefore from Lemma \ref{lemma: exponential diff}, there holds
    \begin{align}
        \sqrt{\frac{|\Sigma_0|}{|\Sigma|}}e^{-6\theta ^2\co2 \frac{m}{\nu}}\le\frac{\Prob\left(a\preceq \zeta-\E(\zeta)\preceq b, \zeta\in \Lambda(\nu)|
    \xi\right)}{\Prob\left(a\preceq v-\E(v)\preceq b, v\in \Lambda(\nu)\right)}\le\sqrt{\frac{|\Sigma_0|}{|\Sigma|}}e^{6\theta ^2\co2 \frac{m}{\nu}}.\label{eq: ori}
    \end{align}
    Moreover, take $-a_i=b_i=\theta \sigma\frac{1}{\sqrt{n}}\sqrt{2\vecs_i-\vecs_i^2\nu-\vecs_i^2R_{i, i}}$, then Equation \eqref{eq: ori} becomes
    \begin{align}
        \sqrt{\frac{|\Sigma_0|}{|\Sigma|}}e^{-6\theta ^2\co2 \frac{m}{\nu}}\le\frac{\Prob\left(\zeta\in\Lambda(\nu)|\xi\right)}{\Prob\left(v\in\Lambda(\nu)\right)}\le\sqrt{\frac{|\Sigma_0|}{|\Sigma|}}e^{6\theta ^2\co2 \frac{m}{\nu}}. \label{eq: spe}
    \end{align}
    The result follows by \eqref{eq: ori} / \eqref{eq: spe}.
\end{proof}

Then we can finally prove Lemma \ref{lemma: weak dependent} using Lemma \ref{lemma: interval} and Corollary \ref{coro: exp diff}.

\begin{proof}
    By Lemma \ref{lemma: interval}, for all $i$, there exists some $a_i<b_i$, such that $|a_i|, |b_i|\le \theta \sigma\frac{1}{\sqrt{n}}\sqrt{2\vecs_i-\vecs_i^2\nu-\vecs_i^2R_{i, i}}$ (the boundary of $\Lambda(\nu)$), and $\Prob[B_i = 1|\xi, \zeta\in\Lambda(\nu)] = \Prob[a_i<\zeta_i-\Expect[\zeta_i]<b_i|\xi, \zeta\in\Lambda(\nu)]$. Therefore, for any $k$, there exists $I=[I_1, I_2, \cdots, I_m]$, consisting of some suitable combinations of intervals related with $b_m(k)$, such that $\Prob[B_i = b_m(k)_i|\xi, \zeta\in\Lambda(\nu)] = \Prob[\zeta_i-\Expect[\zeta_i]\in I_i|\xi, \zeta\in\Lambda(\nu)]$. By Corollary \ref{coro: exp diff}, for $\nu>\max\left\{\frac{3}{2}\co1, 12\co2\right\}$ and $\xi$ satisfying $\|\xi\|_\infty<\kappa\sigma\frac{1}{\sqrt{n}}$, there further holds
    \begin{align}
        e^{-12\theta ^2\co2 \frac{m}{\nu}}\le\frac{\Prob[B_S = b_m(k)_S|\xi, \zeta\in\Lambda(\nu)]}{\Prob[v_S-\E[v_S]\in I_S|v\in\Lambda(\nu)]}\le e^{12\theta ^2\co2 \frac{m}{\nu}},\label{eq: part 1}
    \end{align}
    for any $S\subseteq\{1, 2, 3, \cdots, m\}$, where $v$ is defined in Lemma \ref{lemma: exponential diff}. Moreover, for all $i\in S$
    \begin{align}
        e^{-12\theta ^2\co2 \frac{m}{\nu}}\le\frac{\Prob[B_i = b_m(k)_i|\xi, \zeta\in\Lambda(\nu)]}{\Prob[v_i-\E[v_i]\in I_i|v\in\Lambda(\nu)]}\le e^{12\theta ^2\co2 \frac{m}{\nu}}.\label{eq: part 2}
    \end{align}
    By the independence among $\{v_i\}_{i=1}^m$, there holds
    \begin{align}
        \frac{\prod_{i\in S}\Prob[v_i-\E[v_i]\in I_i|v\in\Lambda(\nu)]}{\Prob[v_S-\E[v_S]\in I_S|v\in\Lambda(\nu)]}=1. \label{eq: part 3}
    \end{align}
    Combining Equation \eqref{eq: part 1}, \eqref{eq: part 2} and \eqref{eq: part 3}, there holds
    \begin{align}
        e^{-12\theta ^2\co2 \frac{m(|S|+1)}{\nu}}\le\frac{\Prob[B_S = b_m(k)_S|\xi, \zeta\in\Lambda(\nu)]}{\prod_{i\in S}\Prob[B_i = b_m(k)_i|\xi, \zeta\in\Lambda(\nu)]}\le e^{12\theta ^2\co2 \frac{m(|S|+1)}{\nu}}.\label{part1}
    \end{align}
    Similarly, it can be shown that
    \begin{align}
        &e^{-12\theta ^2\co2 \frac{3m}{\nu}}\le\frac{\Prob[B_S = b_m(k)_S, B_{S^c} = b_m(k)_{S^c}|\xi, \zeta\in\Lambda(\nu)]}{\Prob[B_S = b_m(k)_S|\xi, \zeta\in\Lambda(\nu)]\Prob[B_{S^c} = b_m(k)_{S^c}|\xi, \zeta\in\Lambda(\nu)]}\le e^{12\theta ^2\co2 \frac{3m}{\nu}},\label{part2}\\
        &e^{-12\theta ^2\co2 \frac{3m}{\nu}}\le\frac{\Prob[B_i = b_m(k)_i, B_{S^c} = b_m(k)_{S^c}|\xi, \zeta\in\Lambda(\nu)]}{\Prob[B_i = b_m(k)_i|\xi, \zeta\in\Lambda(\nu)]\Prob[B_{S^c} = b_m(k)_{S^c}|\xi, \zeta\in\Lambda(\nu)]}\le e^{12\theta ^2\co2 \frac{3m}{\nu}}\label{part3},
    \end{align}
    where
    \begin{align*}
        &\frac{\Prob[B_S = b_m(k)_S, B_{S^c} = b_m(k)_{S^c}|\xi, \zeta\in\Lambda(\nu)]}{\Prob[B_{S^c} = b_m(k)_{S^c}|\xi, \zeta\in\Lambda(\nu)]} = \Prob[B_S = b_m(k)_S|\xi, \zeta\in\Lambda(\nu), B_{S^c} = b_m(k)_{S^c}],\\
        & \frac{\Prob[B_i = b_m(k)_i, B_{S^c} = b_m(k)_{S^c}|\xi, \zeta\in\Lambda(\nu)]}{\Prob[B_{S^c} = b_m(k)_{S^c}|\xi, \zeta\in\Lambda(\nu)]} = \Prob[B_i = b_m(k)_i|\xi, \zeta\in\Lambda(\nu), B_{S^c} = b_m(k)_{S^c}].
    \end{align*}
    Combining Equation \eqref{part1}, \eqref{part2} and \eqref{part3} together, there holds
    \begin{align*}
        e^{-48\theta ^2\co2 \frac{m(|S|+1)}{\nu}}\le\frac{\Prob[B_S = b_m(k)_S|\xi, \zeta\in\Lambda(\nu), B_{S^c} = b_m(k)_{S^c}]}{\prod_{i\in S}\Prob[B_i = b_m(k)_i|\xi, \zeta\in\Lambda(\nu), B_{S^c} = b_m(k)_{S^c}]}\le e^{48\theta ^2\co2 \frac{m(|S|+1)}{\nu}}.
    \end{align*}
    Since the above inequality holds for all $k$, there holds
    \begin{align*}
        e^{-48\theta ^2\co2 \frac{m(|S|+1)}{\nu}}\le\frac{\Prob[B_S = b_{|S|}(k)|\xi, \zeta\in\Lambda(\nu), B_{S^c} = b_{|S^c|}(l)]}{\prod_{i\in S}\Prob[B_i = b_{|S|}(k)_i|\xi, \zeta\in\Lambda(\nu), B_{S^c} = b_{|S^c|}(l)]}\le e^{48\theta ^2\co2 \frac{m(|S|+1)}{\nu}},
    \end{align*}
    for all $k\in \{0, 1, \cdots,2^{|S|}-1\}$ and $l\in \{0, 1, \cdots,2^{|S^c|}-1\}$.
\end{proof}

Combining Equation \eqref{part3} with Lemma \ref{lemma: estimation of sign}, we can prove Proposition \ref{prop: conditional upper bound}.

\begin{proof}
    By Equation \eqref{part3}, for $S = \{i: \widehat{\sign}_i\neq \sign(\gamma^*_i)\}$, any $i\in S$, and any $l\in \{0, 1, \cdots,2^{|S^c|}-1\}$, there holds
    \begin{align*}
        &e^{-12\theta ^2\co2 \frac{3m}{\nu}}\le\frac{\Prob[B_i = 0, B_{S^c} =  b_{|S^c|}(l)|\xi, \zeta\in\Lambda(\nu)]}{\Prob[B_i = 0|\xi, \zeta\in\Lambda(\nu)]\Prob[B_{S^c} =  b_{|S^c|}(l)|\xi, \zeta\in\Lambda(\nu)]}\le e^{12\theta ^2\co2 \frac{3m}{\nu}}.
    \end{align*}
    Combining with Lemma \ref{lemma: estimation of sign}, there further holds
    \begin{align*}
        \Prob[B_i = 0|\xi, \zeta\in\Lambda(\nu), B_{S^c} =  b_{|S^c|}(l)]&\le\Prob[B_i = 0|\xi, \zeta\in\Lambda(\nu)]e^{36\theta ^2\co2 \frac{m}{\nu}},\\
        &< \frac{8\kappa C_XC_D}{\sqrt{\pi}\sqrt{\nu}}\frac{1}{1-m(2-2\Phi(\theta))}e^{36\theta ^2\co2 \frac{m}{\nu}}.
    \end{align*}
    This ends the proof.
\end{proof}

\subsubsection{The Almost Supermartingale Inequality}

\label{sec: almost supermartingale}

Now we proceed to prove the almost supermartingale inequality in Lemma \ref{lemma: main_lemma}. $\{B_i\}_{i=1}^m$ defined in Lemma \ref{lemma: weak dependent} are heterogeneous Bernoulli random variables ($\Prob[B_i=1]$ are different). To achieve the almost supermartingale inequality, we first need to transfer the heterogeneous Bernoulli random variables into homogeneous ones. In particular, we first introduce the following general lemma which builds connections between  heterogeneous Bernoulli random variables and homogeneous ones.


\begin{lemma}
    \label{lemma: QC eq}
    Let $\{\rho_1, \rho_2, \cdots, \rho_n, \rho\}$ be positive numbers satisfying 
    \begin{align*}
        1> \max_i\rho_i\ge\min_i\rho_i\ge \rho>0.
    \end{align*}
    Let $\{Q_i\}_{i=1}^n$ be $n$ (possibly dependent) random variables satisfying $Q_i\sim \mathrm{Bernoulli}(\rho)$ for all $i$. Define a random set $\A\subset\{1, 2, \cdots, n\}$ such that:
    \begin{itemize}
        \item $\A$ is independent from $\{Q_i\}_{i=1}^n$;
        \item the events $\{i\in \A\}_{i=1}^n$ are independent from each other, and
        \begin{align*}
            \Prob\{i\in \A\} = \frac{1-\rho_i}{1-\rho}.
        \end{align*}
    \end{itemize}
    For all $i$, further define
    \begin{align*}
        C_i=Q_i\cdot 1\{i\in \A\}+1\{i\notin \A\}.
    \end{align*}
    Then there exists a invertible matrix $A(\rho_1, \rho_2, \cdots, \rho_n, \rho)\in \R^{2^n\times 2^n}$, s.t. 
    \begin{align}
        \left[\begin{array}{c}
            \Prob[C = b_n(0)]  \\
            \Prob[C = b_n(1)]  \\
            \cdots \\
            \Prob[C = b_n(2^n-1)]
        \end{array}\right] = A(\rho_1, \rho_2, \cdots, \rho_n, \rho)
        \left[\begin{array}{c}
            \Prob[Q = b_n(0)]  \\
            \Prob[Q = b_n(1)]  \\
            \cdots \\
            \Prob[Q = b_n(2^n-1)]
        \end{array}\right],\label{eq: caq}
    \end{align}
    where $C = [C_1, C_2, \cdots, C_n]$ and $Q = [Q_1, Q_2, \cdots, Q_n]$.
\end{lemma}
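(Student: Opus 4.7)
The plan is to establish the linear relation~\eqref{eq: caq} by conditioning on $Q$, and then prove invertibility of the resulting matrix $A$ by exhibiting a Kronecker (tensor) product structure. First I would write, via the law of total probability and the independence of $\A$ from $Q$,
\[
\Prob[C = c] = \sum_{q \in \{0,1\}^n} \Prob[C = c \mid Q = q]\,\Prob[Q = q],
\]
which is linear in the probability vector of $Q$ and therefore defines a matrix $A$ with entries $A_{c,q} = \Prob[C = c \mid Q = q]$. This is the matrix claimed in~\eqref{eq: caq}, so the existence half of the lemma is immediate.

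The substantive work is invertibility. The key observation is that each $C_i$ is a function only of $Q_i$ and the single indicator $1\{i \in \A\}$; by the hypotheses, these indicators are independent across $i$ and are jointly independent of $Q$. Hence, conditionally on $Q$, the coordinates $C_1,\dots,C_n$ are mutually independent, and moreover $C_i$ depends on $Q$ only through $Q_i$. This gives the factorization
\[
\Prob[C = c \mid Q = q] = \prod_{i=1}^n \Prob[C_i = c_i \mid Q_i = q_i],
\]
so that, up to the bit-indexing convention $b_n(k)$, the matrix $A$ equals the Kronecker product $A = M_1 \otimes M_2 \otimes \cdots \otimes M_n$, where each $M_i \in \R^{2 \times 2}$ encodes the one-coordinate transition $(M_i)_{c_i,q_i} = \Prob[C_i = c_i \mid Q_i = q_i]$.

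A routine computation from $C_i = Q_i \cdot 1\{i \in \A\} + 1\{i \notin \A\}$ gives $\Prob[C_i = 1 \mid Q_i = 1] = 1$, $\Prob[C_i = 0 \mid Q_i = 1] = 0$, and $\Prob[C_i = 0 \mid Q_i = 0] = \Prob[i \in \A] = (1-\rho_i)/(1-\rho)$, so
\[
M_i = \begin{pmatrix} (1-\rho_i)/(1-\rho) & 0 \\ (\rho_i - \rho)/(1-\rho) & 1 \end{pmatrix}, \qquad \det(M_i) = \frac{1-\rho_i}{1-\rho} > 0,
\]
since $\rho_i < 1$. Invertibility of $A$ then follows from the standard identity $\det(M_1 \otimes \cdots \otimes M_n) = \prod_{i=1}^n \det(M_i)^{2^{n-1}} \neq 0$ for Kronecker products.

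The only delicate point in executing this plan is the justification of the conditional independence of $C_1,\ldots,C_n$ given $Q$, which I expect to be the main place a reader might pause; but this is exactly where the two independence hypotheses on $\A$ (independence across its membership indicators, and independence from $Q$) are used, after which everything reduces to mechanical computation.
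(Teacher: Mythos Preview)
Your proposal is correct and follows the same approach as the paper: both establish the linear relation via the law of total probability, then use the independence assumptions on $\A$ to factor $\Prob[C=c\mid Q=q]=\prod_i \Prob[C_i=c_i\mid Q_i=q_i]$, and compute the same $2\times 2$ blocks $M_i$. The only difference is in the invertibility step: the paper proves it by induction on $n$ using the block decomposition $A_{n+1}=\begin{pmatrix}\frac{1-\rho_1}{1-\rho}A_n & 0\\ \frac{\rho_1-\rho}{1-\rho}A_n & A_n\end{pmatrix}$, whereas you identify this directly as the Kronecker product $A=M_1\otimes\cdots\otimes M_n$ and invoke the determinant formula. These are equivalent---the paper's recursion is precisely the Kronecker recursion $A_{n+1}=M_1\otimes A_n$---but your formulation is more compact.
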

\begin{remark}
    $\{C_i\}_{i=1}^n$ are heterogeneous Bernoulli random variables with $\Prob[C_i=1] = \rho_i$.
\end{remark}

\begin{proof}
    
    For any $m\in \{0, 1, \cdots, 2^n-1\}$, there holds
    \begin{align*}
        \Prob[C = b_n(m)] = \sum_{k=0}^{2^n-1}\Prob[Q = b_n(k)]\Prob[C = b_n(m)|Q = b_n(k)].
    \end{align*}
    Since the events $\{i\in \A\}$ are independent from each other, and $\A$ is independent from $\{Q_i\}_{i=1}^n$, $C_i|Q$ are independent from each other, therefore
    \begin{align*}
        \Prob[C = b_n(m)|Q = b_n(k)] = \prod_{i=1}^n \Prob[C_i = b_n(m)_i|Q = b_n(k)].
    \end{align*}
    Moreover, $C_i$ is only dependent on $Q_i$ and $\A$, therefore $C_i|Q = C_i|Q_i$, and there holds
    \begin{align}
        \Prob[C = b_n(m)|Q = b_n(k)] = \prod_{i=1}^n \Prob[C_i = b_n(m)_i|Q_i = b_n(k)_i],\label{eq: cq}
    \end{align}
    where $\Prob[C_i = b_n(m)_i|Q_i = b_n(k)_i]$ can be represented by $\rho_i, \rho$ and binary parameters $(b_n(m)_i, b_n(k)_i) \in \{(0, 0), (0, 1), (1, 0), (1, 1)\}$ using one of the equations below,
    \begin{align*}
        \Prob[C_i = 0|Q_i = 0] = \frac{1-\rho_i}{1-\rho},\ \Prob[C_i = 0|Q_i = 1] = 0,\\
        \Prob[C_i = 1|Q_i = 0] = \frac{\rho_i-\rho}{1-\rho},\ \Prob[C_i = 1|Q_i = 1] = 1.
    \end{align*}
    Now, define the matrix $A(\rho_1, \rho_2, \cdots, \rho_n, \rho)\in \R^{2^n\times 2^n}$ by
    \begin{align}
        \label{eq: Q_transfer}
        A(\rho_1, \rho_2, \cdots, \rho_n, \rho)(m, k) = \prod_{i=1}^n \Prob[C_i = b_n(m-1)_i|Q_i = b_n(k-1)_i],
    \end{align}
    where $A(\rho_1, \rho_2, \cdots, \rho_n, \rho)(m, k)$ represents the $(m, k)$ element of $A(\rho_1, \rho_2, \cdots, \rho_n, \rho)$. By Equation \eqref{eq: cq}, $A(\rho_1, \rho_2, \cdots, \rho_n, \rho)$ in Equation \eqref{eq: Q_transfer} satisfies Equation \eqref{eq: caq}.
    
    We use mathematical induction with respect to $n$ to show that $A(\rho_1, \rho_2, \cdots, \rho_n, \rho)$ in Equation \eqref{eq: Q_transfer} is invertible. For $n=1$, it can be directly calculated that
    \begin{align*}
        A_{\rho_1, \rho} = 
        \begin{bmatrix}
            \frac{1-\rho_1}{1-\rho} & 0\\
            \frac{\rho_1-\rho}{1-\rho} & 1
        \end{bmatrix},
    \end{align*}
    is invertible. Suppose that this argument holds for $n = L$. Then for $n = L+1$, since
    \begin{align*}
        \Prob[C_1 = 0|Q_1 = 0] = \frac{1-\rho_1}{1-\rho},\ \Prob[C_1 = 0|Q_1 = 1] = 0,\\
        \Prob[C_1 = 1|Q_1 = 0] = \frac{\rho_1-\rho}{1-\rho},\ \Prob[C_1 = 1|Q_1 = 1] = 1,
    \end{align*}
    by Equation \eqref{eq: Q_transfer}, $A(\rho_1, \rho_2, \cdots, \rho_{L+1}, \rho)$ can be written in the following form
    \begin{align}
        \label{eq: A_iter}
        A(\rho_1, \rho_2, \cdots, \rho_{L+1}, \rho) = 
        \begin{bmatrix}
            \frac{1-\rho_1}{1-\rho} A(\rho_2, \rho_3, \cdots, \rho_{L+1}, \rho) & 0_{2^L\times 2^L}\\
            \frac{\rho_1-\rho}{1-\rho} A(\rho_2, \rho_3, \cdots, \rho_{L+1}, \rho) & A(\rho_2, \rho_3, \cdots, \rho_{L+1}, \rho)
        \end{bmatrix}.
    \end{align}
    By the induction assumption, the matrix $A(\rho_2, \rho_3, \cdots, \rho_{L+1}, \rho)$ is invertible. Therefore, by Equation \eqref{eq: A_iter}, $A(\rho_1, \rho_2, \cdots, \rho_{L+1}, \rho)$ is invertible.
\end{proof}

In the following, we will use Lemma \ref{lemma: QC eq} to show that weakly dependent heterogeneous Bernoulli random variables can be transferred into weakly dependent homogeneous Bernoulli random variables.

\begin{lemma}
    \label{lemma: key}
    Let $\{B_i\}_{i=1}^n$ be $n$ (possibly dependent) Bernoulli random variables, with $1>\Prob[B_i=1]:=\rho_i>0$ for all $i$. Let $\rho>0$ satisfy
    \begin{align*}
        \min_i\rho_i\ge \rho>0.
    \end{align*}
    Suppose $\{B_i\}_{i=1}^n$ are weakly dependent on each other, i.e. there exists $\epsilon>0$, such that for $B:=[B_1, B_2, \cdots, B_n]$ and any $m\in \{0, 1, \cdots, 2^n-1\}$, there holds
    \begin{align}
        \left|\Prob[B = b_n(m)]-\prod_{i=1}^n\Prob\left[B_i = b_n(m)_i\right]\right|\le \epsilon \prod_{i=1}^n\Prob\left[B_i = b_n(m)_i\right].\label{eq: condition on B}
    \end{align}
    Then there exists $n$ random variables $Q_1, Q_2,\cdots, Q_n$ with the following properties.
    \begin{itemize}
        \item $Q_i\sim \mathrm{Bernoulli}(\rho)$ for all $i$.
        \item For any $m\in \{0, 1, \cdots, 2^n-1\}$, there holds
        \begin{align}
            \label{eq: Q_bound}
            \left|\Prob[Q = b_n(m)]-\prod_{i=1}^n\Prob\left[Q_i = b_n(m)_i\right]\right|\le \epsilon\left[\frac{2-\rho}{\rho}\right]^n \prod_{i=1}^n\Prob\left[Q_i = b_n(m)_i\right].
        \end{align}
        \item Define $\A\subseteq\{1, 2, \cdots, n\}$ as a random set such that $\A$ is independent from $\{Q_i\}_{i=1}^n$, the events $\{i\in \A\}$ are independent from each other, and
        \begin{align*}
            \Prob\{i\in A\} = \frac{1-\rho_i}{1-\rho}.
        \end{align*}
        Further define $C_i=Q_i\cdot 1\{i\in A\}+1\{i\notin A\}$, then there holds
        \begin{align*}
            \{C_1, C_2,\cdots, C_n\} \overset{d}{=}\{B_1, B_2, \cdots, B_n\}.
        \end{align*}
    \end{itemize}
\end{lemma}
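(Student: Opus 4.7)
Plan: Use Lemma \ref{lemma: QC eq} in reverse. Since that lemma furnishes an invertible matrix $A = A(\rho_1,\ldots,\rho_n,\rho)$ with $p_C = A p_Q$, I define the law of $Q$ by $p_Q := A^{-1} p_B$, where $p_B(m) := \Prob[B=b_n(m)]$. Equipping this $Q$ with an independent random set $\mathcal{A}$ as in the statement produces $C$ with law $A p_Q = p_B$, so $\{C_i\}_{i=1}^n \overset{d}{=} \{B_i\}_{i=1}^n$, giving the third bullet. Each column of $A$ is a conditional distribution of $C$ given $Q$ (see \eqref{eq: Q_transfer}), hence $\mathbf{1}^\top A = \mathbf{1}^\top$, so $\mathbf{1}^\top A^{-1} = \mathbf{1}^\top$ and $p_Q$ sums to one. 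The marginal $\Prob[Q_i=1]=\rho$ is then forced: matching the direct computation $\Prob[C_i=1] = \Prob[Q_i=1]\tfrac{1-\rho_i}{1-\rho} + \tfrac{\rho_i-\rho}{1-\rho}$ with $\Prob[B_i=1]=\rho_i$ solves uniquely to $\Prob[Q_i=1]=\rho$, delivering the first bullet. I would verify non-negativity of $p_Q$ by induction on $n$ via the block structure of $A^{-1}$ introduced below, reducing it to non-negativity of lower-dimensional $(A')^{-1}$-inverses acting on conditional distributions of $B$.

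For the quantitative bound \eqref{eq: Q_bound} --- the real core of the lemma --- decompose $p_B = \tilde p_B + \delta_B$ with $\tilde p_B(m) := \prod_i \Prob[B_i = b_n(m)_i]$; the hypothesis \eqref{eq: condition on B} reads $|\delta_B(m)| \le \epsilon\,\tilde p_B(m)$ coordinatewise. Set $\tilde p_Q(m) := \rho^{|b_n(m)|}(1-\rho)^{n-|b_n(m)|}$ and $\delta_Q := p_Q - \tilde p_Q$. The pivotal identity is $A\tilde p_Q = \tilde p_B$: when the $Q_i$'s are i.i.d.\ Bernoulli$(\rho)$, the $C$-construction outputs independent Bernoulli$(\rho_i)$'s since the $\mathcal{A}$-indicators are independent across $i$. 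Consequently $\delta_Q = A^{-1}\delta_B$ and \eqref{eq: Q_bound} reduces to the linear-algebraic estimate $|[A^{-1}\delta_B](k)| \le \epsilon\left[\tfrac{2-\rho}{\rho}\right]^n \tilde p_Q(k)$ for every $k$.

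I would prove this estimate by induction on $n$ using the explicit inverse
\[
A^{-1} = \begin{bmatrix} \tfrac{1-\rho}{1-\rho_1}(A')^{-1} & 0 \\[4pt] -\tfrac{\rho_1-\rho}{1-\rho_1}(A')^{-1} & (A')^{-1}\end{bmatrix},\qquad A' := A(\rho_2,\ldots,\rho_n,\rho),
\]
obtained directly from \eqref{eq: A_iter}. Splitting coordinates by the first bit, the top half of $\delta_Q$ equals $\tfrac{1-\rho}{1-\rho_1}(A')^{-1}\delta_B^0$ and the bottom half equals $-\tfrac{\rho_1-\rho}{1-\rho_1}(A')^{-1}\delta_B^0 + (A')^{-1}\delta_B^1$, where $\delta_B^0, \delta_B^1$ are the two halves of $\delta_B$. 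Normalizing, $\hat p_B^0 := p_B^0/(1-\rho_1)$ and $\hat p_B^1 := p_B^1/\rho_1$ are each $(n-1)$-dimensional product-perturbations satisfying \eqref{eq: condition on B} with the same $\epsilon$, so the induction hypothesis furnishes $|(A')^{-1}\delta_B^0(k')| \le (1-\rho_1)\epsilon[(2-\rho)/\rho]^{n-1}\tilde p_{Q'}(k')$ and $|(A')^{-1}\delta_B^1(k')| \le \rho_1\epsilon[(2-\rho)/\rho]^{n-1}\tilde p_{Q'}(k')$. Plugging in, the top-half bound collapses to $(1-\rho)\epsilon[(2-\rho)/\rho]^{n-1}\tilde p_{Q'}(k')$, comfortably below the target $\epsilon[(2-\rho)/\rho]^n(1-\rho)\tilde p_{Q'}(k')$; the bottom-half bound is at most $(2\rho_1-\rho)\epsilon[(2-\rho)/\rho]^{n-1}\tilde p_{Q'}(k') \le \epsilon[(2-\rho)/\rho]^n\rho\tilde p_{Q'}(k')$, matching the target exactly. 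The main obstacle is the bottom block: the triangle inequality across the minus sign is precisely what drives the full factor $(2-\rho)/\rho$ per induction step and is tight as $\rho_1 \to 1$, so no cheaper amplification constant than $\left[\frac{2-\rho}{\rho}\right]^n$ is attainable with this route.
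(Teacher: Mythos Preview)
Your proposal is correct and follows essentially the same approach as the paper: define $p_Q := A^{-1}p_B$, verify the first and third bullets from the structure of $A$, and prove \eqref{eq: Q_bound} by induction on $n$ via the block form of $A^{-1}$ coming from \eqref{eq: A_iter}. The only cosmetic difference is in the bottom-half of the induction step: you apply the hypothesis separately to $\delta_B^0/(1-\rho_1)$ and $\delta_B^1/\rho_1$ and combine with the triangle inequality to get the coefficient $2\rho_1-\rho\le 2-\rho$, whereas the paper applies it to $Y_0$ and to the sum $Y'=(1-\rho_1)Y_0+\rho_1 Y_1$, obtaining $2-\rho$ directly; both routes yield the same final factor $\left[\frac{2-\rho}{\rho}\right]^n$.
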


\begin{proof}
    
    We show that $\{Q_i\}_{i=1}^n$ whose probability mass function is given by
    \begin{align*}
        \left[\begin{array}{c}
            \Prob[Q = b_n(0)]  \\
            \Prob[Q = b_n(1)]  \\
            \cdots \\
            \Prob[Q = b_n(2^n-1)]
        \end{array}\right]:= A(\rho_1, \rho_2, \cdots, \rho_n, \rho)^{-1}
        \left[\begin{array}{c}
            \Prob[B = b_n(0)]  \\
            \Prob[B = b_n(1)]  \\
            \cdots \\
            \Prob[B = b_n(2^n-1)],
        \end{array}\right]
    \end{align*}
    satisfies the three properties, where $A(\rho_1, \rho_2, \cdots, \rho_n, \rho)$ is defined in Lemma \ref{lemma: QC eq}. The first and third property can be directly validated from the definition of $A(\rho_1, \rho_2, \cdots, \rho_n, \rho)$ in Lemma \ref{lemma: QC eq}, and the details will be omitted. 
    It remains to show that $\{Q_i\}_{i=1}^n$ satisfies Equation \eqref{eq: Q_bound}.

    In the following, we will prove a stronger statement in order to prove Equation \eqref{eq: Q_bound}. Let $Y\in \R^{2^n}$ be a positive vector, and define $X := A(\rho_1, \rho_2, \cdots, \rho_n, \rho)^{-1}Y\in \R^{2^n}$. Let $\tilde{Y}\in \R^{2^n}$ and $\tilde{X}\in \R^{2^n}$ be defined by 
    \begin{align*}
        \tilde{Y} =
        \left[\begin{array}{c}
            \prod_{i=1}^{n}\Prob\left[B_i = b_{n}(0)_{i}\right]  \\
            \prod_{i=1}^{n}\Prob\left[B_i = b_{n}(1)_{i}\right]  \\
            \cdots \\
            \prod_{i=1}^{n}\Prob\left[B_i = b_{n}(2^{n}-1)_{i}\right],
        \end{array}\right], \tilde{X} =
        \left[\begin{array}{c}
            \prod_{i=1}^{n}\Prob\left[Q_i = b_{n}(0)_{i}\right]  \\
            \prod_{i=1}^{n}\Prob\left[Q_i = b_{n}(1)_{i}\right]  \\
            \cdots \\
            \prod_{i=1}^{n}\Prob\left[Q_i = b_{n}(2^{n}-1)_{i}\right],
        \end{array}\right].
    \end{align*}
    Then $\tilde{Y}$ and $\tilde{X}$ satisfy $\tilde{X} = A(\rho_1, \rho_2, \cdots, \rho_n, \rho)^{-1}\tilde{Y}$ by the definition of $A(\rho_1, \rho_2, \cdots, \rho_n, \rho)$ in Lemma \ref{lemma: QC eq}. We show in the following that if there exists $\epsilon>0$, such that
    \begin{align}
        |Y - \tilde{Y}|\preceq \epsilon \tilde{Y}, \label{eq: condition on Y}
    \end{align}
    then there holds
    \begin{align}
        |X - \tilde{X}|\preceq \epsilon\left[\frac{2-\rho}{\rho}\right]^{n} \tilde{X}. \label{eq: result on X}
    \end{align}
    This statement is stronger than Equation \eqref{eq: Q_bound} in the sense that $X$ and $Y$ are no longer required to be a probability mass function.

    We prove the above statement using mathematical induction with respect to $n$. For $n=1$, $A(\rho_1, \rho)$, $\tilde{Y}$ and $\tilde{X}$ satisfy
    \begin{align*}
        A(\rho_1, \rho)^{-1} =
        \begin{bmatrix}
            \frac{1-\rho}{1-\rho_1} & 0\\
            \frac{\rho-\rho_1}{1-\rho_1} & 1
        \end{bmatrix},\ 
        \tilde{Y} = 
        \left[
        \begin{array}{c}
            1-\rho_1\\
            \rho_1
        \end{array}
        \right],\ 
        \tilde{X} = 
        \left[
        \begin{array}{c}
            1-\rho\\
            \rho
        \end{array}
        \right].
    \end{align*}
    Let $Y = [(1-\rho_1)a, \rho_1 b]^T$. Then for $Y$ satisfying Equation \eqref{eq: condition on Y}, there holds $|a-1|\le \epsilon$, and $|b-1|\le \epsilon$. $X$ is given by $X=A(\rho_1, \rho)^{-1} Y$ as
    \begin{align*}
        X = \left[(1-\rho)a, (\rho-\rho_1) a+ \rho_1 b\right]^T.
    \end{align*}
    There holds the following inequalities that
    \begin{align*}
        |(1-\rho)a-(1-\rho)|\le & \epsilon (1-\rho)< \epsilon \frac{2-\rho}{\rho}(1-\rho),\\
        |(\rho-\rho_1) a+ \rho_1 b-\rho| \le & |(\rho-\rho_1)(a-1)| + |\rho_1 (b-1)|,\\
        \le & \epsilon\frac{2\rho_1-\rho}{\rho}\rho\le \epsilon\frac{2-\rho}{\rho}\rho,
    \end{align*}
    therefore
    \begin{align*}
        |X-\tilde{X}|=
        \left[
        \begin{array}{c}
            |(1-\rho)a - (1-\rho)|\\
            |(\rho-\rho_1) a+ \rho_1 b-\rho|
        \end{array}
        \right]\preceq
        \left[
        \begin{array}{c}
            \epsilon \frac{2-\rho}{\rho}(1-\rho)\\
            \epsilon\frac{2-\rho}{\rho}\rho
        \end{array}
        \right]= \epsilon \frac{2-\rho}{\rho}\tilde{X}.
    \end{align*}
    which finish the proof for the case $n=1$.
    
    Suppose that the statement has been proved for $n=m$. We consider the case $n=m+1$ in the following. For shorthand notations, denote
    \begin{align*}
        A_{m} = & A(\rho_2, \rho_3, \cdots, \rho_{m+1}, \rho),\\
        A_{m+1} = & A(\rho_1, \rho_2, \cdots, \rho_{m+1}, \rho),
    \end{align*}
    By Equation \eqref{eq: A_iter}, there holds
    \begin{align}
        \label{eq: A_iter_simple}
        A_{m+1} = 
        \begin{bmatrix}
            \frac{1-\rho_1}{1-\rho} A_m & 0_{2^m\times 2^m}\\
            \frac{\rho_1-\rho}{1-\rho} A_m & A_m
        \end{bmatrix},\ \mathrm{and}\ A_{m+1}^{-1} = 
        \begin{bmatrix}
            \frac{1-\rho}{1-\rho_1} A_m^{-1} & 0_{2^m\times 2^m}\\
            \frac{\rho-\rho_1}{1-\rho_1} A_m^{-1} & A_m^{-1}
        \end{bmatrix}.
    \end{align}
    To avoid confusion, define $\tilde{Y}(m+1)$ (equal to $\tilde{Y}$ for $n = m+1$) and $\tilde{Y}(m)$ by
    \begin{align*}
        \tilde{Y}(m+1) =
        \left[\begin{array}{c}
            \prod_{i=1}^{m+1}\Prob\left[B_i = b_{m+1}(0)_{i}\right]  \\
            \prod_{i=1}^{m+1}\Prob\left[B_i = b_{m+1}(1)_{i}\right]  \\
            \cdots \\
            \prod_{i=1}^{m+1}\Prob\left[B_i = b_{m+1}(2^{m+1}-1)_{i}\right],
        \end{array}\right], \tilde{Y}(m) =
        \left[\begin{array}{c}
            \prod_{i=2}^{m+1}\Prob\left[B_i = b_{m+1}(0)_{i}\right]  \\
            \prod_{i=2}^{m+1}\Prob\left[B_i = b_{m+1}(1)_{i}\right]  \\
            \cdots \\
            \prod_{i=2}^{m+1}\Prob\left[B_i = b_{m+1}(2^{m+1}-1)_{i}\right],
        \end{array}\right].
    \end{align*}
    Further define $\tilde{X}(m+1) = A_{m+1}^{-1}\tilde{Y}(m+1)$  (equal to $\tilde{X}$ for $n = m+1$) and $\tilde{X}(m) = A_{m}^{-1}\tilde{Y}(m)$. Then by their definitions and Equation \eqref{eq: A_iter_simple}, there holds
    \begin{align}
        \tilde{X}(m+1) =
        \left[
        \begin{array}{c}
            (1-\rho)\tilde{X}(m)\\
            \rho \tilde{X}(m) 
        \end{array}
        \right],&\ 
        \tilde{Y}(m+1) =
        \left[
        \begin{array}{c}
            (1-\rho_1)\tilde{Y}(m)\\
            \rho_1 \tilde{Y}(m) 
        \end{array}
        \right].\label{decom: tilde}
    \end{align}

    For any positive vector $Y\in \R^{2^{m+1}}$, and $X = A_{m+1}^{-1}Y\in \R^{2^{m+1}}$, we divide them into two parts with a equal size in the following way
    \begin{align}
        X =
        \left[
        \begin{array}{c}
            (1-\rho)X_0\\
            \rho X_1
        \end{array}
        \right],&\ 
        Y =
        \left[
        \begin{array}{c}
            (1-\rho_1)Y_0\\
            \rho_1 Y_1
        \end{array}
        \right]. \label{decom: xy}
    \end{align}
    Then by Equation \eqref{eq: A_iter_simple}, $X_0 = A_m^{-1}Y_0$. 

    If $Y$ satisfies Equation \eqref{eq: condition on Y}, by Equation \eqref{decom: tilde} and \eqref{decom: xy}, there holds
    \begin{align*}
        \left|Y_0 - \tilde{Y}(m)\right|\preceq \epsilon \tilde{Y}(m).
    \end{align*}
    By $X_0 = A_m^{-1}Y_0$ and the induction assumption, there holds
    \begin{align*}
        |(1-\rho)X_0-(1-\rho)\tilde{X}(m)|&  \preceq \epsilon\left[\frac{2-\rho}{\rho}\right]^m (1-\rho) \tilde{X}(m)\preceq \epsilon\left[\frac{2-\rho}{\rho}\right]^{m+1} (1-\rho) \tilde{X}(m),
    \end{align*}
    which is exactly the first half of Equation \eqref{eq: result on X} in the case $n = m+1$.
    
    Furthermore, define $Y' := (1-\rho_1)Y_0 + \rho_1 Y_1\in \R^{2^{m}}$, and  $X': = A_m^{-1}Y' = (1-\rho) X_0 + \rho X_1 \in \R^{2^{m}}$. Then there holds
    \begin{align*}
        \left|Y' - \tilde{Y}(m)\right| & = \left|(1-\rho_1)Y_0 + \rho_1 Y_1-\tilde{Y}(m)\right|\le \epsilon \tilde{Y}(m).
    \end{align*}
    Therefore by induction assumption, there holds
    \begin{align*}
        \left|X' - \tilde{X}(m)\right|\le \epsilon\left[\frac{2-\rho}{\rho}\right]^m \tilde{X}(m).
    \end{align*}
   Therefore, there further holds
    \begin{align*}
        |\rho X_1-\rho \tilde{X}(m)| & = |X'-(1-\rho)X_0  - (\rho-1) \tilde{X}(m)-\tilde{X}(m)|,\\
        & \preceq  |(\rho-1) [X_0-\tilde{X}(m)]|+|X'-\tilde{X}(m)|,\\
        &\preceq \epsilon\left[\frac{2-\rho}{\rho}\right]^m (1-\rho) \tilde{X}(m) + \epsilon\left[\frac{2-\rho}{\rho}\right]^m \tilde{X}(m),\\
        & = \epsilon\left[\frac{2-\rho}{\rho}\right]^{m+1} \rho \tilde{X}(m),
    \end{align*}
    which is exactly the second half of Equation \eqref{eq: result on X} in the case $n = m+1$.
\end{proof}

With Lemma \ref{lemma: key} bridging the heterogenuous weakly dependent Bernoulli random variables and homogenuous weakly dependent Bernoulli random variables, we can finally prove Lemma \ref{lemma: main_lemma}.

\begin{proof}
    Let $\nu$ satisfy $\nu>\gnu$. Let $\xi$ satisfy $\|\xi\|_\infty<\kappa\sigma\frac{1}{\sqrt{n}}$. For any value of $B_{\{(m^*+1), \cdots, (m)\}}$, let $\rho_i:=\Prob[B_{(i)} = 1|\xi, \zeta\in\Lambda(\nu), B_{\{(m^*+1), \cdots, (m)\}}]$ for $i\le m^*$, and define $\rho=\min_{i\le m^*}\{\rho_i\}$. 
    Suppose that there exists $1\le J\le m^*$, $B_{(j)}\equiv1$ conditional on $\xi$, $\zeta\in\Lambda(\nu)$ and $B_{\{(m^*+1), \cdots, (m)\}}$, then there holds for $J\le m^*$ that
    \begin{align*}
        \frac{1+J}{1+B_{(1)}+B_{(2)}+\cdots+B_{(J)}} \le \frac{J}{1+B_{(1)}+B_{(2)}+\cdots+B_{(J)}-B_{(j)}},
    \end{align*}
    due to the inequality $\frac{a+c}{b+c}\le \frac{a}{b}$ whenever $0<b\le a$ and $c>0$. Such a result shows that dropping $B_{(j)}$ gives an upper bound of the target. Thus, for simplicity yet without loss of generality, we assume that
    \begin{align*}
        1> \max_i\rho_i\ge\min_i\rho_i\ge \rho>0.
    \end{align*}
    By Lemma \ref{lemma: weak dependent}, given $\xi$, there holds for $S = \{i: \widehat{\sign}_i\neq\gamma^*_i\}$, any $k\in \{0, 1, \cdots,2^{|S|}-1\}$ and any value of $B_{\{(m^*+1), \cdots, (m)\}}$
    \begin{align*}
        e^{-48\theta ^2\co2 \frac{m(|S|+1)}{\nu}}\le\frac{\Prob[B_S = b_{|S|}(k)|\xi, \zeta\in\Lambda(\nu), B_{S^c}]}{\prod_{i\in S}\Prob[B_i = b_{|S|}(k)_i|\xi, \zeta\in\Lambda(\nu), B_{S^c}]}\le e^{48\theta ^2\co2 \frac{m(|S|+1)}{\nu}},
    \end{align*}
    where $B_i = 1\{W_i<0\}$ is defined in Lemma \ref{lemma: weak dependent}. Therefore, there further holds
    \begin{align*}
        &\left|\Prob[B_S = b_m(k)_S|\xi, \zeta\in\Lambda(\nu), B_{S^c}]-\prod_{i\in S}\Prob[B_i = b_m(k)_i|\xi, \zeta\in\Lambda(\nu), B_{S^c}]\right|\\
        \le& (e^{48\theta ^2\co2 \frac{m(m+1)}{\nu}}-1)\prod_{i\in S}\Prob[B_i = b_m(k)_i|\xi, \zeta\in\Lambda(\nu), B_{S^c}],
    \end{align*}
    due to the fact that for $x>0$, $1-\exp(-x)\le \exp(x)-1$.
    By Lemma \ref{lemma: key}, there exists Bernoulli random variables $Q_1, Q_2,\cdots, Q_{m^*}$, such that the following properties hold.
    \begin{enumerate}
        \item $Q_i\sim \mathrm{Bernoulli}(\rho)$ for all $i$.
        \item For any $k\in \{0, 1, \cdots, 2^{m^*}-1\}$, there holds
        \begin{align*}
            \left|\Prob[Q = b_{m^*}(k)]-\prod_{i}\Prob[Q_i = b_{m^*}(k)_i]\right|\le & \left(e^{48\theta ^2\co2 \frac{m(m+1)}{\nu}}-1\right)\left[\frac{2-\rho}{\rho}\right]^{m} \prod_{i}\Prob[Q_i = b_{m^*}(k)_i],\\
            \le & \left(e^{48\theta ^2\co2 \frac{m(m+1)}{\nu}}-1\right)e^{\frac{(2-2\rho)m}{\rho}} \prod_{i}\Prob[Q_i = b_{m^*}(k)_i],\\
            := & \epsilon(m)\prod_{i}\Prob[Q_i = b_{m^*}(k)_i],
        \end{align*}
        where we define $\epsilon(m) = \left(e^{48\theta ^2\co2 \frac{m(m+1)}{\nu}}-1\right)e^{\frac{(2-2\rho)m}{\rho}}$ for simplicity.
        \item \label{property: C} Define $\A\subseteq\{1, 2, \cdots, m^*\}$ as a random set such that $\A$ is independent from $\{Q_i\}_{i=1}^{m^*}$, the events $\{i\in \A\}$ are independent from each other, and
        \begin{align*}
            \Prob\{i\in A\} = \frac{1-\rho_i}{1-\rho}.
        \end{align*}
        Further define $C_i=Q_i\cdot 1\{i\in A\}+1\{i\notin A\}$. There holds
        \begin{align*}
            \{C_1, C_2,\cdots, C_{m^*}\} \overset{d}{=}\{B_{(1)}, B_{(2)}, \cdots, B_{(m^*)}\}|\xi, \zeta\in\Lambda(\nu), B_{\{(m^*+1), \cdots, (m)\}}.
        \end{align*} 
    \end{enumerate}
    Applying the same argument as in \cite{barber2019knockoff}, conditional on $\xi$ and $B_{\{(m^*+1), \cdots, (m)\}}$, $J$ defined above Lemma \ref{lemma: main_lemma} is a stopping time in reverse time with respect to the filtration $\{\F_j\}_{j=1}^{m^*}$ defined as
    \begin{align*}
        \F_j = \sigma\left(\left\{\sum_{i=1}^j \tilde{B}_{(j)}, \tilde{B}_{(j+1)}, \cdots, \tilde{B}_{(m^*)}\right\}\right).
    \end{align*}
    By Property \ref{property: C}, $J$ can be also viewed as a stopping time in reverse time with respect to the filtration $\{\G_j\}_{j=1}^{m^*}$ defined as
    \begin{align*}
        \G_j = \sigma\left(\left\{\sum_{i=1}^j C_{1}, C_{j+1}, \cdots, C_{m^*}\right\}\right),
    \end{align*} 
    and satisfies
    \begin{align}
        \Expect\left[\left.\frac{1+J}{1+B_{(1)}+B_{(2)}+\cdots+B_{(J)}}\right|\xi, \zeta\in\Lambda(\nu), B_{\{(m^*+1), \cdots, (m)\}}\right]=\Expect\left[\frac{1+J}{1+C_{1}+C_{2}+\cdots+C_{J}}\right]. \label{eq: btoc}
    \end{align}
    By definition $C_{i}=Q_{i}\cdot 1\{i\in \A\}+1\{i\notin \A\}$ for all $i$, therefore
    \begin{align}
        \frac{1+J}{1+C_{1}+C_{2}+\cdots+C_{J}} & = \frac{1+|\{i\le J: i\in \A\}|+|\{i\le J: i\notin \A\}|}{1+\sum_{i\le J, i\in\A}Q_{i}+|\{i\le J:i\notin \A\}|},\nonumber\\
        & \le \frac{1+|\{i\le J: i\in \A\}|}{1+\sum_{i\le J, i\in\A}Q_{i}},\label{eq: transfer}
    \end{align}
    where the last step is due to the inequality $\frac{a+c}{b+c}\le \frac{a}{b}$ whenever $0<b\le a$ and $c>0$.
    
    Let $\tilde{Q}_{i} = Q_{i}\cdot 1\{i\in \A\}$, and define
    \begin{align*}
        \F_j' = \sigma\left(\left\{\sum_{i=1}^j \tilde{Q}_{j}, \tilde{Q}_{j+1}, \cdots, \tilde{Q}_{m^*}, \A\right\}\right),
    \end{align*}
    then clearly $\{F_j'\}_{j=1}^{m^*}$ is a filtration. By definition of $\{C_{i}\}_{i=1}^{m^*}$, there holds
    \begin{align*}
        \G_j = \sigma(\{C_1+C_2+\cdots+C_j, C_{j+1}, \cdots, C_{m^*}\})\subseteq \F_j'.
    \end{align*}
    Therefore, $J$ is also a stopping time in reverse time with respect to $\{F_j'\}_{j=1}^{m^*}$. Below, we are going to work on the upper bound of
    \begin{align*}
        \E\left[\frac{1+|\{i\le J: i\in \A\}|}{1+\sum_{i\le J, i\in\A}Q_i}\right].
    \end{align*}
    
    Let $j_\A := \max_i\{i\le j, i\in \A\}$, $k(j, \A):=\sum_{i\le j, i\in\A}Q_i$, and $m(j, \A):=|\{i\le j, i\in\A\}|$. Since $\{Q_i\}_{i=1}^n$ are independent from $\A$, for any $\A$ and $\tilde{m}\ge \tilde{k}\ge 1$, there holds
    \begin{align}
        \Prob\left[Q_{j_\A}=1\left|\mathcal{S}(\A, \tilde{k}, \tilde{m})\right]\right. & = \frac{\Prob[Q_{j_\A}=1, k(j, \A) = \tilde{k}, m(j, \A)=\tilde{m}, Q_{j+1}, \cdots, Q_m,\A]}{\Prob[k(j, \A) = \tilde{k}, m(j, \A)=\tilde{m}, Q_{j+1}, \cdots, Q_m,\A]},\nonumber\\
        & \le \frac{1+\epsilon(m)}{1-\epsilon(m)}\frac{\rho C_{\tilde{m}-1}^{\tilde{k}-1}\rho^{\tilde{k}-1}(1-\rho)^{\tilde{m}-\tilde{k}}}{C_{\tilde{m}}^{\tilde{k}}\rho^{\tilde{k}}(1-\rho)^{\tilde{m}-\tilde{k}}}, \label{eq: epsilon_ineq}\\
        & = \frac{1+\epsilon(m)}{1-\epsilon(m)}\frac{\tilde{k}}{\tilde{m}},
    \end{align}
    when $\epsilon(m)<1$, where $\mathcal{S}(\A, \tilde{k}, \tilde{m})\in \F_j'$ is a generator of the $\sigma$-field $\F_j'$,  which consists of $\A$, $m(j, \A) = \tilde{m},\ k(j, \A) = \tilde{k}$, and $Q_{j+1}, \cdots, Q_{m^*}$. For shorthand notations, let $\tilde{p} = \Prob\left[Q_{j_\A}=1\left|\mathcal{S}(\A, \tilde{k}, \tilde{m})\right]\right.$, for $k(j, \A) = \tilde{k}\ge 1$, there further holds
    \begin{align*}
        \E\left[\left.\frac{m(j, \A)}{1+k(j, \A)-Q_{j_\A}}\right|\mathcal{S}(\A, \tilde{k}, \tilde{m})\right] & = \frac{\tilde{m}}{\tilde{k}}\tilde{p}+\frac{\tilde{m}}{1+\tilde{k}}(1-\tilde{p})=\tilde{m}\frac{\tilde{p}+\tilde{k}}{\tilde{k}(\tilde{k}+1)},\\
        &\le \frac{\tilde{m}+1}{\tilde{k}+1}+\left(\frac{1+\epsilon(m)}{1-\epsilon(m)}-1\right)\frac{1}{(\tilde{k}+1)},\\
        &\le \frac{\tilde{m}+1}{\tilde{k}+1}\left[1+\frac{1}{\tilde{m}+1}\frac{2\epsilon(m)}{1-\epsilon(m)}\right].
    \end{align*}
    Therefore, there holds for $k(j, \A) = \tilde{k}\ge 1$, that
    \begin{align}
        \E\left[\left.\frac{m(j, \A)}{1+k(j, \A)-Q_{j_\A}} - \frac{1+m(j, \A)}{1+k(j, \A)}\left[1+\frac{1}{m(j, \A)+1}\frac{2\epsilon(m)}{1-\epsilon(m)}\right]\right|\mathcal{S}(\A, \tilde{k}, \tilde{m})\right]\le 0. \label{eq: generator}
    \end{align}
    For the case $k(j, \A) = \tilde{k} = 0$, the above inequality is trivial. 
    
    As Equation \eqref{eq: generator} holds for any $\mathcal{S}(\A, \tilde{k}, \tilde{m})$, which generates the $\sigma-$field $\F_j'$, there holds
    \begin{align}
        \E\left[\left.\frac{m(j, \A)}{1+k(j, \A)-Q_{j_\A}} - \frac{1+m(j, \A)}{1+k(j, \A)}\left[1+\frac{1}{m(j, \A)+1}\frac{2\epsilon(m)}{1-\epsilon(m)}\right]\right|\F_j'\right]\le 0.\label{eq: mar1}
    \end{align}
    If $j_{\A} = j$, by Equation \eqref{eq: mar1}, there holds
    \begin{align}
        \E\left[\left.\frac{1+m(j-1, \A)}{1+k(j-1, \A)} - \frac{1+m(j, \A)}{1+k(j, \A)}\left[1+\frac{1}{m(j, \A)+1}\frac{2\epsilon(m)}{1-\epsilon(m)}\right]\right|\F_j'\right]\le 0.\label{eq: mar2}
    \end{align}
    Meanwhile, if $j_{\A}\neq j$, there trivially holds
    \begin{align}
        \E\left[\left.\frac{1+m(j-1, \A)}{1+k(j-1, \A)} - \frac{1+m(j, \A)}{1+k(j, \A)}\right|\F_j'\right]= 0.\label{eq: mar3}
    \end{align}
    Combining Equation \eqref{eq: mar3} with Equation \eqref{eq: mar2}, there holds that
    \begin{align*}
        '\left[\frac{1+m(j, \A)}{1+k(j, \A)}\right]\prod_{i=1}^{m(j, \A)}\left[1+\frac{1}{i+1}\frac{2\epsilon(m)}{1-\epsilon(m)}\right],
    \end{align*}
    is a supermartingale with respect to $\{\F_j'\}_{j=1}^m$. For any $\A$, there holds
    \begin{align*}
        \E\left[\frac{1+m(n, \A)}{1+k(n, \A)}\right]&= \E\left[\frac{1+|\A|}{1+k(n, \A)}\right]  = \sum_{i=0}^{|\A|}\frac{|\A|+1}{i+1}\Prob[k(n, \A)=i],\\
        &\le \sum_{i=0}^{|\A|}\frac{|\A|+1}{i+1}\left(1+\epsilon(m)\right)\rho^i(1-\rho)^{|\A|-i}C_{|\A|}^i,\\
        & = \rho^{-1}\left(1+\epsilon(m)\right)\sum_{i=0}^{|\A|}\rho^{i+1}(1-\rho)^{|\A|-i}C_{|\A|+1}^{i+1}\le \rho^{-1}\left(1+\epsilon(m)\right).
    \end{align*}
    Therefore, by the optional stopping theorem, there holds
    \begin{align}
        \label{eq: con a ineq}
        \E\left[\frac{1+|\{i\le J: i\in \A\}|}{1+\sum_{i\le J, i\in\A}Q_i}\right]
        & \le \rho^{-1}(1+\epsilon(m))\prod_{k=1}^{m}\left[1+\frac{1}{k+1}\frac{2\epsilon(m)}{1-\epsilon(m)}\right],\\
        & = \rho^{-1}(1+\epsilon(m))\exp\left[\sum_{k=1}^m \ln\left[1+\frac{1}{k+1}\frac{2\epsilon(m)}{1-\epsilon(m)}\right]\right],\\
        &\le \rho^{-1}(1+\epsilon(m))\exp\left[\ln(2m)\frac{2\epsilon(m)}{1-\epsilon(m)}\right],
    \end{align}
    where the last step use the fact that $\ln(1+x)\le x$ for $x>0$, and $\sum_{i=1}^m \frac{1}{i+1}<\ln (2m)$. Combine the above result with Equation \eqref{eq: btoc} and \eqref{eq: transfer}, when $\epsilon(m)<1$, there holds
    \begin{align*}
        \Expect\left[\left.\frac{1+J}{1+B_{(1)}+B_{(2)}+\cdots+B_{(J)}}\right|\xi, \zeta\in\Lambda(\nu), B_{\{(m^*+1), \cdots, (m)\}}\right]& \le \rho^{-1}(1+\epsilon(m))\exp\left[\ln(2m)\frac{2\epsilon(m)}{1-\epsilon(m)}\right].
    \end{align*}
    
    By Proposition \ref{prop: conditional upper bound}, $\rho = \max_i\{\Prob[B_{(i)} = 1|\xi, \zeta\in\Lambda(\nu), B_{\{(m^*+1), \cdots, (m)\}}]\}$ satisfy
    \begin{align*}
        1-\rho<\frac{\fnu}{\sqrt{\nu}}\Rightarrow\frac{1-\rho}{\rho}<\frac{\fnu}{\sqrt{\nu}-\fnu}.
    \end{align*}
    Therefore
    \begin{align*}
        \epsilon(m) =  \left(e^{48\theta ^2\co2 \frac{m(m+1)}{\nu}}-1\right)e^{\frac{(2-2\rho)m}{\rho}}<\left(e^{48\theta ^2\co2 \frac{m(m+1)}{\nu}}-1\right)e^{\frac{2\fnu m}{\sqrt{\nu}-\fnu}}:=\epsilon^*(m).
    \end{align*}
    When $\tau_\nu<1$, there further holds $\epsilon(m)<\epsilon^*(m)<\frac{1}{8\ln(2m)}<\frac{1}{4}$, and
    \begin{align*}
        \ln(2m)\frac{2\epsilon(m)}{1-\epsilon(m)}<\frac{8}{3}\ln(2m)\epsilon^*(m)<\frac{1}{3}.
    \end{align*}
    Since $e^x<1 + (3e^{\frac{1}{3}}-3)x$ for $x<\frac{1}{3}$, there further holds
    \begin{align*}
        \exp\left[\ln(2m)\frac{2\epsilon(m)}{1-\epsilon(m)}\right]\le 1+[3e^{\frac{1}{3}}-3]\frac{8}{3}\ln(2m) \epsilon^*(m)<1+4\ln(2m)\epsilon^*(m)=1+\frac{1}{2}\tau_\nu
    \end{align*}
    Therefore when $\tau_\nu<1$, there further holds
    \begin{align*}
        \Expect\left[\left.\frac{1+J}{1+B_{(1)}+B_{(2)}+\cdots+B_{(J)}}\right|\xi, \zeta\in\Lambda(\nu), B_{\{(m^*+1), \cdots, (m)\}}\right]& \le \frac{\sqrt{\nu}}{\sqrt{\nu}-\fnu}\left(1+\frac{\tau_\nu}{8\ln(2m)}\right)\left(1+\frac{\tau_\nu}{2}\right),\\
        &\le \frac{\sqrt{\nu}}{\sqrt{\nu}-\fnu}(1+\tau_\nu).
    \end{align*}
    Taking expectation over $B_{\{(m^*+1), \cdots, (m)\}}$ ends the proof.
\end{proof}

\section{Proof of Theorem \ref{thm:pathconsistency}}
\label{sec:proof_path_consistency}

In this section, we prove Theorem \ref{thm:pathconsistency} by constructing the Primal-Dual Witness (PDW) of Split LASSO similarly to that of the traditional LASSO problem \citep{Wainwright09} and Split Linearized Bregman Iterations \citep{SplitLBI,huang2020boosting}. We first introduce the definition of the successful Primal-Dual Witness, which gives an unique solution of Split LASSO; then we introduce the incoherence condition for Split LASSO and establish the no-false-positive and sign consistency of Split LASSO regularization path, i.e. Theorem \ref{thm:pathconsistency}.

The set of witness $(\hat{\beta}, \hat{\gamma}, \hat{\rho})\in \R^p \times \R^m \times \R^m$ is constructed in the following way:

\begin{enumerate}
    \item First, we set $\hat\gamma_{S_0} =0$, and obtain $(\hat\beta , \hat\gamma_{\S_1} )\in \R^p \times \R^{|\S_1|}$ by solving
    \begin{align}
        (\hat\beta , \hat\gamma_{\S_1} ) = \argmin_{(\beta, \gamma_{\S_1})}\left\{\frac{1}{2n}\|y - X\beta\|_2^2 + \frac{1}{2\nu}\|D\beta - \gamma\|_2^2 + \lambda \|\gamma_{\S_1}\|_1\right\}.\label{subproblem}
    \end{align}
    \item Second, we choose $\hat\rho_{\S_1} =\partial \|\hat\gamma_{S_1} \|_1$ as the subgradient of $\|\hat\gamma_{S_1} \|_1$.
    \item Third, $\hat\rho_{\S_0}  \in \R^{\S_0}$ is solved from the KKT condition of Split LASSO:
    \begin{subequations}\label{eq:slasso-kkt}
    \begin{align}
      0 &= - (\Sigma_X + L_D) \beta(\lambda) + \frac{D^T}{\nu} \gamma(\lambda) + \left\{\Sigma_X \beta^* + \frac{X^T}{n} \varepsilon \right\}, \label{eq:slasso-kkta}  \\
      \lambda\rho(\lambda)&= \frac{D\beta(\lambda)}{\nu} -  \frac{\gamma(\lambda)}{\nu}, \label{eq:slasso-kktb}
    \end{align}
    \end{subequations}
    where $\rho(\lambda)\in \partial\|\gamma(\lambda)\|_1$. Then, we check whether the dual feasibility condition $|\hat\rho_j |< 1$ for all $j\in \S_0$ is satisfied. 
\end{enumerate}
If the dual feasibility condition is satisfied, we say the PDW succeed.

\subsection{Uniqueness of the Solution upon Successful PDW}

 In this section, we introduce the following lemma that shows the uniqueness of the solution upon the successful construction of the PDW of Split LASSO.

\begin{lemma}
    When the PDW succeed, if the subproblem \eqref{subproblem} is strongly convex, the solution $(\hat\beta , \hat\gamma )$ is the unique optimal solution for Split LASSO.
\end{lemma}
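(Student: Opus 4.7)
The plan is to run the standard Primal-Dual Witness uniqueness argument, adapted to the Split LASSO objective. The argument has three conceptual ingredients: (i) the PDW-constructed triple $(\hat\beta,\hat\gamma,\hat\rho)$ furnishes a valid KKT point for the \emph{full} Split LASSO problem, hence is globally optimal; (ii) strict dual feasibility $|\hat\rho_j|<1$ for $j\in S_0$, combined with convexity of the objective, forces every optimal solution to be supported in $S_1$; (iii) strong convexity of the restricted subproblem \eqref{subproblem} then pins down $(\hat\beta,\hat\gamma_{S_1})$ uniquely.

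First I would verify globality of $(\hat\beta,\hat\gamma)$. The Split LASSO objective
\[
F(\beta,\gamma):=\tfrac{1}{2n}\|y-X\beta\|_2^2+\tfrac{1}{2\nu}\|D\beta-\gamma\|_2^2+\lambda\|\gamma\|_1
\]
is convex, so any $(\beta,\gamma)$ satisfying the KKT system \eqref{eq:slasso-kkt} with some $\rho\in\partial\|\gamma\|_1$ is globally optimal. By the construction of the PDW, the optimality of $(\hat\beta,\hat\gamma_{S_1})$ for the subproblem \eqref{subproblem} gives \eqref{eq:slasso-kkta} and the $S_1$-component of \eqref{eq:slasso-kktb}, while $\hat\rho_{S_0}$ is solved from the $S_0$-component of \eqref{eq:slasso-kktb}. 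Strict dual feasibility $|\hat\rho_j|<1$ for $j\in S_0$ together with $\hat\rho_{S_1}\in\partial\|\hat\gamma_{S_1}\|_1$ gives $\hat\rho\in\partial\|\hat\gamma\|_1$ globally, so $(\hat\beta,\hat\gamma)$ solves Split LASSO.

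Next I would establish that every optimum is supported on $S_1$. Let $(\tilde\beta,\tilde\gamma)$ be any other minimizer and set $F^\star:=F(\hat\beta,\hat\gamma)=F(\tilde\beta,\tilde\gamma)$. Writing out $F(\tilde\beta,\tilde\gamma)-F(\hat\beta,\hat\gamma)$ and using \eqref{eq:slasso-kkt} to substitute the linear terms via $\hat\rho$, the cross terms rearrange (the standard Fenchel / subgradient trick) into
\[
0=F(\tilde\beta,\tilde\gamma)-F(\hat\beta,\hat\gamma)\ge \lambda\bigl(\|\tilde\gamma\|_1-\langle\hat\rho,\tilde\gamma\rangle\bigr)=\lambda\sum_{j\in S_0}\bigl(|\tilde\gamma_j|-\hat\rho_j\tilde\gamma_j\bigr),
\]
where the quadratic part gives a nonnegative contribution (it is even positive on the $(\beta,\gamma_{S_1})$ directions by strong convexity of the subproblem). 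Each summand is nonnegative, and since $|\hat\rho_j|<1$ strictly for $j\in S_0$, we get $\tilde\gamma_j=0$ for every $j\in S_0$. Consequently every optimum restricts to a solution of the subproblem \eqref{subproblem}.

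Finally, strong convexity of \eqref{subproblem} in $(\beta,\gamma_{S_1})$ forces $(\tilde\beta,\tilde\gamma_{S_1})=(\hat\beta,\hat\gamma_{S_1})$, completing uniqueness. The main technical obstacle is cleanly extracting the inequality in the displayed line above: one must carefully combine \eqref{eq:slasso-kkta}–\eqref{eq:slasso-kktb} so that all quadratic residuals are nonnegative and the remaining $\ell_1$ slack localizes on $S_0$; this is where the interplay of the two stationarity conditions (rather than a single LASSO KKT) must be handled with care, and where the assumed strong convexity of the subproblem is invoked.
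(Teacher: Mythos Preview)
Your proposal is correct and follows essentially the same Primal-Dual Witness uniqueness argument as the paper: establish that the PDW triple satisfies the full KKT system, use convexity of the smooth part together with the KKT identities to obtain $\|\tilde\gamma\|_1\le\langle\hat\rho,\tilde\gamma\rangle$, conclude $\tilde\gamma_{S_0}=0$ from strict dual feasibility, and finish with strong convexity of the restricted subproblem. The only cosmetic differences are that the paper lets $F$ denote the \emph{smooth} part of the objective (whereas you include the $\ell_1$ term), and your final ``$=$'' in the displayed chain should be ``$\ge$'' since $|\tilde\gamma_j|-\hat\rho_j\tilde\gamma_j\ge 0$ also for $j\in S_1$; neither affects the argument.
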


\begin{proof}
    When the PDW succeed, there holds $\|\hat\rho_{\S_0}\|_\infty< 1$. Therefore $(\hat\beta , \hat\gamma )$ is a set of optimal solution with respect to $\lambda$, where $\hat\rho \in\left\|\hat\gamma \right\|_1$. Let $(\tilde{\beta}, \tilde{\gamma})$ be any other optimal solution of Split LASSO. Define
    \begin{align*}
        F(\beta, \gamma) := \frac{1}{2n}\|y - X\beta\|_2^2 + \frac{1}{2\nu}\|D\beta - \gamma\|_2^2.
    \end{align*}
    Then there holds
    \begin{align}
        F(\hat\beta , \hat\gamma ) + \lambda\langle\hat\rho , \hat\gamma \rangle = F(\tilde\beta, \tilde\gamma)+\lambda\|\tilde\gamma\|_1,\nonumber
    \end{align}
    which is 
    \begin{align}
        F(\hat\beta , \hat\gamma ) - \lambda\langle\hat\rho , \tilde\gamma - \hat\gamma \rangle - F(\tilde\beta, \tilde\gamma) = \lambda(\|\tilde\gamma\|_1-\langle\hat\rho , \tilde\gamma \rangle).\nonumber
    \end{align}
    By Equation \eqref{eq:slasso-kkt}, there holds $\frac{\partial F(\hat\beta , \hat\gamma )}{\partial \hat\beta } = 0$, and $\frac{\partial F(\hat\beta , \hat\gamma )}{\partial \hat\gamma } = -\lambda \hat\rho $. Therefore
    \begin{align}
        F(\hat\beta , \hat\gamma ) + \langle\frac{\partial F(\hat\beta , \hat\gamma )}{\partial \hat\gamma }, \tilde\gamma - \hat\gamma \rangle + \langle\frac{\partial F(\hat\beta , \hat\gamma )}{\partial \hat\beta }, \tilde\beta-\hat\beta \rangle - F(\tilde\beta, \tilde\gamma) = \lambda(\|\tilde\gamma\|_1-\langle\hat\rho , \tilde\gamma \rangle).\label{lhs<0}
    \end{align}
    Since $F$ is convex, left hand side of Equation \eqref{lhs<0} is non-positive, therefore
    \begin{align}
        \|\tilde\gamma\|_1 \le \langle\hat\rho , \tilde\gamma \rangle.\nonumber
    \end{align}
    Since $|\hat\rho_{S_0} |<1$, there holds $\tilde\gamma_{S_0}=0$. Therefore $(\tilde\beta, \tilde\gamma)$ is also an optimal solution for the subproblem \eqref{subproblem}. Therefore, if the subproblem \eqref{subproblem} is strongly convex, $(\hat\beta , \hat\gamma )$ is the only solution for Split LASSO with respect to $\lambda$.
\end{proof}

\subsection{Incoherence Condition and Path Consistency}

In this section, we first present how the incoherence conditions are deducted. Then we  give the proof of Theorem \ref{thm:pathconsistency}.

 From $D[\Sigma_X+L_D]^{-1}\times$ Equation \eqref{eq:slasso-kkta} + $\nu\times$ Equation \eqref{eq:slasso-kktb}, and the fact that $\gamma^* = D\beta^*$, there holds
\begin{align*}
    \lambda\nu \hat\rho = & -\hat\gamma+\frac{D[\Sigma_X+L_D]^{-1}D^T \hat\gamma}{\nu} + D[\Sigma_X+L_D]^{-1}(\Sigma_X+L_D-L_D)\beta^*+\ldots\nonumber\\
    &\ldots+D[\Sigma_X+L_D]^{-1}\frac{X^T}{n}\varepsilon,\\
    = & -H_\nu (\hat\gamma - \gamma^*)+\omega,
\end{align*}
where $\omega := D[\Sigma_X+L_D]^{-1}\frac{X^T}{n}\varepsilon$. Moreover, there holds
\begin{align}
    \lambda \nu
    \begin{bmatrix}
        \hat\rho_{\S_1}\\
        \hat\rho_{\S_0}
    \end{bmatrix}
    =-
    \begin{bmatrix}
        H_\nu^{11} & H_\nu^{10}\\
        H_\nu^{01} & H_\nu^{00}
    \end{bmatrix}
    \begin{bmatrix}
        \hat\gamma_{\S_1} - \gamma^*_{\S_1}\\
        0_{\S_0}
    \end{bmatrix}
    +
    \begin{bmatrix}
        \omega_{\S_1}\\
        \omega_{\S_0}
    \end{bmatrix},\nonumber
\end{align}
which means
\begin{subequations}
    \begin{align}
        \lambda\nu \hat\rho_{\S_1} = & - H_\nu^{11}(\hat\gamma_{\S_1} - \gamma^*_{\S_1}) + \omega_{\S_1}\label{eq1: irr},\\
        \lambda\nu \hat\rho_{\S_0} = & -  H_\nu^{01}(\hat\gamma_{\S_1} - \gamma^*_{\S_1}) + \omega_{\S_0}\label{eq2: irr}.
    \end{align}
\end{subequations}
Since $H_\nu^{11}$ is invertible, solve $\hat\gamma_{\S_1} - \gamma^*_{\S_1}$ from Equation \eqref{eq1: irr}, there holds
\begin{align}
    \hat\gamma_{\S_1} - \gamma^*_{\S_1} = -\lambda\nu [H_\nu^{11}]^{-1} \hat\rho_{\S_1} + [H_\nu^{11}]^{-1}\omega_{\S_1}.\label{solve_gamma}
\end{align}
Plug Equation \eqref{solve_gamma} into Equation \eqref{eq2: irr}, there holds
\begin{align}
    \hat\rho_{\S_0} =  H_\nu^{01} [H_\nu^{11}]^{-1} \hat\rho_{\S_1} +\frac{1}{\lambda\nu}\{ \omega_{\S_0} - H_\nu^{01}[H_\nu^{11}]^{-1}\omega_{\S_1}\}.\label{estimation_rho}
\end{align}
Hence $H_\nu^{01} [H_\nu^{11}]^{-1}$ plays an important role in Equation \eqref{estimation_rho}. Therefore, the incoherence condition \eqref{incoherence} concerns the infinity norm of $H_\nu^{01} [H_\nu^{11}]^{-1}$. Now we are ready to prove Theorem \ref{thm:pathconsistency}.

\begin{proof}[Proof of Theorem \ref{thm:pathconsistency}]
    ~\\
    From Equation \eqref{estimation_rho} and the incoherence condition \eqref{incoherence}, there holds for $\lambda = \lambda_n$
    \begin{align}
        \|\hat\rho_{\S_0}\|_\infty \le & (1-\chi_\nu) + \frac{1}{\lambda_n\nu}[\|\omega_{\S_0}\|_\infty + (1-\chi_\nu)\|\omega_{\S_1}\|_\infty],\nonumber\\
        \le & (1-\chi_\nu) + \frac{2}{\lambda_n}\left\|\frac{\omega}{\nu}\right\|_\infty.\nonumber
    \end{align}
    By definition, $\frac{\omega_i}{\nu}$ is a Gaussian random variable with variance 
    \begin{align*}
        &\frac{1}{\nu}\frac{\sigma^2}{n}D_{i}[\Sigma_X+L_D]^{-1}\Sigma_X\left[\nu\Sigma_X+D^TD\right]^{-1}D_{i}^T\le \frac{1}{\nu} \frac{c_1\sigma^2}{n}D_{i}[\Sigma_X+L_D]^{-1}D_{i}^T\le \frac{c_1\sigma^2}{n},
    \end{align*}
    for a constant $c_1>0$ related with $X$ and $D$. Therefore
    \begin{align}
        \Prob\left(\left\|\frac{\omega}{\nu}\right\|_\infty\ge \frac{\lambda_n\chi_\nu}{2}\right)\le 2m \exp\left(-\frac{\lambda_n^2 \chi_\nu^2 n}{8c_1\sigma^2}\right).\label{concentration1}
    \end{align}
    For $\lambda_n> \frac{4c_1}{\chi_\nu}\sqrt{\frac{\sigma^2\ln m}{n}}$, there further holds
    \begin{align}
        \Prob\left(\|\hat\rho_{\S_0}\|_\infty>1-\frac{\chi_\nu}{2}\right) \le 2e^{-c_2 n\lambda_n^2},\nonumber
    \end{align}
    where $c_2 = \frac{\chi_\nu^2}{16c_1\sigma^2}$. This is the upper bound on the probability that the PWD fails.
    
    From Equation \eqref{solve_gamma}, there holds
    \begin{align}
        \|\hat\gamma_{\S_1} - \gamma^*_{\S_1}\|_\infty \le \lambda_n\nu\| [H_\nu^{11}]^{-1}\|_\infty + \|[H_\nu^{11}]^{-1}\omega_{\S_1}\|_\infty.\nonumber
    \end{align}
    The first part is deterministic, therefore we can only estimate the second part. Similar to Equation \eqref{concentration1}, there holds
    \begin{align}
        \Prob(\left\|[H_\nu^{11}]^{-1}\omega_{\S_1}\right\|_\infty>\nu t) \le 2|S_1|\exp\left(-\frac{t^2C_\mathrm{min}^2n}{2c_1\sigma^2}\right).\nonumber
    \end{align}
    Take $t = \sigma\lambda_n/2C_\mathrm{min}$, by $\ln m> \ln \S_1$, there holds
    \begin{align}
        \Prob\left(\left\|[H_\nu^{11}]^{-1}\omega_{\S_1}\right\|_\infty>\nu \frac{\sigma\lambda_n}{2C_\mathrm{min}}\right) \le 2\exp(-c_3 n\lambda_n^2).\nonumber
    \end{align}
    where $c_3 = \frac{1}{4c_1}$. Therefore, there further holds
    \begin{align}
        \|\hat\gamma_{\S_1}-\gamma^*_{\S_1}\|_\infty\le \lambda_n \nu \left[\frac{\sigma}{2C_\mathrm{min}}+\| [H_\nu^{11}]^{-1}\|_\infty\right],\nonumber
    \end{align}
    with probability greater than $1-2\exp(-c_3n \lambda_n^2)$. Therefore for $c = \max\{4c_1, c_2, c_3\}$, and $\lambda_n> \frac{c}{\chi_\nu}\sqrt{\frac{\sigma^2\ln m}{n}}$, with probability greater than $1-4\exp(-cn \lambda_n^2)$, Split LASSO has no false discovery at $\lambda_n$. In addition, if $\min_{i\in \S_1}{\gamma^*_i}> \lambda_n \nu \left[\frac{\sigma}{2C_\mathrm{min}}+\| [H_\nu^{11}]^{-1}\|_\infty\right]$, with probability greater than $1-4\exp(-cn \lambda_n^2)$, 
    $\hat\gamma$ recovers the sign of $\gamma^*$.
\end{proof}

\end{appendix}

\end{document}